\newcommand\reallywidecheck[1]{%
\savestack{\tmpbox}{\stretchto{%
  \scaleto{%
    \scalerel*[\widthof{\ensuremath{#1}}]{\kern-.6pt\bigwedge\kern-.6pt}%
    {\rule[-\textheight/2]{1ex}{\textheight}}
  }{\textheight}%
}{0.5ex}}%
\stackon[1pt]{#1}{\scalebox{-1}{\tmpbox}}%
}
\newtheorem*{theorem*}{Theorem}
\declaretheoremstyle[
spaceabove=\topsep, spacebelow=\topsep,
headfont=\normalfont\bfseries,
notefont=\bfseries, notebraces={}{},
bodyfont=\normalfont\itshape,
postheadspace=0.5em,
name={\ignorespaces},
numbered=no,
headpunct=.]
{mystyle}
\declaretheorem[style=mystyle]{namedthm*}
\newcommand{\dec}{\mathrm{Dec}}
\DeclareMathOperator{\kl}{{\scriptscriptstyle KL}}
\DeclareMathOperator{\ID}{{\scriptscriptstyle ID}}
\DeclareMathOperator{\SQ}{{\scriptscriptstyle SQ}}
\newcommand{\idloss}{L_{\ID}}
\newcommand{\sqloss}{L_{\SQ}}
\newcommand{\Btdis}{\mathrm{Beta}}
\newcommand{\eqlinebreakshort}{\ensuremath{\nonumber \\ & \quad \quad}}
\newcommand{\eqstartshort}{\ensuremath{&}}
\newcommand{\eqstartnonumshort}{\ensuremath{& \nonumber}}
\newcommand{\eqbreakshort}{\ensuremath{ \\}}
\newcommand{\arcsinh}{\ensuremath{\mathrm{ArcSinh}}}
\def\BibTeX{{\rm B\kern-.05em{\sc i\kern-.025em b}\kern-.08em
    T\kern-.1667em\lower.7ex\hbox{E}\kern-.125emX}}
\newcommand{\dpfconst}{24}
\newcommand{\shift}[2]{\ensuremath{T_{#2}(#1)}}
\definecolor{olivedrab}{rgb}{0.42, 0.56, 0.14}
\definecolor{palatinatepurple}{rgb}{0.41, 0.16, 0.38}
\definecolor{princetonorange}{rgb}{1.0, 0.56, 0.0}
\newif\iflong
\newcommand{\az}{\ensuremath{K}}
\newcommand{\newvar}{\ensuremath{w}}
\newcommand{\newVar}{\ensuremath{W}}
\newcommand{\normvar}{\ensuremath{z}}
\newcommand{\bnormvar}{\ensuremath{\boldsymbol{z}}}
\newcommand{\normVar}{\ensuremath{Z}}
\newcommand{\bnormVar}{\ensuremath{\boldsymbol{Z}}}
\newcommand{\normset}{\ensuremath{\mathcal{Z}}}
\newcommand{\rawvar}{\ensuremath{y}}
\newcommand{\brawvar}{\ensuremath{\boldsymbol{y}}}
\newcommand{\comp}{\ensuremath{f}}
\newcommand{\compder}{\ensuremath{f'}}
\newcommand{\compset}{\ensuremath{\cF}}
\newcommand{\locloss}{\ensuremath{g}}
\newcommand{\rawloss}{\ensuremath{\widetilde{\cL}}}
\newcommand{\singleloss}{\ensuremath{\widetilde{L}}}
\title{Efficient Representation of Large-Alphabet Probability Distributions
{}\thanks{This work was supported in part by the NSF grant CCF-2131115 and sponsored by the United States Air Force Research Laboratory and the United States Air Force Artificial Intelligence Accelerator and was accomplished under Cooperative Agreement Number FA8750-19-2-1000. The views and conclusions contained in this document are those of the authors and should not be interpreted as representing the official policies, either expressed or implied, of the United States Air Force or the U.S. Government. The U.S. Government is authorized to reproduce and distribute reprints for Government purposes notwithstanding any copyright notation herein.
\indent This paper has supplementary downloadable material available at http://ieeexplore.ieee.org, provided by the authors. The material includes the appendices. Contact adlera@mit.edu, jstang@mit.edu, and yp@mit.edu for further questions about this work.
}}
\author{\IEEEauthorblockN{Aviv Adler, Jennifer Tang, Yury Polyanskiy} \\
MIT EECS Department, Cambridge, MA, USA \\
adlera@mit.edu, jstang@mit.edu, yp@mit.edu
}
\date{\today}
\begin{document}

\maketitle

\begin{abstract}
A number of engineering and scientific problems require representing and manipulating probability
distributions over large alphabets, which we may think of as long vectors of reals summing to $1$.
In some cases it is required to represent such a vector with only $b$ bits per entry. A natural
choice is to partition the interval $[0,1]$ into $2^b$ uniform bins and quantize entries to each
bin independently. We show that a minor modification of this procedure -- applying an entrywise non-linear
function (compander) $f(x)$ prior to quantization -- yields an extremely effective quantization method. For example, for $b=8
(16)$ and $10^5$-sized alphabets, the quality of representation improves from a loss (under KL
divergence) of $0.5 (0.1)$ bits/entry to $10^{-4} (10^{-9})$ bits/entry. Compared to floating point representations, our compander method improves the loss from $10^{-1}(10^{-6})$ to $10^{-4}(10^{-9})$ bits/entry. These numbers hold for
both real-world data (word frequencies in books and DNA $k$-mer counts) and for synthetic randomly
generated distributions. Theoretically, we analyze a minimax optimality criterion and show
that the closed-form compander $f(x) ~\propto~ \arcsinh(\sqrt{c_\az (\az \log \az) x})$ is (asymptotically as $b\to\infty$) optimal for quantizing probability distributions over a $\az$-letter alphabet. Non-asymptotically, such a compander (substituting $1/2$ for $c_\az$ for simplicity) has KL-quantization loss bounded by $\leq 8\cdot 2^{-2b} \log^2 \az$. 
Interestingly, a similar minimax criterion for the quadratic loss on the hypercube shows optimality of the standard uniform quantizer. This suggests that the $\arcsinh$ quantizer is as fundamental for KL-distortion as the uniform quantizer for quadratic distortion.

\end{abstract}





\vspace{-0.5pc}

\section{Compander Basics and Definitions}


Consider the problem of \emph{quantizing} the probability simplex $\triangle_{\az-1} = \{\bx \in \bbR^\az : \bx \geq \bzero, \sum_i x_i = 1 \}$ of alphabet size $\az$,\footnote{While the alphabet has $\az$ letters, $\triangle_{\az-1}$ is $(\az-1)$-dimensional due to the constraint that the entries sum to $1$.} i.e. of finding a finite subset $\normset \subseteq \triangle_{\az-1}$ to represent the entire simplex.  Each $\bx \in \triangle_{\az-1}$ is associated with some $\bnormvar = \bnormvar(\bx) \in \normset$, and the objective is to find a set $\normset$ and an assignment such that the difference between the values $\bx \in \triangle_{\az-1}$ and their representations $\bnormvar \in \normset$ are minimized; while this can be made arbitrarily small by making $\normset$ arbitrarily large, the goal is to do this efficiently for any given fixed size $|\normset| = M$. Since $\bx, \bnormvar \in \triangle_{\az-1}$, they both represent probability distributions over a size-$\az$ alphabet. Hence, a natural way to measure the quality of the quantization is to use the KL (Kullback-Leibler) divergence $D_{\kl}(\bx \| \bnormvar)$, which corresponds to the excess code length for lossless compression and is commonly used as a way to compare probability distributions. (Note that we want to minimize the KL divergence.)

While one can consider how to best represent the vector $\bx$ as a whole, in this paper we consider only \emph{scalar quantization} methods in which each element $x_j$ of $\bx$ is handled separately, since we showed in \cite{adler_ratedistortion_2021} that for Dirichlet priors on the simplex, methods using scalar quantization perform nearly as well as optimal vector quantization. Scalar quantization is also typically simpler and faster to use, and can be parallelized easily. 
Our scalar quantizer is based on \emph{companders} (portmanteau of `compressor' and `expander'), a simple, powerful and flexible technique first explored by Bennett in 1948 \cite{bennett1948} in which the value $x_j$ is passed through a nonlinear function $f$ before being uniformly quantized. We discuss the background in greater depth in \Cref{sec::previous_works}.

In what follows, $\log$ is always base-$e$ unless otherwise specified. We denote $[N] := \{1,\dots, N\}$.


\subsubsection{Encoding}

Companders require two things: a monotonically increasing\footnote{We require increasing functions as a convention, so larger $x_i$ map to larger values in $[N]$. Note that $\comp$ does \emph{not} need to be \emph{strictly} increasing; if $f$ is flat over interval $I \subseteq [0,1]$ then all $x_i \in I$ will always be encoded by the same value. This is useful if no $x_i$ in $I$ ever occurs, i.e. $I$ has zero probability mass under the prior.} function $\comp:[0,1] \to [0, 1]$ (we denote the set of such functions as $\compset$) and an integer $N$ representing the number of quantization levels, or \emph{granularity}. To simplify the problem and algorithm, we use the same $\comp$ for each element of the vector $\bx = (x_1, \dots, x_\az) \in \triangle_{\az-1}$ (see \Cref{rmk::symmetric-distribution}). To quantize $x \in [0, 1]$, the compander computes $\comp(x)$ and applies a uniform quantizer with $N$ levels, i.e. encoding $x$ to $n_N(x) \in [N]$ if $\comp(x) \in (\frac{n-1}{N}, \frac{n}{N}]$; this is equivalent to $n_N(x) = \lceil \comp(x) N \rceil$. 


This encoding partitions $[0,1]$ into \emph{bins} $I^{(n)}$:
\begin{align}\label{eq::bins}
    x \in I^{(n)} = \comp^{-1} \Big(\Big(\frac{n-1}{N}, \frac{n}{N}\Big] \Big) \iff n_N(x) = n
\end{align}
where $\comp^{-1}$ denotes the preimage under $f$.

As an example, consider the function $f(x) = x^s$.
Varying $s$ gives a natural class of functions from $[0,1]$ to $[0,1]$, which we call the class of \emph{power companders}.
If we select $s = 1/2$ and $N = 4$, then the $4$ bins created by this encoding are 
\begin{align}
I^{(1)} &= (0, 1/16], I^{(2)} = (1/16, 1/4], \\
I^{(3)} &= (1/4, 9/16], I^{(4)} = (9/16, 1]\,. 
\end{align}






\subsubsection{Decoding} \label{sec::decoding}


To decode $n \in [N]$, we pick some $\rawvar_{(n)} \in I^{(n)}$ to represent all $x \in
I^{(n)}$; for a given $x$ (at granularity $N$), its representation is denoted $\rawvar(x) =
\rawvar_{(n_N(x))}$. This is generally either the \emph{midpoint} of the bin or, if $x$ is drawn
randomly from a known prior\footnote{Priors on $\triangle_{\az-1}$ induce priors over $[0,1]$ for each entry.} $p$, the \emph{centroid} (the mean within bin $I^{(n)}$). 
The midpoint and centroid of $I^{(n)}$ are defined, respectively, as 
\begin{align}
 \bar{y}_{(n)} &= {1\over2} \left(\comp^{-1}\left(\frac{n-1}{N}\right) + \comp^{-1}\left(\frac{n}{N}\right)\right)\\
    \widetilde{y}_{(n)} &= \bbE_{X \sim p} [X \, | \, X \in I^{(n)}] \,.
\end{align}
We will discuss this in greater detail in \Cref{sec::x-from-prior}. 



Handling each element of $\bx$ separately means the decoded values may not sum to $1$, so we normalize the vector after decoding. Thus, if $\bx$ is the input,
\begin{align}\label{eq::norm_step}
    \normvar_i(\bx) = \frac{\rawvar(x_i)}{\sum_{j = 1}^\az \rawvar(x_j)}
\end{align}
and the vector $\bnormvar = \bnormvar(\bx) = (\normvar_1(\bx), \dots, \normvar_\az(\bx)) \in \triangle_{\az-1}$ is the output of the compander. This notation reflects the fact that each entry of the normalized reconstruction depends on all of $\bx$ due to the normalization step. We refer to $\brawvar = \brawvar(\bx) = (\rawvar(x_1), \dots, \rawvar(x_\az))$ as the \emph{raw} reconstruction of $\bx$, and $\bnormvar$ as the \emph{normalized} reconstruction. 
If the raw reconstruction uses centroid decoding, we likewise denote it using $\widetilde{\by} = \widetilde{\by}(\bx) = (\widetilde{y}(x_1), \dots, \widetilde{y}(x_\az))$. For brevity we may sometimes drop the $\bx$ input in the notation, e.g. $\bz := \bz(\bx)$; if $\bX$ is random we will sometimes denote its quantization as $\bZ := \bz(\bX)$.

Thus, any $\bx \in \triangle_{\az-1}$ requires $\az \lceil \log_2 N \rceil$  bits to store; to encode and decode, only $\comp$ and $N$ need to be stored (as well as the prior if using centroid decoding). Another major advantage is that a single $\comp$ can work well over many or all choices of $N$, making the design more flexible.

\subsubsection{KL divergence loss}

The loss incurred by representing $\bx$ as $\bnormvar := \bnormvar(\bx)$ is the KL divergence
\begin{align}\label{eq::kl_loss_norm}
    D_{\kl}(\bx\| \bnormvar) = \sum_{i=1}^{\az} x_i \log \frac{x_i}{\normvar_i} \,.
\end{align}
Although this loss function has some unusual properties (for instance $D_{\kl}(\bx \| \bnormvar) \neq D_{\kl}(\bnormvar \| \bx)$ and it does not obey the triangle inequality) it measures the amount of `mis-representation' created by representing the probability vector $\bx$ by another probability vector $\bnormvar$, and is hence is a natural quantity to minimize. In particular, it represents the excess code length created by trying to encode the output of $\bx$ using a code built for $\bnormvar$, as well as having connections to hypothesis testing (a natural setting in which the `difference' between probability distributions is studied).

\subsubsection{Distributions from a prior} \label{sec::x-from-prior}

Much of our work concerns the case where $\bx \in \triangle_{\az-1}$ is drawn from some prior $P_\bx$ (to be commonly denoted as simply $P$). Using a single $\comp$ for each entry means we can WLOG assume that $P$ is symmetric over the alphabet, i.e. for any permutation $\sigma$, if $\bX \sim P$ then $\sigma(\bX) \sim P$ as well. 
This is because for any prior $P$ over $\triangle_{\az-1}$, there is a symmetric prior $P'$ such that
    \begin{align}
        \bbE_{\bX \sim P} [D_{\kl}(\bX \| \bnormvar(\bX))] \hspace{-0.2pc} = \hspace{-0.2pc} \bbE_{\bX' \sim P'} [D_{\kl}(\bX' \| \bnormvar(\bX'))]
    \end{align}
for all $f$, where $\bnormvar(\bX)$ is the result of quantizing (to any number of levels) with $f$ as the compander. To get $\bX' \sim P'$, generate $\bX \sim P$ and a uniformly random permutation $\sigma$, and let $\bX' = \sigma(\bX)$.

We denote the set of symmetric priors as $\cP^\triangle_\az$. Note that a key property of symmetric priors is that their marginal distributions are the same across all entries, and hence we can speak of $P \in \cP^\triangle_\az$ having a single marginal $p$.

\begin{remark} \label{rmk::symmetric-distribution}
In principle, given a nonsymmetric prior $P_\bx$ over $\triangle_{\az-1}$ with marginals $p_1, \dots, p_\az$, we could quantize each letter's value with a different compander $f_1, \dots, f_\az$, giving more accuracy than using a single $f$ (at the cost of higher complexity). However, the symmetrization of $P_\bx$ over the letters (by permuting the indices randomly after generating $\bX \sim P_\bx$) yields a prior in $\cP^\triangle_\az$ on which any single $f$ will have the same (overall) performance and cannot be improved on by using varying $f_i$. Thus, considering symmetric $P_\bx$ suffices to derive our minimax compander.
\end{remark}

While the random probability vector comes from a prior $P \in \cP^\triangle_\az$, our analysis will rely on decomposing the loss so we can deal with one letter at a time. Hence, we work with the marginals $p$ of $P$ (which are identical since $P$ is symmetric), which we refer to as \emph{single-letter distributions} and are probability distributions over $[0,1]$.

We let $\cP$ denote the class of probability distributions over $[0,1]$ that are absolutely continuous with respect to the Lebesgue measure. We denote elements of $\cP$ by their probability density functions (PDF), e.g. $p \in \cP$; the cumulative distribution function (CDF) associated with $p$ is denoted $F_p$ and satisfies $F'_p(x) = p(x)$ and $F_p(x) = \int_0^x p(t) \, dt$ (since $F_p$ is monotonic, its derivative exists almost everywhere). Note that while $p \in \cP$ does not have to be continuous, its CDF $F_p$ must be absolutely continuous. Following common terminology~\cite{grimmett2001}, we refer to such probability distributions as \emph{continuous}.



Let $\cP_{1/\az} = \{p \in \cP : \bbE_{X\sim p}[X] = 1/\az\}$. Note that $P \in \cP^\triangle_\az$ implies its marginals $p$ are in $\cP_{1/\az}$. 




\subsubsection{Expected loss and preliminary results}

For $P \in \cP^\triangle_\az$, $\comp \in \compset$ and granularity $N$, we define the \emph{expected loss}:
\begin{equation}\label{eq::def_loss}
    \cL_\az(P, \comp, N) = \bbE_{\bX \sim P}[D_{\kl}(\bX \| \bnormvar(\bX))]\,.
\end{equation}
This is the value we want to minimize over $\comp$.

\begin{remark}
While $\bX$ and $\bnormvar(\bX)$ are random, they are also probability vectors. The KL divergence $D_{\kl}(\bX \| \bnormvar(\bX))$ is the divergence between $\bX$ and $\bnormvar(\bX)$ themselves, not the prior distributions over $\triangle_{\az-1}$ they are drawn from.
\end{remark}

Note that $\cL_\az(P,\comp,N)$ can almost be decomposed into a sum of $\az$ separate expected values, except the normalization step \eqref{eq::norm_step} depends on the random vector $\bX$ as a whole. Hence, we define the \emph{raw loss}:
\begin{align} \label{eq::raw-loss}
    \rawloss_\az(P, \comp, N) \hspace{-0.2pc} = \hspace{-0.2pc} \bbE_{\bX \sim P}\Big[\sum_{i=1}^\az X_i \log(X_i/\widetilde{y}(X_i))\Big]\,.
\end{align}
We also define for $p \in \cP$, the \emph{single-letter loss} as
\begin{align} \label{eq::raw-ssl}
    \singleloss(p, \comp, N) =  \bbE_{X \sim p} \big[ X \log ( X/\widetilde{y}(X)) \big]\,.
\end{align}
The raw loss is useful because it bounds the (normalized) expected loss and is decomposable into single-letter losses. Note that both raw and single-letter loss are defined with centroid decoding.
\begin{proposition}\label{lem::im-a-barby-girl}
For $P \in \cP^\triangle_\az$ with marginals $p$,  
    \begin{align}
        \cL_\az(P, \comp, N) \leq \rawloss_\az(P, \comp, N) = \az \, \singleloss(p,\comp,N)\,.
    \end{align}
\end{proposition}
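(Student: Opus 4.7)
The plan is to prove the two claims separately: first the equality $\rawloss_\az(P, \comp, N) = \az\,\singleloss(p,\comp,N)$, which is essentially bookkeeping, and then the inequality $\cL_\az \le \rawloss_\az$, which requires a short Jensen argument that exploits symmetry of $P$.

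For the equality, I would start from the definition \eqref{eq::raw-loss}, pull the finite sum outside the expectation by linearity, and observe that because $P \in \cP^\triangle_\az$ is symmetric, every coordinate $X_i$ has the same marginal $p$. Moreover, since the compander $\comp$ and granularity $N$ are the same for every coordinate, the centroid map $\widetilde{y}(\cdot)$ is identical across $i$. Therefore each of the $\az$ summands equals $\bbE_{X \sim p}[X \log(X/\widetilde{y}(X))] = \singleloss(p,\comp,N)$, giving the factor $\az$.

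For the inequality, I would substitute the normalization $\normvar_i(\bX) = \widetilde{y}(X_i)/S$ with $S := \sum_{j=1}^{\az} \widetilde{y}(X_j)$ into the KL formula \eqref{eq::kl_loss_norm}. Splitting the logarithm and using $\sum_i X_i = 1$ gives the pointwise identity
\begin{equation*}
D_{\kl}(\bX \| \bnormvar(\bX)) = \sum_{i=1}^\az X_i \log\frac{X_i}{\widetilde{y}(X_i)} + \log S.
\end{equation*}
Taking expectation yields $\cL_\az(P,\comp,N) = \rawloss_\az(P,\comp,N) + \bbE[\log S]$, so it remains to show $\bbE[\log S] \le 0$. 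By Jensen's inequality (concavity of $\log$) this is bounded above by $\log \bbE[S]$, and $\bbE[S] = \sum_i \bbE[\widetilde{y}(X_i)]$. Since $\widetilde{y}(X_i) = \bbE[X_i \mid n_N(X_i)]$, the tower property gives $\bbE[\widetilde{y}(X_i)] = \bbE[X_i]$, and symmetry together with $\sum_i X_i = 1$ forces $\bbE[X_i] = 1/\az$. Hence $\bbE[S] = 1$ and $\log \bbE[S] = 0$, completing the proof.

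I do not anticipate a serious obstacle: the only subtle point is verifying that $\bbE[\widetilde{y}(X_i)] = \bbE[X_i]$, which requires noting that the conditional expectation used to define the centroid is taken with respect to the same marginal $p$ that governs $X_i$ under $P$. This relies on the symmetry assumption $P \in \cP^\triangle_\az$, without which the identity $\bbE[S] = 1$ would fail and the Jensen step would not close.
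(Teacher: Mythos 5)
Your proof is correct and follows essentially the same route as the paper: split off the normalization term $\log S$, use that centroid decoding gives $\bbE[\widetilde{y}(X_i)] = \bbE[X_i]$ so $\bbE[S]=1$, and apply Jensen to conclude $\bbE[\log S] \le 0$. Your added observation that the symmetry of $P$ is precisely what makes the centroid (defined via a single marginal $p$) compatible with every coordinate is a useful clarification of a point the paper leaves implicit.
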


\iflong
\begin{proof}
Separating out the normalization term gives
\begin{align}
    \cL \eqstartnonumshort (P, \comp, N) =  \bbE_{\bX \sim P} [D_{\kl}(\bX || \bnormvar(\bX))]
     \\ &=  \rawloss_\az(P, \comp, N) + \bbE_{\bX \sim P}  \left[ \log \left( \sum_{i=1}^\az \widetilde{y}(X_i) \right)\right] \,.
\end{align}
Since $\bbE[\widetilde{y}(X_i)] = \bbE[X_i]$ for all $i$, $\sum_{i = 1}^\az \bbE[\widetilde{y}(X_i)] =\sum_{i = 1}^\az \bbE[{X}_i] = 1 $.
Because $\log$ is concave, by Jensen's Inequality
\begin{align}
    \bbE_{\bX \sim P} \bigg[\log \Big( \sum_{i=1}^\az \widetilde{y}(X_i) \Big)\bigg] &\leq \log \Big( \bbE \Big[\sum_{i=1}^\az \widetilde{y}(X_i)\Big] \Big)
    \\&= \log(1) = 0
\end{align}
and we are done.\footnote{An upper bound similar to \Cref{lem::im-a-barby-girl} can be found in \cite[Lemma 1]{benyishai2021}.}
\end{proof}
\fi 
To derive our results about worst-case priors (for instance, \Cref{thm::minimax_compander}), we will also be interested in $\singleloss(p,\comp,N)$ even when $p$ is not known to be a marginal of some $P \in \cP^\triangle_\az$.

\begin{remark} \label{rmk::centroid-needed}
Though one can define raw and single-letter loss without centroid decoding (replacing $\widetilde{y}$ in \eqref{eq::raw-loss} or \eqref{eq::raw-ssl} with another decoding method $\widehat{y}$), this removes much of their usefulness. This is because the resulting expected loss can be dominated by the difference between $\bbE[X]$ and $\bbE[\widehat{y}(X)]$, potentially even making it negative; specifically, the Taylor expansion of $X \log(X/\widehat{y}(X))$ has $X - \widehat{y}(X)$ in its first term, which can have negative expectation.

While this can make the expected `raw loss' negative under general decoding, it cannot be exploited to make the (normalized) expected loss negative 
because the normalization step $\normvar_i(\bX) = \widehat{y}(X_i)/\sum_j \widehat{y}(X_j)$ cancels out the problematic term. Centroid decoding avoids this problem by ensuring $\bbE[X] = \bbE[\widetilde{y}(X)]$, removing the issue.
\end{remark}

As we will show, when $N$ is large these values are roughly proportional to $N^{-2}$ (for well-chosen $\comp$) and so we define the \emph{asymptotic single-letter loss}:
\begin{align} \label{eq::raw-assl}
    \singleloss(p,\comp) = \lim_{N \to \infty} N^2 \singleloss(p,\comp,N)\,.
\end{align}
We similarly define $\rawloss_\az(P,\comp)$ and $\cL_\az(P,\comp)$. 
While the limit in \eqref{eq::raw-assl} does not necessarily exist for every $p, \comp$, we will show that one can ensure it exists by choosing an appropriate $\comp$ (which works against any $p \in \cP$), and cannot gain much by not doing so.


\section{Results}

\label{sec::main-theorems}

We demonstrate, theoretically and experimentally, the efficacy of companding for quantizing probability distributions with KL divergence loss.

%

%

\subsection{Theoretical Results} \label{sec::theoretical-results}

While we will occasionally give intuition for how the results here are derived, our primary concern in this section is to fully state the results and to build a clear framework for discussing them. 


Our main results concern the formulation and evaluation of a \emph{minimax compander} $\comp^*_\az$ for alphabet size $\az$, which satisfies
\begin{align} \label{eq::minimax-condition}
    \comp^*_\az = \underset{\comp \, \in \, \compset}{\argmin} \underset{p \, \in \, \cP_{1/\az}}{\sup} \widetilde{L}(p,\comp) \,.
\end{align}
We require $p \in \cP_{1/\az}$ because if $P \in \cP^\triangle_\az$ and is symmetric, its marginals are in $\cP_{1/\az}$. 

The natural counterpart of the minimax compander $\comp^*_\az$ is the \emph{maximin density} $p^*_\az \in \cP_{1/\az}$, satisfying
\begin{align} \label{eq::maximin-condition}
    p^*_\az = \underset{p \, \in \, \cP_{1/\az}}{\argmax} \underset{\comp \, \in \, \compset}{\inf} \widetilde{L}(p,\comp) \,.
\end{align}
We call \eqref{eq::minimax-condition} and \eqref{eq::maximin-condition}, respectively, the \emph{minimax condition} and the \emph{maximin condition}.

In the same way that the minimax compander gives the best performance guarantee against an unknown single-letter prior $p \in \cP_{1/\az}$ (asymptotic as $N \to \infty$), the maximin density is the most difficult prior to quantize effectively as $N \to \infty$. Since they are highly related, we will define them together:

\begin{proposition} \label{prop::maximin-density}
    For alphabet size $\az > 4$, there is a unique $c_{\az} \in [\frac{1}{4}, \frac{3}{4}]$ such that if $a_{\az} = (4/(c_{\az} \az \log \az + 1))^{1/3}$ and $b_{\az} = 4/a_{\az}^2 - a_{\az}$, then the following density is in $\cP_{1/\az}$:
    \begin{align}
        &p^*_{\az}(x) = (a_{\az} x^{1/3} + b_{\az} x^{4/3})^{-3/2} \label{eq::maximin-density}\,.
    \end{align}
    Furthermore, $\lim_{\az \to \infty} c_{\az} = 1/2$.
\end{proposition}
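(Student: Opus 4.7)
The plan is to verify the two defining conditions (valid density, mean $1/\az$) by direct integration in terms of $a_\az, b_\az$, showing that the normalization condition alone fixes $b_\az = 4/a_\az^2 - a_\az$, and then that imposing the mean constraint determines $c_\az$ uniquely. The asymptotic statement $c_\az \to 1/2$ then follows from an expansion of the mean as $a_\az \to 0$.

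First I would compute $\int_0^1 p^*_\az(x)\,dx$ in closed form. Substituting $v = x^{1/3}$ and then $w = a + bv^3$ reduces the integral to $(1/\sqrt{b})\int_a^{a+b}(w-a)^{-1/2}w^{-3/2}\,dw$, and the change of variables $w = a/t$ turns this into an elementary integral that evaluates to
\[
\int_0^1 (a x^{1/3} + b x^{4/3})^{-3/2}\,dx \;=\; \frac{2}{a\sqrt{a+b}}\,.
\]
Setting this equal to $1$ yields exactly $b = 4/a^2 - a$, which is the definition of $b_\az$. Since $a_\az = (4/(c_\az\az\log\az + 1))^{1/3} > 0$ and $a_\az^3 < 4$ whenever $c_\az \az \log \az > 0$, one also gets $b_\az > 0$, so $p^*_\az$ is a well-defined PDF for every $c \in [1/4,3/4]$ (using $\az > 4$).

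Next I would compute the mean via the same sequence of substitutions applied to $\int_0^1 x\,p^*_\az(x)\,dx$, this time passing through $(1/b^{3/2})\int_a^{a+b}(w-a)^{1/2}w^{-3/2}\,dw$ and using a trig substitution $t = \cos^2\theta$ to obtain the closed form
\[
\mu(a) \;=\; \mathbb{E}[X] \;=\; \frac{2}{b^{3/2}}\left(\ln\frac{\sqrt{a+b}+\sqrt{b}}{\sqrt{a}} \;-\; \sqrt{\frac{b}{a+b}}\right),
\]
where $b$ is determined by the normalization constraint. I would then show that $\mu(a)$ is strictly monotonic in $a$ on the relevant range (equivalently, strictly monotonic in $c$ since $a$ is monotonic in $c$), allowing us to invoke the intermediate value theorem. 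For the endpoint check, I would verify by direct estimation that $\mu(a(1/4)) > 1/\az$ and $\mu(a(3/4)) < 1/\az$ whenever $\az > 4$; these yield the existence and uniqueness of $c_\az \in [1/4,3/4]$.

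For the asymptotic claim, note that as $\az \to \infty$ the parameter $a_\az \to 0$, hence $b_\az \sim 4/a_\az^2$ and $\sqrt{b_\az/(a_\az + b_\az)} \to 1$. Substituting into the closed form above gives
\[
\mu(a_\az) \;\sim\; \frac{a_\az^3}{4}\left(\tfrac{1}{2}\log\az \;-\; 1\right) \;\sim\; \frac{1}{2 c_\az \az},
\]
using $a_\az^3 = 4/(c_\az\az\log\az + 1)$. Equating to $1/\az$ forces $c_\az \to 1/2$, as claimed. The main obstacle I anticipate is step three: the mean integral is not particularly clean, so establishing strict monotonicity of $\mu(a(c))$ in $c$ (as opposed to merely numerical verification) and nailing down the endpoint inequalities uniformly in $\az > 4$ will require some careful tracking of the competing logarithmic and algebraic terms. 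Everything else reduces to the two contour-free integrals above and a standard asymptotic expansion.
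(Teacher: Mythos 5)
Your proposal follows essentially the same route as the paper: the two closed-form integrals you compute are exactly the paper's, $\int_0^1 p^*_\az\,dx = 2/(a\sqrt{a+b})$ giving $b_\az = 4/a_\az^2 - a_\az$, and your expression for $\mu(a)$ is the paper's $\bbE[X] = -2/(b\sqrt{a+b}) + 2\arcsinh(\sqrt{b/a})/b^{3/2}$ in slightly different notation (using $\arcsinh(\sqrt{\newvar}) = \log(\sqrt{\newvar}+\sqrt{\newvar+1})$). Your asymptotic calculation for $c_\az \to 1/2$ is also in line with the paper's. The one place your plan is more optimistic than what the paper actually does is the endpoint step: you propose to show $\mu(a(1/4)) > 1/\az$ and $\mu(a(3/4)) < 1/\az$ analytically ``whenever $\az > 4$,'' but the paper's analytic argument does not reach that far. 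Instead of evaluating $\mu$ directly at the endpoints, the paper sandwiches the mean, written as a function of $r = b_\az/a_\az$, between $\frac{1}{3}\frac{\log r}{r}$ and $\frac{1}{2}\frac{\log r}{r}$ (the upper bound for $r > 3$), then plugs in $r = c\az\log\az$; the upper bound gives $\mu \leq 1/\az$ at $c = 3/4$ for $\az > 4$, but the lower bound only gives $\mu \geq 1/\az$ at $c = 1/4$ for $\az > 24$, and the remaining cases $5 \leq \az \leq 24$ are handled by numerical evaluation. So you should either adopt the paper's $\frac{\log r}{r}$-sandwich strategy and accept the numerical check for small $\az$, or be prepared for the endpoint estimates to not hold uniformly by a purely analytic argument down to $\az = 5$. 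Also note that neither you nor the paper explicitly proves strict monotonicity of $\mu$ in $c$ (which is what the ``unique'' in the statement really rests on); the paper's argument as written delivers existence via continuity plus the endpoint bounds, and treats uniqueness as implicit from the evident decrease of $\log r / r$, so if you want a fully airtight uniqueness claim you would still need to differentiate $\mu(r)$ and check its sign on the relevant range.
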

Note that this is both a result and a definition: we show that $a_\az, b_\az, c_\az$ exist which make the definition of $p^*_\az$ possible. 
%
%
With the constant $c_\az$, we define the minimax compander:
\begin{definition} \label{def::minimax-compander}
    Given the constant $c_\az$ as shown to exist in \Cref{prop::maximin-density}, the \emph{minimax compander} is the function $f^*_\az : [0,1] \to [0,1]$ where
    \begin{align}\label{eq::minimax-compander}
        \comp^*_\az(x) = \frac{\arcsinh(\sqrt{c_\az  (\az \log \az) \, x})}{\arcsinh(\sqrt{c_\az  \az \log \az})}\,.
    \end{align}
    The \emph{approximate minimax compander} $f^{**}_\az$ is
    \begin{align} \label{eq::appx-minimax-compander}
        \comp^{**}_\az(x) = \frac{\arcsinh(\sqrt{(1/2) (\az \log \az) \, x})}{\arcsinh(\sqrt{(1/2) \az \log \az})}\,.
    \end{align}
\end{definition}

\begin{remark} \label{rmk::minimax-is-closed-form}
While $\comp^*_\az$ and $\comp^{**}_\az$ might seem complex,
$
    \arcsinh(\sqrt{\newvar}) = \log(\sqrt{\newvar} + \sqrt{\newvar+1})
$
so they are relatively simple functions to work with.
\end{remark}

We will show that $f^*_\az, p^*_\az$ as defined above satisfy their respective conditions \eqref{eq::minimax-condition} and \eqref{eq::maximin-condition}: 
\begin{theorem}\label{thm::minimax_compander}
The minimax compander $\comp^*_\az$ and maximin single-letter density $p^*_\az$ satisfy
\begin{align}
    &\sup_{p \in \cP_{1/\az}} \singleloss(p,\comp^*_\az) = \inf_{\comp \in \compset} \sup_{p \in \cP_{1/\az}} \singleloss(p,\comp)   \label{eq::minmax}
    \\ = & \sup_{p \in \cP_{1/\az}} \inf_{\comp \in \compset} \singleloss(p,\comp) = \inf_{\comp \in \compset} \singleloss(p^*_\az, \comp) \label{eq::maxmin}
\end{align}
which is equal to $\singleloss(p^*_\az, \comp^*_\az)$ and satisfies
\begin{align} \label{eq::raw_loss_saddle}
    \singleloss(p^*_\az, \comp^*_\az) = \frac{1}{24} (1 + o(1)) \az^{-1}\log^2 \az.
\end{align}
\end{theorem}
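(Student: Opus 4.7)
The plan is to first establish a Bennett-type closed-form expression for the asymptotic single-letter loss, then solve the two nested variational problems to derive $\comp^*_\az$ and $p^*_\az$, and finally verify that these form a saddle point and evaluate the saddle value.

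As a first step, I would show that for smooth enough $p$ and $f$,
\begin{align}
    \singleloss(p,\comp) = \frac{1}{24} \int_0^1 \frac{p(x)}{x \, \comp'(x)^2}\, dx\,.
\end{align}
The derivation follows Bennett's argument: in each bin $I^{(n)}$ the width is $w_n \approx 1/(N \comp'(x_n))$, the centroid $\widetilde{y}_n$ equals the conditional mean so the first-order term in the Taylor expansion $x \log(x/\widetilde{y}) = (x-\widetilde{y}) + (x-\widetilde{y})^2/(2\widetilde{y}) + O((x-\widetilde{y})^3)$ vanishes after integrating against the conditional distribution, and the conditional variance within the bin is $\sim w_n^2/12$. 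Summing the contributions $\tfrac{1}{2x_n}\cdot p(x_n) w_n^3/12$ and taking $N\to\infty$ yields the formula. Justifying the interchange of limit and integral, and handling the singular behavior near $x=0$, will be a technical point that I would address by a truncation argument.

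With this formula in hand, I would solve $\inf_\comp \singleloss(p,\comp)$ for fixed $p$ by Lagrange multipliers with the constraint $\int_0^1 \comp'(x)\,dx = 1$. The Euler--Lagrange equation gives $\comp'(x) \propto (p(x)/x)^{1/3}$, and substituting back yields
\begin{align}
    \inf_\comp \singleloss(p,\comp) = \frac{1}{24}\Big(\int_0^1 (p(x)/x)^{1/3}\,dx\Big)^3\,.
\end{align}
I then maximize the functional $H(p) = \int_0^1 (p/x)^{1/3}\,dx$ over $p \in \cP_{1/\az}$. A second Lagrange multiplier calculation (with multipliers $\mu, \nu$ for $\int p = 1$ and $\int xp = 1/\az$) gives stationary points of the form $p^*(x) = (3\mu x^{1/3} + 3\nu x^{4/3})^{-3/2}$, matching the form in \Cref{prop::maximin-density} with $a_\az = 3\mu$, $b_\az = 3\nu$. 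Showing that $\mu, \nu$ can actually be chosen to satisfy both constraints, and extracting the asymptotic $c_\az \to 1/2$, is exactly the content of \Cref{prop::maximin-density}, which I would invoke; the bounds $c_\az \in [1/4, 3/4]$ would follow from monotonicity/continuity in the parameter and explicit bracketing. Computing $\comp^*_\az$ by integrating $(p^*/x)^{1/3} = t^{-1/2}(a+bt)^{-1/2}$ with the substitution $u=\sqrt{t}$ produces exactly $\arcsinh(\sqrt{bx/a})/\arcsinh(\sqrt{b/a})$, and the identity $b/a = 4/a^3 - 1 = c_\az \az \log \az$ recovers \eqref{eq::minimax-compander}.

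The main obstacle, and the key insight that makes the saddle-point verification work, is the following miracle. For $\comp = \comp^*_\az$, the weight function in $\singleloss(p,\comp^*_\az)$ is $1/(x\,(\comp^*)'(x)^2) \propto (p^*(x))^{-2/3} x^{-1/3} = 3(\mu + \nu x)$, which is \emph{affine in $x$}. Consequently $\singleloss(p,\comp^*_\az)$ depends on $p$ only through $\int p$ and $\int xp$, so it is constant over all $p \in \cP_{1/\az}$. This immediately yields $\sup_p \singleloss(p,\comp^*_\az) = \singleloss(p^*_\az,\comp^*_\az)$, which combined with the already-shown optimality $\inf_\comp \singleloss(p^*_\az,\comp) = \singleloss(p^*_\az,\comp^*_\az)$ establishes the saddle point and the minimax equalities \eqref{eq::minmax}--\eqref{eq::maxmin}. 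Finally, for \eqref{eq::raw_loss_saddle}, I would evaluate $H(p^*_\az) = \int_0^1 dt/\sqrt{t(a+bt)} = (2/\sqrt{b})\arcsinh(\sqrt{b/a})$, and plug in the asymptotics $b \sim 4^{1/3}(c_\az \az \log \az)^{2/3}$ and $\arcsinh(\sqrt{b/a}) \sim \tfrac{1}{2}\log \az$ together with $c_\az \to 1/2$ to obtain $\singleloss(p^*_\az,\comp^*_\az) = \tfrac{1}{24}(1+o(1))\az^{-1}\log^2 \az$.
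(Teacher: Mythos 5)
Your proposal is correct and follows essentially the same route as the paper: derive the Bennett-type formula $L^\dagger(p,f) = \tfrac{1}{24}\int p(x) x^{-1} f'(x)^{-2}\,dx$, solve the inner variational problem to get $f_p'(x) \propto (p(x)/x)^{1/3}$, add moment constraints to get $p^*_\az$, and then observe the crucial ``miracle'' that $1/(x\,(\comp^*_\az)'(x)^2) \propto a_\az + b_\az x$ is affine in $x$, so $\singleloss(p,\comp^*_\az)$ depends on $p$ only through $\int p\,dx$ and $\int xp\,dx$ and is therefore constant over $\cP_{1/\az}$, giving the saddle point. This is exactly the paper's argument, including the arcsinh integration and the asymptotic evaluation via $c_\az \to 1/2$.
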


Since any symmetric $P \in \cP^\triangle_\az$ has marginals $p \in \cP_{1/\az}$, this (with \Cref{lem::im-a-barby-girl}) implies an important corollary for the normalized KL-divergence loss incurred by using the minimax compander:
\begin{corollary}\label{cor::worstcase_prior}
For any prior $P \in \cP^{\triangle}_\az$,
\begin{align}
    \cL_\az(P,\comp^*_\az) \leq \rawloss_\az(P,\comp^*_\az) = \frac{1}{24} (1 + o(1))\log^2 \az \,.
\end{align}
\end{corollary}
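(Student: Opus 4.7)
The plan is to deduce the corollary directly from \Cref{lem::im-a-barby-girl} and \Cref{thm::minimax_compander}; essentially no new work is required beyond verifying a hypothesis and passing to the $N\to\infty$ limit. First, I would observe that any symmetric prior $P \in \cP^{\triangle}_\az$ has a common marginal $p$ which lies in $\cP_{1/\az}$: by symmetry, $\EE_{\bX \sim P}[X_i]$ is the same for every $i$, and the simplex constraint $\sum_i X_i = 1$ forces this common value to equal $1/\az$. Hence $p$ is exactly the type of single-letter density to which the minimax result applies.

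Next, I would apply \Cref{lem::im-a-barby-girl} at each finite granularity $N$, multiply through by $N^2$, and send $N \to \infty$, giving
\begin{align}
\cL_\az(P,\comp^*_\az) \;\leq\; \rawloss_\az(P,\comp^*_\az) \;=\; \az \, \singleloss(p,\comp^*_\az),
\end{align}
by the definitions \eqref{eq::raw-assl} of the asymptotic losses (and the analogous definitions for $\rawloss_\az$ and $\cL_\az$). The limit on the right-hand side exists because, as noted after \eqref{eq::raw-assl}, the minimax compander is designed precisely so that $\singleloss(p,\comp^*_\az)$ is well-defined for every $p \in \cP$; this legitimizes the passage to the limit inside the inequality.

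Finally, since $p \in \cP_{1/\az}$, \Cref{thm::minimax_compander} yields
\begin{align}
\singleloss(p,\comp^*_\az) \;\leq\; \sup_{q \in \cP_{1/\az}} \singleloss(q,\comp^*_\az) \;=\; \singleloss(p^*_\az,\comp^*_\az) \;=\; \tfrac{1}{24}(1+o(1))\,\az^{-1}\log^2 \az.
\end{align}
Multiplying by $\az$ and chaining with the previous display produces the claimed bound. The only subtle point in this proof is the interchange of the limit in $N$ with the inequality, which must be justified by appeal to the existence statement embedded in \Cref{thm::minimax_compander}; everything else is a routine assembly of the stated ingredients, so I do not anticipate any real obstacle.
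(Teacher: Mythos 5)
There is a real gap: you prove only the upper bound
\begin{align}
\rawloss_\az(P,\comp^*_\az) \;\leq\; \frac{1}{24}(1+o(1))\log^2\az,
\end{align}
whereas \Cref{cor::worstcase_prior} asserts an \emph{equality}. Your chain $\singleloss(p,\comp^*_\az) \leq \sup_{q\in\cP_{1/\az}}\singleloss(q,\comp^*_\az) = \singleloss(p^*_\az,\comp^*_\az)$ loses information at the first step, and nothing in your argument recovers it. The missing ingredient is the \emph{equalizer property} of the minimax compander, which the paper establishes explicitly in \Cref{sec::minimax}: since $(\comp^*_\az)'(x) \propto (a_\az x + b_\az x^2)^{-1/2}$, one computes for any $p\in\cP_{1/\az}$ that
\begin{align}
L^\dagger(p,\comp^*_\az) \;\propto\; \int_0^1 p(x)\,x^{-1}\,(a_\az x + b_\az x^2)\,dx \;=\; a_\az + b_\az\,\bbE_{X\sim p}[X] \;=\; a_\az + \frac{b_\az}{\az},
\end{align}
which does not depend on $p$ at all. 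Combined with $\comp^*_\az\in\compset^\dagger$ (so $\singleloss(p,\comp^*_\az)=L^\dagger(p,\comp^*_\az)$), this gives $\singleloss(p,\comp^*_\az)=\singleloss(p^*_\az,\comp^*_\az)$ for \emph{every} marginal $p\in\cP_{1/\az}$, and hence $\rawloss_\az(P,\comp^*_\az) = \az\,\singleloss(p,\comp^*_\az)$ equals the claimed value for every symmetric $P$, not just in the worst case. Your observation that marginals of symmetric priors lie in $\cP_{1/\az}$, and your use of \Cref{lem::im-a-barby-girl} for $\cL_\az\leq\rawloss_\az$, are both fine; but to close the argument you must replace the supremum step with the explicit computation showing $L^\dagger(\cdot,\comp^*_\az)$ is constant on $\cP_{1/\az}$.

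One secondary remark: you invoke ``passing to the $N\to\infty$ limit inside the inequality'' somewhat loosely. This is legitimate here precisely because $\comp^*_\az\in\compset^\dagger$ guarantees (via \Cref{thm::asymptotic-normalized-expdiv}) that $\lim_N N^2\singleloss(p,\comp^*_\az,N)$ exists and equals $L^\dagger(p,\comp^*_\az)$, and $\rawloss_\az(P,\comp^*_\az,N)=\az\,\singleloss(p,\comp^*_\az,N)$ makes the right-hand limit exist; the inequality with $\cL_\az$ then passes to $\limsup$ at worst. Worth stating that the membership $\comp^*_\az\in\compset^\dagger$ is doing this work, rather than gesturing at ``the existence statement embedded in \Cref{thm::minimax_compander}.''
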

However, the set of symmetric $P \in \cP^\triangle_\az$ does not correspond exactly with $p \in \cP_{1/\az}$: while any symmetric $P \in \cP^\triangle_\az$ has marginals $p \in \cP_{1/\az}$, it is not true that any given $p \in \cP_{1/\az}$ has a corresponding symmetric prior $P \in \cP^\triangle_\az$. Thus, it is natural to ask: can the minimax compander's performance be improved by somehow taking these `shape' constraints into account? The answer is `not by more than a factor of $\approx 2$':
\begin{proposition}\label{prop::bound_worstcase_prior_exist}
There is a prior $P^* \in \cP^{\triangle}_\az$ such that for any $P \in \cP^\triangle_\az$
\begin{align}\label{eq::bound_worstcase_prior}
    \inf_{\comp \in \compset} \rawloss_\az(P^*, \comp) \geq \frac{\az - 1}{2\az} \rawloss_\az(P, \comp^*_\az) \,.
\end{align}
\end{proposition}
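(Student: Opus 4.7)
The plan begins by applying \Cref{lem::im-a-barby-girl}: for any $P \in \cP^\triangle_\az$ with marginal $p$, one has $\rawloss_\az(P,\comp) = \az\,\singleloss(p,\comp)$. Since marginals of symmetric priors automatically lie in $\cP_{1/\az}$, \Cref{thm::minimax_compander} yields $\rawloss_\az(P, \comp^*_\az) \leq \az\, \singleloss(p^*_\az, \comp^*_\az)$ for every $P \in \cP^\triangle_\az$. Hence the proposition reduces to exhibiting a single $P^* \in \cP^\triangle_\az$ whose one-dimensional marginal $p^*$ satisfies $\inf_\comp \singleloss(p^*,\comp) \geq \frac{\az-1}{2\az}\,\singleloss(p^*_\az,\comp^*_\az)$.

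\textbf{Variational characterization.} Next I would invoke the asymptotic formula
\[
\inf_\comp \singleloss(p,\comp) \,=\, \frac{1}{24}\Bigl(\int_0^1 (p(x)/x)^{1/3}\,dx\Bigr)^{\!3},
\]
obtained by Lagrange optimization of the Bennett integral $\singleloss(p,\comp) \sim \frac{1}{24 N^2}\int p(x)/(x\,\comp'(x)^2)\,dx$ subject to $\int_0^1 \comp'(x)\,dx = 1$; the extremal compander is $\comp'(x) \propto (p(x)/x)^{1/3}$. The goal thereby reduces to producing a valid marginal $p^*$ with $\int_0^1 (p^*(x)/x)^{1/3}\,dx \geq \bigl(\tfrac{\az-1}{2\az}\bigr)^{1/3}\int_0^1 (p^*_\az(x)/x)^{1/3}\,dx$.

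\textbf{Construction of $P^*$.} The natural candidate is the following. Sample $X \sim p^*_\az$ and an independent uniformly random index $J \in [\az]$, then set $X_J = X$ and $X_i = (1-X)/(\az-1)$ for $i \neq J$. This $P^*$ is symmetric on $\triangle_{\az-1}$, and its one-dimensional marginal is $p^*(x) = \frac{1}{\az}p^*_\az(x) + \frac{\az-1}{\az}q(x)$, where $q$ is the pushforward of $p^*_\az$ under $X \mapsto (1-X)/(\az-1)$, supported on $[0, 1/(\az-1)]$. Symmetry of $P^*$ follows immediately from the uniform choice of $J$, and both $\bbE[X_J] = \bbE[X] = 1/\az$ and $\bbE[X_i] = \bbE[(1-X)/(\az-1)] = 1/\az$ hold by the mean constraint on $p^*_\az$.

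\textbf{Main obstacle.} The substantive step is to lower-bound $\int_0^1 (p^*/x)^{1/3}\,dx$. By concavity of $t \mapsto t^{1/3}$,
\[
\int (p^*/x)^{1/3}dx \,\geq\, \tfrac{1}{\az}\int (p^*_\az/x)^{1/3}dx \,+\, \tfrac{\az-1}{\az}\int (q/x)^{1/3}dx,
\]
and the change of variables $y = (1-x)/(\az-1)$ rewrites $\int (q/y)^{1/3} dy = (\az-1)^{-1/3}\int_0^1 p^*_\az(v)^{1/3}(1-v)^{-1/3}\,dv$, so both pieces can be compared to $\int p^*_\az^{1/3}$-type integrals. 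The sharp accounting of constants --- combining these pieces to produce the exact factor $(\az-1)/(2\az)$ rather than a weaker one --- is the main technical obstacle; it requires exploiting the closed form $p^*_\az(x) = (a_\az x^{1/3} + b_\az x^{4/3})^{-3/2}$ together with the asymptotics of $a_\az, b_\az$ from \Cref{prop::maximin-density}. If the mixture construction above proves too ``diluted'' to achieve the claimed factor directly, the backup plan is to realize $p^*_\az$ itself (or a close approximant) as a marginal of a symmetric prior --- a construction made plausible by the fact that $p^*_\az$ satisfies the necessary tail bound $\PP_{p^*_\az}[X > \alpha] \leq 1/(\az\alpha)$, which follows from its $\sim 1/(\az\alpha\log\az)$ tail.
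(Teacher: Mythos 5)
Your first two steps---reducing via $\rawloss_\az(P,\comp)=\az\,\singleloss(p,\comp)$ and invoking the variational formula $\inf_\comp\singleloss(p,\comp)=\frac{1}{24}\bigl(\int_0^1(p(x)/x)^{1/3}dx\bigr)^3$---match the paper exactly. The gap is in the construction of $P^*$, and it is fatal: the ``one hot coordinate'' prior you propose is far too diluted. Writing $\Phi=\int(p^*_\az(x)/x)^{1/3}dx$ and $\Psi=\int(q(y)/y)^{1/3}dy$, your concavity bound gives $\int(p^*/x)^{1/3}dx\ge\frac{1}{\az}\Phi+\frac{\az-1}{\az}\Psi$, and matching the target $\frac{\az-1}{2}\cdot\frac{1}{24}\Phi^3$ for $\az\,\inf_\comp\singleloss(p^*,\comp)$ requires $\Phi+(\az-1)\Psi\gtrsim 2^{-1/3}\az\,\Phi$. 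But $q$ is a pushforward of $p^*_\az$ under $x\mapsto(1-x)/(\az-1)$, which concentrates near $1/(\az-1)$ (easy to quantize, no singularity near $0$), and a short computation with the closed form of $p^*_\az$ gives $\Psi/\Phi=\Theta\bigl(\az^{-1/3}/\log\az\bigr)$. So $\Phi+(\az-1)\Psi=\Phi\bigl(1+\Theta(\az^{2/3}/\log\az)\bigr)\ll\Theta(\az)\Phi$, and the claimed factor is missed by roughly $\az^{1/3}\log\az$. You anticipated this (``too diluted''), and your backup plan---realizing $p^*_\az$ itself as a marginal via a tail bound---does not obviously work either; the paper explicitly notes it is unclear whether $p^*_\az$ is realizable as a marginal of a symmetric prior.

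The missing idea is a coupling (``cramming'') lemma plus a scaling trick. The paper proves (\Cref{lem::cramming2}) that for \emph{any} $p\in\cP_{1/\az}$, one can jointly couple $\az$ variables each with marginal $p$ so that their sum is $\le 2$ surely: partition $[0,1]$ into $\az$ equal quantile bands, assign bands to coordinates by a random permutation, and sample each coordinate by inverse-CDF within its band. Taking $\az-1$ of these with marginal $p^*_\az$ and halving gives $X_i=\newVar_i/2$ with $\sum_{i<\az}X_i\le 1$; setting $X_\az=1-\sum_{i<\az}X_i$ then lands in the simplex. The crucial point is that the scaled marginal $p^{**}_\az(x)=2p^*_\az(2x)$ satisfies $\int(p^{**}_\az/x)^{1/3}dx=\tfrac{4^{1/3}}{2}\Phi$, so $\inf_\comp\singleloss(p^{**}_\az,\comp)=\tfrac12\,L^\dagger(p^*_\az,\comp^*_\az)$---the scaling costs exactly a factor of $1/2$, not $\Theta(1/\az)$. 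Summing over the $\az-1$ equally hard letters and symmetrizing yields $\frac{\az-1}{2}L^\dagger(p^*_\az,\comp^*_\az)=\frac{\az-1}{2\az}\rawloss_\az(P,\comp^*_\az)$. In short: the construction must make $\az-1$ coordinates \emph{simultaneously} carry (a scaled copy of) the hard marginal, not a single random coordinate, and the scaling by $1/2$ enabled by the coupling lemma is the piece you would need to supply.
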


While the minimax compander satisfies the minimax condition \eqref{eq::minimax-condition}, it requires working with the constant $c_\az$, which, while bounded, is tricky to compute or use exactly. Hence, in practice we advocate using the \emph{approximate minimax compander} \eqref{eq::appx-minimax-compander}, which yields very similar asymptotic performance without needing to know $c_\az$:

\begin{proposition} \label{thm::approximate-minimax-compander}
Suppose that $\az$ is sufficiently large so that $c_\az \in [\frac{1}{2 (1 + \varepsilon)}, \frac{1 + \varepsilon}{2}]$. Then for any $p \in \cP$,
\begin{align}
    \singleloss(p,\comp^{**}_\az) \leq (1+ \varepsilon) \singleloss(p,\comp^*_\az)\,.
\end{align}
\end{proposition}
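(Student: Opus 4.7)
The plan is to reduce the inequality to a pointwise integrand comparison. Setting $M := \az \log \az$ and defining the one-parameter family $\comp_c(x) := \arcsinh(\sqrt{cMx})/\arcsinh(\sqrt{cM})$, observe that $\comp^{**}_\az = \comp_{1/2}$ and $\comp^{*}_\az = \comp_{c_\az}$. My starting point is the standard asymptotic Bennett-type identity
\begin{equation*}
\singleloss(p,\comp) = \frac{1}{24}\int_0^1 \frac{p(x)}{x(\comp'(x))^2}\,dx,
\end{equation*}
derived earlier in the paper from the second-order Taylor expansion of KL-divergence under centroid decoding. Writing $\varphi_c(x) := 1/(x(\comp_c'(x))^2)$, it suffices to prove the pointwise bound $\varphi_{1/2}(x) \leq (1+\varepsilon)\,\varphi_{c_\az}(x)$ for all $x \in (0,1]$; the proposition then follows by integrating against $p$.

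A direct computation using $\tfrac{d}{du}\arcsinh(\sqrt{u}) = (2\sqrt{u(u+1)})^{-1}$ gives the closed form $\varphi_c(x) = 4\arcsinh^2(\sqrt{cM})\bigl(x + (cM)^{-1}\bigr)$, so the pointwise ratio factors cleanly as
\begin{equation*}
\frac{\varphi_{1/2}(x)}{\varphi_{c_\az}(x)} = A \cdot B(x), \quad A := \frac{\arcsinh^2(\sqrt{M/2})}{\arcsinh^2(\sqrt{c_\az M})}, \quad B(x) := \frac{x + 2/M}{x + 1/(c_\az M)}.
\end{equation*}

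The analysis now splits on the sign of $c_\az - 1/2$, with $A$ and $B(x)$ trading off. If $c_\az \geq 1/2$, then $A \leq 1$ by monotonicity of $\arcsinh$, while $B(x) = (x+a)/(x+b)$ with $a = 2/M \geq b = 1/(c_\az M)$ is decreasing and attains its maximum $a/b = 2c_\az \leq 1+\varepsilon$ at $x=0$; thus $A \cdot B(x) \leq 1+\varepsilon$ uniformly. If $c_\az \leq 1/2$, then $a \leq b$ and $B(x) \leq 1$ on $(0,1]$, so the product is bounded by $A$ alone, and the burden falls entirely on establishing $A \leq 1+\varepsilon$.

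The main obstacle is this latter bound: showing $A \leq 1+\varepsilon$ when $c_\az \in [1/(2(1+\varepsilon)), 1/2]$. For this I would use the elementary two-sided estimate $\tfrac{1}{2}\log(4y) \leq \arcsinh(\sqrt{y}) \leq \tfrac{1}{2}\log(4(y+1))$ (immediate from $\arcsinh(\sqrt{y}) = \log(\sqrt{y}+\sqrt{y+1})$), which yields $\arcsinh(\sqrt{M/2})/\arcsinh(\sqrt{c_\az M}) = 1 + O(1/\log M)$ and hence $A = 1 + O(1/\log M)$, which can be forced below $1+\varepsilon$ for all sufficiently large $\az$. This is consistent with the proposition's hypothesis, since requiring $c_\az \in [1/(2(1+\varepsilon)), (1+\varepsilon)/2]$ already forces $\az$ to be large (as $c_\az \to 1/2$ only in the limit by \Cref{prop::maximin-density}). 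Combining the two cases establishes the pointwise bound and therefore the proposition.
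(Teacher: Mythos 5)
Your reduction to the pointwise integrand bound $\varphi_{1/2}(x) \leq (1+\varepsilon)\,\varphi_{c_\az}(x)$, and the closed form $\varphi_c(x) = 4\arcsinh^2(\sqrt{cM})\bigl(x + (cM)^{-1}\bigr)$, are both correct; the paper works with the derivative $\comp_c'$ directly rather than computing $\varphi_c$ explicitly, but it is the same comparison. Your Case 1 ($c_\az \geq 1/2$) is complete and sharp: $A \leq 1$ by monotonicity and $\sup_x B(x) = B(0) = 2c_\az \leq 1+\varepsilon$.

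The gap is in Case 2 ($c_\az \leq 1/2$), where you need $A \leq 1+\varepsilon$. You argue $A = 1 + O(1/\log M)$ via two-sided $\log$ estimates and then invoke ``$\az$ sufficiently large,'' remarking that the $c_\az$ constraint already forces $\az$ large. But the proposition's hypothesis is the $c_\az$ interval constraint alone; it does not supply a second, quantitative lower bound on $\az$ that would let you absorb the implicit $O(\cdot)$ constant (which itself depends on $\varepsilon$), and the observation that $c_\az \to 1/2$ forces $\az\to\infty$ does not establish the needed implication for the particular $\az$ you are handed. The fix is non-asymptotic and uses the $c_\az$ hypothesis directly. Write $\phi^*(1) = \arcsinh(\sqrt{c_\az M})$ and $\phi^{**}(1) = \arcsinh(\sqrt{M/2})$, so $A = (\phi^{**}(1)/\phi^*(1))^2$. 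With $\gamma^* := c_\az M$, $\gamma^{**} := M/2$, the derivative ratio satisfies, for every $t\in(0,1]$,
\begin{align}
\frac{(\phi^{**})'(t)}{(\phi^*)'(t)} = \sqrt{\frac{\gamma^{**}}{\gamma^*}}\,\sqrt{\frac{\gamma^* t + 1}{\gamma^{**} t + 1}} \;\leq\; \sqrt{\frac{\gamma^{**}}{\gamma^*}}\,,
\end{align}
and integrating over $[0,1]$ gives $\phi^{**}(1) \leq \sqrt{\gamma^{**}/\gamma^*}\,\phi^*(1)$, hence $A \leq \gamma^{**}/\gamma^* = (1/2)/c_\az \leq 1+\varepsilon$, using only $c_\az \geq \tfrac{1}{2(1+\varepsilon)}$. (Equivalently, observe that $u \mapsto \arcsinh(\sqrt{u})/\sqrt{u}$ is decreasing.) Substituting this for your asymptotic estimate closes Case 2 within the same case split you already set up, which is essentially the paper's proof.
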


Before we show how we get \Cref{thm::minimax_compander}, we make the following points:

\begin{remark}\label{rmk::loss_with_uniform}
If we use the uniform quantizer instead of minimax there exists a $P \in \cP^\triangle_\az$ where
\begin{align}\label{eq::uniform_achieve}
    \bbE_{\bX \sim P}[D_{\kl}(\bX \| \bnormVar)] = \Theta\left(\az^2 N^{-2} \log N \right)\,.
\end{align}
This is done by using marginal density $p$ uniform on $[0,2/\az]$. To get a prior $P \in \cP^\triangle_\az$ with these marginals, if $\az$ is even, we can pair up indices so that $x_{2j-1} = 2/\az - x_{2j}$ for all $j = 1, \dots, \az/2$ (for odd $\az$, set $x_\az = 1/\az$) and then symmetrize by permuting the indices. See \Cref{sec::uniform} for more details.

The dependence on $N$ is worse than $N^{-2}$ resulting in $\widetilde{L}(p,f) = \infty$. This shows theoretical suboptimality of the uniform
quantizer. Note also that the quadratic dependence on $\az$ is significantly worse than the $\log^2 \az$ dependence achieved by the minimax compander.

Incidentally, other single-letter priors such as $p(x) = (1-\alpha)x^{-\alpha}$ where $\alpha = \frac{\az-2}{\az-1}$ can achieve worse dependence on $N$ (specifically, $N^{-(2-\alpha)}$ for this prior). However, the example above achieves a bad dependence on both $N$ and $\az$ simultaneously, showing that in all regimes of $\az, N$ the uniform quantizer is vulnerable to bad priors.

\end{remark}

\begin{remark}
Instead of the KL divergence loss on the simplex, we can do a similar analysis to find the minimax compander for $L_2^2$ loss on the unit hypercube. The solution 
is given by the identity function $\comp(x)=x$ corresponding to the standard (non-companded) uniform quantization. (See \Cref{sec::other_losses}.)
\end{remark}


To show \Cref{thm::minimax_compander} we formulate and show a number of intermediate results which are also of significant interest for a theoretical understanding of companding under KL divergence, in particular studying the asymptotic behavior of $\widetilde{L}(p,f,N)$ as $N \to \infty$. We define:
\begin{definition}
For $p \in \cP$ and $\comp \in \compset$, let
\begin{align}
    L^\dagger(p,\comp) &= \frac{1}{24} \int_0^1 p(x) \compder(x)^{-2} x^{-1} \, dx
    \\ &= \bbE_{X \sim p}\Big[\frac{1}{24}\compder(X)^{-2} X^{-1}\Big] \label{eq::raw_loss}
    \,.
\end{align}
\end{definition}

For full rigor, we also need to define a set of `well-behaved' companders:
\begin{definition}
    Let $\compset^\dagger \subseteq \compset$ be the set of $\comp$ such that for each $f$ there exist constants $c > 0$ and $\alpha \in (0,1/2]$ for which $\comp(x) - c x^{\alpha}$ is still monotonically increasing.
\end{definition}

Then the following describes the asymptotic single-letter loss of compander $f$ on prior $p$ (with centroid decoding):
\begin{theorem}  \label{thm::asymptotic-normalized-expdiv}
    For any $p \in \cP$ and $\comp \in \compset$,
    \begin{align}
        \liminf_{N \to \infty} N^2 \singleloss(p,\comp,N) \geq L^\dagger(p,\comp) \,. \label{eq::fatou-bound}
    \end{align}
    Furthermore, if $\comp \in \compset^\dagger$ then an exact result holds:
    \begin{align}
        \singleloss(p,\comp) &= L^\dagger(p,\comp) < \infty
        \label{eq::norm_loss}
    \,.
    \end{align}
\end{theorem}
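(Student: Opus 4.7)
The approach adapts Bennett's classical high-resolution quantization analysis to KL divergence with centroid decoding. I would begin by writing $\widetilde{L}(p,f,N) = \sum_{n=1}^{N} E_n$ with $E_n = \int_{I^{(n)}} x\log(x/\widetilde{y}_n)\, p(x)\, dx$, where $\widetilde{y}_n$ is the bin centroid. Applying Taylor's theorem with integral remainder to $h(x) = x\log(x/\widetilde{y}_n)$, using $h(\widetilde{y}_n) = 0$, $h'(\widetilde{y}_n) = 1$, $h''(t) = 1/t$, yields $h(x) = (x - \widetilde{y}_n) + \int_{\widetilde{y}_n}^x (x-t)/t\, dt$. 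The linear term integrates to zero on each bin by the centroid property $\widetilde{y}_n = \mathbb{E}[X \mid X \in I^{(n)}]$, leaving $E_n = \int_{I^{(n)}} R_n(x) p(x)\, dx$ with nonnegative remainder $R_n(x) := \int_{\widetilde{y}_n}^x (x-t)/t\, dt$.

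Since any $t$ between $\widetilde{y}_n$ and $x$ lies in $I^{(n)}$, I would sandwich
\begin{equation*}
\frac{(x - \widetilde{y}_n)^2}{2\sup I^{(n)}} \leq R_n(x) \leq \frac{(x - \widetilde{y}_n)^2}{2\inf I^{(n)}}
\end{equation*}
for $n \geq 2$; the boundary bin $I^{(1)}$ requires separate treatment via $\log u \leq u-1$, which gives $R_1(x) \leq (x - \widetilde{y}_1)^2/\widetilde{y}_1$. Integrating against $p$ and writing $\sigma_n^2 := \int_{I^{(n)}}(x - \widetilde{y}_n)^2 p(x)\, dx$, the loss is sandwiched by $\sum_n \sigma_n^2/(2 \sup I^{(n)})$ below and $\sum_n \sigma_n^2/(2 \inf I^{(n)})$ above (modulo the $n=1$ correction).

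The next step is to show that $N^2 \sum_n \sigma_n^2/(2 t_n) \to L^\dagger(p,f)$ for any choice $t_n \in I^{(n)}$. Two estimates drive this: (i) the mean value theorem applied to $f^{-1}$ gives bin width $w_n = (Nf'(x_n))^{-1}(1+o(1))$ at interior points; (ii) at Lebesgue points of $p$, the Lebesgue differentiation theorem gives $\sigma_n^2 = p(x_n) w_n^3/12 \cdot (1+o(1))$, since the conditional distribution on a shrinking bin is asymptotically uniform. Combining turns the scaled sum into the Riemann sum $\sum_n w_n \cdot p(x_n)/(24 x_n f'(x_n)^2)$, which converges to $\int p(x)/(24 x f'(x)^2)\, dx = L^\dagger(p,f)$. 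Applying this (or Fatou's lemma termwise) to the lower-sandwich sum with $t_n = \sup I^{(n)}$ yields the liminf bound \eqref{eq::fatou-bound}.

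For the exact result under $f \in \cF^\dagger$, the hypothesis that $f(x) - c x^\alpha$ is monotone increasing with $\alpha \in (0,1/2]$ yields $f'(x) \geq c\alpha\, x^{\alpha - 1}$ in a generalized sense, so $(x f'(x)^2)^{-1} \leq (c\alpha)^{-2} x^{1 - 2\alpha} \leq (c\alpha)^{-2}$ on $[0,1]$. This provides an integrable dominating function $p(x)/(24(c\alpha)^2)$ for dominated convergence in the upper-sandwich direction, yielding $\limsup_N N^2 \widetilde{L}(p,f,N) \leq L^\dagger(p,f) < \infty$ and hence equality. The main obstacle is the boundary bin $I^{(1)}$ adjoining $x=0$, where the sandwich's $1/\inf I^{(n)}$ factor diverges; the $\cF^\dagger$ hypothesis is tailored precisely to this, since $f(x) \geq c x^\alpha$ forces $I^{(1)} \subseteq [0, (cN)^{-1/\alpha}]$, keeping that bin's contribution $O(N^{-2/\alpha}) = o(N^{-2})$ because $1/\alpha \geq 2$. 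A secondary technical issue is that $p \in \cP$ is only measurable, so the Riemann-sum step must be justified a.e.\ via Lebesgue differentiation rather than classical continuity arguments.
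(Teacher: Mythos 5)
Your overall framework — decompose bin by bin, approximate the local KL contribution by a quadratic in $(x-\widetilde y_n)$, invoke Lebesgue differentiation for a.e.\ pointwise convergence, then apply Fatou for the liminf and dominated convergence for the exact result under $f\in\cF^\dagger$ — is the same as the paper's. Your Taylor-with-integral-remainder sandwich $\frac{(x-\widetilde y_n)^2}{2\sup I^{(n)}}\le R_n(x)\le\frac{(x-\widetilde y_n)^2}{2\inf I^{(n)}}$ is a clean and essentially equivalent way to organize the same local estimate the paper packages into the single-interval loss $\ell_{p,(n)}$ and the functions $g_N,g$.

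The genuine gap is in the dominated-convergence step. From $f'(x)\ge c\alpha x^{\alpha-1}$ you deduce $(xf'(x)^2)^{-1}\le(c\alpha)^{-2}$ and then claim $p(x)/(24(c\alpha)^2)$ as the dominating function. That, however, is only a bound on the \emph{limit} $g(x)=\tfrac{1}{24}f'(x)^{-2}x^{-1}$, not on the prelimit integrand $N^2 R_{n_N(x)}(x)\,p(x)$ uniformly in $N$, which is what DCT actually requires. The two are not interchangeable: the worst $x$ in a bin has $(x-\widetilde y_n)^2$ of order $w_n^2$ rather than the bin average $w_n^2/12$, and the upper sandwich divides by $\inf I^{(n)}$ rather than $x$, inflating the bound by a bin-geometry factor. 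Concretely, for $f(x)=\sqrt{x}$ (so $c=1$, $\alpha=1/2$, and your claimed constant is $(c\alpha)^{-2}/24=1/6$), one computes $N^2(x-\widetilde y_n)^2/(2\inf I^{(n)})\le(2n-1)^2/(2(n-1)^2)$, which equals $9/2$ at $n=2$ and tends to $2$ as $n\to\infty$ — far above $1/6$, so your proposed dominating function fails. Producing a correct integrable dominating function is in fact the main technical content of this direction; the paper devotes \Cref{prop::locloss-dominating} to it, comparing the bin $I^{(n_N(x))}$ with the enlarged interval $J^{f_*,N,x}$ built from the monomial $f_*(x)=cx^\alpha$, splitting into cases according to whether $cx^\alpha<1/N$ or not, and using the midpoint-based bound $\ell^*_I\le\tfrac12 r_I^2\bar y_I^{-1}$ which, unlike your centroid-based bound for $I^{(1)}$, avoids having to control the centroid $\widetilde y_1$ from below. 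A secondary point: the mean value theorem is not available for $f^{-1}$ when $f\in\cF$ is merely monotone (hence only a.e.\ differentiable), so the width asymptotics $N r_{(n_N(x))}\to f'(x)^{-1}$ need the direct difference-quotient argument of \Cref{lem::asymptotic-interval-sizes} rather than an MVT appeal.
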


The intuition behind the formula for $L^\dagger(p,f)$ is that as $N \to \infty$, the density $p$ becomes roughly uniform within each bin $I^{(n)}$. Additionally, the bin containing a given $x \in [0,1]$ will have width $r_{(n)} \approx N^{-1} \compder(x)^{-1}$. Then, letting $\unif_{I^{(n)}}$ be the uniform distribution over $I^{(n)}$ and $\bar{y}_{(n)} \approx x$ be the midpoint of $I^{(n)}$ (which is also the centroid under the uniform distribution),
we apply the approximation
\begin{align}
    \bbE_{X \sim \unif_{I^{(n)}}}[X \log(X/\bar{y}_{(n)})] &\approx \frac{1}{24} r_{(n)}^2 \bar{y}_{(n)}^{-1} 
    \\ &\approx \frac{1}{24} N^{-2} \compder(x)^{-2} x^{-1} \,.
\end{align}
Averaging over $X \sim p$ and multiplying by $N^2$ then gives \eqref{eq::raw_loss}. One wrinkle is that we need to use the Dominated Convergence Theorem to get the exact result \eqref{eq::norm_loss}, but we cannot necessarily apply it for all $\comp \in \compset$; instead, we can apply it for all $\comp \in \compset^\dagger$, and outside of $\compset^\dagger$ we get \eqref{eq::fatou-bound} using Fatou's Lemma.

While limiting ourselves to $\comp \in \compset^\dagger$ might seem like a serious restriction, it does not lose anything essential because $\compset^\dagger$ is `dense' within $\compset$ in the following way:
\begin{proposition} \label{prop::approximate-compander}
For any $\comp \in \compset$ and $\delta \in (0,1]$,
\begin{align}
    \comp_\delta (x) = (1-\delta) \comp(x) + \delta x^{1/2} \label{eq::approximate-compander}
\end{align}
satisfies $\comp_\delta \in \compset^\dagger$ and
\begin{align}
    \lim_{\delta \to 0} \singleloss(p,\comp_\delta) = \lim_{\delta \to 0} L^\dagger(p,\comp_\delta) = L^\dagger(p,\comp) \label{eq::approximate-optimal-compander}\,.
\end{align}
\end{proposition}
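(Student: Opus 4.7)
My plan is to (i) verify $f_\delta \in \compset^\dagger$ directly from the definition, (ii) invoke \Cref{thm::asymptotic-normalized-expdiv} to immediately obtain the first equality, and (iii) prove the convergence $L^\dagger(p, f_\delta) \to L^\dagger(p, f)$ via the Dominated Convergence Theorem when the limit is finite and Fatou's Lemma when it is infinite.

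For step (i), I observe that $f_\delta(x) - (\delta/2) x^{1/2} = (1-\delta) f(x) + (\delta/2) x^{1/2}$ is a nonnegative linear combination of the monotonically increasing functions $f$ and $x^{1/2}$, hence is itself monotonically increasing. Taking $c = \delta/2 > 0$ and $\alpha = 1/2 \in (0,1/2]$ in the definition of $\compset^\dagger$ then shows $f_\delta \in \compset^\dagger$. Step (ii) is now automatic: \Cref{thm::asymptotic-normalized-expdiv} immediately gives $\singleloss(p, f_\delta) = L^\dagger(p, f_\delta) < \infty$, yielding the first equality in the proposition.

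For step (iii), differentiating gives $f_\delta'(x) = (1-\delta) f'(x) + (\delta/2) x^{-1/2}$ almost everywhere (using that the monotonic $f$ is differentiable with finite derivative a.e.), so as $\delta \to 0$ the integrand $p(x) f_\delta'(x)^{-2} x^{-1}$ converges pointwise a.e.\ to $p(x) f'(x)^{-2} x^{-1}$. The key bound is the elementary inequality $f_\delta' \geq (1-\delta) f'$, which for $\delta \leq 1/2$ yields the uniform dominating estimate $p(x) f_\delta'(x)^{-2} x^{-1} \leq 4\, p(x) f'(x)^{-2} x^{-1}$. When $L^\dagger(p, f) < \infty$, this dominating function is integrable and DCT gives $L^\dagger(p, f_\delta) \to L^\dagger(p, f)$; when $L^\dagger(p, f) = \infty$, Fatou's Lemma applied to the nonnegative integrands gives $\liminf_{\delta \to 0} L^\dagger(p, f_\delta) \geq L^\dagger(p, f) = \infty$, so the limit is $\infty$ as well.

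The only real design choice -- and what I anticipate to be the mild obstacle -- is recognizing why the specific perturbation $\delta x^{1/2}$ works: it is chosen so that it simultaneously (a) shifts $f_\delta$ into $\compset^\dagger$ using the $\alpha = 1/2$ slot and (b) is nonnegative, so that $(1-\delta) f' \leq f_\delta'$ is preserved and gives a $\delta$-uniform dominating function. Once this perturbation is in hand, the rest of the argument is routine measure theory and requires no further cleverness.
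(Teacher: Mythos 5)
Your proof is correct. Steps (i) and (ii) are essentially identical to the paper's argument (the paper uses $c=\delta$, $\alpha=1/2$ rather than $c=\delta/2$, but that is immaterial), and for the finite case of step (iii) the paper likewise takes $\delta\le 1/2$, uses $f_\delta'\ge(1-\delta)f'\ge f'/2$ to get the dominating function $4\,p(x)f'(x)^{-2}x^{-1}$, and applies DCT.

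Where you genuinely diverge is in the $L^\dagger(p,f)=\infty$ case. You simply apply Fatou's lemma along any sequence $\delta_n\to 0$: since $g_\delta := p\, (f_\delta')^{-2} x^{-1}\ge 0$ and $g_\delta\to g_0 := p\, (f')^{-2} x^{-1}$ pointwise a.e.\ (including $g_\delta(x)\to\infty=g_0(x)$ at points where $f'(x)=0$ and $p(x)>0$), Fatou gives $\liminf_\delta\int g_\delta\,dx\ge\int g_0\,dx=\infty$, and you are done. The paper instead constructs an explicit two-piece lower bound on $(f_\delta')^{-2}$ using the partition $\cX_\delta^+=\{x: f'(x)\ge\delta x^{-1/2}\}$ and its complement $\cX_\delta^-$, and then further splits into two sub-cases depending on whether $\bbP[X\in\cX_\delta^+]\to 1$, handling one sub-case via monotone convergence and the other via the $\delta^{-2}$ blowup on a positive-mass set. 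Both arguments are valid; yours is noticeably shorter and requires no case analysis, at the cost of being less constructive about \emph{how} the integral diverges. Given that the paper already uses Fatou's lemma to establish the lower bound in Theorem~\ref{thm::asymptotic-normalized-expdiv}, your route arguably fits the surrounding machinery better.
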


\begin{remark}
    It is important to note that strictly speaking the limit represented by $\widetilde{L}(p,\comp)$ may not always exist if $\comp \not \in \cF^\dagger$. However: (i) one can always guarantee that it exists by selecting $\comp \in \compset^\dagger$; (ii) by \eqref{eq::fatou-bound}, it is impossible to use $f$ outside $\compset^\dagger$ to get asymptotic performance better than $L^\dagger(p,\comp)$; and (iii) by \Cref{prop::approximate-compander}, given $f$ outside $\compset^\dagger$, one can get a compander in $\compset^\dagger$ with arbitrarily close (or better) performance to $\comp$ by using $\comp_\delta(x) = (1-\delta)\comp(x) + \delta x^{1/2}$ for $\delta$ close to $0$. This suggests that considering only $\comp \in \compset^\dagger$ is sufficient since there is no real way to benefit by using $\comp \not \in \compset^\dagger$.
    
    Additionally, both $\comp^*_\az$ and $\comp^{**}_\az$ are in $\compset^\dagger$. Thus, in \Cref{thm::minimax_compander}, although the limit might not exist for certain $\comp \in \compset, p \in \cP_{1/\az}$, the minimax compander still performs better since it has less loss than even the $\liminf$ of the loss of other companders.
\end{remark}

Given \Cref{thm::asymptotic-normalized-expdiv}, it's natural to ask: for a given $p \in \cP$, what compander $f$ minimizes $L^\dagger(p,f)$? This yields the following by calculus of variations:
\begin{theorem} \label{thm::optimal_compander_loss}
The best loss against source $p \in \cP$ is
\begin{align}
    \hspace{-0.75pc}  \inf_{\comp \in \compset} \singleloss(p,\comp)  &= \min_{\comp \in \compset} L^\dagger(p,\comp) 
    \\ &=  \frac{1}{24} \Big(\int_0^1 (p(x)x^{-1})^{1/3} dx\Big)^3
    \label{eq::raw_overall_dist}
\end{align}
where the \emph{optimal compander against $p$} is
\begin{align}
    &\comp_p(x)  =  \underset{\comp \in \compset}{\argmin}  L^\dagger(p,\comp)  =  \frac{\int_0^x (p(t)t^{-1})^{1/3} \, dt}{\int_0^1 (p(t)t^{-1})^{1/3} \, dt} \label{eq::best_f_raw} 
\end{align}
(satisfying  $\compder_p(x) \, \propto \, (p(x) x^{-1})^{1/3}$). 
\end{theorem}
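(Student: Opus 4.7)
The plan is to recognize that this is a calculus-of-variations problem over $f' \geq 0$ with the normalization constraint $\int_0^1 f'(x)\,dx = f(1) - f(0) = 1$, and to solve it cleanly via Hölder's inequality rather than Lagrange multipliers (which only give a critical point condition).

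First I would reduce the claim about $\widetilde{L}(p,f)$ to a claim about $L^\dagger(p,f)$. By Theorem~\ref{thm::asymptotic-normalized-expdiv}, every $f \in \compset^\dagger$ satisfies $\widetilde{L}(p,f) = L^\dagger(p,f)$, and for every $f \in \compset \setminus \compset^\dagger$ we still have $\liminf_{N \to \infty} N^2 \widetilde{L}(p,f,N) \geq L^\dagger(p,f)$. Combined with Proposition~\ref{prop::approximate-compander}, which shows $\compset^\dagger$ is dense in $\compset$ in the sense that $L^\dagger(p,f_\delta) \to L^\dagger(p,f)$ as $\delta \to 0$, this gives $\inf_{f \in \compset} \widetilde{L}(p,f) = \inf_{f \in \compset} L^\dagger(p,f)$. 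So it suffices to minimize $L^\dagger(p,f) = \frac{1}{24}\int_0^1 p(x) f'(x)^{-2} x^{-1}\,dx$ over $f \in \compset$.

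Next I would apply Hölder's inequality with conjugate exponents $3$ and $3/2$. Writing $g = f'$ and splitting the integrand,
\begin{align}
    \int_0^1 (p(x) x^{-1})^{1/3}\,dx &= \int_0^1 \bigl(p(x) x^{-1} g(x)^{-2}\bigr)^{1/3} \cdot g(x)^{2/3}\,dx \nonumber \\
    &\leq \Bigl(\int_0^1 p(x) x^{-1} g(x)^{-2}\,dx\Bigr)^{1/3} \Bigl(\int_0^1 g(x)\,dx\Bigr)^{2/3}.
\end{align}
Since $\int_0^1 g(x)\,dx = f(1) - f(0) = 1$, cubing gives the lower bound $L^\dagger(p,f) \geq \frac{1}{24}\bigl(\int_0^1 (p(x)x^{-1})^{1/3}\,dx\bigr)^3$.

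Finally I would verify equality is attained by the stated $f_p$. Hölder equality requires $p(x) x^{-1} g(x)^{-2} \propto g(x)$, i.e.\ $g(x)^3 \propto p(x) x^{-1}$, and after normalizing $\int g = 1$ this gives exactly $f_p'(x) = (p(x)x^{-1})^{1/3} / \int_0^1 (p(t)t^{-1})^{1/3}\,dt$, yielding the formula for $f_p$. Plugging back in, $L^\dagger(p,f_p)$ matches the claimed lower bound. The main technical wrinkle is that $f_p$ may not lie in $\compset^\dagger$ for arbitrary $p$ (e.g.\ if $p$ vanishes or blows up near $0$), in which case the infimum over $\compset$ is not attained by $f_p$ itself in the $\widetilde{L}$ sense; but Proposition~\ref{prop::approximate-compander} lets us approach $L^\dagger(p,f_p)$ by $f_{p,\delta} \in \compset^\dagger$, so the infimum value is unchanged. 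If $\int_0^1 (p(x)x^{-1})^{1/3}\,dx = +\infty$ the bound is vacuous but still correct, and in the case the integral is finite the Hölder minimizer is genuinely an $\argmin$ of $L^\dagger$ over $\compset$.
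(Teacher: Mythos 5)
Your proof is correct, and the core optimization step genuinely differs from the paper's. The paper treats $\min_{f} L^\dagger(p,f)$ as a constrained convex variational problem: it perturbs $f'$ by a test function $\lambda$ with $\int \lambda = 0$, sets the first variation to zero, and then invokes convexity of $L^\dagger$ in $f'$ to upgrade the stationarity condition $f'(x) \propto (p(x)x^{-1})^{1/3}$ to global optimality. You instead obtain the lower bound in one stroke via H\"older with exponents $(3, 3/2)$ applied to the factorization $(p x^{-1})^{1/3} = (p x^{-1} g^{-2})^{1/3} \cdot g^{2/3}$ where $g = f'$, using $\int_0^1 g = 1$. The equality condition $p x^{-1} g^{-2} \propto g$ recovers the same $f_p$. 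The H\"older route is self-contained and global from the start, bypassing the need to separately verify convexity; the paper's variational route has the advantage of carrying over verbatim to the saddle-point computation for the maximin density $p^*_\az$ a few pages later, which is presumably why the authors preferred it. Your handling of the $\widetilde{L}$-versus-$L^\dagger$ reduction and the $f_p \notin \compset^\dagger$ case via $f_{p,\delta}$ is the same as the paper's. One small remark: you could make the ``density'' step fully explicit by noting, as the paper does, that $f_{p,\delta} - \delta x^{1/2} = (1-\delta) f_p$ is increasing (so $f_{p,\delta} \in \compset^\dagger$) and that $f_{p,\delta}' \geq (1-\delta) f_p'$ gives $L^\dagger(p, f_{p,\delta}) \leq (1-\delta)^{-2} L^\dagger(p, f_p)$, which pins down the rate of convergence as $\delta \to 0$.
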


Note that $f_p$ may not be in $\compset^\dagger$ (for instance, if $p$ assigns zero probability mass to an interval $I \subseteq [0,1]$, then $f_p$ will be constant over $I$). However, this can be corrected by taking a convex combination with $x^{1/2}$ as described in \Cref{prop::approximate-compander}.

The expression \eqref{eq::raw_overall_dist} represents in a sense how hard $p \in \cP$ is to quantize with a compander, and the maximin density $p^*_\az$ is the density in $\cP_{1/\az}$ which maximizes it;\footnote{The maximizing density over all $p \in \cP$ happens to be $p(x) = \frac{1}{2} x^{-1/2}$; however, $\bbE_{X \sim p}[X] = 1/3$ so it cannot be the marginal of any symmetric $P \in \cP^\triangle_\az$ when $\az > 3$.} in turn, the minimax compander $f^*_\az$ is the optimal compander against $p^*_\az$, i.e.
\begin{align}
    f^*_\az = f_{p^*_\az} \,.
\end{align}

So far we considered quantization of a random probability vector with a known prior. We next consider the case where the quantization guarantee is given pointwise, i.e. we cover $\triangle_{\az-1}$ with a
finite number of KL divergence balls of fixed radius. Note that since the prior is unknown, only the midpoint decoder can be used. 
\begin{theorem}[Divergence covering]
\label{thm::worstcase_power_minimax}
On alphabet size $\az > 4$ and $N \geq 8 \log(2\sqrt{\az \log \az} + 1)$ intervals, the minimax and approximate minimax companders with midpoint decoding achieve \emph{worst-case loss} over $\triangle_{\az-1}$ of
\begin{align}
    \max_{\bx \in \triangle_{\az-1}}D_{\kl}(\bx\|\bnormvar) \leq (1 + \mathrm{err}(\az)) N^{-2} \log^2 \az
\end{align}
where $\mathrm{err}(\az)$ is an error term satisfying 
\begin{align}
    \mathrm{err}(\az) \leq 18 \frac{\log \log \az}{\log \az} \leq 7 \text{ when } \az > 4 \,.
\end{align}
\end{theorem}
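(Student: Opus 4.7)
My starting point is the decomposition
\begin{align*}
    D_{\kl}(\bx\|\bnormvar) = \sum_i \bar{y}_i \phi(x_i/\bar{y}_i) + (1 - S + \log S),
\end{align*}
where $\phi(u) = u \log u - u + 1 \geq 0$, $\bar{y}_i$ is the midpoint of the bin containing $x_i$, and $S = \sum_i \bar{y}_i$ (obtained by separating out the normalization factor, in the spirit of \Cref{lem::im-a-barby-girl}). Since $\log S \leq S-1$, the remainder is nonpositive, leaving $D_{\kl}(\bx\|\bnormvar) \leq \sum_i \bar{y}_i \phi(x_i/\bar{y}_i)$. Combining $\phi(u) \leq (u-1)^2$ (from $\log u \leq u-1$) with $|x_i - \bar{y}_i| \leq r(x_i)/2$ for midpoint decoding gives the per-term bound $\bar{y}_i \phi(x_i/\bar{y}_i) \leq r(x_i)^2/(4\bar{y}_i)$, so the problem reduces to controlling the bin geometry.

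To exploit the closed form of the minimax compander, set $a = c_\az \az \log \az$ and $\beta = \arcsinh(\sqrt{a})/N$. Then $(f^*_\az)^{-1}(t) = \sinh^2(t \arcsinh \sqrt{a})/a$, and the product-to-sum identities $\sinh^2 u - \sinh^2 v = \sinh(u+v)\sinh(u-v)$ and $\cosh u + \cosh v = 2\cosh(\tfrac{u+v}{2})\cosh(\tfrac{u-v}{2})$ give
\begin{align*}
    r_{(n)} = \tfrac{\sinh(\beta)\sinh((2n-1)\beta)}{a},\quad \bar{y}_{(n)} = \tfrac{\cosh(\beta)\cosh((2n-1)\beta) - 1}{2a}.
\end{align*}
Using $\cosh u - 1 = 2\sinh^2(u/2)$ and $\sinh(2u) = 2\sinh u \cosh u$ yields the uniform bound $r_{(n)}^2/\bar{y}_{(n)} \leq 4\sinh^2(\beta)\cosh^2((2n-1)\beta/2)/a$, and the crucial step is to convert this into an $x$-dependent bound: for $x$ in bin $n \geq 2$, $\cosh^2((2n-1)\beta/2) \leq \cosh^2(n\beta) = 1 + a x^+_{(n)}$, and the ratio $x^+_{(n)}/x^-_{(n)} = (\sinh(n\beta)/\sinh((n-1)\beta))^2$ is at most $4\cosh^2(\beta)$ (maximum attained at $n=2$), whence $r(x)^2/(4\bar{y}(x)) \leq \sinh^2(\beta)\big(1/a + 4\cosh^2(\beta)\, x\big)$. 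First-bin contributions sum to at most $\az \bar{y}_{(1)} = \az \sinh^2(\beta)/(2a) = O(\log\az/N^2)$, a lower-order correction.

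Summing over coordinates and using $\sum_i x_i \leq 1$ gives $\sum_i r(x_i)^2/(4\bar{y}_i) \leq \sinh^2(\beta)\big(\az/a + 4\cosh^2(\beta)\big) + O(\log\az/N^2)$. The hypothesis $N \geq 8\log(2\sqrt{\az\log\az}+1) \geq 8\arcsinh(\sqrt{a})$ ensures $\beta \leq 1/8$, so $\sinh^2(\beta) \leq \beta^2 \cosh^2(\beta)$ and each factor $\cosh^k(\beta)$ contributes only an $O(\beta^2) = O(1/N^2)$ correction. Since $\arcsinh(\sqrt{a}) \leq \log(2\sqrt{a}+1) \leq \log(2\sqrt{\az\log\az}+1)$ (using $c_\az \leq 1$), the expansion $4\arcsinh^2(\sqrt{a}) \leq \log^2(4(a+1)) \leq \log^2\az \cdot (1 + O(\log\log\az/\log\az))$ yields the claimed bound $(1 + \mathrm{err}(\az))\, N^{-2}\log^2\az$; for $f^{**}_\az$ one replaces $c_\az$ by $1/2$ throughout and the same calculation applies. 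The main obstacle is combining three multiplicative error sources—$\cosh(\beta)$-factors, the $\log\log\az/\log\az$ contribution from the $\arcsinh$-to-$\log$ conversion, and the $1/(c_\az\log\az)$ contribution from $\az/a$ (which since $c_\az \in [1/4,3/4]$ can be as large as $4/\log\az$)—tightly enough that their compound error lands inside $18\log\log\az/\log\az$; this requires careful nonasymptotic bookkeeping that leverages $\az > 4$ and $\beta \leq 1/8$ throughout.
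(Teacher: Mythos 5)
Your route is genuinely different from the paper's: instead of invoking the Mean Value Theorem to produce interior points $\newvar_i$ with $r_i = N^{-1}f'(\newvar_i)^{-1}$ and then bootstrapping a bound on $\sum_i \newvar_i$, you exploit the closed form $(f^*_\az)^{-1}(t) = \sinh^2(t\,\arcsinh\sqrt{a})/a$ and hyperbolic product-to-sum identities to compute $r_{(n)}$ and $\bar{y}_{(n)}$ exactly. Your opening decomposition via $\phi(u)=u\log u - u + 1$ together with $\phi(u)\le(u-1)^2$ and $\log S\le S-1$ is correct and is equivalent to the paper's double use of $\log(1+w)\le w$. The hyperbolic computations of $r_{(n)}$, $\bar{y}_{(n)}$, and the bound $r_{(n)}^2/\bar{y}_{(n)} \le 4\sinh^2\beta\cosh^2((2n-1)\beta/2)/a$ all check out.

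However, there is a genuine gap in the conversion to an $x$-dependent bound. You majorize $\cosh^2((2n-1)\beta/2)$ by $1 + a\,x^+_{(n)}$ and then use the worst-case ratio $x^+_{(n)}/x^-_{(n)} \le 4\cosh^2\beta$ (attained at $n=2$), giving $r(x)^2/(4\bar y(x)) \le \sinh^2\beta\,(1/a + 4\cosh^2(\beta)\,x)$. The resulting dominant term after summing is $4\sinh^2\beta\cosh^2\beta = \sinh^2(2\beta)$. This is a factor $\approx 2$ larger than what the paper obtains: the paper's dominant term is $2\beta^2\sum_i\newvar_i$, and the bootstrap shows $\sum_i\newvar_i \le N/(N-\eta\arcsinh(\sqrt{a}))$, which is at most about $8/7$ when $N$ is near its minimum. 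Concretely, your $4\cosh^2\beta$ is tight only at $n=2$; for the large-$n$ bins (where most of the probability mass lies) the actual ratio is $\approx e^{2\beta}$, so you overpay by nearly a factor of $4$.

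This looseness is fatal for the stated theorem when $\az$ is very large. Since $\arcsinh(\sqrt{a}) > \tfrac12\log\az$ for all allowed $a$, we have $4\arcsinh^2(\sqrt a) > \log^2\az$, so $\sinh^2(2\beta) > (2\beta)^2 = 4\arcsinh^2(\sqrt a)/N^2 > \log^2\az/N^2$ unconditionally. When $N$ equals its minimum $8\log(2\sqrt{\az\log\az}+1)$, $\beta \to 1/8$ as $\az\to\infty$, so $\sinh^2(2\beta)/(2\beta)^2 - 1$ is bounded away from $0$ (it is $\ge (2\beta)^2/3 \approx 0.02$ when $\beta=1/8$). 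That constant cannot be absorbed into $\mathrm{err}(\az) \le 18\log\log\az/\log\az \to 0$; a direct computation at, say, $\az = e^{10000}$ shows your bound exceeds $(1+\mathrm{err}(\az))N^{-2}\log^2\az$. Your remark that ``each factor $\cosh^k(\beta)$ contributes only an $O(\beta^2)=O(1/N^2)$ correction'' is where the reasoning slips: that correction does vanish as $N\to\infty$ for fixed $\az$, but the theorem's error term depends only on $\az$ and must hold uniformly down to $N = 8\log(2\sqrt{\az\log\az}+1)$, where $\beta$ stays $\approx 1/8$.

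The gap is repairable within your framework without reverting to the MVT. Apply Cauchy--Schwarz to $\cosh((n-\tfrac12)\beta) = \cosh((n-1)\beta)\cosh(\beta/2)+\sinh((n-1)\beta)\sinh(\beta/2)$ to get
\begin{align}
\cosh^2\!\bigl((n-\tfrac12)\beta\bigr) \le \bigl(1+2\sinh^2((n-1)\beta)\bigr)\cosh\beta = \bigl(1+2a\,x^-_{(n)}\bigr)\cosh\beta \le (1+2ax)\cosh\beta,
\end{align}
which yields $r(x)^2/(4\bar y(x)) \le \sinh^2(\beta)\cosh(\beta)\,(1/a + 2x)$ and hence a dominant term $2\sinh^2\beta\cosh\beta \le 2\beta^2\cosh^3\beta$ after summing. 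This matches the paper's $\approx 2\beta^2$ leading constant and restores enough slack to fit inside $\mathrm{err}(\az)$. As an aside, neither you nor the paper completes the nonasymptotic verification of the constant $18$; the paper defers to ``numerically computing,'' and the generous intermediate parenthesis bound of $4$ in the paper's proof is itself too loose to deliver the stated theorem without a similar refinement.
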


Note that the non-asymptotic worst-case bound matches (up to a constant factor) the 
known-prior asymptotic result~\eqref{eq::raw_loss_saddle}.
We remark that condition on $N$ is mild: for example, if $N = 256$ (i.e. we are representing the
probability vector with $8$ bits per entry), then $N > 8 \log(2\sqrt{\az \log \az}+1)$ for all $\az \leq 2.6 \times 10^{25}$. 

\begin{remark}
When $b$ is the number of bits used to quantize each value in the probability vector, using the
approximate minimax compander yields a worst-case loss  on the order of $2^{-2b} \log ^2 \az$.
In~\cite{phdthesis} we prove bounds on the optimal loss under arbitrary (vector) quantization of
probability vectors and show that this loss is sandwiched between $2^{-2 b\frac{\az}{\az - 1}}$ (\cite[Proposition 2]{phdthesis}) and  $2^{-2 b\frac{\az}{\az - 1}} \log \az$ (\cite[Theorem 2]{phdthesis}). 
Thus, the entrywise companders in this work are quite competitive. 
\end{remark}


We also consider the natural family of \emph{power companders} $f(x)=x^s$, both in terms of average asymptotic raw loss and worst-case non-asymptotic normalized loss. By definition, $f(x) \in \compset^\dagger$ and hence $\widetilde{L}(p,f)$ is well-defined and \Cref{thm::asymptotic-normalized-expdiv} applies. 
\begin{theorem}\label{thm::power_compander_results}
The power compander $f(x) = x^s$ with exponent $s \in (0,1/2]$ has asymptotic loss
\begin{align}
    \underset{p \in \cP_{1/\az}} \sup \widetilde{L}(p,f) = \frac{1}{24} s^{-2} K^{2s-1}\label{eq::power_loss_s}\,.
\end{align}
For $\az > 7$, \eqref{eq::power_loss_s} is minimized by setting $s = \frac{1}{\log \az}$ (when $\az \leq 7$, $\frac{1}{\log \az} > 1/2$) and $f(x) = x^s$ achieves
\begin{align}
    \underset{p \in \cP_{1/\az}} \sup \widetilde{L}(p,f) &= \frac{e^2}{24} \frac{1}{\az} \log^2 \az
    \\ \text{and }~~ \underset{P \in \cP^\triangle_\az} \sup \widetilde{\cL}(P,f) &= \frac{e^2}{24} \log^2 \az\,.
\end{align}

Additionally, when $s = \frac{1}{\log \az}$, it achieves the following worst-case bound with midpoint decoding for $\az > 7$ and $N > \frac{e}{2} \log \az$:
\begin{align}
    \max_{\bx \in \triangle_{\az-1}} \hspace{-0.4pc} D_{\kl}(\bx\|\bnormvar) \hspace{-0.2pc} &\leq \hspace{-0.2pc}  (1 + \mathrm{err}(\az,N)) \frac{e^2}{2} N^{-2} \log^2 \az
    \\ \text{where } \mathrm{err}&(\az,N) = \frac{e}{2} \frac{\log \az}{N - \frac{e}{2}\log \az} \,. \label{eq::power_worst_case_bound}
\end{align}
\end{theorem}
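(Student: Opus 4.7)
The plan is to reduce the asymptotic statements to the machinery already developed and then handle the non-asymptotic worst-case bound by direct bin analysis.

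For \eqref{eq::power_loss_s}, observe that the power compander $f(x) = x^s$ with $s \in (0, 1/2]$ lies in $\compset^\dagger$ (take $c = 1$, $\alpha = s$ in the definition), so \Cref{thm::asymptotic-normalized-expdiv} gives the exact formula
\begin{align*}
    \widetilde{L}(p, x^s) = L^\dagger(p, x^s) = \frac{1}{24\, s^2}\, \bbE_{X \sim p}[X^{1-2s}]\,.
\end{align*}
Since $1 - 2s \in [0, 1)$, the map $x \mapsto x^{1-2s}$ is concave on $[0,1]$, so Jensen's inequality under the constraint $\bbE[X] = 1/\az$ yields $\bbE[X^{1-2s}] \leq \az^{2s-1}$, with the bound approached by priors concentrating near $1/\az$. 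This establishes \eqref{eq::power_loss_s}.

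For the optimal exponent, set $g(s) = s^{-2}\, \az^{2s-1}$; the logarithmic derivative $-2/s + 2\log \az$ has its unique zero at $s^\star = 1/\log \az$, which lies in $(0, 1/2]$ precisely when $\log \az \geq 2$, i.e., $\az \geq 8$ (so $\az > 7$). Substituting $s^\star$ gives $\az^{2/\log \az} = e^2$ and $g(s^\star) = e^2 (\log \az)^2/\az$; dividing by $24$ yields the single-letter bound, and \Cref{lem::im-a-barby-girl} yields the total bound $\widetilde{\cL}$ after multiplying by $\az$.

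The main obstacle is the non-asymptotic worst-case bound \eqref{eq::power_worst_case_bound}. The plan is to write
\begin{align*}
    D_{\kl}(\bx \| \bnormvar) = \sum_i x_i \log\frac{x_i}{\bar{y}(x_i)} + \log \sum_j \bar{y}(x_j)\,,
\end{align*}
bound each local term by a second-order Taylor expansion of $x \log(x/y)$ around $y = x$, and use $\log(1+u) \leq u$ to cancel the first-order local contributions against the normalization term. The key geometric input is control of the bin widths $r_n = u_n - u_{n-1}$ with $u_n = (n/N)^{1/s}$: by the mean value theorem $r_n \leq (1/(sN))(n/N)^{1/s - 1}$, and the ratio $u_n / u_{n-1} = (1 + 1/(n-1))^{1/s} \leq e^{1/(s(n-1))}$ stays bounded by $e$ as soon as $n - 1 \geq 1/s = \log \az$. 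The hypothesis $N > (e/2)\log \az$ ensures this ``safe regime'' applies to bins carrying most of the mass; for bins with $n - 1 < \log \az$ the endpoints $u_n$ are very small in $\az$, which limits both the sizes and the number of entries $x_i$ that can land there, so their combined contribution is negligible. The delicate step is tracking constants carefully so that plugging in $s^\star = 1/\log \az$ yields the multiplier $e^2/2$ together with the error factor $\mathrm{err}(\az, N) = (e/2)\log \az/(N - (e/2) \log \az)$, which encodes the deviation from the asymptotic rate arising from the $u_n/u_{n-1}$ ratios in bins just above $n = \log \az$.
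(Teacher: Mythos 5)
For the asymptotic parts (the formula for $\sup_p \widetilde{L}(p,x^s)$ and the optimization over $s$), your argument matches the paper's: verify $x^s \in \compset^\dagger$, apply \Cref{thm::asymptotic-normalized-expdiv}, use concavity of $x^{1-2s}$ with the constraint $\bbE[X]=1/\az$, and minimize over $s$. This is exactly Proposition \ref{thm::power_r_loss} in Appendix~\ref{sec::power_compander_analysis}.

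For the non-asymptotic worst-case bound your route diverges from the paper's, and it has a genuine gap. The paper's proof (Appendix~\ref{sec::worst-case_analysis}) never splits bins into a ``safe regime'' and a ``small regime.'' Instead it uses the Mean Value Theorem to pick a single point $\newvar_i$ in each bin with $r_i = N^{-1} f'(\newvar_i)^{-1}$, bounds $D_{\kl}(\bx\|\bnormvar) \leq \tfrac12 N^{-2}\sum_i \newvar_i^{-1} f'(\newvar_i)^{-2} = \tfrac12 N^{-2}s^{-2}\sum_i \newvar_i^{1-2s}$, and then applies Jensen's inequality twice (once to $\sum_i \newvar_i^{1-2s}$, once to $\sum_i \newvar_i^{1-s}$ to control $\sum_i \newvar_i$ via $\sum_i x_i = 1$). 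This handles every bin uniformly. Your proposal instead controls the endpoint ratios $u_n/u_{n-1}=(n/(n-1))^{1/s}$ and needs them $\leq e$, which only happens for $n-1 \geq \log\az$; you then assert the remaining bins contribute negligibly because ``the endpoints $u_n$ are very small in $\az$, which limits... the number of entries $x_i$ that can land there.'' That claim fails: with $s=1/\log\az$ the value $x_i=1/\az$ lands in bin $n\approx N/e$, and with $N$ only slightly above $(e/2)\log\az$ this gives $n-1 \approx (1/2)\log\az < \log\az$. So the uniform vector $\bx = (1/\az,\dots,1/\az)$ puts \emph{all} $\az$ entries outside your safe regime, and its contribution is certainly not negligible (it is essentially the extremal case for the asymptotic loss). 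You would also need to show that the constraint $\sum_i x_i = 1$ can be leveraged from within your bin-ratio framework to produce the $\mathrm{err}(\az,N)$ factor, which is precisely the role Jensen's inequality plays in the paper; that step is absent. To repair this, abandon the safe/unsafe split and follow the paper's MVT-plus-Jensen argument, or otherwise find a uniform bound that doesn't rely on $u_n/u_{n-1}$ being bounded.
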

Note in particular that when $N \geq e \log \az$, we have $\mathrm{err}(\az,N) \leq 1$, giving a bound of $\max_{\bx \in \triangle_{\az-1}}D_{\kl}(\bx\|\bnormvar) \leq  e^2 N^{-2} \log^2 \az$.

We can think of $s = \frac{1}{\log \az}$ as a `minimax' among the class of power companders. This result shows $f(x) = x^{\frac{1}{\log \az}}$ has performance within a constant factor of the minimax compander, and hence might be a good alternative.

\iflong
\else
Due to space constraints, we omit the proofs of \Cref{rmk::best-constant,thm::worstcase_power_minimax}. We sketch the other proofs in \Cref{sec::asymt_single,sec::minimax}.
\fi

\subsection{Experimental Results}

\label{sec::experimental_results}

We compare the performance of five quantizers, with granularities $N = 2^8$ and $N = 2^{16}$, on three types of datasets of various alphabet sizes: 
\begin{itemize}
\item Random synthetic distributions drawn from the uniform prior over the simplex: {We draw and take the average over 1000 random samples for our results.} 
\item Frequency of words in books: {These frequencies are computed from text available on the Natural Language Toolkit (NLTK) libraries for Python. For each text, we get tokens (single words or punctuation) from each text and simply count the occurrence of each token} 
\item Frequency of $k$-mers in DNA: {For a given sequence of DNA, the set of $k$-mers are the set of length $k$ substrings which appear in the sequence. We use the human genome as the source for our DNA sequences. Parts of the sequence marked as repeats are removed.} 
\end{itemize}
Our quantizers are:

\begin{itemize}
    \item \textbf{Approximate Minimax Compander:} As given by equation \eqref{eq::appx-minimax-compander}. Using the approximate minimax compander is much simpler than the minimax compander since the constant $c_\az$ does not need to be computed.
    \item \textbf{Truncation:} Uniform quantization (equivalent to $\comp(x) = x$), which truncates the least significant bits. This is the natural way of quantizing values in $[0,1]$. 
    \item \textbf{Float and bfloat16:} For 8-bit encodings ($N = 2^8$), we use a floating point implementation which allocates 4 bits to the exponent and 4 bits to the mantissa. For 16-bit encodings ($N = 2^{16}$), we use bfloat16, a standard which is commonly used in machine learning~\cite{kalamkar2019study}.
    \item \textbf{Exponential Density Interval (EDI):} This is the quantization method we used in an achievability proof in  \cite{adler_ratedistortion_2021}. It is designed for the uniform prior over the simplex.
    \item \textbf{Power Compander:} Recall that the compander is $ \comp(x) = x^{s}$. We optimize $s$ and find that $s = \frac{1}{\log_e \az}$ asymptotically minimizes KL divergence, and also gives close to the best performance among power companders empirically. To see the effects of different powers $s$ on the performance of the power compander, see \Cref{fig::books_power}.
\end{itemize}

Because a well-defined prior does not always exist for these datasets (and for simplicity) we use midpoint decoding for all the companders. When a probability value of exactly $0$ appears, we do not use companding and instead quantize the value to $0$, i.e. the value $0$ has its own bin.

\begin{figure}
    \centering
    \includegraphics[scale = .4, trim = {25 0 0 0}
    ]{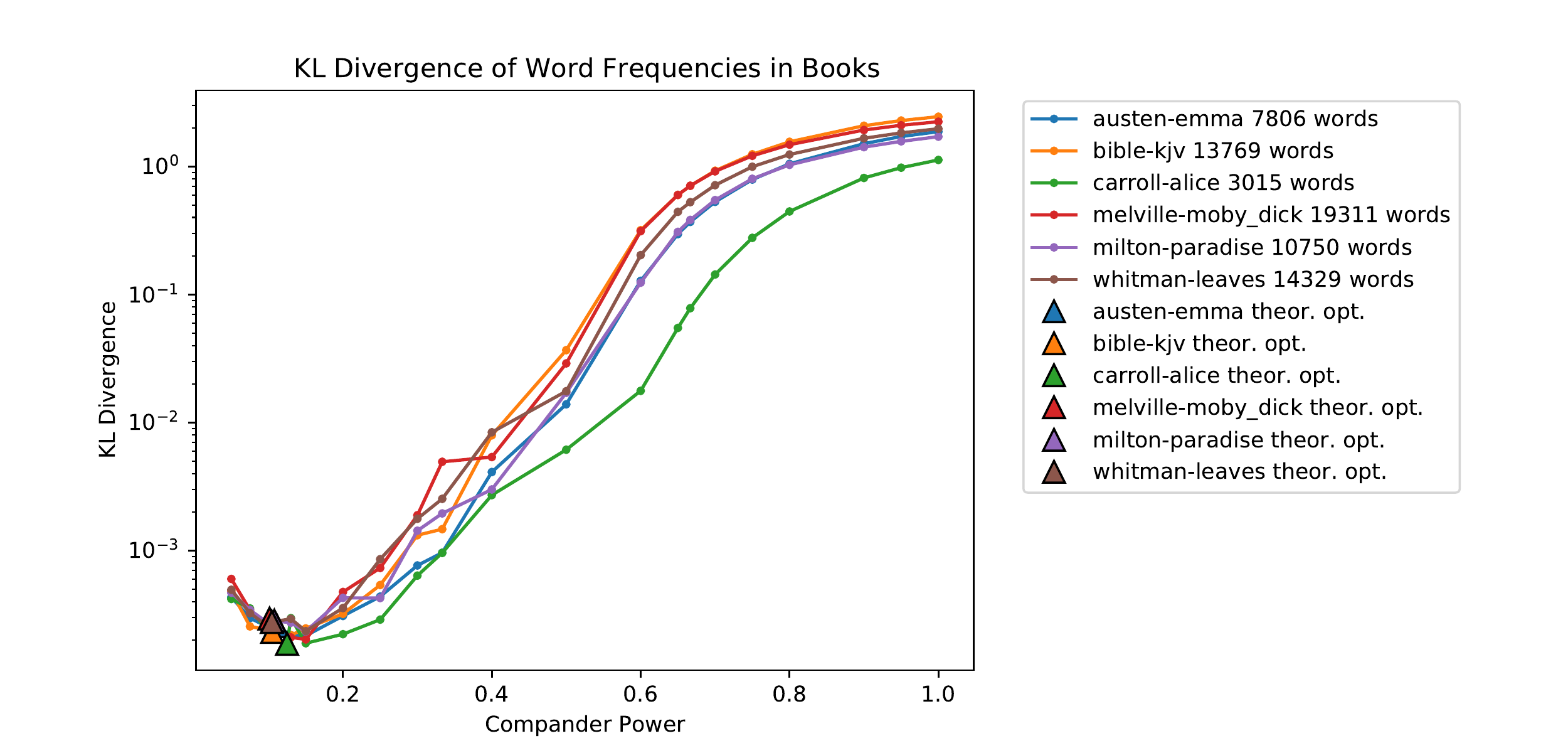}
    \caption{Power compander $\comp(x) = x^s$ performance with different powers $s$ used to quantize frequency of words in books. The number $\az$ of distinct words in each book is shown in the legend. The theoretical optimal power $s = \frac{1}{\log \az}$ is plotted.}
    \label{fig::books_power}
\end{figure}

Our main experimental results are given in \Cref{fig::first_compare_compander}, showing the KL divergence between the empirical distribution $\bx$ and its quantized version $\bnormvar$ versus alphabet size $\az$. The approximate minimax compander performs well against all sources. 
%
\begin{figure}
    \centering
    \includegraphics[scale = .4]{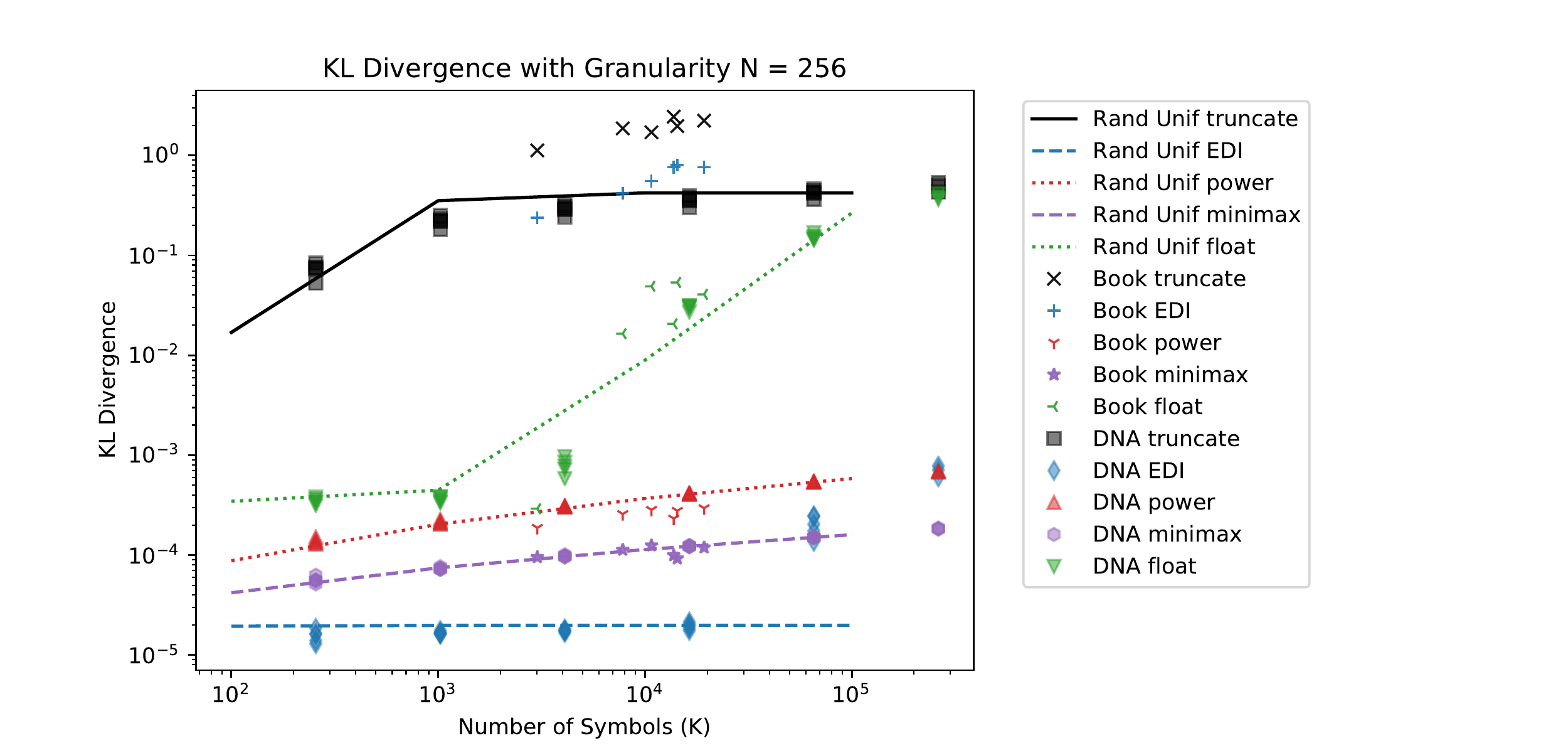}
    \includegraphics[scale = .4]{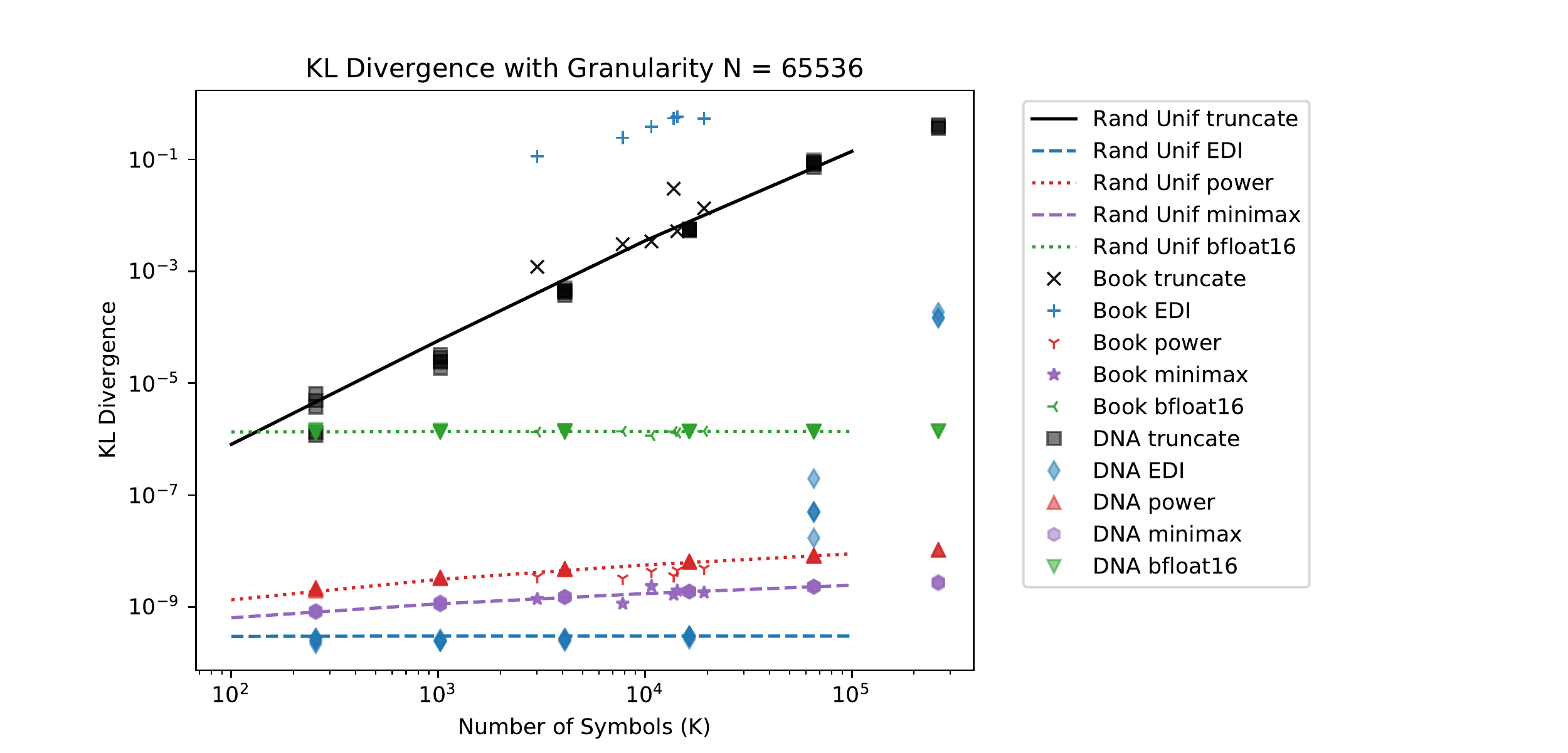}
    \caption{Plot comparing the performance of the truncation compander, the EDI compander, floating points, the power compander, and the approximate minimax compander \eqref{eq::appx-minimax-compander} on probability distributions of various sizes.  
    }
    \label{fig::first_compare_compander}
    \vspace{-1pc}
\end{figure}
For truncation, the KL divergence increases with $\az$ and is generally fairly large. 
The EDI quantizer works well for the synthetic uniform prior (as it should), but for real-world datasets like word frequency in books, it performs badly (sometimes even worse than truncation).
The loss of the power compander is similar to the minimax compander (only worse by a constant factor), as predicted by \Cref{thm::power_compander_results}.

The experiments show that the approximate minimax compander achieves low loss on the entire ensemble of data (even for relatively small granularity, such as $N = 256$) and outperforms both truncation and floating-point implementations on the same number of bits. Additionally, its closed-form expression (and entrywise application) makes it simple to implement and computationally inexpensive, so it can be easily added to existing systems to lower storage requirements at little or no cost to fidelity. 





\subsection{Paper Organization}
We provide background and discuss previous work on companders in \Cref{sec::previous_works}. 
We prove \Cref{thm::asymptotic-normalized-expdiv} in \Cref{sec::asymt_single} (though proofs of some lemmas and propositions leading up to it are given in Appendix~\ref{sec::appendix_proofs_asymptotic}). \Cref{prop::approximate-compander} is proved in Appendix~\ref{sec::approximate-compander-f-dagger}.
In \Cref{sec::minimax}, we optimize over  \eqref{eq::raw_loss} to get the maximin single-letter distribution (showing part of \Cref{prop::maximin-density} with other parts left to Appendix~\ref{sec::minimax_const_analysis}) and the minimax compander, thus showing \Cref{thm::optimal_compander_loss,thm::minimax_compander}, \Cref{cor::worstcase_prior} and \Cref{prop::bound_worstcase_prior_exist} (leaving \Cref{thm::approximate-minimax-compander} for Appendix~\ref{sec::proof_L_appx_minimax_compander}).
We prove \Cref{thm::worstcase_power_minimax} and the worst-case part of \Cref{thm::power_compander_results} in Appendix~\ref{sec::worst-case_analysis}. Other parts of \Cref{thm::power_compander_results} are discussed in Appendix~\ref{sec::power_compander_analysis}. In \Cref{sec::other_losses} we discuss companders for losses other than KL divergence. Finally, in \Cref{sec::info_distillation_main} we discuss a connection of our problem to the problem of information distillation with proofs given in \Cref{sec::info_distillation_detail}. 

\section{Background}

\label{sec::previous_works}

Companders (also spelled ``compandors'') were introduced by Bennett in 1948 \cite{bennett1948} as a way to quantize speech signals, where it is advantageous to give finer quantization levels to weaker signals and coarser levels to larger signals. Bennett gives a first order approximation that the mean-square error in this system is given by
\begin{align}\label{eq::bennett_compander}
    \frac{1}{12 N^2} \int_{a}^b \frac{p(x)}{(f'(x))^2} dx
\end{align}
where $N$ is the number quantization levels, $a$ and $b$ are the minimum and maximum values of the input signal, $p$ is the probability density of the input signal, 
and $f'$ is the slope of the compressor function placed before the uniform quantization.
This formula is similar to our \eqref{eq::raw_loss} except that we have an extra $x^{-1}$ since we are working with KL divergence. 
Others have expanded on this line of work. In \cite{panter_dite}, the authors studied the same problem and determined the optimal compressor under mean-square error, a result which parallels our result \eqref{eq::raw_overall_dist}. However, results like those in \cite{bennett1948, panter_dite} are stated either as first order approximations or make simplifying assumptions. For example, in \cite{panter_dite}, the authors state that they assume the values $\widehat{y}_{(n)}$ are close together enough that probability density within any given bin can be treated as a constant.
In contrast, we rigorously show that this fundamental logic holds under very general conditions ($f \in \compset^\dagger$).

Generalizations of Bennett's formula are also studied when instead of mean-square error, the loss is the expected $r$th moment loss $\bbE\norm{\cdot}^r$. This is computed for vectors of length $\az$ in \cite{zador1982} and \cite{gersho1979}. 

The typical examples of companders used in engineering and signals processing are the $\mu$-law and $A$-law companders \cite{lewis_mu-law}. For the $\mu$-law compander, \cite{panter_dite} and \cite{smith1957} argue that for mean-squared error, for a large enough constant $\mu$ the distortion becomes independent of the signal.

Quantizing probability distributions is a well-studied topic, though typically the loss function is a norm and not KL divergence \cite{graf2007}. Quantizing for KL divergence is considered in our earlier work \cite{adler_ratedistortion_2021}, focusing on average KL loss for Dirichlet priors. 

A similar problem to quantizing under KL divergence is \emph{information $k$-means}. This is the problem of clustering $n$ points $a_i$ to $k$ centers $\hat{a}_j$ to minimize the KL divergences between the points and their associated centers. Theoretical aspects of this are explored in \cite{slonim1999} and \cite{tishby2000}. 
Information $k$-means has been implemented for several different applications \cite{Pereira93, jiang2013, cao2013}. There are also other works that study clustering with a slightly different but related metric \cite{dhillon2003,nielsen2013, veldhuis2002}; however, the focus of these works is to analyze data rather than reduce storage. 

\begin{remark}
A variant of the classic problem of prediction with log-loss is an equivalent formulation to quantizing the simplex with KL loss: let $\bx \in \triangle_{\az-1}$ and $A \sim \bx$ (in the alphabet $[\az]$); we want to predict $A$ by positing a distribution $\bnormvar \in \triangle_{\az-1}$, and our loss is $-\log \normvar_A$. In the standard version, the problem is to pick the best $\bnormvar$ given limited information about $\bx$; however, if we \emph{know} $\bx$ but are required to express $\bnormvar$ using only $\log_2 M$ bits, it is equivalent to quantizing the simplex with KL divergence loss.
\end{remark}


\section{Asymptotic Single-Letter Loss} \label{sec::asymt_single}

In this section we give the proof of \Cref{thm::asymptotic-normalized-expdiv} (though the proofs of some lemmas must be sketched). We use the following notation: 

Given an interval $I$ we define $\bar{y}_I$ to be its midpoint and $r_I$ to be its width, so that by definition
\begin{align}
I = [\bar{y}_I - r_I/2, \bar{y}_I + r_I/2]\,.
\end{align}
Note that if $I \subseteq [0,1]$ then $r_I \leq 2 \bar{y}_I$.



Given probability distribution $p$ and interval $I$, we denote the following: $p|_I$ is $p$ restricted to $I$; $\pi_{p,I} := \bbP_{X \sim p}[X \in I]$ is the probability mass of $I$; and the \emph{centroid of $I$ under $p$} is
\begin{align}
\widetilde{y}_{p,I} := \bbE_{X \sim p|_I} [X] = \bbE_{X \sim p}[X \, | \, X \in I]\,.
\end{align}
If they are undefined because $\bbP_{X \sim p}[X \in I] = 0$ then by convention $p|_I$ is uniform on $I$ and $\widetilde{y}_{p, I} = \bar{y}_I$.

When $I = I^{(n)}$ is a bin of the compander, we can replace it with $(n)$ in the notation, i.e. $\bar{y}_{(n)} = \bar{y}_{I^{(n)}}$ (so the midpoint of the bin containing $x$ at granularity $N$ is denoted $\bar{y}_{(n_N(x))}$ and the width of the bin is $r_{(n_N(x))}$). When $I$ and/or $p$ are fixed, we sometimes drop them from the notation, i.e. $\widetilde{y}_I$ or even just $\widetilde{y}$ to denote the centroid of $I$ under $p$. 

\subsection{The Local Loss Function}

One key to the proof is the following perspective: instead of considering $X \sim p$ directly, we (equivalently) first select bin $I^{(n)}$ with probability $\pi_{p, (n)}$, and then select $X \sim p|_{(n)}$. The expected loss can then be considered within bin $I^{(n)}$. This makes it useful to define:
\begin{definition}
Given probability measure $p$ and interval $I$, the \emph{single-interval loss of $I$ under $p$} is
\begin{align}
\ell_{p,I} = \bbE_{X \sim p|_{I}}[X \log(X/\widetilde{y}_{p,I})]\,.
\end{align}
\end{definition}
As before, if $p$ and/or $I$ is fixed and clear, we can drop it from the notation (and if $I = I^{(n)}$ is a bin, we can denote the local loss as $\ell_{p, (n)}$).
This can be interpreted as follows: if we quantize all $x \in I$ to the centroid $\widetilde{y}_I$, then $\ell_{p,I}$ is the expected loss of $X \sim p$ conditioned on $X \in I$. Thus the values of $\ell_{p,(n)}$ can be used as an alternate means of computing the single-letter loss:
\begin{align}
	\singleloss(p,f,N) &= \bbE_{X \sim p} [X \log(X/\widetilde{y}(X))]
	\\ &=  \sum_{n=1}^N \pi_{p, (n)} \bbE_{X \sim p|_{(n)}}[X \log(X/\widetilde{y}_{p,(n)})]
	\\ &=  \sum_{n=1}^N \pi_{p, (n)} \ell_{p, (n)}
	= \int_{[0,1]}  \ell_{p, (n_N(x))} \, dp\,.
\end{align}
Thus the normalized single-letter loss (whose limit is the asymptotic single-letter loss \eqref{eq::raw-assl}) is
\begin{align}
N^2 \, \singleloss(p,f,N) = \int_{[0,1]} N^2 \, \ell_{p, (n_N(x))} \, dp\,.
\end{align}
For single-letter density $p$ and compander $f$, we define the \emph{local loss function at granularity $N$}:
\begin{align} \label{eq::loc-loss-fn}
    g_N(x) = N^2 \, \ell_{p, (n_N(x))}\,.
\end{align}
We also define the \emph{asymptotic local loss function}:
\begin{align}
    g(x) = \frac{1}{24} f'(x)^{-2} x^{-1} \,.
\end{align}
\Cref{thm::asymptotic-normalized-expdiv} is therefore equivalent to:
\begin{align}
\liminf_{N \to \infty} \int g_N \, dp &\geq \int g \, dp ~\forall~ p \in \cP, f \in \cF \label{eq::fatou-restated} 
\\ \text{and} \, \lim_{N \to \infty} \int g_N \, dp &= \int g \, dp ~\forall~ p \in \cP, f \in \cF^\dagger \label{eq::raw-restated}.
\end{align}
To prove \eqref{eq::fatou-restated} and \eqref{eq::raw-restated}, we show:

\begin{proposition} \label{prop::locloss-convergence}
For all $p \in \cP$, $\comp \in \compset$, if $X \sim p$ then
\begin{align}
    \lim_{N \to \infty} \locloss_N(X) = \locloss(X) ~~~\text{almost surely.}
\end{align}
\end{proposition}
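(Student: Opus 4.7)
The plan is to verify the pointwise convergence on a $p$-full subset $A \subseteq [0,1]$. Define $A$ to be the set of $x$ such that (i) $x > 0$; (ii) $x$ is a Lebesgue point of the density $p$ with $p(x) > 0$; and (iii) $\comp$ is differentiable at $x$ (possibly with $\compder(x) = 0$). Since $p$ is absolutely continuous, (i) and (ii) hold $p$-a.s.\ (the bad sets are Lebesgue-null, hence $p$-null); and since a monotone function is differentiable Lebesgue-a.e., so does (iii). Hence $A$ has full $p$-measure, and it suffices to verify $\locloss_N(x) \to \locloss(x)$ for every $x \in A$.

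\textbf{Main case, $\compder(x) > 0$.} By the inverse function theorem in its pointwise form, the bin $I^{(n_N(x))}$ containing $x$ has width $r_N$ and midpoint $\bar{y}_N$ satisfying $N r_N \to 1/\compder(x)$ and $\bar{y}_N \to x$. The Lebesgue-point identity $\int_{I^{(n_N(x))}} |p(t) - p(x)| \, dt = o(r_N)$, together with $\pi_{p,(n_N(x))} = p(x) r_N + o(r_N)$, yields --- by splitting each moment integral of $p|_{I^{(n_N(x))}}$ into its constant-density and $(p(t) - p(x))$ contributions --- that $\widetilde{y}_N := \widetilde{y}_{p,(n_N(x))} = x + o(r_N)$, $\Var(X \mid X \in I^{(n_N(x))}) = (r_N^2/12)(1 + o(1))$, and $\bbE[(X - \widetilde{y}_N)^3 \mid X \in I^{(n_N(x))}] = o(r_N^3)$. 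Setting $V := X - \widetilde{y}_N$ and Taylor-expanding,
\begin{align*}
X \log(X/\widetilde{y}_N) = V + \frac{V^2}{2\widetilde{y}_N} - \frac{V^3}{6\widetilde{y}_N^2} + O\!\left(\frac{V^4}{\widetilde{y}_N^3}\right),
\end{align*}
which is valid since $|V/\widetilde{y}_N| \to 0$ uniformly on the bin (because $|V| = O(r_N) \to 0$ and $\widetilde{y}_N \to x > 0$), and taking conditional expectation gives $\ell_{p,(n_N(x))} = r_N^2/(24 \widetilde{y}_N) + o(r_N^2)$. Multiplying by $N^2$ yields $\locloss_N(x) \to 1/(24 \compder(x)^2 x) = \locloss(x)$.

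\textbf{Edge case, $\compder(x) = 0$.} Here $\locloss(x) = +\infty$, and I need $\locloss_N(x) \to \infty$. From $|\comp(x+h) - \comp(x)| = o(|h|)$ as $h \to 0$, the bin width $r_N$ satisfies $N r_N \to \infty$. If $r_N \to 0$, the moment computations of the main case still apply and give $\locloss_N(x) \sim (N r_N)^2/(24 \widetilde{y}_N) \to \infty$; if instead $r_N$ is bounded below (which happens precisely when $\comp$ is flat on a neighborhood of $x$), then $\ell_{p,(n_N(x))}$ is bounded below by a fixed positive constant --- since $p|_{I^{(n_N(x))}}$ has bounded-away-from-zero conditional variance on a bin of fixed positive width --- and $N^2 \ell_{p,(n_N(x))} \to \infty$ trivially. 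The main difficulty of the proof is the moment bookkeeping in the main case: extracting the $o(r_N^2)$ and $o(r_N^3)$ centered-moment bounds from the Lebesgue-point identity and then verifying that every remainder term in the Taylor expansion is dominated by $r_N^2$ after division by powers of $\widetilde{y}_N$ bounded below by $x/2$; the inverse-function geometry and the divergent edge case require only routine estimates.
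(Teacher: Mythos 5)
Your proof is correct and takes a genuinely different route from the paper's. The paper argues in two stages: it first shows $p|_I \to \unif_I$ in Kolmogorov--Smirnov distance as the bin shrinks (\Cref{prop::uniform-at-small-scales}, using differentiability of the CDF $F_p$ at $x$), then transfers the single-interval loss from $p|_I$ to $\unif_I$ (\Cref{prop::divergence-of-close-distributions}) and Taylor-expands the uniform loss (\Cref{prop::loss-under-uniform-dist}); the bin-width asymptotics come from the pseudo-bin \Cref{lem::asymptotic-interval-sizes}. You instead take the Lebesgue-point property of the density $p$ itself as the almost-everywhere hypothesis, estimate the centered moments of $p|_{I_N}$ directly by peeling off a constant-density part plus an $L^1$-small remainder controlled by the Lebesgue-point identity, and feed these into a single Taylor expansion of $X\log(X/\widetilde{y}_N)$ about the centroid. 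This avoids the KS-distance machinery and the three intermediate propositions at the price of heavier moment bookkeeping; both a.e.\ hypotheses hold $p$-almost surely, so nothing is lost.

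A few small repairs. First, the intermediate claim $\widetilde{y}_N = x + o(r_N)$ overclaims: the centroid tracks the bin \emph{midpoint}, so the correct statement is $\widetilde{y}_N = \bar{y}_N + o(r_N)$, giving only $\widetilde{y}_N = x + O(r_N)$; this is harmless since all you actually use downstream is $\widetilde{y}_N \to x$. Second, the characterization of the second subcase in the edge case should allow one-sided flatness: if $\comp$ is flat on, say, $[a,x]$ only, the bin still absorbs all of $[a,x]$ for every $N$ (its $\comp$-image is a singleton inside the bin's $\comp$-image), so $r_N \geq x-a$; this, together with the fact that if $\comp$ is flat on neither side then $\comp^{-1}$ of shrinking neighborhoods of $\comp(x)$ shrinks to $\{x\}$, is exactly what makes your dichotomy ($r_N \to 0$ versus $r_N$ bounded below) exhaustive. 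Third, the lower bound on $\ell_{p,I_N}$ when $r_N$ is bounded below needs slightly more than ``bounded conditional variance'': note that the bins converge in both endpoints to the fixed flat interval, and $\ell_{p,J}$ is continuous in the endpoints of $J$, so $\ell_{p,I_N}$ has a strictly positive limit (the paper proves a uniform version of this via \Cref{lem::increase-interval-loss-positive}). None of these affects the soundness of your approach.
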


\begin{proposition} \label{prop::locloss-dominating}
Let $f \in \cF^\dagger$ be a compander and $c > 0$ and $\alpha \in (0,1]$ such that $f(x) - c x^\alpha$ is monotonically increasing. Letting $g_N$ be the local loss functions as in \eqref{eq::loc-loss-fn} and
\begin{align}
     h(x) = (2^{2/\alpha} + \alpha^2 2^{1/\alpha - 2}) (c \alpha)^{-2} x^{1-2\alpha} +  c^{-1/\alpha} 2^{1/\alpha - 2}
\end{align}
then $g_N(x) \leq h(x)$ for all $x, N$. Additionally, if $\alpha \leq 1/2$ then $\int_{[0,1]} h \, dp < \infty$.
\end{proposition}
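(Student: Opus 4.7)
The plan is to exploit the hypothesis that $f - c x^\alpha$ is monotone increasing to control each bin width $r_{(n)}$, bound the single-interval loss $\ell_{p,(n)}$ with two complementary inequalities, and combine these via a case analysis to produce the pointwise dominator $h(x)$.

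First I would derive a bin-width bound. Writing $I^{(n)} = [x_{n-1}, x_n]$ so that $f(x_n) - f(x_{n-1}) = 1/N$, the monotonicity hypothesis gives $c(x_n^\alpha - x_{n-1}^\alpha) \leq 1/N$, and the concavity inequality $x_n^\alpha - x_{n-1}^\alpha \geq \alpha x_n^{\alpha - 1} r_{(n)}$ (valid since $\alpha \leq 1$) yields the key estimate $r_{(n)} \leq x_n^{1-\alpha}/(c\alpha N)$. For the first bin the sharper bound $x_1 \leq (cN)^{-1/\alpha}$ follows directly. Next, for an interval $I = [a,b]$ with mean $\tilde{y}$ under $p|_I$, the pointwise inequality $\log(X/\tilde{y}) \leq (X - \tilde{y})/\tilde{y}$ integrates to the \emph{variance bound} $\ell_{p, I} \leq \mathrm{Var}(X)/\tilde{y} \leq r_I^2/(4a)$ when $a > 0$, while the inequality $X \log X \leq X \log b$ on $[0,b]$ combined with maximizing $\tilde{y} \log(b/\tilde{y})$ over $\tilde{y}$ yields the \emph{crude bound} $\ell_{p, I} \leq b/e$.

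The core step is a three-way case analysis for $x \in I^{(n)}$. If $n = 1$, the sharper width bound together with the crude bound gives $g_N(x) \leq N^2 x_1/e \leq c^{-1/\alpha} N^{2 - 1/\alpha}/e \leq c^{-1/\alpha}/e$, using $\alpha \leq 1/2$ so that $N^{2-1/\alpha} \leq 1$; this fits inside the constant term $c^{-1/\alpha} 2^{1/\alpha - 2}$ of $h$. If $n \geq 2$ and the bin is \emph{narrow} ($r_{(n)} \leq x_n/2$), then $x_n \leq 2x$ and $x_{n-1} \geq x/2$, so substituting the width bound into the variance bound produces $g_N(x) \leq 2^{1-2\alpha}(c\alpha)^{-2} x^{1-2\alpha}$, matching the $x^{1-2\alpha}$ term. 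If $n \geq 2$ and the bin is \emph{wide} ($r_{(n)} > x_n/2$), the width estimate forces $x_n^\alpha < 2/(c\alpha N)$, and combining the crude bound with the algebraic identity $N^2 x_n = N^2 x_n^{2\alpha} \cdot x_n^{1-2\alpha}$ yields $g_N(x) \leq (4/e)(c\alpha)^{-2} x_n^{1-2\alpha}$, which is then absorbed into the $x^{1-2\alpha}$ term (with $x_n^{1-2\alpha} \leq 1$ since $\alpha \leq 1/2$).

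Integrability is immediate once the pointwise bound is established: when $\alpha \leq 1/2$ we have $1 - 2\alpha \geq 0$, so $x^{1-2\alpha} \leq 1$ on $[0,1]$ and $h$ is bounded, giving $\int_{[0,1]} h \, dp \leq \sup h < \infty$ for any probability measure $p$. The principal technical obstacle is reproducing the exact coefficients $2^{2/\alpha}$ and $\alpha^2 2^{1/\alpha - 2}$ appearing in $h$: the qualitative form $A(c\alpha)^{-2} x^{1-2\alpha} + B c^{-1/\alpha}$ comes out cleanly from the three-way split above, but matching the specific multiplicative constants requires careful bookkeeping of the several powers of $2$ introduced by the narrow-case substitutions $x_n \leq 2x$, $x_{n-1} \geq x/2$ and by the wide-case chain $x_n < 2 r_{(n)}$, $x_n^{2\alpha} < 4(c\alpha N)^{-2}$.
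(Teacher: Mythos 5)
Your overall architecture --- concavity to bound bin widths, a variance bound and a crude bound on $\ell_{p,I}$, then a three-way case split on bins --- is a genuinely different route from the paper (which works with the symmetric intervals $J^{f,N,x}$ and the worst-case loss $\ell^*_I$, dominating $f$ by $f_*=cx^\alpha$). The bin-width estimate via concavity, the variance bound, the crude bound $\ell_{p,I}\le b/e$, and the $n=1$ and narrow-bin cases all check out.

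The gap is in the wide-bin case. You arrive at $g_N(x) < \tfrac{4}{e}(c\alpha)^{-2}x_n^{1-2\alpha}$, which depends on the bin \emph{endpoint} $x_n$, and then say this is ``absorbed into the $x^{1-2\alpha}$ term (with $x_n^{1-2\alpha}\le 1$).'' But $x_n^{1-2\alpha}\le 1$ just leaves you with the constant $\tfrac{4}{e}(c\alpha)^{-2}$, and $h$ does not dominate that constant near $x=0$ when $\alpha<1/2$: the first term of $h$ vanishes as $x\to 0$ and only the constant $c^{-1/\alpha}2^{1/\alpha-2}$ survives, but, for instance, with $c=1,\alpha=0.4$ the constant term of $h$ is $2^{1/2}\approx 1.41$ while $\tfrac{4}{e}(c\alpha)^{-2}\approx 9.2$. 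So the pointwise inequality $g_N(x)\le h(x)$ is not established as written. The missing ingredient is a relation between $x$ and $x_n$ in the wide case. Monotonicity of $f-cx^\alpha$ over $[x_{n-1},x_n]$ gives $x^\alpha\ge x_{n-1}^\alpha\ge x_n^\alpha-1/(cN)$, and splitting on whether $x_n^\alpha$ is $\le 2/(cN)$ or larger yields either $x_n\le (2/(cN))^{1/\alpha}$ (so the crude bound produces a quantity dominated by the constant term of $h$, using $\alpha\le 1/2$ and $N\ge 2$) or $x_n\le 2^{1/\alpha}x$ (so $x_n^{1-2\alpha}\le 2^{1/\alpha-2}x^{1-2\alpha}$, which fits into the $x^{1-2\alpha}$ term). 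That sub-split is a genuine extra step, not bookkeeping of constants.

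Separately, your $n=1$ and wide cases both invoke $\alpha\le 1/2$ to control $N^{2-1/\alpha}$, whereas the proposition as stated allows $\alpha\in(0,1]$ for the pointwise bound $g_N\le h$ (only the integrability claim restricts to $\alpha\le 1/2$). The paper handles $\alpha>1/2$ by a separate estimate in its Case (i), which is where the extra $\alpha^2 2^{1/\alpha-2}(c\alpha)^{-2}x^{1-2\alpha}$ term in $h$ comes from. Your argument silently drops that half of the claim; it happens to be the half not used in the Dominated Convergence application, but you should either prove it or explicitly note you only claim it for $\alpha\le 1/2$.
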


The lower bound \eqref{eq::fatou-restated} then follows immediately from \Cref{prop::locloss-convergence} and Fatou's Lemma; and when $f \in \cF^\dagger$, by \Cref{prop::locloss-dominating} there is some $h$ which is integrable over $p$ and dominates all $g_N$, thus showing \eqref{eq::raw-restated} by the Dominated Convergence Theorem.

To prove \Cref{prop::locloss-convergence}, we use the following:
\begin{itemize}
\item For any $x$ at which $f$ is differentiable, when $N$ is large, the width of the interval $x$ falls in is
\begin{align}
r_{(n_N(x))} \approx N^{-1} f'(x)^{-1} \, .
\end{align}
\item For any $x$ at which $F_p$ is differentiable, $p|_I$ will be approximately uniform over any sufficiently small $I$ containing $x$.
\item For a sufficently small interval $I$ containing $x$ and such that $p|_I$ is approximately uniform,
\begin{align}\label{eq::local_loss_approx}
\ell_{p,I} \approx \frac{1}{24} r_I^2 x^{-1}\,.
\end{align}
\end{itemize}
Putting these together, we get that if $F_p$ and $f$ are both differentiable at $x$ then when $N$ is large,
\begin{align}
	g_N(x) &= N^2 \, \ell_{p,(n_N(x))} \\
	&\approx N^2 \frac{1}{24} r_{(n_N(x))}^2 x^{-1} \approx \frac{1}{24} f'(x)^{-2} x^{-1} = g(x)
\end{align}
as we wanted.  We formally state each of these steps in \Cref{sec::locloss-convergence-preliminaries} and combine them to prove \Cref{prop::locloss-convergence} in \Cref{sec::locloss-convergence-pf}.

The proof of \Cref{prop::locloss-dominating} is given in \Cref{sec::locloss-dominating-proof}, along with its own set of definitions and lemmas needed to show it.


\section{Minimax Compander}

\label{sec::minimax}

\Cref{thm::asymptotic-normalized-expdiv} showed that for $f \in \cF^\dagger$, the asymptotic single-letter loss is equivalent to
\begin{align}
    \singleloss(p, f) = \frac{1}{24} \int_0^1 p(x) f'(x)^{-2} x^{-1} dx\,.
\end{align}
Using this, we can analyze what is the `best' compander $f$ we can choose and what is the `worst' single-letter density $p$ in order to show \Cref{thm::optimal_compander_loss,thm::minimax_compander} and their related results.

\iflong
\subsection{Optimizing the Compander}

\fi

We show \Cref{thm::optimal_compander_loss}, 
which follows from \Cref{thm::asymptotic-normalized-expdiv} by finding $\comp \in \compset$ which minimizes $L^\dagger(p,\comp)$. This is achieved by optimizing over $\compder$; we will also use some concepts from \Cref{prop::approximate-compander} to connect it back to $\inf_{\comp \in \cF} \widetilde{L}(p,\comp)$ when the resulting $\comp$ is not in $\cF^\dagger$. Since $\comp : [0,1] \to [0,1]$ is monotonic, we use constraints $\compder(x) \geq 0$ and $\int_0^1 \compder(x) \, dx = 1$.
\iflong
We solve the following:
\begin{align}
    \text{minimize } &L^\dagger(p,\comp) = \frac{1}{24}\int_0^1 p(x) \compder(x)^{-2} x^{-1} \, dx \\
    \text{subject to } &\int_0^1 \compder(x) \, dx = 1 \\&\text{ and } \compder(x) \geq 0 \text{ for all } x \in [0,1] \,.
\end{align}
The function $L^\dagger(p,\comp)$ is convex in $\compder$, and thus first order conditions show optimality. Let $\lambda(x)$ satisfy $\int_0^1 \lambda(x) dx = 0$. If $\compder(x) \, \propto \, (p(x)x^{-1})^{1/3}$, we derive:
\begin{align}
    &\frac{d}{dt}  \frac{1}{24}  \int_0^1 p(x) \big(\compder(x) + t \, \lambda(x) \big)^{-2} x^{-1} \, dx \label{eq::perturbed_func_min_L} \\
    &= \frac{1}{24} \int_0^1 p(x) x^{-1} \frac{d}{dt} \big(\compder(x) + t \, \lambda(x) \big)^{-2}  \, dx \\
    &= -\frac{1}{12} \int_0^1 p(x) x^{-1} \big(\compder(x) + t \, \lambda(x) \big)^{-3} \lambda(x)  \, dx \\
     &= -\frac{1}{12} \int_0^1 p(x) x^{-1} \compder(x)^{-3} \lambda(x)  \, dx  ~~(\text{at } t = 0) \,
     \\ & \propto \, -\frac{1}{12} \int_0^1  \lambda(x)  \, dx = 0  \label{eq::f_proportional_to}\,.
\end{align}
Thus, such $f$ satisfies the first-order optimality condition under the constraint $\int f'(x) \, dx = 1$. This gives
\else
Using calculus of variations, we get
\fi
$
    \compder_p(x) \, \propto \, (p(x)x^{-1})^{1/3}
$
and $\comp(0) = 0$ and $\comp(1) = 1$, from which \eqref{eq::raw_overall_dist} and \eqref{eq::best_f_raw} follow. 
If $\comp_p \in \compset^\dagger$, then $\comp_p = \argmin_f \singleloss(p,\comp)$, and for any other $\comp \in \compset$,
\begin{align}
    \hspace{-0.5pc} \singleloss(p,\comp_p) &= L^\dagger(p,\comp_p)
     \leq L^\dagger(p,\comp)
     \\&\leq \liminf_{N \to \infty} N^2 \singleloss(p,\comp,N) 
     \,.
\end{align}

If $\comp_p \not \in \compset^\dagger$, for any $\delta > 0$ define $\comp_{p,\delta} = (1 - \delta) \comp_p + \delta x^{1/2}$ (as in \eqref{eq::approximate-compander}). Then $\comp_{p,\delta} - \delta x^{1/2} = (1 - \delta) \comp_p$ is monotonically increasing so $\comp_{p,\delta} \in \compset^\dagger$, so \Cref{thm::asymptotic-normalized-expdiv} applies to $\comp_{p,\delta}$; additionally, $\comp_{p,\delta} - (1-\delta) \comp_p = \delta x^{1/2}$ is monotonically increasing as well so $\comp'_{p,\delta} \geq (1-\delta) \comp'_p$. Hence, plugging into the $L^\dagger$ formula gives:
\begin{align}
 &\singleloss(p,\comp_{p,\delta}) = L^\dagger(p,\comp_{p,\delta}) \leq L^\dagger(p,\comp_p)(1-\delta)^{-2}  \, .
\end{align}
Taking $\delta \to 0$ (and since $\compset^\dagger \subseteq \compset$) shows that 
\begin{align}
    L^\dagger(p,\comp_p) = \inf_{\comp \in \compset^\dagger} \singleloss(p,\comp)\,, 
\end{align} 
finishing the proof of \Cref{thm::optimal_compander_loss}. 

\begin{remark}
Since we know the corresponding single-letter source $p$ for a Dirichlet prior, using this $p$ with \Cref{thm::optimal_compander_loss} gives us the optimal compander for Dirichlet priors on any alphabet size. This gives us a better quantization method than EDI which was discussed in \Cref{{sec::experimental_results}}. This optimal compander for Dirichlet priors is called the \emph{beta compander} and its details are given in Appendix~\ref{sec::beta_companding}.
\end{remark}

\iflong
\subsection{The Minimax Companders and Approximations}
\fi

To prove \Cref{thm::minimax_compander} and \Cref{cor::worstcase_prior}, we first consider what density $p$ maximizes equation \eqref{eq::raw_overall_dist}:
\begin{align}
\frac{1}{24} \left(\int_0^1 (p(x)x^{-1})^{1/3} dx\right)^3
\end{align}
i.e. is most difficult to quantize with a compander.
Using calculus of variations to maximize
\begin{align}
    \int_0^1 (p(x)x^{-1})^{1/3} \,dx \label{eq::loss-to-maximize}
\end{align}
(which of course maximizes \eqref{eq::raw_overall_dist}) subject to $p(x) \geq 0$ and $\int_0^1 p(x) \, dx = 1$, we find that maximizer is $p(x) = \frac{1}{2} x^{-1/2}$. However, while interesting, this is only for a single letter; and because $\bbE[X] = 1/3$ under this distribution, it is clearly impossible to construct a prior over $\triangle_{\az-1}$ (whose output vector \emph{must} sum to $1$) with this marginal (unless $\az = 3$). 


\iflong
Hence, we add an expected value constraint to the problem of maximizing \eqref{eq::loss-to-maximize}, giving:
\begin{align}
    \text{maximize } &\int_0^1 \big(p(x)x^{-1}\big)^{1/3} \, dx  \\
    \text{subject to } &\int_0^1 p(x) \, dx = 1\label{eq::p_constraint_sum}; \\ 
    &\int_0^1 p(x)x \,dx = \frac{1}{\az} \label{eq::p_constraint_mean};\\
    &\text{and } p(x) \geq 0 \text{ for all } x\,.
\end{align}
We can solve this again using variational methods (we are maximizing a concave function so we only need to satisfy first-order optimality conditions). A function $p(x) > 0$ is optimal if, for any $\lambda(x)$ where
\begin{align}
    \int_0^1 \lambda(x) \, dx = 0 \text{ and } \int_0^1 \lambda(x)x \, dx &= 0 \label{eq::variational_constraints}
\end{align}
the following holds:
\begin{align}
    \frac{d}{dt} \int_0^1 x^{-1/3} \big(p(x) + t \, \lambda(x)\big)^{1/3} \, dx = 0\,.
\end{align}
We have by the same logic as before:
\begin{align}
    \frac{d}{dt} \eqstartshort \int_0^1  x^{-1/3} \big(p(x) + t \, \lambda(x)\big)^{1/3} \, dx \eqbreakshort
    &= \frac{1}{3} \int_0^1 x^{-1/3} \big(p(x) + t \, \lambda(x)\big)^{-2/3} \lambda(x) \, dx \\
    &= \frac{1}{3} \int_0^1 x^{-1/3} p(x)^{-2/3} \lambda(x) \, dx ~~ (\text{at } t = 0)\,. \label{eq::variational_condition}
\end{align}
Thus, if we can arrange things so that there are constants $a_\az, b_\az$ such that 
\begin{align}
    x^{-1/3} p(x)^{-2/3} = a_\az  + b_\az  x
\end{align}
this ensures \eqref{eq::variational_condition} equals zero. In that case,
\begin{align}
    x^{-1/3} p(x)^{-2/3} &= a_\az  + b_\az  x \\
    \iff \hspace{1pc} p(x)^{-2/3} &= a_\az  x^{1/3} + b_\az  x ^{4/3} \\
    \iff \hspace{2.6pc}  p(x) &= \big(a_\az  x^{1/3} + b_\az  x^{4/3}\big)^{-3/2}\label{eq::maximin-density_repeat}\,.
\end{align}
This is the maximin density $p^*_\az$ from \Cref{prop::maximin-density} \eqref{eq::maximin-density}, where $a_\az , b_\az $ are set to meet the constraints \eqref{eq::p_constraint_sum} and \eqref{eq::p_constraint_mean}. 
%
%
Exact formulas for $a_\az, b_\az$ are difficult to find; we give more details on after the next step.

We want to determine the optimal compander for the maximin density \eqref{eq::maximin-density_repeat}. We know from
\eqref{eq::f_proportional_to} that we need to first compute
\begin{align} 
    \phi(x) &= \int_0 ^{x} \newvar^{-1/3} \left(a_\az  \newvar^{1/3} + b_\az  \newvar^{4/3} \right)^{-1/2} \,d\newvar 
     \\ & = \frac{2 \arcsinh \left(\sqrt{\frac{b_\az  x}{a_\az }} \right)}{\sqrt{b_\az }}\,. \label{eq::minimax-compander-proportional}
\end{align}

The best compander $f(x)$ is proportional to \eqref{eq::minimax-compander-proportional} and is exactly given by $f(x) = \phi(x)/\phi(1)$. The resulting compander, which we call the \emph{minimax compander}, is
\begin{align} \label{eq::arcsinh-general}
    f(x) &= \frac{\arcsinh \left(\sqrt{\frac{b_\az  x}{a_\az }} \right)}{\arcsinh \left(\sqrt{\frac{b_\az  }{a_\az }} \right)} \,.
\end{align}
Given the form of $f(x)$, it is natural to
determine an expression for the ratio $b_\az / a_\az$.  We can parameterize both $a_\az$ and $b_\az$ by $b_\az / a_\az$ and then examine how $b_\az / a_\az$ behaves as a function of $\az$.
The constraints on $a_\az$ and $b_\az$ give that
\begin{align}
    a_\az &= 4^{1/3}(b_\az / a_\az + 1)^{-1/3}
    \\b_\az  &= 4 a_\az^{-2} - a_\az\,.
\end{align}
The ratio $b_\az / a_\az$ grows approximately as $\az \log \az$. Hence, we choose to parameterize
\begin{align}
    b_\az / a_\az = c_\az \az \log \az\,.
\end{align}
To satisfy the constraints, we get $.25 < c_\az < .75$ so long as $\az > 24$ (see \Cref{sec::minimax_const_analysis} for details), and \Cref{lem::c_k_half} in \Cref{sec::lim_c_k} shows that $c_\az \to 1/2$ as $\az \to \infty$.
Combining these gives \Cref{prop::maximin-density}.

We can then express $a_\az$, $b_\az$ in terms of $c_\az$:
\begin{align}
    a_\az &= 4^{1/3}(c_\az  \az \log \az + 1)^{-1/3}
    \\b_\az  &= 4 a_\az^{-2} - a_\az
    \\ &= 4^{1/3} (c_\az  \az \log \az + 1)^{2/3} \eqlinebreakshort - 4^{1/3}(c_\az  \az \log \az + 1)^{-1/3}
    \label{eq::value_c_2_worst+p}
    \\& = 4^{1/3}(c_\az \az \log \az)^{2/3} (1+o(1))\,.
\end{align}
When $\az$ is large, the second term in \eqref{eq::value_c_2_worst+p} is negligible compared to the first. Thus, plugging into \eqref{eq::arcsinh-general} we get the {minimax compander} and {approximate minimax compander}, respectively:
\begin{align}
    \comp^*_\az(x) &= \frac{\arcsinh \left(\sqrt{(c_\az \az \log \az) x} \right)}{\arcsinh \left(\sqrt{c_\az \az \log \az} \right)} 
    \\ \approx \comp^{**}_\az(x) &= \frac{\arcsinh (\sqrt{((1/2) \az \log \az) x} )}{\arcsinh (\sqrt{(1/2) \az \log \az} )} \,.
\end{align}
The minimax compander minimizes the maximum (raw) loss against all densities in $\cP_{1/\az}$, while the approximate minimax compander performs very similarly but is more applicable since it can be used without computing $c_\az$.

To compute the loss of the minimax compander, we can use \eqref{eq::raw_overall_dist} to get 
\begin{align}
    L^{\dagger}(p_\az^*, \comp^*_\az) 
    & = \frac{1}{\dpfconst} \left(\frac{2 \arcsinh\left(\sqrt{c_\az  \az \log \az} \right)} {\sqrt{b_\az }} \right)^3\,.
\end{align}
Substituting we get
\begin{align}
      L^{\dagger}\eqstartshort(p_\az^*, \comp^*_\az)  
      \\ &= \frac{1}{\dpfconst} \frac{8 \left(\log \left(\sqrt{c_\az \az \log \az} +  \sqrt{c_\az  \az \log \az + 1} \right)\right)^3}{2 c_\az  \az \log \az (1+o(1))}\\
     & = \frac{1}{\dpfconst} \frac{(\log 4 (c_\az  \az \log \az))^3}{2 c_\az  \az \log \az} (1+o(1)) \\
     & =  \frac{1}{24} \frac{\log^2 \az}{\az} (1 + o(1)) \label{eq::L_dagger_value_minimax}\,.
\end{align}

\else

Hence, we add the constraint $p \in \cP_{1/\az}$, maximizing \eqref{eq::loss-to-maximize} subject to $p(x) \geq 0$, $\int_0^1 p(x) \, dx = 1$, and $\int_0^1 p(x) x \, dx = 1/\az$. Solving this modified problem yields the \emph{maximin density} $p^*_\az$ \eqref{eq::maximin-density} from \Cref{thm::minimax_compander}; $\comp^*_\az$ \eqref{eq::minimax-compander} is simply the optimal compander (given by \eqref{eq::best_f_raw}) against $p^*_\az$. 
\fi

In fact, not only is $\comp^*_\az$ optimal against the maximin density $p^*_\az$, but (as alluded to in the name `minimax compander') it minimizes the maximum asymptotic loss over all $p \in \cP_{1/\az}$. 
More formally we show that $(\comp^*_\az, p^*_\az)$ is a saddle point of $L^\dagger$.

The function $L^\dagger(p,\comp)$ is concave (actually linear) in $p$ and convex in $\compder$, and we can show that the pair $(\comp^*_\az, p^*_\az)$ form a saddle point, thus proving \eqref{eq::minmax}-\eqref{eq::maxmin} from \Cref{thm::minimax_compander}.

\iflong
We can compute that
\begin{align}
    (\comp_\az^*)'(x)  \, &\propto \, (p_\az^*(x) x^{-1})^{1/3} \\
    & = x^{-1/3} (a_\az  x^{1/3} + b_\az  x^{4/3})^{-1/2} \\
    & = \frac{1}{\sqrt{a_\az  x + b_\az  x^2}}\,.
\end{align}
Assume we set $a_\az $ and $b_\az $ to the appropriate values for $\az$.
For any $p \in \cP_{1/\az}$,
\begin{align}
     {L^{\dagger}}(p, \comp_\az^*) & = \int_0^1 p(x) x^{-1} ((\comp_\az^*)'(x))^{-2}  dx\\
    & = \int_0^1 p(x) x^{-1} (a_\az  x + b_\az  x^2) dx \\
    & = a_\az  + b_\az  \frac{1}{\az}
\end{align}
i.e. $L^\dagger(p,\comp^*_\az)$ does not depend on $p$. Since $\comp^*_\az$ is the optimal compander against the maximin compander $p^*_\az$ we can therefore conclude:
\begin{align}
\sup_{p \in \cP_{1/\az}} {L^{\dagger}}(p, \comp_\az^*)  &= {L^{\dagger}}(p_\az^*, \comp_\az^*)
\\ = \inf_{\comp \in \compset} {L^{\dagger}}(p_\az^*, \comp) &= \sup_{p \in \cP_{1/\az} }\inf_{\comp \in \compset} {L^{\dagger}}(p, \comp)\,. 
\end{align}
Since it is always true that
\begin{align}\label{eq::minimax_ineq}
      \sup_{p \in \cP_{1/\az}} \inf_{\comp \in \compset}  {L^{\dagger}}(p, \comp) \leq   \inf_{\comp \in \compset} \sup_{p \in \cP_{1/\az}}  {L^{\dagger}}(p, \comp) \,,
\end{align}
this shows that $(\comp_\az^*, p^*_\az)$ is a saddle point.
\fi

Furthermore, $\comp_\az^* \in \compset^\dagger$ (specifically it behaves as a multiple of $x^{1/2}$ near $0$), so $\singleloss(p,\comp_\az^*) = L^\dagger(p,\comp_\az^*)$ for all $p$, thus showing that $\comp_\az^*$ performs well against any $p \in \cP_{1/\az}$. Using \eqref{eq::raw_loss} with the expressions for $p^*_\az$ and $\comp_\az^*$ and \eqref{eq::L_dagger_value_minimax} gives \eqref{eq::raw_loss_saddle}. This completes the proof of \Cref{thm::minimax_compander}.

\begin{remark}
While the power compander $\comp(x) = x^{1/\log \az}$ is not minimax optimal, it has similar properties to the minimax compander and differs in loss by at most a constant factor. We analyze the power compander in \Cref{sec::power_compander_analysis}.
\end{remark}

\iflong
\subsection{Existence of Priors with Given Marginals}
\fi


While $p^*_\az$ is the most difficult density in $\cP_{1/\az}$ to quantize, it is unclear whether a prior $P^*$ on $\triangle_{\az-1}$ exists with marginals $p^*_\az$ -- even though $\az$ copies of $p^*_\az$ will correctly sum to $1$ in expectation, it may not be possible to correlate them to guarantee they sum to $1$. However, it is possible to construct a prior $P^*$ whose marginals are as hard to quantize, up to a constant factor, as $p^*_\az$, by use of clever correlation between the letters. We start with a lemma:

\begin{lemma}\label{lem::cramming2}
Let $p \in \cP_{1/\az}$. Then there exists a joint distribution of $(X_1, \dots, X_\az)$ such that (i) $X_i \sim p$ for all $i \in [\az]$ and (ii) $\sum_{i \in [\az]} X_i \leq 2$, guaranteed.
\end{lemma}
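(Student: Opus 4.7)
The plan is to construct an explicit coupling via cyclic shifts of the inverse CDF. Let $F_p^{-1}(u) := \inf\{x : F_p(x) \geq u\}$ denote the quantile function; since $p$ is supported in $[0,1]$, $F_p^{-1}:[0,1]\to[0,1]$ is monotone non-decreasing and $\int_0^1 F_p^{-1}(u)\,du = \bbE_{X \sim p}[X] = 1/\az$ by assumption. Draw a single $U \sim \unif[0,1)$, and for $i \in [\az]$ set
\begin{align*}
    W_i = (U + (i-1)/\az) \bmod 1, \quad X_i = F_p^{-1}(W_i).
\end{align*}

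The marginals are correct because each $W_i$ is itself $\unif[0,1)$ (a uniform random variable on $[0,1)$ is invariant under deterministic cyclic shifts), so $X_i \sim p$. For the sum bound, observe that the multiset $\{W_1,\dots,W_\az\}$ depends on $U$ only through $U' := U \bmod (1/\az)$: writing $U = U' + j/\az$ with $j \in \{0,\dots,\az-1\}$, one checks that the $W_i$'s are just a permutation of $\{U' + (k-1)/\az : k \in [\az]\}$. Since we are bounding the permutation-invariant sum $\sum_i X_i$, we may assume WLOG that $U \in [0,1/\az)$, in which case $W_i = U + (i-1)/\az < i/\az$, and monotonicity of $F_p^{-1}$ gives $X_i \leq F_p^{-1}(i/\az)$.

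It then suffices to bound $\sum_{i=1}^{\az} F_p^{-1}(i/\az)$. The $i=\az$ term is at most $1$ since $p$ is supported in $[0,1]$. For the remaining indices, the standard lower-Riemann-sum inequality applied to the non-decreasing function $F_p^{-1}$ yields $F_p^{-1}(i/\az) \leq \az \int_{i/\az}^{(i+1)/\az} F_p^{-1}(v)\,dv$ for $i=1,\dots,\az-1$. Summing and using $\int_0^1 F_p^{-1}(v)\,dv = 1/\az$,
\begin{align*}
    \sum_{i=1}^{\az-1} F_p^{-1}(i/\az) \leq \az \int_{1/\az}^{1} F_p^{-1}(v)\,dv \leq 1,
\end{align*}
so $\sum_i X_i \leq 1+1 = 2$ almost surely, as needed.

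The one subtlety worth flagging is the boundary index $i = \az$: the Riemann comparison fails there because the interval $[1, 1 + 1/\az]$ lies outside the domain of $F_p^{-1}$, and one is forced to use the crude bound $F_p^{-1}(1) \leq 1$ coming from compactness of the support. This boundary contribution is exactly what produces the factor of $2$ (rather than an optimistic $1$) in the final bound, and intuitively reflects the fact that one cannot in general couple $\az$ copies of an arbitrary $p \in \cP_{1/\az}$ to sum exactly to $1$.
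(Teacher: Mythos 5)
Your proof is correct and takes essentially the same route as the paper's: both couplings place the $\az$ samples into the $\az$ disjoint quantile sub-intervals $((i-1)/\az,\,i/\az]$, both then bound $\sum_i X_i$ by $\sum_{i=1}^{\az} F_p^{-1}(i/\az)$, and both control that sum via the Riemann comparison to $\az\int_0^1 F_p^{-1} = 1$ plus the boundary term $F_p^{-1}(1)-F_p^{-1}(0)\leq 1$. The only structural difference is the coupling itself — you use one uniform $U$ with deterministic cyclic shifts, whereas the paper draws a random permutation and independent uniforms within each sub-interval — but both produce the same multiset of quantile positions, so the subsequent bound is identical; your version is arguably a tidier construction.
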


\iflong
\begin{proof}
Let $F$ be the cumulative distribution function of $p$. Define the quantile function $F^{-1}$ as
\begin{align}
    F^{-1}(u) = \inf \{x : F(x) \geq u\}.
\end{align}

We break $[0,1]$ into $\az$ uniform sub-intervals $I_i = ((i-1)/\az, i/\az]$ (let $I_1 = [0, 1/\az]$). We then generate $X_1, X_2, \dots, X_\az$ jointly by the following procedure:
\begin{enumerate}
    \item Choose a permutation $\sigma : [\az] \to [\az]$ uniformly at random (from $\az!$ possibilities).
    \item Let $U_k \sim \unif_{I_{\sigma(k)}}$ independently for all $k$.
    \item Let $X_k = F^{-1}(U_k)$.
\end{enumerate}

Now we consider $\sum_k X_k$. Let $b_i = F^{-1}(i/k)$ for $i = 0, 1, \dots, \az$. Note that if $\sigma(k) = i$ then $U_k \in ((i-1)/\az, i/\az]$ and hence $X_k = F^{-1}(U_k) \in [b_{i-1}, b_i]$. Therefore $X_{\sigma^{-1}(i)} \in [b_{i-1}, b_i]$ and thus for any permutation $\sigma$,
\begin{align}
    \sum_{i=1}^\az b_{i-1} &\leq \sum_{i=1}^\az X_{\sigma^{-1}(i)} \leq \sum_{i=1}^\az b_i \\
    &= \Big(\sum_{i=1}^\az b_{i-1}\Big) + b_\az - b_0\\
    &\leq \Big(\sum_{i=1}^\az b_{i-1}\Big) + 1  \leq 2
\end{align}
as $\sum_i b_{i-1} \leq \sum_i \bbE[X_{\sigma^{-1}(i)}]
     = \az \bbE_{X \sim p}[X] = 1$.
\end{proof}
\fi
\Cref{lem::cramming2} shows a joint distribution of $\newVar_1, \dots, \newVar_{\az-1}$ such that $\newVar_i \sim p^*_\az$ for all $i$ and $\sum_{i=1}^{\az-1} \newVar_i \leq 2$ (guaranteed) exists.
\iflong
Then, if $X_i = \newVar_i/2$ for all $i \in [\az-1]$, we have $\sum_{i=1}^{\az-1} X_i \leq 1$. Then setting $X_\az = 1 - \sum_{i = 1}^{\az - 1} X_i \geq 0$ ensures that $(X_1, \dots, X_\az)$ is a probability vector.
\else
We scale by $1/2$ and add $X_\az$ so that all $\az$ variables sum to one.
\fi
Denoting this prior $P^*_{\text{hard}}$ and letting $p^{**}_\az(x) = 2p^*_\az(2x)$ (so $\newVar_i \sim p^*_\az \implies X_i \sim p^{**}_\az$) we get that
\begin{align}
    &\inf_{\comp \in \compset} \rawloss_\az(P^*_{\text{hard}}, \comp) \geq (\az - 1) \inf_{\comp \in \compset} \singleloss(p^{**}_\az, \comp) \label{eq::scale2_pstart}
    \\ &= (\az - 1) \frac{1}{2} {L^{\dagger}}(p^*_\az, \comp_\az^*) \label{eq::after_scale2_pstart}
    \geq  \frac{1}{2} \frac{\az - 1}{\az}  \sup_{P \in \cP^\triangle_\az} \rawloss_\az(P, \comp_\az^*) \,.
\end{align}
The last inequality holds because $p^*_\az$ is the maximin density (under expectation constraints). To make $P^*_{\text{hard}}$ symmetric, we permute the letter indices randomly without affecting the raw loss; thus we get \Cref{cor::worstcase_prior}.
To get \eqref{eq::after_scale2_pstart} from \eqref{eq::scale2_pstart}, we have
\begin{align}
    \inf_{\comp \in \compset} &\singleloss(2p^*_\az(2x), \comp)    = \frac{1}{24} \left(\int_0^1 (2p^*_\az(2x) x^{-1})^{1/3} dx\right)^3\\
    & = \frac{1}{24} \left(\int_0^1 (2p^*_\az(u) 2 u^{-1})^{1/3} \frac{1}{2}du\right)^3\\
    & = \frac{1}{2} {L^{\dagger}}(p^*_\az, \comp_*)\,.
\end{align}


\iflong
This shows \Cref{prop::bound_worstcase_prior_exist}. In \Cref{fig:cramming_plot}, we validate the distribution $P^*_{\text{hard}}$ by showing the performance of each compander when quantizing  random distributions drawn from $P^*_{\text{hard}}$. For the minimax compander, the KL divergence loss on the worst-case prior looks to be within a constant of that for the other datasets. 

\begin{figure}
    \centering
    \includegraphics[scale = .4]{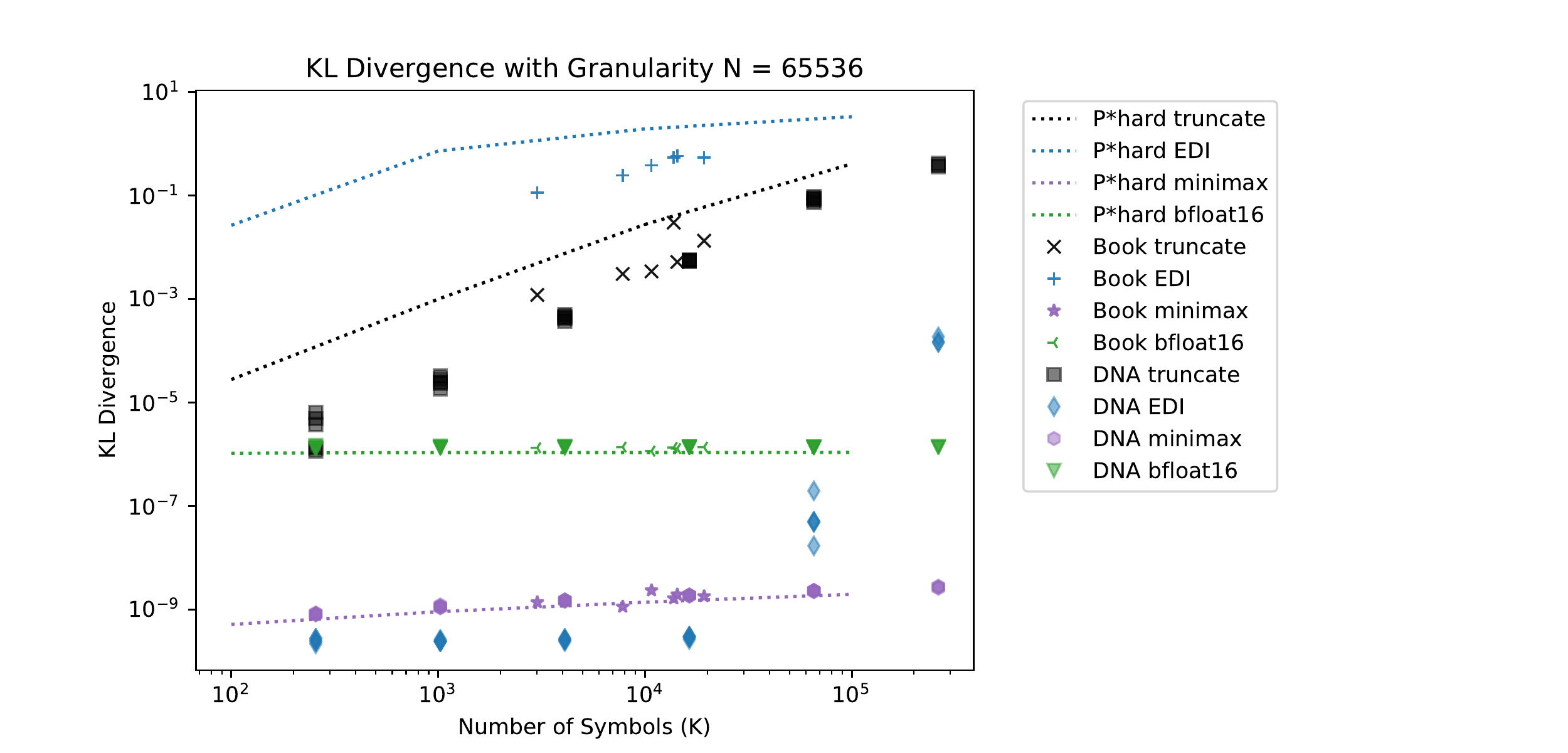}
    \caption{Each compander (or quantization method) is used on random distributions drawn from the prior $P^*_{\text{hard}}$. Comparison is given to when each compander is used on the books and DNA datasets.}
    \label{fig:cramming_plot}
\end{figure}
\fi

\section{Companding Other Metrics and Spaces}

\label{sec::other_losses}

While our primary focus has been KL divergence over the simplex, for context we compare our results to what the same compander analysis would give for other loss functions like squared Euclidean distance ($L_2^2$) and absolute distance ($L_1$ or $TV$ distance). For a vector $\bx$ and its representation $\bnormvar$ let
\begin{align}
    L_2^2(\bx, \bnormvar) &= \sum_i (x_i - \normvar_i)^2\\ 
    L_1(\bx, \bnormvar) &= \sum_i |x_i - \normvar_i|\,.  
\end{align}

For squared Euclidean distance, asymptotic loss was already given by \eqref{eq::bennett_compander} in \cite{bennett1948}, and scales as $N^{-2}$. It turns out that the maximin single-letter distribution over a bounded interval is the uniform distribution. Thus, the minimax compander for $L_2^2$ is simply the identity function, i.e. uniform quantization is the minimax for quantizing a hypercube in high-dimensional space under $L_2^2$ loss. (For unbounded spaces, $L_2^2$ loss does not scale with $N^{-2}$.)

If we add the expected value constraint to the $L_2^2$ compander optimization problem, we can derive the best square distance compander for the probability simplex. For alphabet size $\az$, we get that the minimax compander for $L_2^2$ is given by
\begin{align}
    f_{L_2^2, \az}(x) = \frac{\sqrt{1 + \az(\az-2)x} - 1}{\az-2}
\end{align}
and the total $L_2^2$ loss for probability vector $\bx$ and its quantization $\bnormvar$ has the relation
\begin{align}
    \lim_{N \to \infty} N^2 L_2^2 (\bx, \bnormvar) \leq \frac{1}{3}\,.
\end{align}

For $L_1$, unlike KL divergence and $L_2^2$, the loss scales as $1/N$. Like $L_2^2$, the minimax single-letter compander for $L_1$ loss in the hypercube $[0,1]^\az$ is the identity function, i.e. uniform quantization. In general, the derivative of the optimal compander for single-letter density $p(x)$ has the form
\begin{align}
    \comp_{L_1, \az} ' (x) \, \propto \, \sqrt{p(x)}\,.
\end{align}

On the probability simplex for alphabet size $\az$, the worst case prior $p(x)$ has the form 
\begin{align}
    p(x) = (\alpha_\az x + \beta_\az )^{-2}
\end{align}
where $\alpha_\az, \beta_\az$ are constants scaling to allow $\int_{[0,1]} \, dp = 1$ (i.e. $p$ is a valid probability density) and $\int_{[0,1]} x \, dp = 1/\az$ (i.e. $\bbE_{X \sim p}[X] = 1/\az$ so $\az$ copies of it are expected to sum to $1$).

Thus, the minimax compander on the simplex for $L_1$ loss (and letting $\gamma_\az = \alpha_\az/\beta_\az$) satisfies
\begin{align}
    \comp_{L_1, \az} ' (x) \, &\propto \, (\alpha_\az x + \beta_\az)^{-1}
    \\ \implies \comp_{L_1, \az} (x) \, &\propto \, \log((\alpha_\az/\beta_\az) x + 1)
    \\ \implies \comp_{L_1, \az} (x) &= \frac{\log(\gamma_\az x + 1)}{\log(\gamma_\az + 1)}
\end{align}
since $\comp_{L_1, \az} (x)$ has to be scaled to go from $0$ to $1$.

The asymptotic $L_1$ loss for probability vector $\bx$ and its quantization $\bnormvar$ is bounded by
\begin{align}
    \lim_{N \to \infty} N L_1 (\bx, \bnormvar) 
    =O(\log \az)
    \,.
\end{align}


\renewcommand{\arraystretch}{1}
\begin{figure*}[h!]
    \centering
    \begin{tabular}{|c|c|l|c|}
        \hline
         Loss & Space & Optimal Compander & Asymptotic Upper Bound   \\
         \hline
         KL & Simplex & $ \comp^*_\az(x) = \frac{\arcsinh(\sqrt{c_\az  (\az \log \az) \, x})}{\arcsinh(\sqrt{c_\az  \az \log \az})}$  & $N^{-2} \log^2 \az $ \\
         $L_2^2$ & Simplex & $ f_{L_2^2, \az}(x) = \frac{\sqrt{1 + \az(\az-2)x} - 1}{\az-2}$ & $N^{-2}$ \\
         $L_2^2$ & Hypercube & $ f_{L_2^2}(x) = x$ (uniform quantizer) & $N^{-2} \az $ \\
          $L_1~ (TV)$  & Simplex & $f_{L_1, \az}(x) = \frac{\log(\gamma_\az x + 1)}{\log(\gamma_\az + 1)}$ & 
          $N^{-1} \log \az$ 
          \\
           $L_1~ (TV) $  & Hypercube & $f_{L_1}(x) = x$ (uniform quantizer)  & $N^{-1} \az$ \\
          \hline
    \end{tabular}
    \caption{Summary of results for various losses and spaces. Asymptotic Upper Bound is an upper bound on how we expect the loss of the optimal compander to scale with $N$ and $\az$ (constant terms are neglected). }
    \label{tab::other_losses_spaces}
\end{figure*}

\section{Connection to Information Distillation}

\label{sec::info_distillation_main}

It turns out that the general problem of quantizing the simplex under the \textit{average} KL
divergence loss, as defined in~\eqref{eq::def_loss}, is equivalent to recently introduced problem
of \emph{information distillation}. 
Information distillation has a number of applications, including in constructing polar codes~\cite{bhatt_distilling2021,tal2015}. 
In this section we establish this equivalence and also demonstrate how the compander-based
solutions to the KL-quantization can lead to rather simple and efficient information distillers. 

\vspace{-0.5pc}

\subsection{Information Distillation}

In the information distillation problem we have two random variables $A \in \cA$ and $B \in \cB$,
where $|\cA| = \az$ (and $\cB$ can be finite or infinite) under joint distribution $P_{A,B}$
with marginals $P_A, P_B$. 
The goal is, given some finite $M < |\cB|$, to find an \emph{information distiller} (which we will also refer to as a \emph{distiller}), which is a (deterministic) function $h : \cB \to [M]$, which minimizes the information loss
\begin{align} \label{eq::inf-distillation}
    I(A;B) - I(A;h(B))
\end{align}
associated with quantizing $B \to h(B)$. The interpretation here is that $B$ is a
(high-dimensional) noisy observation of some important random variable $A$ and we want to record
observation $B$, but only have $\log_2 M$ bits to do so. Optimal $h$ minimizes the
additive loss entailed by this quantization of $B$. 
%
%

To quantify the amount of loss incurred by this quantization, we use 
%
the \emph{degrading cost}~\cite{tal2015,bhatt_distilling2021}
\begin{align}
    \mathrm{DC}(\az, M) = \sup_{P_{A,B}} \inf_{h} I(A; B) - I(A;h(B))\,.
\end{align}
%
Note that in supremizing over $P_{A,B}$ there is no restriction on $\cB$, only on $|\cA|$ and the size of the range of $h$. It has been shown in \cite{tal2015} that there is a $P_{A,B}$ such that 
\begin{align}
    \inf_{h} I(A; B) - I(A;h(B)) = \Omega(M^{-2/(\az - 1)})
\end{align}
giving a lower bound to $\mathrm{DC}(\az, M) $.
For an upper bound, \cite{Kartowsky2017} showed that if $2\az < M < |\cB|$, then 
\begin{align}
    \mathrm{DC}(\az, M) = O(M^{-2/(\az - 1)})\,.
\end{align}
Specifically, $\mathrm{DC}(\az, M) \leq \nu(\az) M^{-2/(\az - 1)}$ where $\nu(\az) \approx 16 \pi e \az^2$ for large $\az$.
While~\cite{bhatt_distilling2021} focused on multiplicative loss, their work also implied an
improved bound on the additive loss as well; namely, for all $\az \geq 2$ and $M^{1/(\az - 1)}
\geq 4$, we have
\begin{align}
    \mathrm{DC}(\az, M) \leq 1268 (\az - 1) M^{-2/(\az - 1)}\,.\label{eq::distilling_result}
\end{align}

\subsection{Info Distillation Upper Bounds Via Companders}

Using our KL divergence quantization bounds, we will show an upper bound to $\mathrm{DC}(\az, M)$ which improves on \eqref{eq::distilling_result} for $K$ which are not too small and for $M$ which are not exceptionally large. 
First, we establish the relation between the two problems: 

\begin{proposition} \label{prop:infodist}
For every $P_{A,B}$ define a random variable 
$\bX \in \triangle_{\az-1}$ by setting $X_a = P[A=a \, | \, B]$. Then, for every information distiller
$h: \cB \to [M]$ there is a vector quantizer $\bnormvar: \triangle_{\az-1} \to
\triangle_{\az-1}$ with range of cardinality $M$ such that
\begin{align}\label{eq::id_kl}
\hspace{-0.5pc} I(A;B) - I(A;h(B)) \geq \bbE[D_{\kl}(\bX \| \bnormvar(\bX))]\,.
\end{align}
Conversely, for any vector quantizer $\bnormvar$ there exists a distiller $h$ such that
\begin{align}\label{eq::id_kl2}
	I(A;B) - I(A;h(B)) \leq \bbE[D_{\kl}(\bX \| \bnormvar(\bX))]\,.
\end{align}
\end{proposition}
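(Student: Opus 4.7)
\emph{Proof plan.} The centerpiece of the argument is the identity
\begin{align}
    I(A;B) - I(A;h(B)) = \bbE\bigl[D_{\kl}(P_{A|B}\,\|\,P_{A|h(B)})\bigr], \label{eq::infodist-chain-identity}
\end{align}
valid for any deterministic $h$. I would first establish \eqref{eq::infodist-chain-identity} from the chain rule $I(A;B,h(B)) = I(A;h(B)) + I(A;B\,|\,h(B))$ together with $I(A;B,h(B)) = I(A;B)$ (since $h(B)$ is a function of $B$), then expand the conditional mutual information as an averaged KL divergence in the standard way. Recognizing $P_{A|B=b}$ as the vector $\bX(b)\in\triangle_{\az-1}$ rewrites the right-hand side as $\bbE[D_{\kl}(\bX\,\|\,P_{A|h(B)})]$.

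For the first direction (distiller $\Rightarrow$ quantizer), given $h:\cB\to[M]$, I would define $\bnormvar$ on the support of $\bX(B)$ by $\bnormvar(\bX(b)) := P_{A|h(B)=h(b)}$, and extend it arbitrarily elsewhere so that its range has cardinality exactly $M$ (padding with dummy distributions if $h$ fails to be surjective or if several cells happen to produce coincident centroids). Then $\bnormvar(\bX) = P_{A|h(B)}$ almost surely, and \eqref{eq::infodist-chain-identity} becomes an exact equality, which a fortiori yields the required inequality \eqref{eq::id_kl}.

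For the converse (quantizer $\Rightarrow$ distiller), given $\bnormvar$ with range $\{\by_1,\dots,\by_M\}$, I would define $h(b) = m$ whenever $\bnormvar(\bX(b)) = \by_m$; these cells partition the support of $\bX(B)$ since $\bnormvar$ is a function. The final ingredient is the centroid-optimality of KL: for any random vector $\boldsymbol{V}$ supported on $\triangle_{\az-1}$,
\begin{align}
    \bbE[D_{\kl}(\boldsymbol{V}\,\|\,\by)] = \bbE\Bigl[\textstyle\sum_a V_a \log V_a\Bigr] - \sum_a \bbE[V_a]\log y_a
\end{align}
is minimized over $\by\in\triangle_{\az-1}$ at $\by = \bbE[\boldsymbol{V}]$, which is a one-line consequence of the Gibbs inequality (or a Lagrange-multiplier computation). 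Applying this within each cell (with $\boldsymbol{V} = \bX\mid\{h(B)=m\}$, whose mean is exactly $P_{A|h(B)=m}$) and averaging over $m$ produces $\bbE[D_{\kl}(\bX\,\|\,P_{A|h(B)})] \le \bbE[D_{\kl}(\bX\,\|\,\bnormvar(\bX))]$, which combined with \eqref{eq::infodist-chain-identity} gives \eqref{eq::id_kl2}.

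The arguments themselves are short; the only real friction is bookkeeping. The points to watch are (i) ensuring the constructed quantizer has range of cardinality exactly $M$ (handled by padding), (ii) measurability of $h$ when $\cB$ is uncountable (inherited from measurability of $\bX(\cdot)$ and of the preimages $\bnormvar^{-1}(\by_m)$), and (iii) consistency of the conditional distributions $P_{A|h(B)=m}$ on zero-probability events, which can simply be fixed arbitrarily. Notably, no asymptotic analysis or compander-specific machinery from the rest of the paper enters the proof.
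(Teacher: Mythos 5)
Your plan captures the two essential ideas the paper also uses — the chain-rule identity $I(A;B)-I(A;h(B)) = \bbE_B[D_{\kl}(P_{A|B}\,\|\,P_{A|h(B)})]$ and the centroid-optimality of the conditional mean for KL decoding — and the converse direction (quantizer $\Rightarrow$ distiller) is correct as written.

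However, there is a genuine gap in the forward direction. You define $\bnormvar(\bX(b)) := P_{A\,|\,h(B)=h(b)}$ on the support of $\bX$, and then assert $\bnormvar(\bX) = P_{A|h(B)}$ almost surely. This is only a well-defined \emph{function} of $\bx$ if $h$ is \emph{separable} in the sense of \Cref{def:separable-distillers}: namely that $\bx(b)=\bx(b') \Rightarrow h(b)=h(b')$. A general $h:\cB\to[M]$ need not satisfy this; there can be $b,b'$ with $\bx(b)=\bx(b')$ but $h(b)\ne h(b')$, so that $P_{A|h(B)=h(b)}$ is \emph{not} a measurable function of $\bX$, and your $\bnormvar$ is simply ill-defined at that point of the simplex. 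Your parenthetical remark about ``coincident centroids'' addresses a different (harmless) degeneracy, not this one.

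The paper deals with this explicitly through \Cref{lem::opt_H_sep}: for every $h$ there is a separable $h^*$ with $I(A;h^*(B)) \geq I(A;h(B))$. This is not a cosmetic reduction — the proof of that lemma goes through viewing a non-separable $h$ as a \emph{randomized} distiller acting on $\bX$, and invoking a result from~\cite{bhatt_distilling2021} that deterministic distillers of $\bX$ are optimal. Once $h^*$ is in hand, your construction applies to $h^*$, gives equality for $h^*$, and the inequality \eqref{eq::id_kl} follows from $I(A;h(B)) \leq I(A;h^*(B))$. To make your write-up complete you must either invoke this reduction, or argue directly that one can de-randomize the induced map $\bx \mapsto h(B)$ (e.g.\ by a Carathéodory-type averaging argument selecting a deterministic realization whose loss does not exceed the average). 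As it stands, the step ``define $\bnormvar$ on the support of $\bX$'' quietly assumes separability without justification, and separability is not given.
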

The inequalities in \Cref{prop:infodist} can be replaced by equalities if the distiller $h$ and the quantizer $\bnormvar$ avoid certain trivial inefficiencies. 
If they do so, there is a clean `equivalent' quantizer $\bnormvar$ for any distiller $h$, and vice versa, which preserves the expected loss.
This equivalence and \Cref{prop:infodist} are shown in \Cref{sec::info_distillation_detail}.

%
Thus, we can use KL quantizers to bound the degrading cost above (see \Cref{sec::info_distillation_detail} for details): 
%
%
\begin{align}
     \mathrm{DC}(\az, M) &= \sup_{P_{A,B}} \inf_{h} I(A; B) - I(A;h(B))
     \\ &= \sup_{P} \inf_{\bnormvar} \bbE_{\bX \sim P} [D_{\kl}(\bX \| \bnormVar)] \label{eq::infg_supP_KL_symPrior}
     \\&\leq \inf_{\bnormvar} \sup_{P}  \bbE_{\bX \sim P} [D_{\kl}(\bX \| \bnormVar)]\,. \label{eq::infg_supP_KL}
\end{align}
We then use the approximate minimax compander results to give an upper bound to \eqref{eq::infg_supP_KL}. 
This yields:

\begin{proposition}\label{prop::compander_worst_case_for_distilling}
For any $\az \geq 5$ and $M^{1/K} > \lceil 8 \log (2 \sqrt{\az \log \az} + 1) \rceil$ 
\begin{align} \label{eq::compander_worst_case_for_distilling}
    \mathrm{DC}(\az, M) \hspace{-0.2pc} \leq \hspace{-0.2pc} \bigg( \hspace{-0.2pc} 1 \hspace{-0.2pc} + \hspace{-0.2pc} 18 \frac{\log \log \az}{\log \az} \hspace{-0.1pc} \bigg) M^{-\frac{2}{\az}} \log ^2 \az\,.
\end{align}
\end{proposition}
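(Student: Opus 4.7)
The plan is to combine the reduction in Proposition~\ref{prop:infodist} (restated in \eqref{eq::infg_supP_KL}) with the divergence covering bound of Theorem~\ref{thm::worstcase_power_minimax}. Specifically, \eqref{eq::infg_supP_KL} implies
\begin{align*}
    \mathrm{DC}(\az,M) \leq \inf_{\bnormvar} \sup_{P} \bbE_{\bX \sim P}[D_{\kl}(\bX \| \bnormvar(\bX))] \leq \inf_{\bnormvar} \max_{\bx \in \triangle_{\az-1}} D_{\kl}(\bx \| \bnormvar(\bx)),
\end{align*}
where the infimum is over vector quantizers $\bnormvar : \triangle_{\az-1} \to \triangle_{\az-1}$ whose range has cardinality at most $M$. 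Thus it suffices to exhibit a single such $\bnormvar$ meeting the claimed bound.

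For this I would take the entrywise approximate minimax compander $\comp^{**}_\az$ of \eqref{eq::appx-minimax-compander} with midpoint decoding at per-coordinate granularity $N = \lfloor M^{1/\az} \rfloor$. Since the raw reconstruction $\brawvar(\bx)$ takes at most $N^\az \leq M$ distinct values, the normalized output $\bnormvar(\bx)$ does too, meeting the cardinality constraint. The hypothesis $M^{1/\az} > \lceil 8 \log(2\sqrt{\az \log \az} + 1)\rceil$ is calibrated precisely to guarantee $N \geq 8 \log(2\sqrt{\az\log\az}+1)$, which is exactly the condition required to invoke Theorem~\ref{thm::worstcase_power_minimax}. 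That theorem then gives
\begin{align*}
    \max_{\bx \in \triangle_{\az-1}} D_{\kl}(\bx \| \bnormvar(\bx)) \leq \Big(1 + 18 \frac{\log \log \az}{\log \az}\Big) N^{-2} \log^2 \az.
\end{align*}

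Chaining the two displays yields the claim, modulo one technical wrinkle which I expect to be the main obstacle: since $N = \lfloor M^{1/\az}\rfloor \leq M^{1/\az}$, the naive substitution $N^{-2} \leq M^{-2/\az}$ fails and one instead obtains $N^{-2} \leq (1+1/N)^2 M^{-2/\az}$, introducing a multiplicative slack of $(1+1/N)^2$ that must be absorbed into the stated error factor $1+ 18\log\log\az/\log\az$. Under the hypothesis, however, $N$ grows at least like $\log \az$, so $1/N = O(1/\log \az)$, which is asymptotically dominated by $\log\log\az/\log\az$; a routine case analysis in $\az \geq 5$ (perhaps with minor tightening of constants) confirms that the bound holds as stated. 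An alternative route would be to pick $N = \lceil M^{1/\az}\rceil$ to make $N^{-2} \leq M^{-2/\az}$ directly, but this would require a separate argument showing that the post-normalization output still has at most $M$ distinct values, which seems harder than absorbing the rounding factor.
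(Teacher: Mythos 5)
Your proof follows the same route as the paper's own (very terse) proof: reduce the degrading cost to a worst-case simplex quantization problem via \Cref{prop:infodist} / \eqref{eq::infg_supP_KL}, instantiate the quantizer as the entrywise approximate minimax compander with midpoint decoding at granularity $N$ satisfying $N^\az \le M$, and invoke the pointwise bound of \Cref{thm::worstcase_power_minimax} (which then also bounds the expectation). The rounding wrinkle you flag, namely that $N = \lfloor M^{1/\az}\rfloor \le M^{1/\az}$ so $N^{-2}$ does not directly bound $M^{-2/\az}$, is real; note, though, that the paper's proof is no more careful here --- it simply writes ``$M = N^\az$ substituted'' --- and the ceiling in the hypothesis $M^{1/\az} > \lceil 8\log(2\sqrt{\az\log\az}+1)\rceil$ is exactly what guarantees the integer $N$ still satisfies the granularity condition of \Cref{thm::worstcase_power_minimax}. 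Your claim that the $(1+1/N)^2$ slack can be absorbed into the error factor $1 + 18\log\log\az/\log\az$ is left unverified (``a routine case analysis\ldots confirms''); it would need to be checked against the actual numerical margin in the constant $18$ from the proof of \Cref{thm::worstcase_power_minimax}, since the hypothesis only forces $N$ to be of order $\log\az$ and for small $\az$ (e.g.\ $\az=5$, where $N\gtrsim 16$ and the error bound is near $5.3$) the product could plausibly be tight. So: same approach as the paper, with you having correctly surfaced --- but not closed --- a small gap that the paper itself silently skips.
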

\begin{proof}
Consider the right-hand side of~\eqref{eq::id_kl}. 
The compander-based quantizer from \Cref{thm::worstcase_power_minimax} gives a guaranteed bound on $D(\bX\|\bnormvar(\bX))$ (and $M = N^\az$ substituted), which also holds in expectation.
\end{proof}

\begin{remark} Similarly, an upper bound on the divergence covering problem~\cite[Thm
2]{phdthesis} implies
	\begin{equation}\label{eq::distilling_result2}
		\mathrm{DC}(\az, M) \le 800 (\log K) M^{-2/(\az-1)} \,.
\end{equation}	
(This appears to be the best known upper bound on
	$\mathrm{DC}$.) The lower bound on the divergence covering, though, does not
	imply lower bounds on $\mathrm{DC}$, since divergence covering seeks one
	collection of $M$ points that are good for quantizing 
        any $P$, whereas
	$\mathrm{DC}$ permits the collection to depend on $P$. For distortion
	measures that satisfy the triangle inequality, though, we have a provable relationship
	between the metric entropy and rate-distortion for the least-favorable prior,
	see~\cite[Section 27.7]{ITbook}.
\end{remark}

\section{Acknowledgements}

We would like to thank Anthony Philippakis for his guidance on the DNA $k$-mer experiments.

\bibliographystyle{resources/IEEEbib}
\bibliography{ref_compander}

\clearpage

\appendices

\crefalias{section}{appendix}

\section*{Appendix Organization}

\paragraph*{Appendix~\ref{sec::appendix_proofs_asymptotic}} We fill in the details of the proof of \Cref{thm::asymptotic-normalized-expdiv}. 

\paragraph*{Appendix~\ref{sec::approximate-compander-f-dagger}} We prove \Cref{prop::approximate-compander}.

\paragraph*{Appendix~\ref{sec::appendix_other_companders}} We develop and analyze other types of companders, specifically \emph{beta companders}, which are optimized to quantize vectors from Dirichlet priors (\Cref{sec::beta_companding}), and \emph{power companders}, which have the form $f(x) = x^s$ and have properties similar to the minimax compander (\Cref{sec::power_compander_analysis}). Supplemental experimental results are also provided.

\paragraph*{Appendix~\ref{sec::appendix_minimax}} We analyze the minimax compander and approximate minimax compander more deeply, showing that $c_\az  \in [1/4, 3/4]$ (\Cref{sec::minimax_const_analysis}) and $\lim_{\az \to \infty} c_\az = 1/2$ (\Cref{sec::lim_c_k}), and show that when $c_\az \approx 1/2$, the approximate minimax compander performs similarly to the minimax compander against all priors $p \in \cP$ (\Cref{sec::proof_L_appx_minimax_compander}). Supplemental experimental results are also provided.

\paragraph*{Appendix~\ref{sec::worst-case_analysis}} We prove \Cref{thm::worstcase_power_minimax}, showing bounds on the worst-case loss (adversarially selected $\bx$, rather than from a prior) for the power, minimax, and approximate minimax companders.


\paragraph*{Appendix~\ref{sec::info_distillation_detail}} We discuss the connection to information distillation in detail.

\section{Asymptotic Single-Letter Loss Proofs}

\label{sec::appendix_proofs_asymptotic}

In this appendix, we give all the proofs necessary for \Cref{thm::asymptotic-normalized-expdiv}, whose proof outline was discussed in \Cref{sec::asymt_single}. We begin with notation in \Cref{sec::locloss_notation}.
In \Cref{sec::locloss-convergence-preliminaries}, we give some preliminaries for showing \Cref{prop::locloss-convergence} (which shows that the local loss functions $g_N$ converge to the asymptotic local loss function $g$ a.s. when the input $X$ is distributed according to $p \in \cP$).
In \Cref{sec::locloss-convergence-pf}, we give the proof of \Cref{prop::locloss-convergence}. In \Cref{sec::locloss-dominating-proof}, we give the proof of \Cref{prop::locloss-dominating} (which shows the existence of an integrable $h$ dominating $g_N$ when the compander $f$ is from the `well-behaved' set $\cF^\dagger$).

In order to focus on the main ideas, some of the more minor details needed for \Cref{prop::locloss-convergence} and \Cref{prop::locloss-dominating} are omitted and left for later sections. We fill in the details on the lemmas and propositions used in the proof of \Cref{prop::locloss-convergence}, including proofs for all results from \Cref{sec::locloss-convergence-preliminaries} (specifically \Cref{lem::asymptotic-interval-sizes,lem::increase-interval-loss-positive} and \Cref{prop::uniform-at-small-scales,prop::divergence-of-close-distributions,prop::loss-under-uniform-dist}) in \Cref{sec::proof_asymptotic-interval-sizes,sec::proof_uniform-at-small-scales,sec::proof_divergence-of-close-distributions,sec::proof_loss-under-uniform-dist,sec::additional_lemmas}.

We then fill in the details of the lemmas for the proof of \Cref{prop::locloss-dominating}, specifically \Cref{lem::interval-loss-ub,lem::dominating-companders}.

\subsection{Notation}
\label{sec::locloss_notation}


Given probability distribution $p$ and interval $I$, $p|_I$ denotes $p$ restricted to $I$, i.e. $X \sim p|_I$ is the same as $X \sim p$ conditioned on $X \in I$.  We also define the probability mass of $I$ under $p$ as $\pi_{p,I} = \bbP_{X \sim p}[ X \in I]$. If $\pi_{p,I} = 0$, we let $p|_I$ be uniform on $I$ by default.

Given two probability distributions $p, q$ (over the same domain), their \emph{Kolmogorov-Smirnov distance} (KS distance) is
\begin{align}
\hspace{-0.2pc} d_{KS}(p,q) \hspace{-0.2pc} = \hspace{-0.2pc} \norm{F_p - F_q}_{\infty} \hspace{-0.2pc} = \hspace{-0.2pc} \sup_x |F_p(x) - F_q(x)| \label{eq::infty-norm}
\end{align}
(recall that $F_p, F_q$ are the CDFs of $p,q$).

We use standard order-of-growth notation (which are also used in \Cref{sec::main-theorems}). We review these definitions here for clarity, especially as we will use some of the rarer concepts (in particular, small-$\omega$). For a parameter $t$ and functions $a(t), b(t)$, we say:
\begin{align}
    a(t) = O(b(t)) &\iff \limsup_{t \to \infty} |a(t)/b(t)| < \infty
    \\ a(t) = \Omega(b(t)) &\iff \liminf_{t \to \infty} |a(t)/b(t)| > 0
    \\ a(t) = \Theta(b(t)) &\iff a(t) = O(b(t)), \, a(t) = \Omega(b(t))\,.
\end{align}
We use small-$o$ notation to denote the strict versions of these:
\begin{align}
    a(t) = o(b(t)) &\iff \lim_{t \to \infty} |a(t)/b(t)| = 0
    \\  a(t) = \omega(b(t)) &\iff \lim_{t \to \infty} |a(t)/b(t)| = \infty\,.
\end{align}
Sometimes we will want to indicate order-of-growth as $t \to 0$ instead of $t \to \infty$; this will be explicitly mentioned in that case.

\subsection{Preliminaries for \Cref{prop::locloss-convergence}} \label{sec::locloss-convergence-preliminaries}


We first generalize the idea of \emph{bins}. 
The bin around $x \in [0,1]$ at granularity $N$ is the interval $I = I^{(n)}$ containing $x$ such that $f(I) = [(n-1)/N, n/N]$ for some $n \in [N]$. This notion relies on integers because $f(I) = [(n-1)/N, n/N]$ for integers $n,N$. We remove the dependence on integers while keeping the basic structure (an interval $I$ about $x$ whose image $f(I)$ is a given size):
\begin{definition} \label{def::pseudobin}
For any $x \in [0,1]$, $\theta \in [0,1]$, and $\varepsilon > 0$, we define the \emph{pseudo-bin} $I^{(x, \theta, \varepsilon)}$ as the interval satisfying:
\begin{align}
    \hspace{-1pc} I^{(x,\theta,\varepsilon)} &= [x - \theta r^{(x, \theta, \varepsilon)}, x + (1-\theta) r^{(x, \theta, \varepsilon)}] \text{ where }
    \\ \hspace{-0.85pc} r^{(x, \theta, \varepsilon)} &= \inf \big(r : f(x + (1-\theta) r) - f(x - \theta r) \geq \varepsilon \big) \label{eq::defn-of-r}\,.
\end{align}
\end{definition}
The interpretation of this is that $I^{(x,\theta,\varepsilon)}$ is the minimal interval $x$ such that $|f(I^{(x,\theta,\varepsilon)})| \geq \varepsilon$ and such that $x$ occurs at $\theta$ within $I^{(x,\theta,\varepsilon)}$, i.e. a $\theta$ fraction of $I^{(x,\theta,\varepsilon)}$ falls below $x$ and $1 - \theta$ falls above. Its width is $r^{(x,\theta,\varepsilon)}$. This implies that bins are a special type of pseudo-bins. Specifically, for any $x$ and $N$ (and any compander $f$),
\begin{align}
    I^{(n_N(x))} = I^{(x,\theta,1/N)} \text{ for some } \theta \in [0,1] \,.
\end{align}
We now consider the size of pseudo-bins as $\varepsilon \to 0$:
\begin{lemma} \label{lem::asymptotic-interval-sizes}
If $f$ is differentiable at $x$, then
\begin{align}
    \lim_{\varepsilon \to 0} \varepsilon^{-1} r^{(x,\theta,\varepsilon)} = f'(x)^{-1} 
\end{align}
(including going to $\infty$ when $f'(x) = 0$). The limit converges uniformly over $\theta \in [0,1]$.
\end{lemma}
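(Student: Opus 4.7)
The plan is to use the differentiability of $f$ at $x$ to produce a two-sided sandwich on the increment $\Delta_\theta(r) := f(x + (1-\theta)r) - f(x - \theta r)$, then read off the asymptotics of $r^{(x,\theta,\varepsilon)}$ by inverting the sandwich. Since differentiability provides uniform control over all small perturbations of $x$, uniformity in the split parameter $\theta$ will follow essentially for free.

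First I would unpack differentiability as the statement: for every $\delta > 0$ there is $r_0 = r_0(\delta) > 0$ such that $|f(x+h) - f(x) - f'(x) h| \leq \delta |h|$ whenever $|h| \leq r_0$. Applying this at $h = -\theta r$ and $h = (1-\theta)r$ (both of magnitude at most $r$) and subtracting gives, for all $r \leq r_0$ and all $\theta \in [0,1]$,
\begin{align}
    (f'(x) - \delta)\, r \;\leq\; \Delta_\theta(r) \;\leq\; (f'(x) + \delta)\, r \,.
\end{align}
This is the key uniform estimate; the error does not depend on $\theta$ because the sandwich applies to every $h$ with $|h| \leq r$ simultaneously.

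For the case $f'(x) > 0$, I would fix $\delta < f'(x)/2$ and restrict to $\varepsilon$ small enough that $\varepsilon/(f'(x) - \delta) \leq r_0$. The upper half of the sandwich rules out $\Delta_\theta(r) \geq \varepsilon$ for $r < \varepsilon/(f'(x) + \delta)$, while the lower half shows $\Delta_\theta(\varepsilon/(f'(x) - \delta)) \geq \varepsilon$. Since $\Delta_\theta$ is monotone non-decreasing in $r$ (inherited from monotonicity of $f$), the definition of $r^{(x,\theta,\varepsilon)}$ as an infimum then pins it to
\begin{align}
    \frac{1}{f'(x) + \delta} \;\leq\; \varepsilon^{-1}\, r^{(x,\theta,\varepsilon)} \;\leq\; \frac{1}{f'(x) - \delta}
\end{align}
uniformly in $\theta$; letting $\delta \to 0$ gives the claim. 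For $f'(x) = 0$, the sandwich gives $\Delta_\theta(r) \leq \delta r$ on $r \leq r_0(\delta)$, which forces $r^{(x,\theta,\varepsilon)} \geq \varepsilon/\delta$ whenever $r^{(x,\theta,\varepsilon)} \leq r_0(\delta)$ (and $r^{(x,\theta,\varepsilon)} > r_0(\delta)$ gives $\varepsilon^{-1} r^{(x,\theta,\varepsilon)} > r_0(\delta)/\varepsilon \to \infty$ trivially). In either subcase, $\varepsilon^{-1} r^{(x,\theta,\varepsilon)} \geq 1/\delta$ uniformly in $\theta$ for all sufficiently small $\varepsilon$, and since $\delta$ is arbitrary the limit is $+\infty$ uniformly.

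The main obstacle is mostly bookkeeping: the substantive content is just the characterization of differentiability by a two-sided linear sandwich. The one subtlety worth flagging is that $r^{(x,\theta,\varepsilon)}$ is defined as an infimum that need not be attained, since $f$ is only guaranteed to be continuous at $x$ itself and could in principle jump elsewhere; this is why I would phrase the endgame in terms of the monotonicity of $\Delta_\theta$ combined with the sandwich, rather than trying to invert the equation $\Delta_\theta(r) = \varepsilon$ directly.
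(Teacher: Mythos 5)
Your proof is correct, and while it starts from the same core estimate as the paper, the endgame is genuinely different and, if anything, more careful. Both arguments begin by using differentiability at $x$ to control the increment $\Delta_\theta(r) := f(x+(1-\theta)r) - f(x-\theta r)$: the paper writes the difference quotient $s_\theta(r) = \Delta_\theta(r)/r$ as a convex combination $(1-\theta)\cdot[\text{right quotient}] + \theta\cdot[\text{left quotient}]$ and shows $s_\theta(r) \to f'(x)$ uniformly in $\theta$; your sandwich $(f'(x)-\delta)\,r \le \Delta_\theta(r) \le (f'(x)+\delta)\,r$ for $r \le r_0(\delta)$ is the same estimate phrased additively. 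Where the two proofs diverge is in how they extract the conclusion about $r^{(x,\theta,\varepsilon)}$. The paper substitutes $r = r^{(x,\theta,\varepsilon)}$ into $s_\theta(\cdot)$ and then uses $\Delta_\theta(r^{(x,\theta,\varepsilon)}) = \varepsilon$ to replace the numerator; this last step implicitly assumes the infimum in \eqref{eq::defn-of-r} is attained with equality, which need not hold if $f$ jumps at the pseudo-bin endpoints (recall $f$ is only assumed differentiable at $x$ itself, monotone elsewhere). The paper also separately splits off the case where $\lim_{\varepsilon\to 0} r^{(x,\theta,\varepsilon)} \neq 0$ (i.e.\ $f$ flat near $x$) and argues uniformity by hand over the different $\theta$ regimes. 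Your version avoids both complications: you never invert $\Delta_\theta(r) = \varepsilon$ directly, but instead use monotonicity of $\Delta_\theta$ together with the two-sided sandwich to pin $r^{(x,\theta,\varepsilon)}$ to $[\varepsilon/(f'(x)+\delta),\, \varepsilon/(f'(x)-\delta)]$, which is robust to non-attainment of the infimum; and your $f'(x)=0$ argument (the one-sided bound $\Delta_\theta(r) \le \delta r$, plus the trivial observation that $r^{(x,\theta,\varepsilon)} > r_0$ already forces $\varepsilon^{-1}r^{(x,\theta,\varepsilon)} \ge 1/\delta$ once $\varepsilon \le \delta r_0$) absorbs the flat case without a separate discussion. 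The cost is that your write-up is slightly less geometrically transparent than the paper's ``difference quotient converges'' phrasing, but the bookkeeping is tighter and uniformity in $\theta$ falls out for free because $r_0(\delta)$ never depends on $\theta$.
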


The proof is given in \Cref{sec::proof_asymptotic-interval-sizes}.
Note that applying this to bins means $\lim_{N \to \infty} N r^{(n_N(x))} = f'(x)^{-1}$, and hence when $f'(x) > 0$ we have $r^{(n_N(x))} = N^{-1} f'(x)^{-1} + o(N^{-1})$.


For any interval $I$, we want to measure how close $p$ is to uniform over $I$ using the distance measure $d_{KS}(p,q)$ from \eqref{eq::infty-norm}. We will show that when $F'_p(x) = p(x)$ is well-defined and positive at $x$, $p$ is approximately uniform on any sufficiently small interval $I$ around $x$. Formally:
\begin{proposition} \label{prop::uniform-at-small-scales}
If $p(x) = F'_p(x) > 0$ is well-defined, then for every $\varepsilon > 0$ there is a $\delta > 0$ such that for all intervals $I$ such that $x \in I$ and $r_I \leq \delta$,
\begin{align}
d_{KS}(p|_I, \unif_I) \leq \varepsilon \, .
\end{align}
\end{proposition}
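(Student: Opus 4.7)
The plan is to exploit the differentiability of $F_p$ at $x$. Since $F_p'(x) = p(x) > 0$ exists, for any $\varepsilon' > 0$ there exists $\delta' > 0$ such that
\begin{align}
    |F_p(t) - F_p(x) - p(x)(t - x)| \leq \varepsilon' |t - x| \quad \text{for all } |t - x| \leq \delta'.
\end{align}
Write $\eta_t := F_p(t) - F_p(x) - p(x)(t - x)$ for the first-order error term. The strategy is to fix a small interval $I = [a,b] \ni x$ with $r_I \leq \delta'$ so that both endpoints (and all points in $I$) satisfy $|\eta_\cdot| \leq \varepsilon' r_I$, and then directly compare the CDFs of $p|_I$ and $\unif_I$.

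Since the two CDFs agree for $t < a$ (both equal $0$) and $t > b$ (both equal $1$), the supremum defining $d_{KS}(p|_I, \unif_I)$ is attained on $I$. For $t \in I$, the identities
\begin{align}
    F_p(b) - F_p(a) &= p(x) r_I + (\eta_b - \eta_a), \\
    F_p(t) - F_p(a) &= p(x)(t - a) + (\eta_t - \eta_a)
\end{align}
allow us to compute
\begin{align}
    F_{p|_I}(t) - F_{\unif_I}(t) = \frac{r_I(\eta_t - \eta_a) - (t - a)(\eta_b - \eta_a)}{r_I\, [F_p(b) - F_p(a)]}.
\end{align}
The numerator is bounded in absolute value by $4\varepsilon' r_I^2$. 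Provided $\varepsilon' < p(x)/4$, the denominator satisfies $r_I[F_p(b)-F_p(a)] \geq r_I^2 \, p(x)/2 > 0$, so in particular $\pi_{p,I} > 0$ and the conditional distribution $p|_I$ is well-defined. Combining these bounds gives $d_{KS}(p|_I, \unif_I) \leq 8\varepsilon'/p(x)$, and choosing $\varepsilon' = \min(\varepsilon p(x)/8,\, p(x)/4)$ with $\delta := \delta'(\varepsilon')$ finishes the proof.

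The main subtlety is controlling the denominator: the $\varepsilon' r_I$ perturbations in both numerator and denominator are, in absolute terms, of the same order as the $p(x) r_I$ main term, so upon division one needs a lower bound of the form $F_p(b) - F_p(a) \geq c \cdot r_I$ to prevent amplification. The positivity assumption $p(x) > 0$ is exactly what delivers this: for $r_I$ small enough, $F_p(b) - F_p(a)$ is within a factor $2$ of $p(x) r_I$. Without $p(x) > 0$ the denominator could be $o(r_I)$ and the conclusion would fail, which is why this hypothesis cannot be dropped.
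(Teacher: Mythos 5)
Your proof is correct, and it takes essentially the same route as the paper's: both exploit the differentiability of $F_p$ at $x$ to get a uniform first-order estimate $|F_p(t) - F_p(x) - p(x)(t-x)| \leq \varepsilon'|t-x|$ on a small neighborhood, use $p(x) > 0$ to bound $F_p(b) - F_p(a)$ below by a constant multiple of $r_I$ (so the normalization does not blow up), and then bound the resulting CDF discrepancy. The only difference is presentational — you express $F_{p|_I} - F_{\unif_I}$ as a single ratio and bound numerator and denominator separately, while the paper derives matching upper and lower bounds; the content is the same.
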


We give the proof in \Cref{sec::proof_uniform-at-small-scales}. This allows us to use the following:
\begin{proposition} \label{prop::divergence-of-close-distributions}
Let $p$ be a probability measure and $I$ be an interval containing $x$ such that $r_I \leq x/4$ and $d_{KS}(p|_I,\unif_I) \leq \varepsilon$ where $\varepsilon \leq 1/2$. Then
\begin{align}
| \ell_{p,I} - \ell_{\unif_I} | \leq 2 \varepsilon r_I^2 x^{-1} + O(r_I^3 x^{-2})\,.
\end{align}
\end{proposition}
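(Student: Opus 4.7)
The plan is to reduce the claim to a comparison of variances of $p|_I$ and $\unif_I$, and then bound that variance difference via Riemann–Stieltjes integration by parts powered by the KS-distance hypothesis. The underlying intuition is the same as in the sketch \eqref{eq::local_loss_approx}: on a small interval around $x$, the function $t \mapsto t\log t$ is well-approximated by a quadratic, so the local loss is essentially $\Var(X)/(2x)$.

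First I would use the identity $\ell_{q,I} = \EE_{q}[g(X)] - g(\mu_q)$ with $g(t) = t\log t$ and $\mu_q = \EE_q[X]$, applied to $q \in \{p|_I, \unif_I\}$. Taylor-expanding $g$ about $x$ with Lagrange remainder, and using $g''(t) = 1/t$, $g'''(t) = -1/t^2$, I get
\begin{equation}
\EE_q[g(X)] = g(x) + g'(x)(\mu_q - x) + \tfrac{1}{2x}\EE_q[(X-x)^2] + R_1,
\end{equation}
\begin{equation}
g(\mu_q) = g(x) + g'(x)(\mu_q - x) + \tfrac{1}{2x}(\mu_q - x)^2 + R_2.
\end{equation}
The hypothesis $r_I \leq x/4$ forces $I \subseteq [3x/4, 5x/4]$, so $|g'''|$ is bounded by a constant multiple of $x^{-2}$ throughout $I$, and since $|X - x|, |\mu_q - x| \leq r_I$, both $|R_1|$ and $|R_2|$ are $O(r_I^3 x^{-2})$. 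Subtracting gives $\ell_{q,I} = \Var_q(X)/(2x) + O(r_I^3 x^{-2})$ for either choice of $q$, hence
\begin{equation}
\ell_{p,I} - \ell_{\unif_I} = \tfrac{1}{2x}\bigl(\Var_{p|_I}(X) - r_I^2/12\bigr) + O(r_I^3 x^{-2}).
\end{equation}

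Next I would bound $|\Var_{p|_I}(X) - r_I^2/12|$ using the KS hypothesis. Since $F_{p|_I}$ and $F_{\unif_I}$ both equal $0$ at the left endpoint and $1$ at the right endpoint of $I$, integration by parts yields, for any absolutely continuous $\phi$,
\begin{equation}
\Bigl|\int_I \phi\, d(p|_I - \unif_I)\Bigr| \leq d_{KS}(p|_I,\unif_I)\int_I |\phi'(t)|\,dt \leq \varepsilon \int_I |\phi'|.
\end{equation}
Applying this with $\phi(t) = t$ gives $|\widetilde{y}_{p,I} - \bar y_I| \leq \varepsilon r_I$, and with $\phi(t) = (t - \bar y_I)^2$ (so $\int_I |\phi'| = r_I^2/2$) gives $|\EE_{p|_I}[(X-\bar y_I)^2] - r_I^2/12| \leq \varepsilon r_I^2/2$. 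Using $\EE_{p|_I}[(X - \bar y_I)^2] = \Var_{p|_I}(X) + (\widetilde y_{p,I} - \bar y_I)^2$ and $\varepsilon \leq 1/2$, I conclude $|\Var_{p|_I}(X) - r_I^2/12| \leq \varepsilon r_I^2/2 + \varepsilon^2 r_I^2 \leq \varepsilon r_I^2$.

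Plugging this into the previous display yields $|\ell_{p,I} - \ell_{\unif_I}| \leq \varepsilon r_I^2/(2x) + O(r_I^3 x^{-2})$, which is stronger than the stated bound (so the coefficient $2$ is conservative). The main bookkeeping obstacle is confirming that the Taylor remainders really are $O(r_I^3 x^{-2})$ with constants independent of $\varepsilon$ and of $p$; the hypothesis $r_I \leq x/4$ is exactly what enables this, by keeping $I$ bounded away from $0$ and making both $g''$ and $g'''$ uniformly controllable on $I$.
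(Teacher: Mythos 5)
Your proof is correct and takes a genuinely cleaner route than the paper's, and it actually yields a sharper constant ($\varepsilon r_I^2/(2x)$ rather than $2\varepsilon r_I^2 x^{-1}$). The paper's proof first shifts $p|_I$ and $\unif_I$ to have mean zero (defining $\newVar_1, \newVar_2$), then Taylor-expands $X\log(X/\widetilde{y}_q)$ with Lagrange remainder around each distribution's own centroid, controls all moment differences $|\EE[\newVar_1^i]-\EE[\newVar_2^i]|\le 4\varepsilon r_I^i$ by decomposing tail probabilities and performing a $u$-substitution, and then splits $\EE[\newVar_1^2]/\widetilde{y}_{p,I}-\EE[\newVar_2^2]/\bar{y}_I$ by a three-term triangle inequality to reconcile the different denominators. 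You instead start from the Jensen-gap identity $\ell_{q,I}=\EE_q[g(X)]-g(\mu_q)$ (which the paper proves separately as \Cref{lem::single-interval-loss-form-2} but does not actually use in this proof), expand $g$ about the single common point $x$ so that both $\ell_{p,I}$ and $\ell_{\unif_I}$ reduce to $\Var_q(X)/(2x)+O(r_I^3x^{-2})$ with no denominator mismatch to patch up, and then bound the variance difference with one clean integration-by-parts lemma, $|\int_I\phi\,d(p|_I-\unif_I)|\le\varepsilon\int_I|\phi'|$, applied to $\phi(t)=t$ and $\phi(t)=(t-\bar{y}_I)^2$. This eliminates the shifted-variable bookkeeping, the per-moment tail-integral estimates, and the three-way split, and it exposes that the loss difference is fundamentally a variance difference. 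The only minor point worth making explicit is that the boundary terms in integration by parts vanish because $F_{p|_I}$ and $F_{\unif_I}$ agree (as limits) at both endpoints of $I$; and that the constant in the Taylor remainder depends only on the fact that $I\subseteq[3x/4,5x/4]$, hence is uniform in $p$ and $\varepsilon$ — both of which you already note.
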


Recall that $\ell_{p,I}$ is the interval loss of $I$ under distribution $p$ when all points in $I$ are quantized to $\widetilde{y}_{p,I}$, the centroid of the interval.
We give the proof of \Cref{prop::divergence-of-close-distributions} in \Cref{sec::proof_divergence-of-close-distributions}.

\begin{proposition} \label{prop::loss-under-uniform-dist}
For any $x > 0$ and any sequence of intervals $I_1, I_2, \dots \subseteq [0,1]$ all containing $x$ such that $r_{I_i} \to 0$ as $i \to \infty$,
\begin{align}
    \ell_{\unif_{I_i}} = \frac{1}{24} r_{I_i}^2 x^{-1} + O(r_{I_i}^3 x^{-2}) \,.
\end{align}
\end{proposition}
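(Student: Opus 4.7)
The plan is to proceed by direct Taylor expansion. The first observation is that under the uniform distribution on an interval $I$, the centroid coincides with the midpoint, so $\widetilde{y}_{\unif_I,I} = \bar{y}_I$ and therefore
\begin{align}
    \ell_{\unif_{I_i}} = \bbE_{X \sim \unif_{I_i}}\bigl[X \log(X/\bar{y}_{I_i})\bigr].
\end{align}
Writing $r_i = r_{I_i}$ and $\bar{y}_i = \bar{y}_{I_i}$, I would change variables to $X = \bar{y}_i + U$ with $U \sim \unif[-r_i/2, r_i/2]$. Since $x$ is fixed and $r_i \to 0$, eventually $r_i \leq x/2$, which gives $\bar{y}_i \geq x/2$ and $|U/\bar{y}_i| \leq 1/2$, so $\log(1 + U/\bar{y}_i)$ admits an absolutely convergent series expansion.

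The key computation is expanding $(\bar{y}_i + U)\log(1 + U/\bar{y}_i)$ as a power series in $U/\bar{y}_i$. Collecting terms by powers of $U$ yields
\begin{align}
    X \log(X/\bar{y}_i) = U + \frac{U^2}{2\bar{y}_i} - \frac{U^3}{6\bar{y}_i^2} + O\!\left(\frac{U^4}{\bar{y}_i^3}\right),
\end{align}
with the $O(\cdot)$ remainder controlled uniformly once $|U/\bar{y}_i| \leq 1/2$. Taking expectation over $U$, the odd moments vanish by symmetry, $\bbE[U^2] = r_i^2/12$, and $\bbE[U^4] = O(r_i^4)$, so
\begin{align}
    \ell_{\unif_{I_i}} = \frac{r_i^2}{24\,\bar{y}_i} + O\!\left(\frac{r_i^4}{\bar{y}_i^3}\right).
\end{align}

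It remains to replace $\bar{y}_i$ by $x$. Using $|\bar{y}_i - x| \leq r_i/2$, expand $1/\bar{y}_i = 1/x - (\bar{y}_i - x)/x^2 + O(r_i^2/x^3)$, so that
\begin{align}
    \frac{r_i^2}{24\,\bar{y}_i} = \frac{r_i^2}{24\,x} + O\!\left(\frac{r_i^3}{x^2}\right).
\end{align}
Once $r_i \leq x$, the remainder $O(r_i^4/\bar{y}_i^3)$ is absorbed into $O(r_i^3/x^2)$ as well (it is even smaller by a factor of $r_i/x$), which yields the claimed asymptotic. The only mild obstacle is bookkeeping: one must track the error term through both the Taylor expansion (where the remainder has the form $U^4/\bar{y}_i^3$) and the subsequent change from $\bar{y}_i$ to $x$, verifying that both contributions sit uniformly inside $O(r_i^3/x^2)$ for sufficiently large $i$, which is straightforward given the fixed positivity of $x$ and the hypothesis $r_i \to 0$.
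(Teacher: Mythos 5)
Your proposal is correct and takes essentially the same approach as the paper: Taylor-expand $(\bar{y}_i + U)\log(1 + U/\bar{y}_i)$ to get the $\tfrac{1}{24} r_{I_i}^2 \bar{y}_{I_i}^{-1}$ term, control the remainder uniformly once $r_{I_i}$ is small relative to $x$, and then pass from $\bar{y}_{I_i}^{-1}$ to $x^{-1}$ using $|\bar{y}_{I_i} - x| \le r_{I_i}/2$. The paper routes the Taylor expansion through its proof of the neighboring proposition on close distributions (stopping at a cubic Lagrange remainder rather than exploiting the vanishing of odd moments as you do), but the mechanics and the resulting $O(r_{I_i}^3 x^{-2})$ error are the same.
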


The proof is in \Cref{sec::proof_loss-under-uniform-dist}.


Note that the above lemmas are all about asymptotic behavior as intervals shrink to $0$ in width; to deal with the (edge) case where they do not, we need the following lemma:
\begin{lemma} \label{lem::increase-interval-loss-positive}
For any $I$ such that $\bbP_{X \sim p} [X \in I] > 0$, there is some $a_I > 0$ such that
\begin{align}
    \ell_{p,J} \geq a_I \text{ for any } J \supseteq I \,.
\end{align}
\end{lemma}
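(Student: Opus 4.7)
The plan is to exhibit $a_I$ explicitly as a positive constant depending only on $p$ and $I$, by combining strong convexity of $x\log x$ with the law of total variance. The first step is to rewrite the single-interval loss using $\bbE_{X \sim p|_J}[X] = \widetilde{y}_{p,J}$:
\begin{align}
\ell_{p,J} = \bbE_{X \sim p|_J}[X \log X] - \widetilde{y}_{p,J} \log \widetilde{y}_{p,J},
\end{align}
which exposes $\ell_{p,J}$ as the Jensen gap for $\phi(x)=x\log x$.

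The second step leverages strong convexity. Since $\phi''(x)=1/x \geq 1$ on $(0,1]$, the function $\phi(x)-x^2/2$ is convex on $(0,1]$ and, by continuity of $\phi$ at $0$, extends as a convex function to $[0,1]$; equivalently, $\phi$ is $1$-strongly convex on $[0,1]$. Applying the quadratic lower bound at $y=\widetilde{y}_{p,J}$ and taking expectations yields
\begin{align}
\ell_{p,J} \;\geq\; \tfrac{1}{2}\,\Var_{X \sim p|_J}(X).
\end{align}

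The third step reduces $\Var_{p|_J}(X)$ to $\Var_{p|_I}(X)$ via the law of total variance applied to the partition $\{I, J\setminus I\}$: dropping the between-group term and the variance over $J\setminus I$,
\begin{align}
\Var_{p|_J}(X) \;\geq\; \frac{\pi_{p,I}}{\pi_{p,J}}\,\Var_{p|_I}(X) \;\geq\; \pi_{p,I}\,\Var_{p|_I}(X),
\end{align}
since $\pi_{p,J}\leq 1$. This suggests the choice $a_I := \tfrac{1}{2}\pi_{p,I}\Var_{p|_I}(X)$, which is manifestly independent of $J$.

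The last step, and the only place where the hypothesis $p \in \cP$ is genuinely used, is to verify $\Var_{p|_I}(X)>0$. Because $p$ is absolutely continuous with respect to Lebesgue measure and $\pi_{p,I}>0$, the density of $p$ must be positive on a subset of $I$ of positive Lebesgue measure, so $p|_I$ is not a point mass and its variance is strictly positive. I expect this final step to be the only subtle one: a pure Jensen argument would only give strict positivity of $\ell_{p,J}$ pointwise for each $J$, whereas we need a bound that is uniform over all $J \supseteq I$, and it is precisely the strong-convexity step that converts strict convexity into the required quantitative uniform lower bound.
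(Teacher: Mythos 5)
Your proof is correct, but it takes a genuinely different route from the paper's. The paper argues non-constructively: it observes that $\ell_{p,J}$ is a continuous function of the two endpoints of $J$ (using $p\in\cP$ and $\int_J dp \geq \int_I dp > 0$ to control the normalizing factor), notes that $\{J : I \subseteq J \subseteq [0,1]\}$ is a compact subset of $[0,1]^2$, invokes the Weierstrass extreme value theorem to get a minimizing $J$, and then appeals to an earlier corollary ($\ell_{p,J}>0$ for continuous $p$, a consequence of strict convexity of $x\log x$) to conclude the minimum is positive. Your argument instead produces an explicit constant $a_I = \tfrac{1}{2}\,\pi_{p,I}\Var_{p|_I}(X)$: you upgrade the paper's strict-convexity step to the quantitative fact that $\phi(x)=x\log x$ is $1$-strongly convex on $[0,1]$ (since $\phi''(x)=1/x\geq 1$ there), which converts the Jensen gap $\ell_{p,J}=\bbE[\phi(X)]-\phi(\bbE[X])$ directly into the variance bound $\ell_{p,J}\geq\tfrac12\Var_{p|_J}(X)$, and then you use the law of total variance to peel $\Var_{p|_J}$ down to a $J$-independent quantity. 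What each approach buys: yours is more self-contained (it subsumes the positivity corollary and avoids the topological machinery), is explicit and quantitative, and would survive verbatim if the class of $J$'s were enlarged beyond intervals; the paper's is shorter given that it has already established the positivity corollary and the relevant continuity facts, and it leans on softer general-position arguments rather than the specific modulus of convexity of $x\log x$ on $[0,1]$. One minor point worth making explicit if you wrote this up: the strong-convexity inequality $\phi(x)\geq\phi(y)+\phi'(y)(x-y)+\tfrac12(x-y)^2$ needs $y>0$ (since $\phi'(0^+)=-\infty$), which is fine here because $\widetilde{y}_{p,J}=\bbE_{p|_J}[X]>0$ whenever $p$ is a continuous density with $\pi_{p,J}>0$; and the $x=0$ endpoint should be checked separately (it holds: $\phi(0)-\phi(y)+\phi'(y)y = y \geq \tfrac12 y^2$ on $(0,1]$).
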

We give the proof in 
\Cref{sec::additional_lemmas}.

\subsection{Proof of \Cref{prop::locloss-convergence}}

\label{sec::locloss-convergence-pf}

We now combine the above results to prove \Cref{prop::locloss-convergence}, i.e. that $\lim_{N \to \infty} g_N(X) = g(X)$ almost surely when $X \sim p$. Because $p \in \cP$ (i.e. it is a continuous probability distribution) we will treat the bins as closed sets, i.e. $I^{(n)} = [f(\frac{n-1}{N})^{-1}, f(\frac{n}{N})^{-1}]$; this does not affect anything since the resulting overlap is only a finite set of points.

\begin{proof}

Since $p \in \cP$ then when $X \sim p$ the following hold with probability $1$:
\begin{enumerate}
    \item $0 < X < 1$;
    \item $f'(X)$ is well-defined;
    \item $p(X) = F'_p(X)$ is well defined;
    \item $p(X) > 0$.
\end{enumerate}
This is because if $p \in \cP$, and $|S|$ denotes the Lebesgue measure of set $S$, then
\begin{align}
    |S| = 0 \implies \bbP_{X \sim p} [X \in S] = 0\,.
\end{align}
This implies (1) since $\{0,1\}$ is measure-$0$.

Additionally, by Lebesgue's differentiation theorem for monotone functions, any monotonic function on $[0,1]$ is differentiable almost everywhere on $[0,1]$ (i.e. excluding at most a measure-$0$ set), and compander $f$ and CDF $F_p$ are monotonic. This implies 2) and 3). Finally, 4) follows because the set of $X$ such that $p(X) = 0$ has probability $0$ under $p$ by definition.

Therefore, we can fix $X \sim p$ and assume it satisfies the above properties. 

We now consider the bin size $r_{(n_N(X))}$ as $N \to \infty$; there are two cases, (a) $\lim_{N \to \infty} r_{(n_N(X))} = 0$ and (b) $\limsup_{N \to \infty} r_{(n_N(X))} > 0$. For case (b), since the length of the interval does not go to zero, $g_N(X) = N^2 \ell_{p,(n_N(X))} \to \infty$; additionally, $g(X) = \infty$ by default since case (b) requires that $f'(X) = 0$, and so $g_N(X) \to g(X)$ as we want. 

\emph{Case (a):} In this case (which holds for all $X$ if $f \in \cF^\dagger$), any $\delta > 0$ there is some sufficiently large $N^*$ (which can depend on $X$) such that
\begin{align}
    N \geq N^*\implies r_{(n_N(X))} \leq \delta \, .
\end{align}
By \Cref{prop::uniform-at-small-scales}, for any $\varepsilon > 0$ there is some $\delta > 0$ such that for all intervals $I$ where $X \in I$ and $r_I \leq \delta$, we have $d_{KS}(p|_I, \unif_I) \leq \varepsilon$. Putting this together implies that for any $\varepsilon > 0$, there is some sufficiently large $N^*_\varepsilon$ such that for all $N \geq N^*_\varepsilon$,
\begin{align}
    d_{KS}(p|_{(n_N(X))}, \unif_{(n_N(X))}) \leq \varepsilon \, .
\end{align}
i.e. $p$ is $\varepsilon$ close to uniform on $I^{(n_N(X))}$. Furthermore, we can always choose $\varepsilon \leq 1/2$ and $N^*_\varepsilon$ sufficiently large that $r_{(n_N(X))} \leq X/4$ (since $\lim_{N \to \infty} r_{(n_N(X))} = 0$). Under these conditions, for $N > N^*_\varepsilon$ we can apply \Cref{prop::divergence-of-close-distributions} and get
\begin{align}
    |&\ell_{p,(n_N(X))}  - \ell_{\unif_{(n_N(X))}}| 
    \nonumber \\ &\leq 2 \varepsilon r_{(n_N(X))}^2 X^{-1} + O(r_{(n_N(X))}^3 X^{-2}) \, .
\end{align}
We can then turn this around: as $N \to \infty$, we have $\varepsilon \to 0$ and hence $\varepsilon = o(1)$ (as $N \to \infty$), so
\begin{align} \label{eq::p-close-to-unif}
    |\ell_{p,(n_N(X))} - \ell_{\unif_{(n_N(X))}}| = o(r_{(n_N(X))}^2 X^{-1}) \, .
\end{align}

We then apply \Cref{prop::loss-under-uniform-dist} (note that since $r_{(n_N(X))} \leq X/4$ and $X \leq 2\bar{y}_{(n_N(X))}$, we know automatically that $r_{(n_N(X))} \leq \bar{y}_{(n_N(X))}/2$) to get that
\begin{align}
    \ell_{\unif_{(n_N(X))}} = \frac{1}{24} r_{(n_N(X))}^2 \bar{y}_{(n_N(X))}^{-1} + O(r_{(n_N(X))}^3 X^{-2})\,.
\end{align}
However, since $X$ is fixed and $r_{(n_N(X))} \to 0$ as $N \to 0$ (and $|X - \bar{y}_{(n_N(X))}| \leq r_{(n_N(X))}$ since they are both in the bin $I^{(n_N(X))}$), we know that $\bar{y}_{(n_N(X))} = X (1 + o(1))$ where $o(1)$ is in terms of $N$ (as $N \to \infty$). Hence (noting that $(1 + o(1))^{-1}$ is still $1 + o(1)$ and $O(r_{(n_N(X))}^3 X^{-2})$ is $o(1) r_{(n_N(X))}^2 X^{-1}$) we can re-write the above and combine with \eqref{eq::p-close-to-unif} to get 
\begin{align}
    \ell_{\unif_{(n_N(X))}} &= \frac{1}{24} (1 + o(1)) r_{(n_N(X))}^2 X^{-1}
    \\ \implies \ell_{p, (n_N(X))} &= \frac{1}{24} (1 + o(1)) r_{(n_N(X))}^2 X^{-1} \, .
\end{align}
We now split things into two cases: (i) $f'(X) > 0$; (ii) $f'(X) = 0$.


\emph{Case i ($f'(X) > 0$):} For all $N$ there is a $\theta \in [0,1]$ such that $I^{(n_N(X))} = I^{(X,\theta,1/N)}$ (bins are pseudo-bins, see \Cref{def::pseudobin}).
Thus, by \Cref{lem::asymptotic-interval-sizes} (which shows uniform convergence over $\theta$),
\begin{align}
    \lim_{N \to \infty} N r_{(n_N(X))} = f'(X)^{-1}\,.
\end{align}
Thus, we may re-write as a little-$o$ and plug into $g_N(X)$:
\begin{align}
    r_{(n_N(X))} &= N^{-1} f'(X)^{-1} + o(N^{-1}) 
    \\ &= N^{-1} f'(X)^{-1} (1 + o(1)) 
    \\ \implies g_N(X) &= N^2 \ell_{p, (n_N(X))} \label{eq::gn-to-g-01}
    \\ &= N^2 \frac{1}{24} (1 + o(1)) r_{(n_N(X))}^2 X^{-1} \label{eq::gn-to-g-02}
    \\ &= N^2 \frac{1}{24} (1 + o(1)) N^{-2} f'(X)^{-2} X^{-1} \label{eq::gn-to-g-03}
    \\ &= \frac{1}{24} (1 + o(1)) f'(X)^{-2} X^{-1} \label{eq::gn-to-g-04}
\end{align}
implying $\lim_{N \to \infty} g_N(X) = g(X)$ as we wanted.

\emph{Case ii ($f'(X) = 0$):} As before, for any $N$ there is some $\theta \in [0,1]$ such that $I^{(n_N(X))} = I^{(X,\theta,1/N)}$. Thus, by \Cref{lem::asymptotic-interval-sizes} and as $f'(X) = 0$, we have
\begin{align}
    \lim_{N \to \infty} N r_{(n_N(X))} = \infty \, .
\end{align}
since the convergence in \Cref{lem::asymptotic-interval-sizes} is uniform over $\theta$. We can then re-write this as a little-$\omega$:
\begin{align}
    r_{(n_N(X))} = \omega(N^{-1}) \, .
\end{align}
This implies that
\begin{align}
    g_N(X) &= N^2 \ell_{p, (n_N(X))} 
    \\ &= N^2 \frac{1}{24} (1 + o(1)) r_{(n_N(X))}^2 X^{-1}
    \\ &= N^2 \frac{1}{24} (1 + o(1)) \omega(N^{-2}) X^{-1}
    \\ &= \omega(1)
\end{align}
where $\omega(1)$ means $\lim_{N \to \infty} g_N(X) = \infty$. But since $f'(X) = 0$, by convention we have $g(X) = \frac{1}{24} f'(X)^{-2} X^{-1} = \infty$ and so $\lim_{N \to \infty} g_N(X) = g(X)$ as we wanted.

\emph{Case (b):} $\limsup_{N \to \infty} r_{(n_N(X))} > 0$. Note that this can only happen if $f'(X) = 0$, so $g(X) = \infty$; hence our goal is to show that $\lim_{N \to \infty} g_N(X) = \infty$.

Related to the above, this only happens if $f$ is not strictly monotonic at $X$, i.e. if there is some $a < X$ or some $b > X$ such that $f(X) = f(a)$ or $f(X) = b$ (or both). If both, $[a,b] \subseteq I^{(n_N(X))}$ for all $N$. Since $p(X)$ is well-defined and positive, any nonzero-width interval containing $X$ has positive probability mass under $p$. Thus, by \Cref{lem::increase-interval-loss-positive}, there exists some $\alpha > 0$ such that all $J \supseteq [a,b]$ satisfies $\ell_{p,J} \geq \alpha$. But then $g_N(X) \geq N^2 \alpha$ and goes to $\infty$.

If only $a$ exists, we divide the granularities $N$ into two classes: first, $N$ such that $I^{(n_N(X))}$ has lower boundary exactly at $X$ (which can happen if $f(X)$ is rational), and second, $N$ such that $I^{(n_N(X))}$ has lower boundary below $X$. Call the first class $N^{(1)}(1), N^{(1)}(2), \dots$ and the second $N^{(2)}(1), N^{(2)}(2), \dots$. Then as no $b$ exists, $\lim_{i \to \infty} r^{(n_{N^{(1)}(i)}(X))} = 0$, i.e. the bins corresponding to the first class shrink to $0$ and the asymptotic argument applies to them, showing $g_{N^{(1)}(i)}(X) \to \infty$. For the second class, for any $i$, we have $I^{(n_{N^{(2)}(i)}(X))} \supseteq [a,X]$ and so we have an $\alpha > 0$ lower bound of the interval loss, and multiplying by $N^2$ takes it to $\infty$. Thus since both subsequences of $N$ take $g_N(X)$ to $\infty$, we are done. An analogous argument holds if $b$ exists but not $a$.

As this holds for any $X$ under conditions 1-4, which happens almost surely, we are done.
\end{proof}


\subsection{Proof of \Cref{prop::locloss-dominating}} \label{sec::locloss-dominating-proof}

To finish our Dominated Convergence Theorem (DCT) argument, we to prove \Cref{prop::locloss-dominating}, which gives an integrable function $h$ dominating all the local loss functions $g_N$. As with \Cref{prop::locloss-convergence}, we do this in stages. We first define:

\begin{definition}
For any interval $I$, let
\begin{align}
    \ell^*_I = \sup_q \ell_{q,I}
\end{align}
where $q$ is a probability distribution over $[0,1]$. If $I = I^{(n)}$ we can denote this as $\ell^*_{(n)}$.
\end{definition}
Since $\ell_{q,I}$ is only affected by $q|_I$ (i.e. what $q$ does outside of $I$ is irrelevant), we can restrict $q$ to be a probability distribution over $I$ without affecting the value of $\ell^*_I$. The question is thus: what is the maximum single-interval loss which can be produced on interval $I$?

Then, we can use the upper bound
\begin{align} \label{eq::locloss-upper-bound-1}
    g_N(x) = N^2 \ell_{p, (n_N(x))} \leq N^2 \ell^*_{(n_N(x))} \,.
\end{align}
This has the benefit of simplifying the term by removing $p$. We now bound $\ell^*_I$:

\begin{lemma} \label{lem::interval-loss-ub}
For any interval $I$, $\ell^*_I \leq \frac{1}{2} r_I^2 \bar{y}_I^{-1}$.
\end{lemma}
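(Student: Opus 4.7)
The strategy is to bound $\ell_{q,I}$ for an arbitrary distribution $q$ supported on $I = [a,b]$ (where $a = \bar y_I - r_I/2$, $b = \bar y_I + r_I/2$) and then take the supremum over $q$. Let $\widetilde y := \widetilde y_{q,I} = \bbE_{X \sim q|_I}[X] \in [a,b]$. If $\widetilde y = 0$ then $X = 0$ almost surely and $\ell_{q,I} = 0$ by the standard $0 \log 0 = 0$ convention, so assume $\widetilde y > 0$. My first step is the elementary inequality $\log(1+u) \leq u$ for $u > -1$, applied with $u = X/\widetilde y - 1$. Multiplying both sides by $X \geq 0$ gives $X \log(X/\widetilde y) \leq X(X/\widetilde y - 1)$, so taking expectations collapses the loss to a variance:
\begin{align}
\ell_{q,I} \;\leq\; \bbE_{q|_I}[X^2]/\widetilde y - \widetilde y \;=\; \Var_{q|_I}(X)/\widetilde y.
\end{align}

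My second step is the Bhatia--Davis inequality: any random variable supported in $[a,b]$ with mean $\mu$ satisfies $\Var(X) \leq (\mu - a)(b - \mu)$. Combined with the previous step, this yields $\ell_{q,I} \leq (\widetilde y - a)(b - \widetilde y)/\widetilde y$. The third and final step is the purely deterministic inequality
\begin{align}
\frac{(\mu - a)(b - \mu)}{\mu} \;\leq\; \frac{(b - a)^2}{a + b} \;=\; \frac{r_I^2}{2\bar y_I}, \qquad 0 < a \leq \mu \leq b,
\end{align}
which, after clearing denominators, reduces to $g(\mu) := (a + b)\mu^2 - 4ab\,\mu + ab(a + b) \geq 0$. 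The quadratic $g$ has positive leading coefficient $a + b$ and discriminant $16 a^2 b^2 - 4ab(a+b)^2 = -4ab(a-b)^2 \leq 0$, so it is non-negative everywhere in $\mu$. Chaining the three bounds and taking the supremum over $q$ yields $\ell^*_I \leq r_I^2/(2\bar y_I)$, as required.

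The main technical point to watch for is the behavior when $a$ is near $0$ (the bin touches or nearly touches the origin). A more naive approach --- Taylor-expand $\phi(x) = x\log x$ about $\widetilde y$ and use $\phi''(\xi) = 1/\xi \leq 1/a$ --- only delivers $\ell^*_I \leq r_I^2/(8a)$, which blows up as $a \to 0$ and does not imply the claimed $1/\bar y_I$ dependence uniformly. Inserting Bhatia--Davis to replace $1/a$ by $1/\widetilde y$, followed by the quadratic discriminant argument, is what produces the correct denominator $\bar y_I$ with the precise constant $\frac{1}{2}$ across all admissible intervals $I \subseteq [0,1]$.
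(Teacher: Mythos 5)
Your proof is correct, and it takes a genuinely different route from the paper's. The paper proves this lemma in three steps: (1) by convexity of $x\log x$, the single-interval loss is maximized by a two-point distribution on the endpoints of $I$; (2) a symmetrization/doubling trick ($\ell_{q(\theta),I}\le\ell_{q(\theta),I}+\ell_{q(-\theta),I}\le 2\ell_{q(0),I}$, again via convexity) reduces to the \emph{symmetric} two-point distribution with centroid at the midpoint; (3) the elementary inequality $(1-t)\log(1-t)+(1+t)\log(1+t)\le 2t^2$ for $|t|\le 1$, applied at $t=r_I/(2\bar y_I)$, gives the constant. You instead linearize with $\log(1+u)\le u$ at the outset, which converts the loss into $\Var_{q|_I}(X)/\widetilde y_{q,I}$; then Bhatia--Davis bounds the variance by $(\widetilde y-a)(b-\widetilde y)$; and a discriminant calculation closes the deterministic inequality $(\mu-a)(b-\mu)/\mu\le (b-a)^2/(a+b)$. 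Both arguments ultimately rest on convexity, but yours replaces the two-step reduction to a symmetric endpoint distribution with a single early linearization plus a known variance bound, which is cleaner and makes the role of the denominator $\bar y_I$ transparent. The one thing worth noting explicitly (as you do) is the boundary case $a=0$: your inequality $g(\mu)=(a+b)\mu^2-4ab\mu+ab(a+b)\ge 0$ degenerates harmlessly to $b\mu^2\ge 0$, and the convention $0\log 0=0$ keeps step one valid, so the bound holds uniformly over all $I\subseteq[0,1]$ as required.
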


We give the proof in \Cref{sec::proof_interval-loss-ub}.
We can then add the above result to \eqref{eq::locloss-upper-bound-1} in order to obtain
\begin{align} \label{eq::locloss-upper-bound-2}
    g_N(x) \leq N^2 \ell^*_{(n_N(x))} \leq N^2 \frac{1}{2} r_{(n_N(x))}^2 \bar{y}_{(n_N(x))}^{-1}\,.
\end{align}

However, it is hard to use this as the boundaries of $I^{(n_N(x))}$ in relation to $x$ are inconvenient. Instead, use an interval which is `centered' at $x$ in some way, with the help of the following:
\begin{lemma} \label{lem::ell-star}
If $I \subseteq I'$, then $\ell^*_I \leq \ell^*_{I'}$.
\end{lemma}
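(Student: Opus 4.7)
The plan is to observe that $\ell^*_I = \sup_q \ell_{q,I}$ is unchanged if we restrict the supremum to distributions $q$ supported entirely in $I$ (since $\ell_{q,I}$ only depends on $q|_I$), and then to exhibit, for every such distribution on $I$, an equally-good distribution on $I'$. Given the containment $I \subseteq I'$, the obvious candidate is the same distribution, viewed as a distribution supported on $I' \supseteq I$.

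Formally, I would proceed as follows. Fix any probability distribution $q$ supported on $I$, and define $q'$ to be the same distribution, now considered as a probability distribution on $[0,1]$ (in particular supported on $I'$ since $I \subseteq I'$). Because all mass of $q'$ lies in $I \subseteq I'$, conditioning is trivial: $q'|_{I'} = q'$ and $q|_I = q$. This gives the matching centroids
\begin{align}
\widetilde{y}_{q', I'} = \mathbb{E}_{X \sim q'|_{I'}}[X] = \mathbb{E}_{X \sim q}[X] = \widetilde{y}_{q, I},
\end{align}
and hence
\begin{align}
\ell_{q', I'} = \mathbb{E}_{X \sim q'|_{I'}}[X \log(X/\widetilde{y}_{q',I'})] = \mathbb{E}_{X \sim q}[X \log(X/\widetilde{y}_{q,I})] = \ell_{q,I}.
\end{align}

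Taking the supremum over all $q$ supported on $I$ yields $\ell^*_{I'} \geq \sup_q \ell_{q,I} = \ell^*_I$, which is the claim. The only subtlety worth spelling out is the justification that restricting the supremum defining $\ell^*_I$ to distributions supported on $I$ loses no generality: since $\ell_{q,I}$ depends on $q$ only through $q|_I$, any $q$ with $\pi_{q,I} > 0$ can be replaced by $q|_I$ at the same value of $\ell_{q,I}$, while the degenerate case $\pi_{q,I} = 0$ gives $\ell_{q,I} = 0$ by the convention stated in Section~\ref{sec::locloss_notation} and is therefore never a supremum-achiever. There is essentially no obstacle here; the lemma is a monotonicity statement that follows immediately from the trivial embedding of distributions on $I$ into distributions on $I'$.
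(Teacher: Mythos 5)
Your proof is correct and takes exactly the paper's approach, which states the argument in one line: any distribution over $I$ is also a distribution over $I'$ (giving zero probability to $I' \setminus I$). One small factual slip worth noting: you claim the convention gives $\ell_{q,I} = 0$ when $\pi_{q,I} = 0$, but the paper's convention actually sets $q|_I$ to be uniform on $I$ in that case, so $\ell_{q,I} = \ell_{\unif_I}$, which is generally nonzero. This does not affect your argument, since $\unif_I$ is itself a distribution supported on $I$ and is therefore already covered by the restricted supremum over distributions on $I$.
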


\begin{proof}
This follows as any $q$ over $I$ is also a distribution over $I'$ (giving $0$ probability to $I'\backslash I$).
\end{proof}

Thus, if we can find some interval $J$ such that $I^{(n_N(x))} \subseteq J$ (but of the right size) and which had more convenient boundaries, we can use that instead. We define:
\begin{definition}
For compander $f$ at scale $N$ and $x \in [0,1]$, define the interval
\begin{align}
    J^{f,N,x} = f^{-1} \Big(\Big[f(x) - \frac{1}{N}, f(x) + \frac{1}{N}\Big] \cap [0,1] \Big)\,.
\end{align}
\end{definition}
As mentioned, we want this because it contains $I^{(n_N(x))}$:
\begin{lemma} \label{lem::bin-ub}
For any strictly monotonic $f$ and integer $N$,
\begin{align}
    I^{(n_N(x))} \subseteq J^{f,N,x}\,.
\end{align}
\end{lemma}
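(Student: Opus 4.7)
The plan is to verify this by a direct computation on the $f$-values of the endpoints and then pull the inclusion back through $f^{-1}$. Fix $x \in [0,1]$ and let $n = n_N(x)$, so by the definition of bins in \eqref{eq::bins} we have $f(x) \in \left(\frac{n-1}{N}, \frac{n}{N}\right]$. From this membership the two inequalities $\frac{n-1}{N} > f(x) - \frac{1}{N}$ and $\frac{n}{N} \le f(x) + \frac{1}{N}$ follow immediately, so as subsets of $[0,1]$ we have
\begin{equation}
\left(\frac{n-1}{N}, \frac{n}{N}\right] \;\subseteq\; \left[f(x) - \frac{1}{N},\, f(x) + \frac{1}{N}\right] \cap [0,1].
\end{equation}

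Next I would apply $f^{-1}$ to both sides of the inclusion. Because $f$ is assumed strictly monotonic, $f^{-1}$ is a well-defined single-valued function on $f([0,1])$ and it preserves inclusions, so taking preimages yields $I^{(n)} = f^{-1}\!\left(\left(\frac{n-1}{N}, \frac{n}{N}\right]\right) \subseteq J^{f,N,x}$, which is exactly the claim since $n = n_N(x)$.

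The only subtlety worth flagging is the boundary effect: if $f(x) - 1/N < 0$ or $f(x) + 1/N > 1$, the intersection with $[0,1]$ in the definition of $J^{f,N,x}$ truncates the interval, but this truncation only removes values outside the range of $f$, so it cannot remove any point of $I^{(n)}$ and the inclusion is unaffected. There is no real obstacle here; the lemma is essentially a restatement of the bin definition once one notes that $(\frac{n-1}{N}, \frac{n}{N}]$ is an interval of length $1/N$ around $f(x)$ that fits inside the symmetric interval of length $2/N$ centered at $f(x)$.
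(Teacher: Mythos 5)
Your proof is correct and follows essentially the same route as the paper's: note that $f$ maps the bin to an interval of width $1/N$ containing $f(x)$, observe this is contained in the width-$2/N$ window around $f(x)$, and pull back through the strictly monotonic $f$. The only superficial difference is that you keep the half-open interval from \eqref{eq::bins} while the paper uses the closed bin, which changes nothing.
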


\begin{proof}
Since $f$ is strictly monotonic, it has a well-defined inverse $f^{-1}$.

By definition the bin $I^{(n_N(x))}$, when passed through the compander $f$, returns $[\frac{n-1}{N}, \frac{n}{N}]$, i.e.
\begin{align}
    f(I^{(n_N(x))}) = \Big[\frac{n-1}{N}, \frac{n}{N}\Big] \, .
\end{align}
Note that this interval has width $1/N$ and includes $f(x)$ and (by definition) it is in $[0,1]$. Hence,
\begin{align}
    &f(I^{(n_N(x))}) \subseteq \Big[f(x) - \frac{1}{N}, f(x) + \frac{1}{N}\Big] \cap [0,1] 
    \\ &\implies f(I^{(n_N(x))}) \subseteq f(J^{f,N,x})
    \\ &\implies I^{(n_N(x))} \subseteq J^{f,N,x}
\end{align}
and we are done.
\end{proof}

Now we can consider the importance of $f \in \cF^\dagger$: by dominating a monomial $c x^\alpha$, we can `upper bound' the interval $J^{f,N,x}$ by the equivalent interval with the compander $f_*(x) = c x^\alpha$ (i.e. $J^{f,N,x} \subseteq J^{f_*,N,x}$), which is then much nicer to work with.\footnote{While $f_*(x)$ may not map to all of $[0,1]$, it's a valid compander (but sub-optimal as it only uses some of the $N$ labels).} This also guarantees that $f$ is strictly monotonic.

\begin{lemma} \label{lem::dominating-companders}
If $f_1, f_2 \in \cF$ are strictly monotonic increasing companders such that $f_2 - f_1$ is also monotonically increasing (not necessarily strictly) and $f_1(0) = 0$, then for any $x \in [0,1]$ and $N$,
\begin{align}
    J^{f_2, N, x} \subseteq J^{f_1, N, x}\,.
\end{align}
\end{lemma}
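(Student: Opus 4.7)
The plan is to rewrite $J^{f,N,x}$ explicitly as a set defined by an inequality on $f$, and then compare its endpoints across $f_1$ and $f_2$ using only the monotonicity of the difference $f_2 - f_1$. Since any compander is continuous and (here) strictly monotonic, one has $J^{f,N,x} = \{y \in [0,1] : |f(y) - f(x)| \leq 1/N\}$, which is a closed interval $[a_f, b_f]$ containing $x$. The task reduces to showing the two endpoint inequalities $a_{f_1} \leq a_{f_2}$ and $b_{f_2} \leq b_{f_1}$.

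For the upper endpoint, I would argue that by the closedness of $J^{f_2,N,x}$ and the definition of $b_{f_2}$, we have $f_2(b_{f_2}) - f_2(x) \leq 1/N$ regardless of whether $b_{f_2} < 1$ (equality holds by continuity) or $b_{f_2} = 1$ (the inequality is strict but still bounded by $1/N$). Since $b_{f_2} \geq x$ and $f_2 - f_1$ is monotonically increasing, evaluating at $b_{f_2}$ and $x$ gives $(f_2 - f_1)(b_{f_2}) \geq (f_2 - f_1)(x)$, which rearranges to $f_1(b_{f_2}) - f_1(x) \leq f_2(b_{f_2}) - f_2(x) \leq 1/N$. Combined with $f_1(b_{f_2}) \geq f_1(x)$ from the monotonicity of $f_1$, this shows $b_{f_2} \in J^{f_1, N, x}$, forcing $b_{f_2} \leq b_{f_1}$.

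The lower endpoint is completely symmetric: by definition $f_2(x) - f_2(a_{f_2}) \leq 1/N$, and applying the monotonically increasing property of $f_2 - f_1$ at $a_{f_2} \leq x$ gives $(f_2 - f_1)(a_{f_2}) \leq (f_2 - f_1)(x)$, hence $f_1(x) - f_1(a_{f_2}) \leq f_2(x) - f_2(a_{f_2}) \leq 1/N$. This places $a_{f_2}$ in $J^{f_1,N,x}$, yielding $a_{f_1} \leq a_{f_2}$.

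I do not anticipate any genuinely hard step; the only care needed is in the boundary cases $a_{f_2} = 0$ or $b_{f_2} = 1$, where the endpoint equation $f_2(b_{f_2}) - f_2(x) = 1/N$ weakens to an inequality -- but this is precisely the direction the argument needs. I also expect the hypothesis $f_1(0) = 0$ to play no role in the containment argument itself; its effect is to guarantee $f_2(x) \geq f_1(x)$ pointwise on $[0,1]$ (by evaluating $f_2 - f_1$ at $0$), which is natural when this lemma is applied in \Cref{prop::locloss-dominating} to dominate $f$ by its monomial $cx^\alpha$ from $\compset^\dagger$, but is not required for the proof of the containment.
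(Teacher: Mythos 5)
Your proof is correct, and it takes a genuinely different (and cleaner) route than the paper's. The paper passes to derivatives: it notes $f_1' \le f_2'$ almost everywhere, then identifies $a_i$ by the condition $\int_{a_i}^x f_i'(t)\,dt = 1/N$ and compares integrals, after first disposing of the boundary cases $a_1 = 0$ and $b_1 = 1$ in a separate preamble that invokes the pointwise inequality $f_2 \ge f_1$. Your argument instead compares function values directly: from $|f_2(y) - f_2(x)| \le 1/N$ and the monotonicity of $f_2 - f_1$, the increment of $f_1$ between $x$ and $y$ is squeezed by the increment of $f_2$, giving $|f_1(y) - f_1(x)| \le 1/N$. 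This handles the boundary cases automatically (as you note, at $a_{f_2} = 0$ or $b_{f_2} = 1$ the endpoint relation degrades from equality to an inequality, which is the direction needed), dispenses with derivatives entirely (the paper's integral identity $\int_{a_i}^x f_i' = 1/N$ implicitly assumes $f_i$ absolutely continuous, an assumption your argument does not need), and, as you correctly observe, exposes that $f_1(0) = 0$ is not actually used for the containment: that hypothesis is only there to furnish $f_2 \ge f_1$, which the paper's route leans on but yours does not. A further small simplification you could make: there is no need to go through endpoints at all; the same two-line increment comparison shows directly that any $y \in J^{f_2,N,x}$ lies in $J^{f_1,N,x}$, which also sidesteps the appeal to continuity needed to assert the $J^{f,N,x}$ are closed intervals.
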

The proof is given in \Cref{sec::proof_dominating-companders}.
Finally, we need a quick lemma concerning the guarantee that if $f \in \cF^\dagger$, the function $g(x) = \frac{1}{24} f'(x)^{-2} x^{-1}$ is integrable under any distribution $p$:

\begin{lemma}\label{lem::h_finite}
Let $f \in \cF^\dagger$, and let $g(x) = \frac{1}{24} f'(x)^{-2} x^{-1}$. Then for any probability distribution $p$ over $[0,1]$,
\begin{align}
    \int_{[0,1]} g \, dp < \infty \,.
\end{align}
\end{lemma}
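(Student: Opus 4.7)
\medskip

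\noindent\textbf{Proof proposal.} The plan is to use the defining property of $\compset^\dagger$ to get a pointwise upper bound on $g$ by a bounded function and then integrate trivially. First I would unpack $\comp \in \compset^\dagger$: by definition there exist constants $c > 0$ and $\alpha \in (0, 1/2]$ such that $\comp(x) - c x^\alpha$ is monotonically increasing on $[0,1]$. Since any monotonically increasing function on $[0,1]$ is differentiable almost everywhere (with nonnegative derivative where it exists), this yields
\begin{align}
\compder(x) \geq c\alpha \, x^{\alpha - 1} \qquad \text{for a.e. } x \in [0,1].
\end{align}

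Next I would translate this into an upper bound on $g$. Inverting and squaring gives $\compder(x)^{-2} \leq (c\alpha)^{-2} x^{2 - 2\alpha}$, and multiplying by $\tfrac{1}{24} x^{-1}$ yields
\begin{align}
g(x) \;=\; \tfrac{1}{24}\compder(x)^{-2} x^{-1} \;\leq\; \tfrac{1}{24(c\alpha)^2}\, x^{1 - 2\alpha}
\end{align}
for a.e. $x \in [0,1]$. (At the measure-zero set where $\compder$ does not exist one may simply define $g$ to take the conventional value $\infty$; this does not affect the integral against any $p$ that is absolutely continuous, which covers the usage in the proof of \Cref{prop::locloss-dominating}. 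For general $p$ one can instead take the right-derivative of the monotone function $\comp$, which exists everywhere and still satisfies the same lower bound.)

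Finally I would exploit the constraint $\alpha \leq 1/2$, which gives $1 - 2\alpha \geq 0$, so $x^{1-2\alpha} \leq 1$ on $[0,1]$. Hence $g(x) \leq \tfrac{1}{24 (c\alpha)^2}$ almost everywhere, a constant bound. Integrating against any probability measure $p$ on $[0,1]$ yields
\begin{align}
\int_{[0,1]} g \, dp \;\leq\; \frac{1}{24 (c\alpha)^2} \;<\; \infty,
\end{align}
which is the claim. There is no genuine obstacle here; the only subtlety is the handling of the measure-zero exceptional set for $\compder$, which is routine. The proof really is just a one-line consequence of the definition of $\compset^\dagger$ together with $\alpha \leq 1/2$; this is also the reason the definition of $\compset^\dagger$ restricts $\alpha$ to $(0, 1/2]$ rather than to $(0,1)$.
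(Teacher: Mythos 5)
Your proof is correct and follows exactly the same approach as the paper's: extract the $c\alpha x^{\alpha-1}$ lower bound on $\compder$ from the definition of $\compset^\dagger$, invert to bound $g(x) \leq \tfrac{1}{24}(c\alpha)^{-2}x^{1-2\alpha}$, and observe that $\alpha \leq 1/2$ makes this bounded by a constant. The extra remarks about the measure-zero set where $\compder$ fails to exist are a harmless elaboration of the paper's parenthetical appeal to Lebesgue's differentiation theorem.
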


\begin{proof}
If $f \in \cF^\dagger$, then there is some $c > 0$ and $\alpha \in (0,1/2]$ such that $f(x) - cx^\alpha$ is monotonically increasing. Thus (whenever it is well-defined, which is almost everywhere by Lebesgue's differentiation theorem for monotone functions) we have $f'(x) \geq c \alpha x^{\alpha-1}$ and since $\alpha \in (0,1/2]$, we have $1 - 2\alpha \geq 0$. Thus, for all $x \in [0,1]$,
\begin{align}
    0 \leq g(x) \leq \frac{1}{24} c^{-2} \alpha^{-2} x^{1 - 2\alpha} \leq \frac{1}{24} c^{-2} \alpha^{-2} 
\end{align}
which of course implies that $\int_{[0,1]} g \, p < \infty$.
\end{proof}

We can now prove \Cref{prop::locloss-dominating}, which will complete the proof of \Cref{thm::asymptotic-normalized-expdiv}.

\begin{proof}[Proof of \Cref{prop::locloss-dominating}]
As before, let $f_*(x) = c x^\alpha$; thus $f_*(0) = 0$ so we can apply \Cref{lem::dominating-companders}.
We begin, as outlined in \eqref{eq::locloss-upper-bound-2}, with:
\begin{align}
    g_N(x) &= N^2 \ell_{p,(n_N(x))}
    \\  &\leq N^2 \ell^*_{(n_N(x))} \label{eq::1st-ineq}
    \\  &\leq N^2 \ell^*_{J^{f,N,x}} \label{eq::2nd-ineq}
    \\  &\leq N^2 \ell^*_{J^{f_*,N,x}} \label{eq::3rd-ineq}
\end{align}
where \eqref{eq::1st-ineq} follows from the definition of $\ell^*_I$; \eqref{eq::2nd-ineq} follows from \Cref{lem::ell-star,lem::bin-ub}; and \eqref{eq::3rd-ineq} follows from \Cref{lem::dominating-companders}. However, since $f_*(x) = c x^\alpha$, we have a specific formula we can work with. We have $f'_*(x) = \alpha c x^{\alpha-1}$ and $f_*^{-1}(\newvar) = (\newvar/c)^{1/\alpha} = c^{-1/\alpha} \newvar^{1/\alpha}$. Note that this means we can re-write
\begin{align}
    h(x) = (2^{2/\alpha} + \alpha^2 2^{1/\alpha - 2}) f'_*(x)^{-2} x^{-1} +  c^{-1/\alpha} 2^{1/\alpha - 2}
\end{align}
which sheds some light on the structure of $h(x)$. Using \Cref{lem::h_finite} proves that $\int_{[0,1]} h \, dp$ is finite if $\comp \in \compset$, which occurs when $\alpha \leq 1/2$. 

Fix a value of $x$. Let $r_N(x)$ be the width of $J^{f_*, N, x}$. We consider two cases: (i) $cx ^{\alpha} < 1/N$; and (ii) $cx ^{\alpha} \geq 1/N$. 

\emph{Case (i):} This implies $f(J^{f_*, N, x}) \subseteq [0, 2/N]$ so 
\begin{align}
    x &< c^{-1/\alpha} N ^{-1/\alpha}
    \\ \implies r_N(x) &\leq c^{-1/\alpha} (N/2) ^{-1/\alpha}\,.
\end{align}
Then, as $J^{f_*, N, x}$ has lower boundary $0$ in this case, $\bar{y}_{(n_N(x))} = r_N(x)/2$. Thus, using \eqref{eq::locloss-upper-bound-2},
\begin{align}
    g_N(x) &\leq N^2 \frac{1}{2} r_N(x)^2 \bar{y}_{(n_N(x))}^{-1} 
    \\ &\leq  c^{-1/\alpha} 2^{-1/\alpha} N ^{-1/\alpha + 2}
  \,.
\end{align}
If $\alpha \leq 1/2$, then $N ^{-1/\alpha + 2}$ is maximized at $N = 1$, and thus 
\begin{align}
    g_N(x) \leq c^{-1/\alpha} 2^{-1/\alpha}\,.
\end{align}
If $\alpha > 1/2$, the value $N ^{-1/\alpha + 2}$ is maximized for the largest possible $N$ still satisfying Case (i). Since $cx^{\alpha} < 1/N$, this implies that $N < c^{-1} x^{-\alpha}$. Then,
\begin{align}
    g_N(x) &\leq  c^{-1/\alpha} ( c^{-1} x^{-\alpha})^{-1/\alpha + 2} 2^{-1/\alpha}
    \\ & =  c ^{-2 } x ^{1-2\alpha } 2^{-1/\alpha}
    \\& =  \alpha^2  (c\alpha x^{\alpha- 1})^{-2} x^{-1} 2^{-1/\alpha}
    \\ & =   \alpha^2  f_*'(x)^{-2} x^{-1} 2^{-1/\alpha}
    \,.
\end{align}
Thus, for Case (i) we have that for any $a \in (0, 1]$,
\begin{align}
    g_N(x) &\leq  \alpha^2  f_*'(x)^{-2} x^{-1} 2^{-1/\alpha} +  c^{-1/\alpha}2^{-1/\alpha}\,.
\end{align}

\emph{Case (ii):} When $cx ^{\alpha} \geq 1/N$, 
since $x \in I \implies \bar{y}_I \geq x/2$ (the midpoint of an interval cannot be less than half the largest element of the interval), we can upper-bound $g_N(x)$ (using \eqref{eq::3rd-ineq} and \Cref{lem::interval-loss-ub}) by 
\begin{align} \label{eq::4th-ineq}
    g_N(x) \leq N^2 \frac{1}{2} r_N(x)^2 \bar{y}^{-1}_{J^{f_*,N,x}} \leq N^2 r_N(x)^2 x^{-1}\,.
\end{align}
We then bound $r_N(x)$ using the Fundamental Theorem of Calculus: since $f$ is monotonically increasing, for any $a \leq b$,
\begin{align}
    \int_a^b f'(t) \, dt \leq f(b) - f(a)
\end{align}
(any discontinuities can only make $f$ increase faster). Additionally $r_N(x) = b_1 - a_1$ where $f(b_1) = \max(f(x) + 1/N,1)$ and $f(a_1) = f(x) - 1/N$ (since it's Case (ii) we know $f(x)-1/N \geq 0$ and since $f \in \cF^\dagger$ is strictly monotonic $a_1, b_1$ are unique). Thus, if we define $a_2, b_2$ such that
\begin{align}
    \int_{a_2}^x f'(t) \, dt = 1/N \text{ and } \int_x^{b_2} f'(t) \, dt = 1/N
\end{align}
(or $a_2 = 0$ or $b_2 = 1$ if they exceed the $[0,1]$ bounds) we have $r_N(x) \leq b_2 - a_2$. Then, because $f - f_*$ is monotonically increasing, we can define $a_3, b_3$ where
\begin{align}
    \int_{a_3}^x f'_*(t) \, dt = 1/N \text{ and } \int_x^{b_3} f'_*(t) \, dt = 1/N
\end{align}
and get that $r_N(x) \leq b_3 - a_3$ (also allowing $b_3 \geq 1$ if necessary). This yields:
\begin{align}
    &r_N(x) \leq c^{-1/\alpha} \int_{\max(0, cx^\alpha-1/N)}^{\min(1, cx^\alpha+1/N)} (f^{-1}_*)'(\newvar) \, d\newvar
    \\ &= c^{-1/\alpha}\int_{\max(0, cx^\alpha-1/N)}^{\min(1, cx^\alpha+1/N)} \alpha^{-1} \newvar^{1/\alpha-1} \, d\newvar
    \\ &\leq c^{-1/\alpha}\int_{\max(0,cx^\alpha-1/N)}^{\min(1, cx^\alpha+1/N)} \alpha^{-1} (cx^\alpha+1/N)^{1/\alpha-1} \, d\newvar
    \\ &\leq c^{-1/\alpha}\int_{cx^\alpha-1/N}^{cx^\alpha+1/N} \alpha^{-1} (cx^\alpha+1/N)^{1/\alpha-1} \, d\newvar
    \\ &= (2/N) c^{-1/\alpha} \alpha^{-1} (cx^\alpha+1/N)^{1/\alpha-1}
\end{align}
\begin{align}
    \implies r_N(x) &\leq (2/N) c^{-1/\alpha} \alpha^{-1} (cx^\alpha+1/N)^{1/\alpha-1} 
    \\ &\leq 2 N^{-1}  c^{-1/\alpha} \alpha^{-1} (2 cx^\alpha)^{1/\alpha-1} 
    \\ &= N^{-1} c^{-1/\alpha} \alpha^{-1}  2^{1/\alpha} (c x^\alpha)^{1/\alpha - 1}
    \\ &= 2^{1/\alpha} N^{-1}  \big(c^{-1} \alpha^{-1} x^{1-\alpha}\big)
    \\ &= 2^{1/\alpha} N^{-1} f'_*(x)^{-1}\,.
\end{align}
Thus, we can incorporate this into our bound \eqref{eq::4th-ineq}
\begin{align}
    g_N(x) &\leq N^2 r_N(x)^2 x^{-1} 
    \\& \leq 2^{2/\alpha} f'_*(x)^{-2} x^{-1} 
    \,.
\end{align}
So, $h(x)$, as the sum of the two cases,
upper bounds $g_N(x)$ no matter what. 


We can also note that if $\alpha \leq 1/2$, then $x^{1-2 \alpha} \leq 1$ and hence we can upper-bound $h$ by a constant. Thus $\int_{[0,1]} h \, dp = \bbE_{X \sim p} [h(X)] < \infty$ trivially, for any $p$, and we are done.
\end{proof}

This completes the proof of \eqref{eq::norm_loss} in \Cref{thm::asymptotic-normalized-expdiv}.

\subsection{Proof of \Cref{lem::asymptotic-interval-sizes}}

\label{sec::proof_asymptotic-interval-sizes}

\begin{proof}
Note that for fixed $\theta$ and $x$, $r^{(x,\theta,\varepsilon)}$ is nonnegative and monotonically decreases as $\varepsilon$ decreases. Thus $\lim_{\varepsilon \to 0} r^{(x,\theta,\varepsilon)} \geq 0$ is well defined.

We first assume that $\lim_{\varepsilon \to 0}  r^{(x,\theta,\varepsilon)} = 0$ for all $\theta \in [0,1]$. Let $s_\theta(r)$ be defined as
\begin{align}
    s_\theta(r) := \frac{ f(x + (1-\theta)r) - f(x - \theta r)}{r}\,.
\end{align}
We want to show that $\lim_{r \to 0} s_\theta(r) = f'(x)$ for all $\theta \in [0,1]$, and that this limit is uniform over $\theta \in [0,1]$. For $\theta \in \{0, 1\}$ we get respectively the right and left derivatives and since $f$ is differentiable at $x$ we are done for those cases. For $\theta \in (0,1)$ we write:
\begin{align}
    s_\theta(r) &= \frac{ f(x + (1-\theta)r) - f(x - \theta r)}{r} 
        \\ &= \frac{f(x + (1-\theta)r) - f(x)}{r}   \eqlinebreakshort+ \frac{f(x) - f(x - \theta r)}{r} 
        \\ &= (1-\theta)\frac{f(x + (1-\theta)r) - f(x)}{(1-\theta)r}   \eqlinebreakshort+ \theta\frac{f(x - \theta r)-f(x)}{-\theta r} \,.
\end{align}
This implies
\begin{align}
   \lim_{r \to 0} s_\theta(r) &= \lim_{r \to 0} \bigg( (1-\theta)\frac{f(x + (1-\theta)r) - f(x)}{(1-\theta)r}   \eqlinebreakshort+ \theta\frac{f(x - \theta r)-f(x)}{-\theta r}   \bigg) \\
        &= (1-\theta) f'(x) + \theta f'(x) = f'(x)\,.
\end{align}
Furthermore we note that the convergence is uniform over $\theta \in [0,1]$. This is because for any $\alpha > 0$, there is a $\delta > 0$ such that for $|r| \leq \delta$,
\begin{align}
    \bigg| \frac{f(x+r) - f(x)}{r} - f'(x) \bigg| \leq \alpha\,.
\end{align}
But $|r| \leq \delta \implies |-\theta r| \leq \delta$ and $|(1-\theta)r| \leq \delta$. Thus,
\begin{align}
    \eqstartnonumshort|s_\theta(r) - f'(x)| \eqbreakshort
    &= \bigg|(1-\theta)\frac{f(x + (1-\theta)r) - f(x)}{(1-\theta)r}   \eqlinebreakshort+ \theta\frac{f(x - \theta r)-f(x)}{-\theta r} - f'(x) \bigg| \\
    &\leq \bigg| (1-\theta)\frac{f(x + (1-\theta)r) - f(x)}{(1-\theta)r}  - (1-\theta) f'(x) \bigg| \eqlinebreakshort+ \bigg| \theta\frac{f(x - \theta r)-f(x)}{-\theta r} - \theta f'(x) \bigg| \\
    &\leq (1-\theta) \alpha + \theta \alpha \\
    &= \alpha\,.
\end{align}
Thus we have uniform convergence of $s_\theta(r)$ to $f'(x)$ over all $\theta \in [0,1]$ as $r \to 0$. Since $r^{(x,\theta,\varepsilon)} \to 0$ as $\varepsilon \to 0$,
\begin{align}
    f'(x) &= \lim_{\varepsilon \to 0} s_\theta(r^{(x,\theta,\varepsilon)})
    \\ &= \lim_{\varepsilon \to 0} \frac{f(x + (1-\theta)r^{(x,\theta,\varepsilon)}) - f(x - \theta r^{(x,\theta,\varepsilon)})}{r^{(x,\theta,\varepsilon)}}
    \\ &= \lim_{\varepsilon \to 0} \frac{\varepsilon}{ r^{(x,\theta,\varepsilon)}}
    \\ &\implies \lim_{\varepsilon \to 0} \varepsilon^{-1}  \, r^{(x,\theta,\varepsilon)} = f'(x)^{-1}
\end{align}
as we wanted. The third equality comes from the definition of $r^{(x,\theta,\varepsilon)}$ \eqref{eq::defn-of-r} and the fact that $f'(x)$ is well-defined.

Now we need to consider what happens if $\lim_{\varepsilon \to 0} r^{(x,\theta,\varepsilon)} \neq 0$ for some values of $\theta$; this can either be because the limit is positive or because the limit does not exist, but in either case it is clearly only possible if $f$ is not strictly monotonic at $x$ and hence only if $f'(x) = 0$. Additionally, it can only happen if $f$ is flat at $x$, i.e. there is either some $a < x$ or some $a > x$ such that $f(a) = f(x)$ (or both). In this case, for any $0 < \theta < 1$, $I^{(x,\theta,\varepsilon)}$ contains the interval between $a$ and $x$ and hence $r^{(x, \theta, \varepsilon)} \geq |x-a|$. For $\theta = 0$ and $\theta = 1$, either $r^{(x, \theta, \varepsilon)}$ is bounded away from $0$, or it approaches $0$; in the first case, $\varepsilon^{-1} r^{(x,\theta,\varepsilon)} \to \infty$ by default, while in the second the proof for the $\lim_{\varepsilon \to 0} r^{(x,\theta,\varepsilon)} = 0$ case holds.

Thus, for all values of $\theta \in [0,1]$, we know that $\lim_{\varepsilon \to 0} \varepsilon^{-1} r^{(x,\theta,\varepsilon)} = \infty$ as we need; and this is uniform over $\theta$ because for any $\theta \in (0,1)$ we have $\varepsilon^{-1} r^{(x,\theta,\varepsilon)} \geq \varepsilon^{-1} |x - a|$, meaning that for any large $\alpha > 0$, we can choose $\varepsilon^*$ small enough so that for all $\varepsilon < \varepsilon^*$ all of the following hold: (i) $\varepsilon^{-1} |x - a| > \alpha$; (ii) $\varepsilon^{-1} r^{(x,0,\varepsilon)} > \alpha$; and (iii) $\varepsilon^{-1} r^{(x,0,\varepsilon)} > \alpha$. Thus, we have uniform convergence and we are done.
\end{proof}

\subsection{Proof of \Cref{prop::uniform-at-small-scales}}

\label{sec::proof_uniform-at-small-scales}

\begin{proof}
We can assume that $\varepsilon \leq 1/2$ (if not, just use the value of $\delta$ corresponding to $\varepsilon = 1/2$). Let $\delta > 0$ be such that for all $x'$ such that $|x' - x| \leq \delta$,
\begin{align}
\Big| \frac{F_p(x') - F_p(x)}{x' - x} - p(x) \Big| \leq p(x) \varepsilon/8\,.
\end{align}
Since the derivative $p(x) = F'_p(x)$ is well-defined, this $\delta$ must exist. Then for $x' \in I$,
\begin{align}
&\big| (F_p(x') - F_p(x)) - (x' - x) p(x) \big| 
\eqlinebreakshort\leq  |x' - x| p(x) \varepsilon/8 \leq  r_I p(x) \varepsilon/8\,.
\end{align}
Now let $x''$ also be such that $|x'' - x| \leq \delta$. Then
\begin{align}
\big| (F_p(x'') &- F_p(x')) - (x'' - x') p(x) \big| 
\\ &= \big| ((F_p(x'') - F_p(x)) - (x'' - x)p(x) ) \eqlinebreakshort - ~((F_p(x') - F_p(x)) - (x' - x)p(x)) \big| 
\\ & \leq r_I p(x)\varepsilon/4 \label{eq::unif-upper-numerator}\,.
\end{align}
Let $x'$ be the lower boundary of $I$, so $x' + r_I$ is the upper boundary of $I$ (for which the above of course applies). Then we get
\begin{align}
\big| (F_p(x'  + r_I) - F_p(x')) - r_I p(x) \big| &\leq r_I p(x) \varepsilon/4
\\ \implies \Big| \frac{F_p(x'  + r_I) - F_p(x')}{r_I p(x)} -  1 \Big| &\leq \varepsilon/4\,. \label{eq::unif-upper-denominator}
\end{align}
Then we know that for any $x'' \in I$,
\begin{align}
F_{p|_I}(x'') = \frac{F_p(x'') - F_p(x')}{F_p(x'  + r_I) - F_p(x')}\,.
\end{align}
By \eqref{eq::unif-upper-numerator} we know that
\begin{align}
(x'' - x')p(x) - &r_I p(x) \varepsilon/4 \leq F_p(x'') - F_p(x') \nonumber \\ &\leq (x'' - x')p(x) + r_I p(x) \varepsilon/4
\\ \implies r_I p(x) ((x'' - x')&/r_I - \varepsilon/4) \leq F_p(x'') - F_p(x') \nonumber \\ &\leq r_I p(x) ((x'' - x')/r_I + \varepsilon/4) 
\end{align}
and by \eqref{eq::unif-upper-denominator} we know that 
\begin{align}
r_I p(x) - r_I p(x) \varepsilon/4 &\leq F_p(x + r_I) - F_p(x') \nonumber \\ &\leq r_I p(x) + r_I p(x) \varepsilon/4
\\ \implies r_I p(x) (1 - \varepsilon/4) &\leq F_p(x + r_I) - F_p(x') \nonumber \\ &\leq r_I p(x) (1 + \varepsilon/4)\,.
\end{align}
Noting that $(x'' - x')/r_I = F_{\unif_I}(x'') \in [0,1]$ is the CDF of the uniform distribution on $I$, we get that
\begin{align}
F_{p|_I}(x'') &\geq \frac{r_I p(x) ((x'' - x')/r_I - \varepsilon/4) }{r_I p(x) (1 + \varepsilon/4)} \\ &= \frac{(x'' - x')/r_I - \varepsilon/4}{1 + \varepsilon/4}
\\&\geq F_{\unif_I}(x'') - \varepsilon
\end{align}
and similarly that
\begin{align}
F_{p|_I}(x'') &\leq \frac{r_I p(x) ((x'' + x')/r_I - \varepsilon/4) }{r_I p(x) (1 - \varepsilon/4)} \\ &= \frac{(x'' - x')/r_I + \varepsilon/4}{1 - \varepsilon/4}
\\&\leq F_{\unif_I}(x'') + \varepsilon
\end{align}
and hence for such a $\delta > 0$ we have for all $I$ containing $x$ and such that $r_I \leq \delta$ we have
\begin{align}
|F_{p|_I}(x'') - F_{\unif_I}(x'')| \leq \varepsilon
\end{align}
for all $x'' \in I$. For $x'' \not \in I = [x', x' + r_I]$ we then observe that
\begin{align}
F_{p|_I}(x'') = F_{\unif_I}(x'') = \begin{cases} 0 &\text{if } x'' < x' \\ 1 &\text{if } x'' > x' + r_I \end{cases}
\end{align} 
thus finishing the proof.
\end{proof}

\subsection{Proof of \Cref{prop::divergence-of-close-distributions}}

\label{sec::proof_divergence-of-close-distributions}

\begin{proof}

Let $\xi = \widetilde{y}_{p,I} - \bar{y}_I$. 
Then:
\begin{align}
|\xi| &= \bigg| \int_{I} \big(\bbP_{X \sim p|_I}[X \geq x] - \bbP_{X \sim \unif_I}[X \geq x]\big) \, dx \bigg|
	\\ & \leq \int_{I} \big| \bbP_{X \sim p|_I}[X \geq x] - \bbP_{X \sim \unif_I}[X \geq x] \big| \, dx
	\\ & \leq r_I \varepsilon \,.
\end{align}
For any distribution $q$ and any fixed value $\newvar$, define the shift operator $\shift{q}{\newvar}$ to denote the distribution of $X - \newvar$ where $X \sim q$ (i.e. just shift it by $\newvar$). Note that $\shift{p|_I}{\widetilde{y}_{p,I}}$ and $\shift{\unif_I} { \bar{y}_I}$ are both constructed to have expectation $0$, and in particular $\shift{\unif_I} { \bar{y}_I}$ is the uniform distribution over an interval of width $r_I$ centered at $0$. Additionally,
\begin{align}
d_{KS}\eqstartshort(\shift{p|_I}{ \widetilde{y}_{p,I}}, \shift{\unif_I}{ \bar{y}_I }) \eqbreakshort
&\leq  d_{KS}(\shift{p|_I} { \widetilde{y}_{p,I}}, \shift{\unif_I}{\widetilde{y}_{p,I}}) 
\\ &~~~~+ d_{KS}(\shift{\unif_I}{ \widetilde{y}_{p,I}}, \shift{\unif_I}{ \bar{y}_I}) \\ &\leq 2\varepsilon
\end{align}
since $d_{KS}(\cdot, \cdot)$ is a metric, $d_{KS}(q_1, q_2) = d_{KS}(\shift{q_1}{\newvar}, \shift{q_2}{\newvar})$ for any $q_1, q_2$ and $\newvar$, and 
\begin{align}
d_{KS}(\shift{\unif_I}{z_1}, \shift{\unif_I} {z_2}) \leq |z_2 - z_1|/r_I\,.
\end{align}

For convenience, let $q_1 = \shift{p|_I} { \widetilde{y}_{p,I}}$ and $q_2 = \shift{\unif_I}{ \bar{y}_I}$, and let $\newVar_1 \sim q_1$ and $\newVar_2 \sim q_2$. We know the following: $\bbE[\newVar_1] = \bbE[\newVar_2] = 0$; $d_{KS}(q_1, q_2) \leq 2\varepsilon$; and $q_1, q_2$ have support on $[-r_I, r_I]$.

Let $\eta_i = \bbE[\newVar_1^i] - \bbE[\newVar_2^i]$. Then we can compute the following:
\begin{align}
|\eta_i| &= \bigg| \int_0^{r_I^i} (\bbP[\newVar_1^i \geq x] -  \bbP[\newVar_2^i \geq x]) \, dx  
\eqlinebreakshort- \int_0^{r_I^i} (\bbP[\newVar_1^i \leq -x] -  \bbP[\newVar_2^i \leq -x]) \, dx \bigg|\,.
\end{align}
If $i$ is odd, then we do a $u$-substitution with $u = x^{1/i}$ and get
\begin{align}
|\eta_i| &= \bigg| \int_0^{r_I^i} (\bbP[\newVar_1 \geq x^{1/i}] -  \bbP[\newVar_2 \geq x^{1/i}]) \, dx  \eqlinebreakshort- \int_{-r_I^i}^{0} (\bbP[\newVar_1 \leq - x^{1/i}] -  \bbP[\newVar_2^i \leq -x^{1/i}]) \, dx \bigg|
\\ &= i \, \bigg| \int_0^{r_I} u^{i-1} (\bbP[\newVar_1 \geq u] -  \bbP[\newVar_2 \geq u]) \, du  
\eqlinebreakshort- \int_{-r_I}^0 u^{i-1} (\bbP[\newVar_1 \leq u] -  \bbP[\newVar_2 \leq u]) \, du \bigg|
\\ & \leq 2 \int_0^{r_I} i u^{i-1} 2\varepsilon du = 4\varepsilon r_I^i\,.
\end{align}
Similarly if $i$ is even we get
\begin{align}
|\eta_i| &= \bigg| \int_0^{r_I^i} (\bbP[\newVar_1 \geq x^{1/i}] -  \bbP[\newVar_2 \geq x^{1/i}]) \, dx 
\eqlinebreakshort + \int_{-r_I^i}^{0} (\bbP[\newVar_1 \leq - x^{1/i}] -  \bbP[\newVar_2^i \leq -x^{1/i}]) \, dx \bigg|
\\ &= i \, \bigg| \int_0^{r_I} u^{i-1} (\bbP[\newVar_1 \geq u] -  \bbP[\newVar_2 \geq u]) \, du  
\eqlinebreakshort + \int_{-r_I}^0 u^{i-1} (\bbP[\newVar_1 \leq u] -  \bbP[\newVar_2 \leq u]) \, du \bigg|
\\ & \leq 2 \int_0^{r_I} i u^{i-1} 2\varepsilon du = 4\varepsilon r_I^i
\end{align}
and we can conclude that $|\eta_i| \leq 4\varepsilon r_I^i$ in general.

Then we can take the respective Taylor expansions: let $X_1 \sim p|_I$ and $X_2 \sim \unif_I$ (and $\newVar_1 \sim q_1, \newVar_2 \sim q_2$ as above). We get
\begin{align}
\ell_{p,I} &= \bbE[X_1 \log(X_1/\widetilde{y}_{p,I})] 
\\ &= \widetilde{y}_{p,I}\bbE[(\newVar_1/\widetilde{y}_{p,I} + 1) \log(\newVar_1/\widetilde{y}_{p,I} + 1)]
\\ & = \widetilde{y}_{p,I}  \bbE\left[\newVar_1/\widetilde{y}_{p,I} + \frac{ (\newVar_1/\widetilde{y}_{p,I})^2}{2} - \frac{(\newVar_1/\widetilde{y}_{p,I})^3}{6(1+\eta)^2} \right]\label{eq::taylor_lagrange_ell_pI}
\end{align}
where $\eta$ is a number between $0$ and $\newVar_1/\widetilde{y}_{p,I}$ (we get this using Lagrange's formula for the error).

Since $\newVar_1 + \widetilde{y}_{p,I} \in I$, we know that 
\begin{align}
    \widetilde{y}_{p,I} - r_I \leq \newvar + \widetilde{y}_{p,I} \leq \widetilde{y}_{p,I} + r_I\,.
\end{align}
Since $r_I < x/ 4$ and $\widetilde{y}_{p,I} \geq x - r_I$ (as $x, \widetilde{y}_{p,I}$ share the width-$r_I$ interval $I$), we get that $\widetilde{y}_{p,I} > 3 r_I$, and therefore
\begin{align}
    \frac{2}{3} \widetilde{y}_{p,I} &< \newVar_1 + \widetilde{y}_{p,I} < \frac{4}{3} \widetilde{y}_{p,I}
    \\ \implies \frac{-1}{3} &< \newVar_1/\widetilde{y}_{p,I} < \frac{1}{3}\,.
\end{align}
This gives that $|\eta| < 1/3$. 
Using this and the fact that $\bbE[\newVar_1] = 0$ by construction, we can write \eqref{eq::taylor_lagrange_ell_pI} as
\begin{align}
    \ell_{p,I} &\leq \frac{1}{2} \bbE[\newVar_1^2]/\widetilde{y}_{p,I} + \frac{|\bbE[\newVar_1^3]|}{8/3 } (\widetilde{y}_{p,I})^{-2}
    \\ &\leq \frac{1}{2} \bbE[\newVar_1^2]/\widetilde{y}_{p,I} + \frac{r_I^3}{8/3 (x - r_I)^{2} } \,.
\end{align}
Since $r_I < x/4$, we know that $x - r_I > (3/4)x$, and hence
\begin{align}
    \ell_{p,I} &\leq \frac{1}{2} \bbE[\newVar_1^2]/\widetilde{y}_{p,I} + (2/3) r_I^3 x^{-2}\,.
\end{align}

 Hence we get
\begin{align}
    \ell_{p,I} &= \frac{1}{2} \bbE[\newVar_1^2]/\widetilde{y}_{p,I} + O(r_I^3 x^{-2})\,.
\end{align}
Because $x - r_I \leq \bar{y}_I$ as well (and $\newVar_2$ has support on $[-r_I, r_I]$) we can repeat the above arguments to conclude similarly that
\begin{align} \label{eq::uniform-taylor}
    \ell_{\unif_I} = \frac{1}{2}\bbE[\newVar_2^2]/\bar{y}_I + O(r_I^3 x^{-2})\,.
\end{align}
Hence their difference is
\begin{align}
    |\ell_{p,I} - \ell_{\unif_I}| \leq \\ \frac{1}{2} \big| \bbE[\newVar_1^2]/\widetilde{y}_{p,I} - \bbE[\newVar_2^2]/\bar{y}_I \big| + O(r_I^3 x^{-2}) \label{eq::what-we-want}\,.
\end{align}
Taking the main term, we split it into three parts:
\begin{align}
    \big| \bbE &[\newVar_1^2]/\widetilde{y}_{p,I} - \bbE[\newVar_2^2]/\bar{y}_I \big| 
    \\&\leq \big| \bbE[\newVar_1^2]/\widetilde{y}_{p,I} - \bbE[\newVar_1^2]/x \big| \label{eq::first-part-bd}
    \\ &~~+\big| \bbE[\newVar_2^2]/\bar{y}_I - \bbE[\newVar_2^2]/x \big| \label{eq::second-part-bd}
    \\ &~~+\big| \bbE[\newVar_1^2]/x - \bbE[\newVar_2^2]/x \big| \label{eq::third-part-bd}\,.
\end{align}
The first part \eqref{eq::first-part-bd} can be bounded by
\begin{align}
    \big| \bbE[\newVar_1^2]/\widetilde{y}_{p,I} - \bbE[\newVar_1^2]/x \big| &\leq |\bbE[\newVar_1^2]| \, |1/\widetilde{y}_{p,I} - 1/x|
    \\ &\leq r_I^2 \frac{|x - \widetilde{y}_{p,I}|}{\widetilde{y}_{p,I} x}
    \\ &\leq (4/3) r_I^3 x^{-2}
    \\ &= O(r_I^3 x^{-2})\,.
\end{align}
An analogous argument bounds \eqref{eq::second-part-bd}, giving
\begin{align}
    \big| \bbE[\newVar_2^2]/\bar{y}_I - \bbE[\newVar_2^2]/x \big| = O(r_I^3 x^{-2})\,.
\end{align}
Finally, \eqref{eq::third-part-bd} follows from
\begin{align}
    \big| \bbE[\newVar_1^2]/x - \bbE[\newVar_2^2]/x \big| = |\eta_2| x^{-1} \leq 4 \varepsilon r_I^2 x^{-1}\,.
\end{align}
Thus, plugging it all into \eqref{eq::what-we-want} we get
\begin{align}
    |\ell_{p,I} - \ell_{\unif_I}| \leq 2 \varepsilon r_I^2 x^{-1} + O(r_I^3 x^{-2}) \,.
\end{align}

\end{proof}

\subsection{Proof of \Cref{prop::loss-under-uniform-dist}}

\label{sec::proof_loss-under-uniform-dist}

\begin{proof}
Let $i^*$ be such that $r_{I_{i^*}} \leq x/4$ for all $i \geq i^*$ (since $\lim_{i \to \infty} r_{I_i} = 0$ this exists) and WLOG consider the sequence of $i \geq i^*$. The result then follows from the Taylor series of $\ell_{\unif_{I_i}}$, as shown by \eqref{eq::uniform-taylor} (see proof of \Cref{prop::divergence-of-close-distributions} in \Cref{sec::proof_divergence-of-close-distributions}). Keeping the definition from the proof of \Cref{prop::divergence-of-close-distributions}, we let $\newVar_2 \sim \shift{\unif_{I_i}}{\bar{y}_{I_i}}$, i.e. uniform over a width-$r_{I_i}$ interval centered at $0$. Thus we have $\bbE[\newVar_2^2] = \frac{1}{12} r_{I_i}^2$ and hence \eqref{eq::uniform-taylor} yields
\begin{align}
    \ell_{\unif_{I_i}} &= \frac{1}{2}\bbE[\newVar_2^2]/\bar{y}_{I_i} + O(r_I^3 x^{-2})
    \\ &= \frac{1}{24} r_{I_i}^2 \bar{y}_{I_i}^{-1} + O(r_{I_i}^3 x^{-2}) \label{eq::uniform-loss}\,.
\end{align}
But $\bar{y}_{I_i}$ and $x$ share the interval $I_i$ and hence as $r_{I_i} \to 0$,
\begin{align}
    \bar{y}_{I_i} &= x + O(r_{I_i}) 
    \\ &= x (1 + O(r_{I_i} x^{-1}))
    \\ \implies \bar{y}_{I_i}^{-1} &= x^{-1} (1 + O(r_{I_i} x^{-1}))
\end{align}
since when $r_{I_i}$ is very small, $O(r_{I_i} x^{-1})$ is very small so $(1 + O(r_{I_i} x^{-1})^{-1} = 1 + O(r_{I_i} x^{-1})$ (the inverse of a value close to $1$ is also close to $1$). Thus, we can replace $\bar{y}_{I_i}^{-1}$ in \eqref{eq::uniform-loss} to get
\begin{align}
    \ell_{\unif_I} = \frac{1}{24} r_{I_i}^2 x^{-1} + O(r_{I_i}^3 x^{-2})
\end{align}
as we wanted.
\end{proof}

\subsection{Single-Interval Loss Function Properties and Proof of \Cref{lem::increase-interval-loss-positive} } \label{sec::additional_lemmas}

We prove \Cref{lem::increase-interval-loss-positive} here; to do so, we show a few lemmas concerning the single-interval loss function $\ell_{p,I}$. First, we show an alternative formula for $\ell_{p,I}$ which sheds some light on it:

\begin{lemma} \label{lem::single-interval-loss-form-2}
For any $p, I$,
\begin{align}
    \ell_{p,I} = \bbE_{X \sim p|_I}[X \log X] - \widetilde{y}_{p,I} \log(\widetilde{y}_{p,I})\,.
\end{align}
\end{lemma}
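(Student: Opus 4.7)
The plan is to prove this by a direct, one-line calculation using the definition of $\ell_{p,I}$, the logarithm rule $\log(X/\widetilde y_{p,I}) = \log X - \log \widetilde y_{p,I}$, and linearity of expectation. No machinery is needed beyond the definition of the centroid $\widetilde y_{p,I} = \bbE_{X \sim p|_I}[X]$.

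Concretely, starting from
\begin{align}
\ell_{p,I} = \bbE_{X \sim p|_I}[X \log(X/\widetilde y_{p,I})],
\end{align}
I would split the logarithm, pull the constant $\log \widetilde y_{p,I}$ out of the expectation (since $\widetilde y_{p,I}$ depends only on $p$ and $I$, not on $X$), and then recognize the remaining factor $\bbE_{X\sim p|_I}[X]$ as $\widetilde y_{p,I}$ itself. This yields
\begin{align}
\ell_{p,I} &= \bbE_{X \sim p|_I}[X \log X] - \log(\widetilde y_{p,I})\,\bbE_{X \sim p|_I}[X] \\
&= \bbE_{X \sim p|_I}[X \log X] - \widetilde y_{p,I} \log(\widetilde y_{p,I}),
\end{align}
which is the claim.

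There is essentially no obstacle here: the only minor care needed is the edge case where $\pi_{p,I} = 0$ so that $p|_I$ is set to uniform on $I$ by convention; in that case both sides are defined from the same distribution $\unif_I$ and the same manipulation applies. (Strictly speaking one should also note $X \log X$ is bounded on $I \subseteq [0,1]$ since $x \log x$ extends continuously to $0$ at $x=0$, so the expectation is finite.) The identity is mainly useful as a rewriting that will make later computations, for example comparing $\ell_{p,I}$ across different $p$ on the same interval $I$ or Taylor-expanding around the centroid, more transparent.
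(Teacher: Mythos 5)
Your proof is correct and follows essentially the same direct computation as the paper: split the logarithm, use linearity of expectation, and substitute $\widetilde{y}_{p,I} = \bbE_{X \sim p|_I}[X]$. The added remarks about the $\pi_{p,I}=0$ convention and boundedness of $x\log x$ on $[0,1]$ are fine but not needed for this step.
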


\begin{proof}
We compute $\ell_{p,I}$ as follows:
\begin{align}
    \ell_{p,I} &= \bbE_{X \sim p}[X \log (X/\widetilde{y}_{p,I}) \, | \, X \in I]
    \\ &= \bbE_{X \sim p|_I}[X \log (X/\widetilde{y}_{p,I})]
    \\ &= \bbE_{X \sim p|_I}[X \log(X) - X \log(\widetilde{y}_{p,I})]
    \\ &= \bbE_{X \sim p|_I}[X \log X] - \bbE_{X \sim p|_I}[X] \log(\widetilde{y}_{p,I})
    \\ &= \bbE_{X \sim p|_I}[X \log X] - \widetilde{y}_{p,I} \log(\widetilde{y}_{p,I})
\end{align}
since $\widetilde{y}_{p,I} = \bbE_{X \sim p|_I}[X]$.
\end{proof}

We now want to show that it really does represent something resembling a loss function: first, that it is nonnegative, and second that it achieves equality if and only if $X \sim p$ on $I$ is known for sure (so the decoded value can be guaranteed to equal $X$).

\begin{lemma}
For any $p$ and $I \subseteq [0,1]$ (even $p$ is not continuous),
\begin{align}
    \ell_{p,I} \geq 0
\end{align}
with equality if and only if there is some $\newvar \in I$ s.t.
\begin{align}
    \bbP_{X \sim p}[X = \newvar \, | \, X \in I] = 1 \,.
\end{align}
\end{lemma}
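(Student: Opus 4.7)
The plan is to reduce the claim to Jensen's inequality applied to the strictly convex function $\phi(x) = x \log x$ (with the standard convention $\phi(0) = 0$, which is continuous at $0$ from the right). The immediately preceding \Cref{lem::single-interval-loss-form-2} already gives the representation
\begin{align}
    \ell_{p,I} = \bbE_{X \sim p|_I}[\phi(X)] - \phi(\widetilde{y}_{p,I}),
\end{align}
and since $\widetilde{y}_{p,I} = \bbE_{X \sim p|_I}[X]$, the right-hand side is exactly the Jensen gap of $\phi$ evaluated at the distribution $p|_I$.

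First, I would verify that $\phi$ is convex on $[0,1]$: on $(0,1]$ we have $\phi''(x) = 1/x > 0$, so $\phi$ is strictly convex there, and since $\phi$ is continuous at $0$, convexity extends to the closed interval $[0,1]$. Jensen's inequality then immediately yields $\ell_{p,I} \geq 0$. Note that this requires no continuity assumption on $p$, since Jensen applies to any probability measure with finite mean, and $X \in [0,1]$ bounded means $\bbE[\phi(X)]$ is automatically finite.

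For the equality case, I would invoke strict convexity of $\phi$ on $(0,1]$ together with the fact that $\phi$ has a well-defined unique supporting line at every point of $(0,1]$ (and also at $0$, by convention/continuity). The standard equality condition for Jensen's inequality applied to a strictly convex function then forces $X$ to be almost surely constant under $p|_I$, i.e.\ there is some $\newvar \in I$ with $\bbP_{X \sim p}[X = \newvar \mid X \in I] = 1$. Conversely, if $X$ equals $\newvar$ a.s.\ under $p|_I$, then $\widetilde{y}_{p,I} = \newvar$ and $\ell_{p,I} = \phi(\newvar) - \phi(\newvar) = 0$.

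The one edge case to address carefully is the default assignment $p|_I := \unif_I$ when $\bbP_{X \sim p}[X \in I] = 0$: in that case $p|_I$ places positive mass on an interval of positive length, so it is not concentrated at a point and the Jensen gap is strictly positive, consistent with the stated equality characterization. I expect the main (minor) obstacle to be only the bookkeeping at $x=0$, where $\phi(0)=0$ by convention; this is handled by noting $\phi$ is continuous on $[0,1]$ and convex combinations involving $0$ cause no issue in Jensen's inequality.
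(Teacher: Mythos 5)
Your proof is correct and follows essentially the same route as the paper's: both rewrite $\ell_{p,I}$ via the preceding lemma as the Jensen gap of the strictly convex map $x \mapsto x\log x$ and read off nonnegativity plus the equality condition from Jensen's inequality. The extra care you take at $x=0$ and for the default $p|_I = \unif_I$ convention is sensible bookkeeping but does not change the argument.
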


\begin{proof}
Using \Cref{lem::single-interval-loss-form-2}, if we define the function $h(t) = t \log t$ then since $h$ is strictly convex, by Jensen's Inequality (where all expectations are over $X \sim p|_I$)
\begin{align}
    \ell_{p,I} = \bbE[h(X)] - h(\bbE[X]) \geq 0
\end{align}
with equality if and only if $X \sim p|_I$ is fixed with probability $1$.
\end{proof}

This yields the following corollary:
\begin{corollary} \label{cor::distortion-nonzero}
If $p \in \cP$ and $I$ has nonzero width,
\begin{align}
    \ell_{p,I} > 0 \,.
\end{align}
\end{corollary}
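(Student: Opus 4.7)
The plan is to invoke the preceding lemma, which tells us that $\ell_{p,I} \geq 0$ with equality if and only if there exists some point $\newvar \in I$ such that $\bbP_{X \sim p}[X = \newvar \mid X \in I] = 1$; so it suffices to rule out this degenerate case whenever $p \in \cP$ and $I$ has nonzero width.

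I would split into two cases based on $\pi_{p,I} = \bbP_{X \sim p}[X \in I]$. In the first case, $\pi_{p,I} > 0$, the conditional distribution $p|_I$ is a genuine restriction of $p$ and therefore inherits absolute continuity with respect to Lebesgue measure. Absolute continuity forces every singleton $\{\newvar\}$ to have $p|_I$-measure zero, so no point mass at any $\newvar \in I$ is possible and the equality condition of the preceding lemma fails.

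In the second case, $\pi_{p,I} = 0$, the convention introduced earlier in the paper sets $p|_I = \unif_I$. Since $I$ has nonzero width, $\unif_I$ is not concentrated at a single point — any singleton still has measure zero — so again the equality condition cannot hold.

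In either case the equality clause of the preceding lemma is ruled out, and combining with the nonnegativity $\ell_{p,I} \geq 0$ gives strict inequality $\ell_{p,I} > 0$. I do not anticipate a real obstacle here; the only subtlety is remembering to handle the $\pi_{p,I} = 0$ case via the uniform-on-$I$ convention rather than leaving $p|_I$ undefined.
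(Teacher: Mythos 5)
Your proof is correct and takes essentially the same approach as the paper: both invoke the preceding lemma's equality condition and observe that continuity of $p$ (or, in the zero-mass case, the uniform-on-$I$ convention) rules out a point mass on any nonzero-width interval. Your version simply spells out the two cases slightly more explicitly than the paper's terse phrasing.
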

This follows because $p \in \cP$ is continuous and so cannot have all its mass on a particular value in any nonzero-width $I$. If $I$ has zero probability mass under $p$, then $\ell_{p,I}$ defaults to the interval loss under a uniform distribution.

Finally, we can prove \Cref{lem::increase-interval-loss-positive}. Recall that it shows that if $I$ has nonzero probability mass under $p$, one cannot get the interval loss to approach $0$ by choosing $J \supseteq I$, i.e. if $p \in \cP$ and $I$ is such that $\bbP_{X \sim p}[X \in I] > 0$, then there is some $\alpha > 0$ (which can depend on $I$) such that
\begin{align}
    \ell_{p,J} \geq \alpha \text{ for all } J \supseteq I \,.
\end{align}

\begin{proof}[Proof of \Cref{lem::increase-interval-loss-positive}]
We can re-write $\ell_{p,J}$ as
\begin{align}
    \ell_{p,J} &= \bbE_{X \sim p}[X \log(X/\widetilde{y}_{p,J}) \, | \, X \in J]
    \\ &= \int_J \frac{p(x)}{\int_J dp} x \log(x/\widetilde{y}_{p,J}) \, dx
\end{align}
where $\int_J \, dp$ is just the integral representation of $\bbP_{X \sim p}[X \in J]$.

Therefore, since $p \in \cP$, $\ell_{p,J}$ is continuous at $J$ with respect to the boundaries of $J$ (the inverse probability mass $(\int_J dp)^{-1}$ is continuous since $\int_J dp \geq \int_I dp > 0$).

Thus, we can consider $\ell_{p,J}$ as a continuous function over the boundaries of $J$ on the domain where $I \subseteq J \subseteq [0,1]$; this domain can be represented as a closed subset of $[0,1]^2$ and hence is compact. Thus, by the Weierstrass extreme value theorem, $\ell_{J,p}$ achieves its minimum $\alpha$ on this domain, and by \Cref{cor::distortion-nonzero} it must be positive.

Hence, we have shown that there is an $\alpha > 0$ such that for any $J \supseteq I$, $\ell_{p,J} > \alpha$.
\end{proof}

\subsection{Proof of \Cref{lem::interval-loss-ub}}

\label{sec::proof_interval-loss-ub}

\begin{proof}
We WLOG restrict ourselves to $q$ which are probability distributions over $I$. Let $\cP_I$ denote the set of probability distributions over $I$ (not necessarily continuous) and $\cP'_I$ denote the set of probability distributions over $I$ which place all the probability mass on the boundaries $\bar{y}_I - r_I/2$ and $\bar{y}_I + r_I/2$, i.e. for all $q' \in \cP'_I$ we have
\begin{align}
    \bbP_{X \sim q'} [X \in \{\bar{y}_I - r_I/2, \bar{y}_I + r_I/2\}] = 1 \, .
\end{align}
We then make the following claim:

\emph{Claim 1:} For all $q \in \cP_I$, exists $q' \in \cP'_I$ such that $\ell_{q,I} \leq \ell_{q',I}$.

This follows from the convexity of the function $x \log(x)$ and the definition of $\ell_{q,I}$, i.e.
\begin{align}
    \ell_{q,I} = \bbE_{X \sim q} [X \log (X/\widetilde{y}_{q,I})]
\end{align}
(since $q$ in this case is a distribution over $I$, we removed the condition $X \in I$ as it is redundant). In particular, if $q'$ is the (unique) distribution in $\cP'_I$ such that $\bbE_{X \sim q'}[X] = \widetilde{y}_{q,I}$ (i.e. we move all the probability mass to the boundary but keep the expected value the same), then $\ell_{q',I}$ can be computed by considering the average over the linear function which connects the end points of $X \log (X/\widetilde{y}_{q,I})$ over $I$. Because of convexity, this linear function is always greater than or equal to $X \log (X/\widetilde{y}_{q,I})$ on $I$, and therefore $\ell_{q,I} \leq \ell_{q',I}$. Thus, Claim 1 holds and we can restrict our attention to $\cP'_I$.

For simplicity we introduce a linear mapping $\newvar$ from $[-1/2,1/2]$ to $I$: for $\theta \in [-1/2,1/2]$, let $\newvar(\theta) = \bar{y}_I + \theta r_I$ (so $\newvar(-1/2) = \bar{y}_I - r_I/2$ is the lower boundary of $I$, $\newvar(1/2) = \bar{y}_I + r_I/2$ is the upper boundary, and $\newvar(0) = \bar{y}_I$ is the midpoint). We also specially denote $a = \newvar(-1/2)$ to be the lower boundary and $b = \newvar(1/2)$ to be the upper boundary. Then, since any $q \in \cP'_I$ can only assign probabilities to $a$ and $b$, we can parametrize all $q \in \cP'_I$: let $q(\theta)$ denote the distribution assigning probability $1/2 + \theta$ to the upper boundary $b$ and $1/2 - \theta$ to the lower boundary $a$. Then this gives the nice formula:
\begin{align}
    \widetilde{y}_{q(\theta), I} = \bar{y}_I + \theta r_I = \newvar(\theta)
\end{align}
i.e. $q(\theta)$ is the unique distribution in $\cP'_I$ with expectation $\newvar(\theta)$. This brings us to our next claim:

\emph{Claim 2:} $\ell_{q(\theta), I} \leq 2 \ell_{q(0), I}$ for any $\theta \in [-1/2, 1/2]$.
Ignoring the redundant condition $X \in I$, we use
\begin{align}
    \ell_{q,I} &= 
    \bbE_{X \sim q} [X \log(X)] - \widetilde{y}_{q,I} \log(\widetilde{y}_{q,I}) \label{eq::alt-loss-formula}
\end{align}
to re-write $\ell_{q(\theta),I}$ as follows:
\begin{align}
    \ell_{q(\theta),I} &= (1/2 - \theta) a \log(a) 
    + (1/2 + \theta) b \log(b)
    \eqlinebreakshort- \newvar(\theta) \log(\newvar(\theta))\,.
\end{align}
This implies that
\begin{align}
    \ell_{q(\theta), I} &\leq \ell_{q(\theta), I} + \ell_{q(-\theta), I}
    \\ &= \big(a\log(a) + b \log(b) \big)
    \eqlinebreakshort- \big(\newvar(\theta) \log(\newvar(\theta)) + \newvar(-\theta) \log(\newvar(-\theta)) \big)
    \\ &\leq \big(a\log(a) + b \log(b) \big) - 2 \bar{y}_I \log(\bar{y}_I)
    \\ &= 2 \ell_{q(0),I}
\end{align}
where the inequality follows because $x \log(x)$ is convex and the mean of $\newvar(\theta)$ and $\newvar(-\theta)$ is $\newvar(0) = \bar{y}_I$, showing Claim 2.

\emph{Claim 3:} $2 \ell_{q(0),I} \leq \frac{1}{2} r_I^2 \bar{y}_I^{-1}$.

This comes from rewriting according to \eqref{eq::alt-loss-formula} and then applying the Taylor series expansion of $(1+t) \log(1+t)$. Define $t = r_I/(2 \bar{y}_I) \leq 1$ (otherwise $I \not \in [0,1]$), we get:
\begin{align}
    2 &\ell_{q(0),I} 
    \\&= \big(a\log(a) + b \log(b) \big) - 2 \bar{y}_I \log(\bar{y}_I)
    \\ &= (\bar{y}_I - r_I/2) \log(\bar{y}_I - r_I/2)
    \eqlinebreakshort + (\bar{y}_I + r_I/2) \log(\bar{y}_I + r_I/2)
    - 2 \bar{y}_I \log(\bar{y}_I)
    \\ &= (\bar{y}_I - r_I/2) (\log(\bar{y}_I - r_I/2) - \log(\bar{y}_I))
    \eqlinebreakshort + (\bar{y}_I + r_I/2) (\log(\bar{y}_I + r_I/2) - \log(\bar{y}_I))
    \\ &= \bar{y}_I \big((1 - t) \log(1 - t) + (1 + t) \log(1 + t) \big)
    \,.
\end{align}
We can use the inequality that $(1 - t) \log(1 - t) + (1 + t) \log(1 + t) \leq 2 t^2$ for $|t| \leq 1$, to get
\begin{align}
    2 \ell_{q(0),I} \leq 2 \bar{y}_I t^2
    = \frac{1}{2} r_I^2 \bar{y}_I^{-1}\,.
\end{align}
This resolves Claim 3.

The lemma then follows from Claims 1, 2, and 3.
\end{proof}

\subsection{Proof of \Cref{lem::dominating-companders}}

\label{sec::proof_dominating-companders}

\begin{proof}
First, note that the above conditions imply that $f_2(x) \geq f_1(x)$ and that $f'_2(x) \geq f'_1(x)$ for all $x$ where both are defined (almost everywhere).

Let $J^{f_i, N, x} = [a_i, b_i]$ for $i = 1, 2$. We will prove that $a_1 \leq a_2$ and $b_1 \geq b_2$. Note that by definition if $f_1(x) - 1/N \leq 0$ then $a_1 = 0$ and $a_1 \leq a_2$ happens by default; thus this is also the case if $f_2(x) - 1/N \leq 0$ since $f_2 \geq f_1$ means this implies $f_1(x) - 1/N \leq 0$. Meanwhile, if $f_2(x) + 1/N \geq 1$ we have
\begin{align}
    1/N \geq 1 - f_2(x) \geq  f_2(1) - f_2(x) &\geq f_1(1) - f_1(x) 
\end{align}
meaning that $b_1 = 1$ (and $b_2 = 1$) so $b_1 \geq b_2$; and similarly $f_1(x) + 1/N \geq 1$ simply implies $b_1 = 1 \geq b_2$.

Thus we do not need to worry about the boundaries hitting $0$ or $1$ (i.e. we can ignore the `$\cap [0,1]$' in the definition), as the needed result easily holds whenever it happens.

Then $a_1$ and $a_2$ are the values for which
\begin{align}
    \int_{a_2}^x f'_2(t) \, dt = \int_{a_1}^x f'_1(t) \, dt = 1/N\,.
\end{align}
But since $0 \leq f'_1(t) \leq f'_2(t)$, we know that
\begin{align}
    \int_{a_2}^x f'_2(t) \, dt = 1/N = \int_{a_1}^x f'_1(t) \, dt \leq \int_{a_1}^x f'_2(t) \, dt
\end{align}
which implies that $a_2 \geq a_1$. An analogous proof on the opposite side proves $b_1 \geq b_2$ and hence
\begin{align}
    J^{f_2, N, x} = [a_2, b_2] \subseteq [a_1, b_1] = J^{f_1, N, x}
\end{align}
as we needed.
\end{proof}

\section{Proof of \Cref{prop::approximate-compander}
}

\label{sec::approximate-compander-f-dagger}

\begin{proof}
First, note that $\comp_\delta - \delta x^{1/2} = (1 - \delta) \comp$ is monotonically increasing so $\comp \in \compset^\dagger$. Furthermore, where the derivative $f'$ exists (which is almost everywhere since it is monotonic and bounded),
\begin{align}
    f'_\delta (x) = (1-\delta)f'(x) + (\delta/2)x^{-1/2}\,.
\end{align}
Thus, pointwise, $\lim_{\delta \to 0} f'_\delta (x) = f'(x)$ for all $x$. Since for all $\delta > 0$ we have $\comp \in \compset^\dagger$, \Cref{thm::asymptotic-normalized-expdiv} applies to $\comp_\delta$. So, we have
\begin{align}
    \lim_{\delta \to 0} \widetilde{L}(p,\comp_\delta) &= \lim_{\delta \to 0} L^\dagger(p,\comp_\delta)
    \\ &= \lim_{\delta \to 0} \frac{1}{24} \int_0^1 p(x) f'_\delta(x)^{-2} x^{-1} \, dx 
\end{align}
and $\lim_{\delta \to 0} p(x) f'_\delta(x)^{-2} x^{-1} = p(x) f'(x)^{-2} x^{-1}$, i.e. pointwise convergence of the integrand. We now consider two possibilities: (i) $\int_0^1 p(x) f'(x)^{-2} x^{-1} < \infty$; (ii) $\int_0^1 p(x) f'(x)^{-2} x^{-1} = \infty$.

In case (i), WLOG assume that $\delta \leq 1/2$; then $f'_\delta(x) > \frac{1}{2} f'(x)$, which implies $f'_\delta(x)^{-2} < 4 f'(x)^{-2}$. Thus, we have an integrable dominating function ($4 p(x) f'(x)^{-2} x^{-1}$) and we can apply the Dominated Convergence Theorem, which shows what we want.

In case (ii), we need to show $\lim_{\delta \to 0} \int_0^1 p(x) f'_\delta(x)^{-2} x^{-1} \, dx = \infty$. Let $\cX_\delta^+ = \{x \in [0,1] : f'(x) \geq \delta x^{-1/2}\}$ and $\cX_\delta^- = [0,1] \backslash \cX_\delta^+$, with $\bone_{\cdot}(\cdot)$ denoting their respective indicator functions. Then
\begin{align}
    f'_\delta(x) &= (1-\delta) f'(x) + (\delta/2) x^{-1/2}
    \\ &\leq f'(x) + \delta x^{-1/2}
    \\ &\leq 2 f'(x) \, \bone_{\cX_\delta^+}(x) + 2 \delta x^{-1/2} \, \bone_{\cX_\delta^-}(x)
    \\ \implies f'_\delta(x)^{-2} &\geq \frac{1}{4} f'(x)^{-2} \, \bone_{\cX_\delta^+}(x) + \frac{1}{4} \delta^{-2} x \, \bone_{\cX_\delta^-}(x)\,.
\end{align}
This then shows that (switching to $\int \cdot dp$ notation)
\begin{align}
    \int f'_\delta(x)^{-2} x^{-1} \, dp \geq &\frac{1}{4} \int \bone_{\cX_\delta^+}(x) f'(x)^{-2} x^{-1} \, dp
    \\ &+ \frac{1}{4} \int \bone_{\cX_\delta^-}(x) \delta^{-2}  \, dp\,. 
\end{align}
Note that $\cX_\delta^+$ expands as $\delta \to 0$. We then have two sub-cases (a) $\lim_{\delta \to 0} \bbP_{X \sim p} [X \in \cX_\delta^+] = 1$; (b) $\lim_{\delta \to 0} \bbP_{X \sim p} [X \in \cX_\delta^+] < 1$, which implies that there is some $\beta > 0$ such that $\bbP_{X \sim p} [X \in \cX_\delta^-] > \beta$ for all $\delta$. Then in sub-case (a), we have
\begin{align}
    &\lim_{\delta \to 0} \frac{1}{4} \int \bone_{\cX_\delta^+}(x) f'(x)^{-2} x^{-1} \, dp 
    \\ = & \frac{1}{4} \lim_{\delta \to 0} \bbE_{X \sim p} [\bone_{\cX_\delta^+}(X) f'(X)^{-2} X^{-1}] = \infty\,.
\end{align}
This is infinite because $\cX_0^+ := \lim_{\delta \to \infty} \cX_\delta^+$ is probability measure-$1$ set, and by the definition of Lebesgue integration, integration over $\cX_0^+$ is equivalent to the limit of integration over $\cX_\delta^+$, and since it is probability measure $1$ integrating over it with respect to $p$ is equivalent to integrating over $[0,1]$. Meanwhile in sub-case (b) we have
\begin{align}
    \frac{1}{4} \int \bone_{\cX_\delta^-}(x) \delta^{-2}  \, dp = \frac{\delta^{-2}}{4} \bbP_{X \sim p}[X \in \cX_\delta^-] \geq \frac{\delta^{-2}}{4} \beta
\end{align}
which goes to $\infty$ as $\delta \to 0$, and we are done.
\end{proof}


\section{Beta and Power Companders}

\label{sec::appendix_other_companders}

In this appendix, we analyze \emph{beta companders}, which are optimal companders for symmetric Dirichlet priors and are based on the normalized incomplete beta function (\Cref{sec::beta_companding}) and \emph{power companders}, which have the form $f(x) = x^s$ and which have properties similar to the minimax compander when $s = 1/\log \az$ (\Cref{sec::power_compander_analysis}). 

We also add supplemental experimental results. First, we compare the beta compander with truncation (identity compander) and the EDI (Exponential Density Interval) compander we developed in \cite{adler_ratedistortion_2021} in the case of the uniform prior on $\triangle_{\az-1}$ (which is equivalent to a Dirichlet prior with all parameters set to $1$), on book word frequencies, and on DNA $k$-mer frequencies. EDI was, in a sense, developed to minimize the expected KL divergence loss for the uniform prior (specifically to remove dependence on $\az$) as a means of proving a result in \cite{adler_ratedistortion_2021}; the beta compander was then directly developed for all Dirichlet priors.

Second, we compare the theoretical prediction for the power compander against various data sets; this demonstrates a close match to the theoretical performance for synthetic (uniform on $\triangle_{\az-1}$) data and DNA $k$-mer frequencies, while the power compander performs better on book word frequencies. Note that this is not a contradiction, as the theoretical prediction is for its performance on the worst possible prior -- it instead indicates that book word frequencies are somehow more suited to power companders than the uniform distribution or DNA $k$-mer frequencies. 

Finally, we compare how quickly the beta and power companders converge to their theoretical limits (with uniform prior); specifically how quickly $N^2 \widetilde{L}(p,f,N)$ converges to $\widetilde{L}(p,f)$. The results show that for large $\az$ ($\approx 10^5$), both are already very close by $N = 2^8 = 256$; while for smaller values of $\az$, power companders still converge very quickly while beta companders may take even until $N = 2^{16} = 65536$ or beyond to be close.

\subsection{Beta Companders for Symmetric Dirichlet Priors}

\label{sec::beta_companding}

\begin{definition} 
When $\bX$ is drawn from a Dirichlet distribution with parameters $\balpha = \alpha_1,\dots,\alpha_\az$, we use the notation $\bX \sim \dir(\balpha)$. When $\alpha_1 = \dots = \alpha_\az = \alpha$, then $\bX$ is drawn from a symmetric Dirichlet with parameter $\alpha$ and we use the notation  $\bX \sim \dir_\az(\alpha)$.
\end{definition}

As a corollary to \Cref{thm::optimal_compander_loss}, we get that the optimal compander for the symmetric Dirichlet distribution is the following:
\begin{corollary} 
When $\bx \sim \dir_\az(\alpha)$, let $p(x)$ be the associated single-letter density (same for all elements due to symmetry). The optimal compander for $p$ satisfies
\begin{align}
    \compder(x) &= B \Big(\frac{\alpha + 1}{3}, \frac{(\az-1)\alpha + 2}{3} \Big)^{-1} \eqlinebreakshort x^{(\alpha - 2)/3} (1-x)^{((\az-1)\alpha - 1)/3} \label{eq::beta_compander_deriv}
\end{align}
where $B(a,b)$ is the Beta function. Therefore, $\comp(x)$ is the normalized incomplete Beta function $I_x((\alpha + 1)/3, ((\az-1)\alpha + 2)/3)$.

Then
\begin{align}
    &\singleloss(p, \comp) \nonumber
    \\&= \frac{1}{2} B\Big(\frac{\alpha+1}{3}, \frac{(\az-1)\alpha + 2}{3} \Big)^3 B(\alpha, (\az-1)\alpha)^{-1}\label{eq::dir_Dpf}\,.
\end{align}

\end{corollary}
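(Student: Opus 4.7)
The plan is to reduce directly to Theorem~\ref{thm::optimal_compander_loss} after identifying the single-letter marginal of the symmetric Dirichlet. The key input is the standard fact that if $\bX \sim \dir_\az(\alpha)$, then each coordinate $X_i$ is $\mathrm{Beta}(\alpha,(\az-1)\alpha)$-distributed, so
\[
p(x) = B(\alpha,(\az-1)\alpha)^{-1}\, x^{\alpha-1}(1-x)^{(\az-1)\alpha-1}.
\]
This follows from the Dirichlet aggregation property: collapsing the remaining $\az-1$ coordinates into the single mass $1-X_i$ produces a $\mathrm{Beta}((\az-1)\alpha,\alpha)$ distribution, which leaves the claimed marginal for $X_i$.

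Substituting this $p$ into \eqref{eq::best_f_raw} gives $\compder(x) \propto (p(x)x^{-1})^{1/3} \propto x^{(\alpha-2)/3}(1-x)^{((\az-1)\alpha-1)/3}$, after absorbing $B(\alpha,(\az-1)\alpha)^{-1/3}$ into the overall normalization. The normalization constant is then a standard Beta integral,
\[
\int_0^1 x^{(\alpha-2)/3}(1-x)^{((\az-1)\alpha-1)/3}\,dx = B\!\left(\tfrac{\alpha+1}{3},\tfrac{(\az-1)\alpha+2}{3}\right),
\]
yielding exactly \eqref{eq::beta_compander_deriv}. Integrating $\compder$ from $0$ to $x$ produces the regularized incomplete Beta function $I_x((\alpha+1)/3,((\az-1)\alpha+2)/3)$, as claimed.

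For the loss formula, I apply \eqref{eq::raw_overall_dist}: the optimum equals the Theorem~\ref{thm::optimal_compander_loss} constant times $\left(\int_0^1 (p(x)x^{-1})^{1/3}\,dx\right)^3$. Using the same Beta identity,
\[
\int_0^1 (p(x)x^{-1})^{1/3}\,dx = B(\alpha,(\az-1)\alpha)^{-1/3}\, B\!\left(\tfrac{\alpha+1}{3},\tfrac{(\az-1)\alpha+2}{3}\right),
\]
so cubing immediately delivers the closed form in \eqref{eq::dir_Dpf}. Everything up to this point is bookkeeping with Beta integrals.

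The only genuine subtlety is justifying the application of Theorem~\ref{thm::optimal_compander_loss}, i.e.\ that the displayed $\comp$ actually realizes the asymptotic loss $L^\dagger(p,\comp)$. Near $x=0$ one has $\comp(x)\sim c\,x^{(\alpha+1)/3}$, which puts $\comp\in\compset^\dagger$ when $\alpha\le 1/2$ but fails for larger $\alpha$ (including the uniform case $\alpha=1$, where the exponent $(\alpha+1)/3>1/2$ and one cannot dominate a multiple of $x^{1/2}$ near the origin). The cleanest way to close this gap is via Proposition~\ref{prop::approximate-compander}: replace $\comp$ by $\comp_\delta = (1-\delta)\comp + \delta x^{1/2}\in\compset^\dagger$, note that $\widetilde{L}(p,\comp_\delta)=L^\dagger(p,\comp_\delta)$ for each $\delta>0$, and take $\delta\to 0$ to recover $L^\dagger(p,\comp)$ by the second equality in \eqref{eq::approximate-optimal-compander}. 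This is the single technical point requiring care; the rest is a direct substitution into Theorem~\ref{thm::optimal_compander_loss}.
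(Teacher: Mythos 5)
Your proof follows the same route as the paper: identify the single-letter marginal as $\mathrm{Beta}(\alpha,(\az-1)\alpha)$ (the paper records this as a fact), substitute into Theorem~\ref{thm::optimal_compander_loss}, and evaluate the resulting Beta integral. The normalization and the identification of $\comp$ as the regularized incomplete Beta function $I_x((\alpha+1)/3, ((\az-1)\alpha+2)/3)$ are exactly as in the paper. Your additional observation — that $\comp_p\notin\compset^\dagger$ whenever $\alpha>1/2$, because $\comp_p(x)\sim cx^{(\alpha+1)/3}$ near $0$ with exponent $>1/2$ — is a genuinely useful refinement; the paper's remark after Theorem~\ref{thm::optimal_compander_loss} only mentions the case of $p$ vanishing on an interval (which does not happen for Dirichlet priors), so you have supplied the concrete reason the $\comp_\delta$ workaround via Proposition~\ref{prop::approximate-compander} is actually needed here (e.g.\ for the uniform simplex prior $\alpha=1$). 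One issue you should flag rather than glide past: applying \eqref{eq::raw_overall_dist} gives the prefactor $\tfrac{1}{24}$, i.e.\ $\singleloss(p,\comp)=\tfrac{1}{24}\,B(\alpha,(\az-1)\alpha)^{-1}B\!\big(\tfrac{\alpha+1}{3},\tfrac{(\az-1)\alpha+2}{3}\big)^3$, whereas the displayed corollary writes $\tfrac12$. Your derivation is right and the statement appears to contain a typo; saying ``cubing immediately delivers the closed form in \eqref{eq::dir_Dpf}'' without noticing the mismatch is the one place your write-up is too quick.
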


This result uses the following fact:
\begin{fact}  \label{fact::dirichlet-marginals}
For $\bX \sim \dir(\alpha_1, \dots, \alpha_\az)$, the marginal distribution on $X_k$ is $X_k \sim \Btdis(\alpha_k, \beta_k)$, where $\beta_k = \sum_{j \neq k} \alpha_j$
. When the prior is symmetric with parameter $\alpha$, we get that all $X_k$ are distributed according to $\Btdis(\alpha, (\az-1)\alpha)$.
\end{fact}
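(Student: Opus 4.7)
The plan is a classical change-of-variables computation against the Dirichlet density. I start from the explicit density of $\bX \sim \dir(\alpha_1, \ldots, \alpha_\az)$ on the simplex, which is proportional to $\prod_i x_i^{\alpha_i-1}$ with normalizing constant $\Gamma(\sum_i \alpha_i)/\prod_i \Gamma(\alpha_i)$. Since the role of the coordinate labels is purely notational in this expression, it suffices to compute the marginal of $X_1$ and conclude the general $X_k$ case by relabeling.

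To obtain the marginal of $X_1$, I integrate out $x_2, \ldots, x_{\az-1}$ over the region $\{x_i \geq 0,\ x_2 + \cdots + x_{\az-1} \leq 1 - x_1\}$ (with $x_\az$ determined by the sum constraint). The natural rescaling $x_i = (1-x_1) z_i$ for $i = 2, \ldots, \az-1$ maps this region bijectively onto the $(\az-2)$-dimensional simplex in the $z_i$'s and produces Jacobian $(1-x_1)^{\az-2}$. Collecting all powers of $(1-x_1)$ arising from the $x_i^{\alpha_i-1}$ factors for $i \geq 2$ (including $x_\az = (1-x_1)(1 - z_2 - \cdots - z_{\az-1})$) together with the Jacobian yields total exponent $(\az-2) + \sum_{i=2}^{\az}(\alpha_i-1) = \beta_1 - 1$, where $\beta_1 = \sum_{j \neq 1} \alpha_j$. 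The residual $z$-integral is the standard Dirichlet normalizing integral, evaluating to $\prod_{i=2}^{\az}\Gamma(\alpha_i)/\Gamma(\beta_1)$.

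Multiplying everything through with the original normalizing constant produces exactly the $\Btdis(\alpha_1, \beta_1)$ density $\frac{\Gamma(\alpha_1+\beta_1)}{\Gamma(\alpha_1)\Gamma(\beta_1)} \, x_1^{\alpha_1-1}(1-x_1)^{\beta_1-1}$ in $x_1$, as claimed; the symmetric specialization $\alpha_i \equiv \alpha$ then gives $\beta_k = (\az-1)\alpha$. No step is genuinely hard; the only care needed is in the exponent accounting for $(1-x_1)$. A cleaner alternative that avoids the bookkeeping is the Gamma-ratio representation: with $Y_i \sim \Gamma(\alpha_i,1)$ mutually independent and $S = \sum_i Y_i$, one shows (via its own one-line Jacobian argument) that $(Y_1/S, \ldots, Y_\az/S) \sim \dir(\alpha_1,\ldots,\alpha_\az)$ and is independent of $S$; then $X_k = Y_k/(Y_k + T_k)$ where $T_k := \sum_{j \neq k} Y_j \sim \Gamma(\beta_k,1)$ is independent of $Y_k$, and the classical Beta--Gamma identity immediately identifies this ratio as $\Btdis(\alpha_k, \beta_k)$.
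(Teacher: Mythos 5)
Your derivation is correct: the exponent bookkeeping for $(1-x_1)$ works out to $(\az-2)+\sum_{i=2}^{\az}(\alpha_i-1)=\beta_1-1$, the residual integral is indeed the Dirichlet normalizing constant $\prod_{i\ge 2}\Gamma(\alpha_i)/\Gamma(\beta_1)$, and combining with the original constant yields exactly the $\Btdis(\alpha_1,\beta_1)$ density; relabeling and the symmetric specialization are immediate. Note that the paper states this as a \emph{Fact} and offers no proof at all, treating it as classical, so there is nothing to compare your argument against; either of your two routes (direct integration or the Gamma-ratio representation) is a standard and complete justification.
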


%

\begin{remark}
Since \eqref{eq::dir_Dpf} scales with $\az^{-1}$, this means that $\widehat\cL_\az(\dir_\az(\alpha), \comp)$ is constant with respect to $\az$. This is consistent with what we get with the EDI compander (see \cite{adler_ratedistortion_2021}).
\end{remark}

We will call the compander $f$ derived from integrating \eqref{eq::beta_compander_deriv} the \emph{beta compander}. (This is because integrating \eqref{eq::beta_compander_deriv} gives an incomplete beta function.) The beta compander naturally performs better than the EDI method since this compander is optimized to do so. We can see the comparison in \Cref{fig::beta_and_EDI} that on random uniform distributions, 
the beta compander is better than the EDI method by a constant amount for all $\az$.

\begin{figure}
    \centering
    \includegraphics[scale = .4]{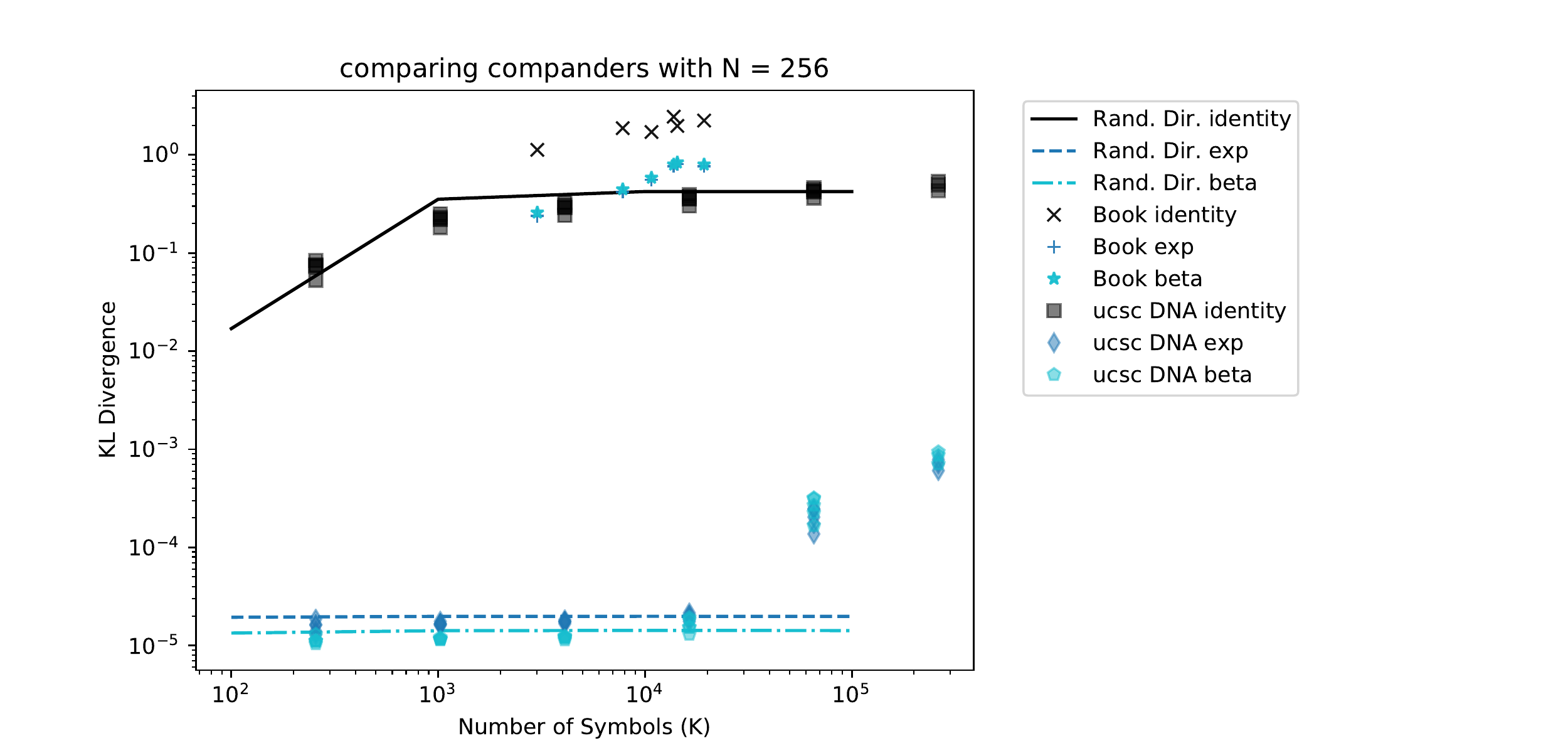}
    \includegraphics[scale = .4]{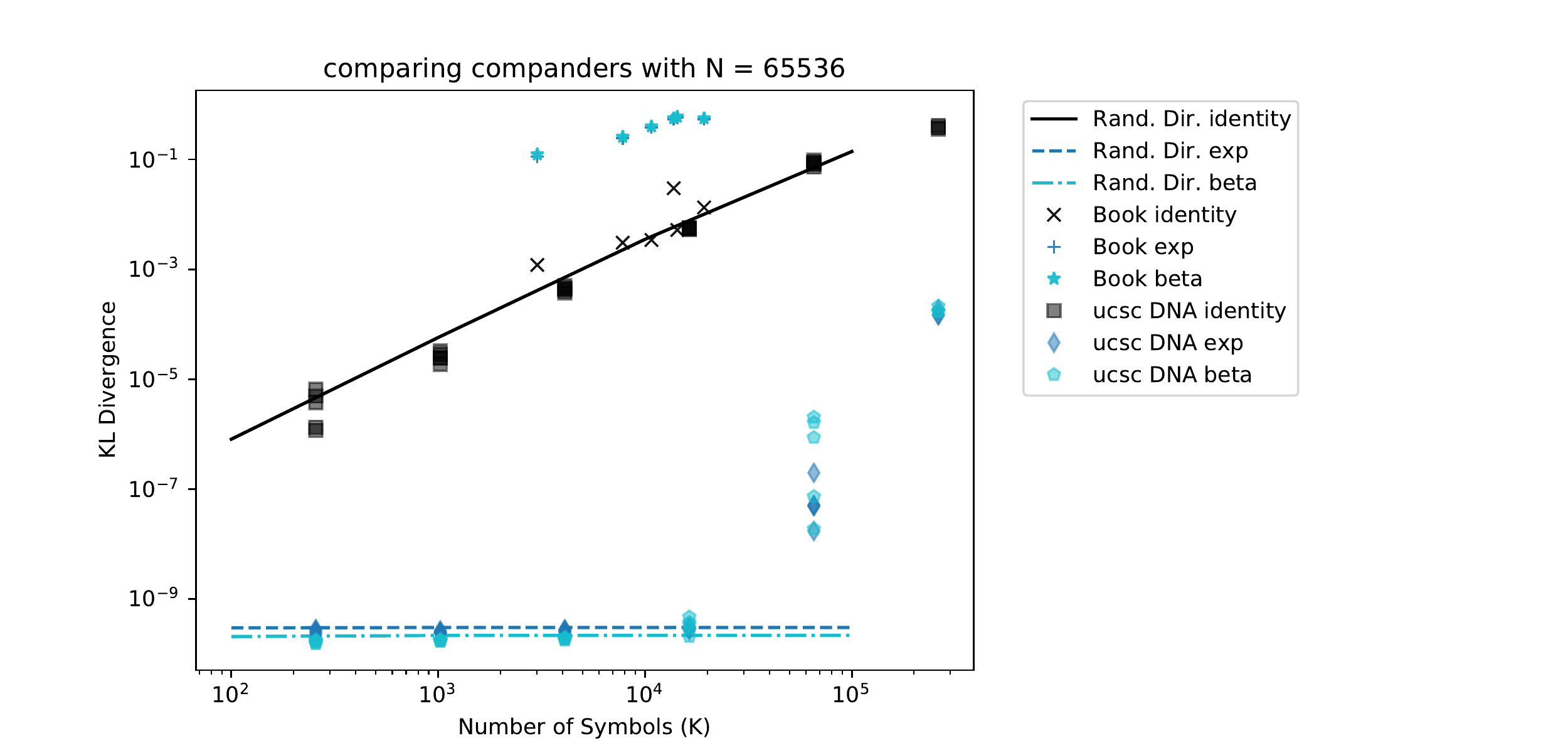}
    \caption{Comparing the beta compander and the EDI method. The random data is drawn with $\dir_\az(1)$ (i.e. uniform).}
    \label{fig::beta_and_EDI}
\end{figure}

The beta compander is not the easiest algorithm to implement however. It is necessary to compute an incomplete beta function in order to find the compander function $\comp$, which is not known to have a closed form expression. We reiterate \Cref{rmk::minimax-is-closed-form} that it is indeed interesting that the minimax compander, on the other hand, does have a closed form.

\subsection{Analysis of the Power Compander}

\label{sec::power_compander_analysis}

\begin{figure}
    \centering
    \includegraphics[scale = .4]{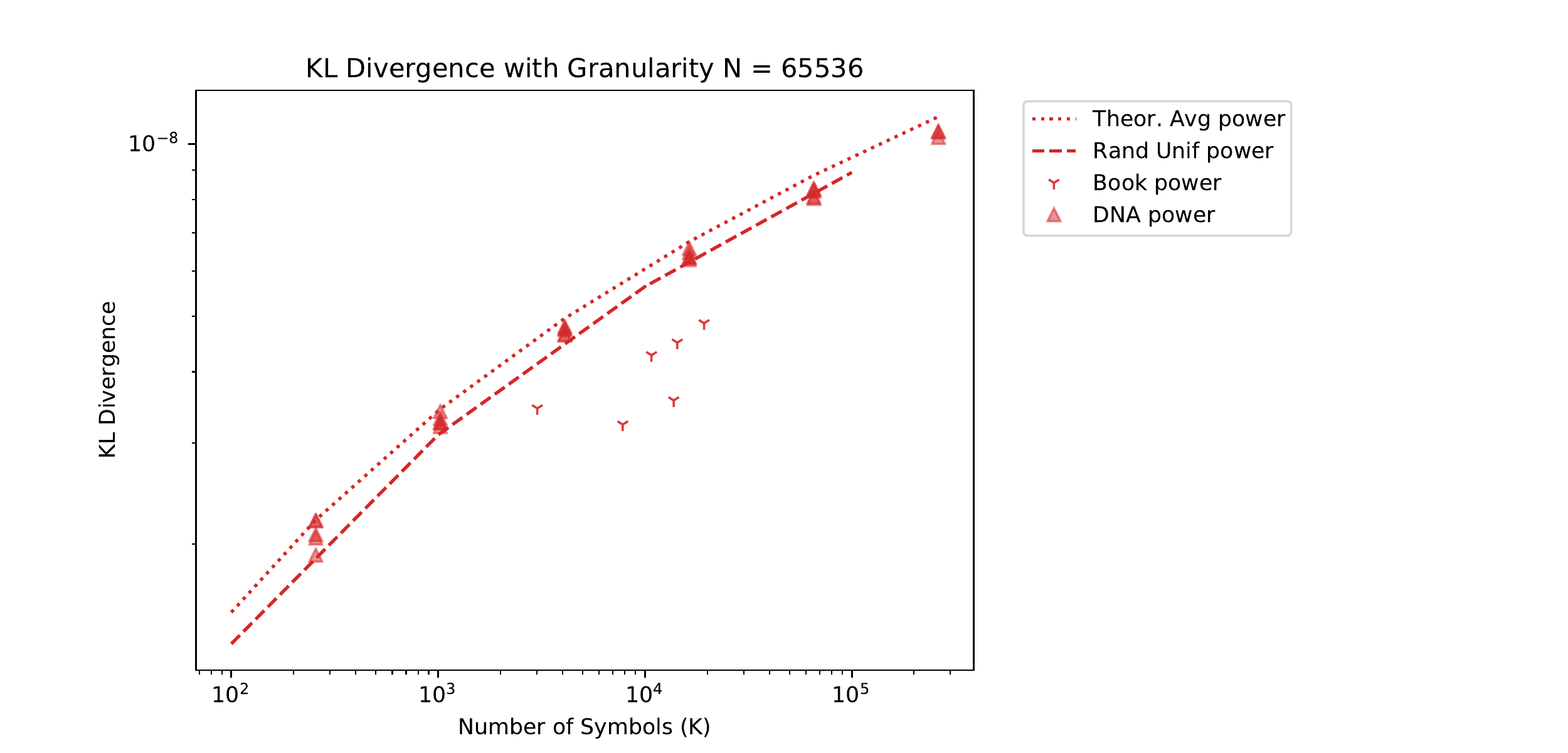}
    \caption{Comparing theoretical performance \eqref{eq::power_p_f} of the power compander to experimental results. }
    \label{fig::theoretical_power}
\end{figure}

Starting with \Cref{thm::asymptotic-normalized-expdiv}, we can use the asymptotic analysis to understand why the power compander works well for all distributions. The following proposition proves the first set of results in \Cref{thm::power_compander_results}.

\begin{proposition} \label{thm::power_r_loss}
Let single-letter density $p$ be the marginal probability of one letter on any symmetric probability distribution $P$ over $\az$ letters. 
For the power compander $\comp(x) = x^s$ where $s \leq \frac{1}{2}$,
    \begin{align}
        \singleloss(p, \comp) \leq \frac{1}{\az} \frac{1}{\dpfconst} s^{-2} \az^{2 s}
    \end{align}
and for any prior $P \in \cP^\triangle_\az$,
\begin{align}
    \rawloss_\az(P, x^s) \leq\frac{1}{\dpfconst} s^{-2} \az^{2 s}\,.
\end{align}
Optimizing over $s$ gives
\begin{align}
    \rawloss_\az(P, \comp) \leq \frac{e^2}{\dpfconst} \log^2 \az\label{eq::power_p_f}\,.
\end{align}
\end{proposition}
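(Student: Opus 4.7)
The plan is to reduce everything to the closed-form expression for $L^\dagger$ from \Cref{thm::asymptotic-normalized-expdiv} and then apply Jensen's inequality using the mean constraint on $p$.

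First I would verify that $\comp(x)=x^s \in \compset^\dagger$ for $s\in(0,1/2]$, which is immediate: pick any $c\in(0,1)$ and set $\alpha=s\in(0,1/2]$, so $\comp(x)-cx^\alpha=(1-c)x^s$ is still monotonically increasing. Hence \Cref{thm::asymptotic-normalized-expdiv} applies and gives
\begin{align}
\singleloss(p,\comp) = L^\dagger(p,\comp) = \frac{1}{24}\int_0^1 p(x)\compder(x)^{-2}x^{-1}\,dx.
\end{align}
Plugging in $\compder(x)=s\,x^{s-1}$, i.e.\ $\compder(x)^{-2}=s^{-2}x^{2(1-s)}$, collapses the integrand and yields
\begin{align}
\singleloss(p,\comp)=\frac{s^{-2}}{24}\,\bbE_{X\sim p}\!\left[X^{1-2s}\right].
\end{align}

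Next I would exploit the assumption $s\le 1/2$, which forces the exponent $1-2s\in[0,1]$, making $t\mapsto t^{1-2s}$ concave on $[0,1]$. Since $p$ is the marginal of a symmetric $P\in\cP_\az^\triangle$, we have $p\in\cP_{1/\az}$, i.e.\ $\bbE_{X\sim p}[X]=1/\az$. Jensen's inequality then gives
\begin{align}
\bbE_{X\sim p}\!\left[X^{1-2s}\right]\le\bbE_{X\sim p}[X]^{1-2s}=\az^{\,2s-1},
\end{align}
so $\singleloss(p,\comp)\le \frac{1}{\az}\cdot\frac{s^{-2}}{24}\,\az^{2s}$. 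The bound on the raw loss follows immediately from \Cref{lem::im-a-barby-girl}, since $\rawloss_\az(P,\comp)=\az\,\singleloss(p,\comp)\le \frac{s^{-2}}{24}\,\az^{2s}$.

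Finally, to optimize over $s$, I would minimize $s^{-2}\az^{2s}$; taking a logarithm gives $-2\log s+2s\log\az$, whose derivative $-2/s+2\log\az$ vanishes at $s^\star=1/\log\az$. When $\az>7$ we have $s^\star<1/2$, so this choice lies in the admissible range; substituting yields $s^{-2}=\log^2\az$ and $\az^{2s^\star}=e^2$, which gives the claimed bound $\rawloss_\az(P,\comp)\le\frac{e^2}{24}\log^2\az$. There is no real obstacle here beyond verifying membership in $\compset^\dagger$ and checking the concavity regime; the argument is essentially a one-line calculation followed by Jensen.
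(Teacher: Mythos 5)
Your proof is correct and takes essentially the same route as the paper: both reduce to $\singleloss(p,f)=\frac{s^{-2}}{24}\bbE_p[X^{1-2s}]$ via the $L^\dagger$ formula, exploit concavity of $t\mapsto t^{1-2s}$ together with the mean constraint $\bbE_p[X]=1/\az$, and then minimize $s^{-2}\az^{2s}$ over $s$. Your invocation of Jensen directly on the scalar marginal is a cleaner packaging of the same idea the paper expresses by arguing that the worst symmetric prior concentrates each coordinate at $1/\az$ (and then approximating that point mass by continuous densities), but the mathematical content and the resulting bounds are identical.
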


\begin{proof}
Since $\comp(x) = x^s$ we have that $\compder(x) = s x^{s-1}$. Using \Cref{thm::asymptotic-normalized-expdiv}, this gives
\begin{align}
    &\singleloss(p, \comp) \nonumber
    \\ &= \frac{1}{\dpfconst} s^{-2} \int_0^1 x^{1-2 s} p(x) dx = \frac{1}{\dpfconst} s^{-2}\bbE_{X\sim p}[X^{1-2 s}]\,.
\end{align}

The function $x^{1-2 s}$ is increasing and also a concave function. We want to find the maximin prior distribution $P \in \cP^\triangle_\az$ (with marginals $p$) with the constraint
\begin{align}
    \sum_{i} \bbE_{X_i \sim p}[X_i] &= 1
\end{align}
(another constraint is that values of $p$ are such that must sum to one, but we give a weaker constraint here).

We want to choose $P$ to maximize 
\begin{align}
    \sum_{i}  \bbE_{X_i \sim p}[X_i^{1-2 s}] & = \bbE_{ (X_1,...,X_\az)  \sim P}\left[\sum_{i} X_i^{1-2 s}\right]\,.
\end{align}
By concavity (even ignoring any constraint that $P$ is symmetric), the maximum solution is given when $X_1 = \dots = X_\az$. Therefore, the maximin $P$ is such that the marginal on one letter $p$ is
\begin{align}
    p(1/\az) = 1\,.
\end{align}
The probability mass function where $ 1/\az$ occurs with probability $1$ is a limit point of a sequence of continuous densities of the form
\begin{align}
    p(x) = \frac{1}{2\varepsilon} \text{ on } x \in \left[ \frac{1}{\az} - \varepsilon,  \frac{1}{\az} + \varepsilon\right]
\end{align}
as $\varepsilon \to 0$. We use this
since we are restricting to continuous probability distributions.

Evaluating with this gives
\begin{align}
    \singleloss(p, \comp) &= \frac{1}{\dpfconst} s^{-2}\bbE_{X\sim p}[X^{1-2 s}] \\
    &\leq \frac{1}{\dpfconst} s^{-2} \left(\frac{1}{\az} \right) ^{1-2 s} \\
    &= \frac{1}{\az} \frac{1}{\dpfconst} s^{-2} \az^{2 s}
\end{align}
which shows \eqref{eq::power_p_f}. Multiplying by $\az$ gives $\widetilde \cL_\az(P,\comp)$ for symmetric $P$. 

Note that for any non-symmetric $P$, we can always symmetrize $P$ to a symmetric prior $P_{sym}$ by averaging over all random permutations of the indices. Because the loss $\widetilde \cL_\az(P,\comp)$ is concave in $P$, the symmetrized prior $P_{sym}$ will give an higher value, that is $\widetilde \cL_\az(P,\comp) \leq \widetilde \cL_\az(P_{sym},\comp)$. Hence
$\widetilde \cL_\az(P,\comp) \leq \frac{1}{24}s^{-2}\az^{2s}$ holds for all priors.

Finding the $s$ which minimizes $\frac{1}{\dpfconst} s^{-2} \az^{2 s}$ is equivalent to finding $s$ which minimizes $s \log \az - \log s$.
\begin{align}
    0 &= \frac{d}{ds} s \log \az - \log s = \log \az - \frac{1}{s}\\
    &\implies s = \frac{1}{\log \az}\,.
\end{align}
We can plug this back into our equation, using the fact that $e^{\log \az} = \az$ implies that $\az^{\frac{1}{\log \az}} = e$.

Thus, using $\comp(x) = x^{\frac{1}{\log \az}}$ gives that 
\begin{align}
    \rawloss_\az(P, \comp) \leq \frac{e^2}{\dpfconst} \log^2 \az \text{ for any } P \in \cP^\triangle_\az\,.
\end{align}
To generate a prior $P \in \cP^\triangle_\az$ that matches this upper bound, we note that this means we want its marginal $p$ to maximize $\frac{1}{24} (\log^2 \az) \bbE_{X \sim p}[X^{1-2/\log \az}]$, and from before we know that fixing $X = 1/\az$ does this (since $\bbE_{X \sim p}[X] = 1/\az$ as $p$ is the marginal of $P$). While $p$ has to represent a probability density function, and therefore cannot be a point mass, we can restrict its support to an arbitrarily small neighborhood around $1/\az$ (and it is obvious that there are priors $P \in \cP^\triangle_\az$ with such a marginal), thus getting a match and showing that
\begin{align}
    \underset{P \in \cP^\triangle_\az}{\sup}  \rawloss_\az(P, \comp) = \frac{e^2}{\dpfconst} \log^2 \az \,.
\end{align}
\end{proof}

The power compander turns out to give guarantees bounds on the value on $\rawloss_\az(P, \comp)$ when $\comp$ is chosen so that $s = {1}/{\log \az}$. We show the comparison between this theoretical result on raw loss with the experimental results in \Cref{fig::theoretical_power}.

\subsection{Converging to Theoretical} \label{sec::beta_power_convergence}

For both the power compander and the beta  compander, we show in \Cref{fig::theoretical_compare} how quickly the experimental results converge to the theoretical results. Experimental results have a fixed granularity $N$ whereas the theoretical results assume that $N \to \infty$. The plots show that by $N = 2^{16}$ (each value gets $16$ bits), the experimental results for the power compander are very close to the theoretical results, and even for $N = 2^8$ they are not so far.  For the beta compander, the experimental results are close to the theoretical when $\az$ is large. When $\az  = 100$, the results for $N = 2^{16}$ is not that close to the theoretical result, which demonstrates the effect of using unnormalized (or raw) values. The difference between normalizing and not normalizing gets smaller as $\az$ increases.  

\begin{figure*}
    \centering
    \includegraphics[scale = .5]{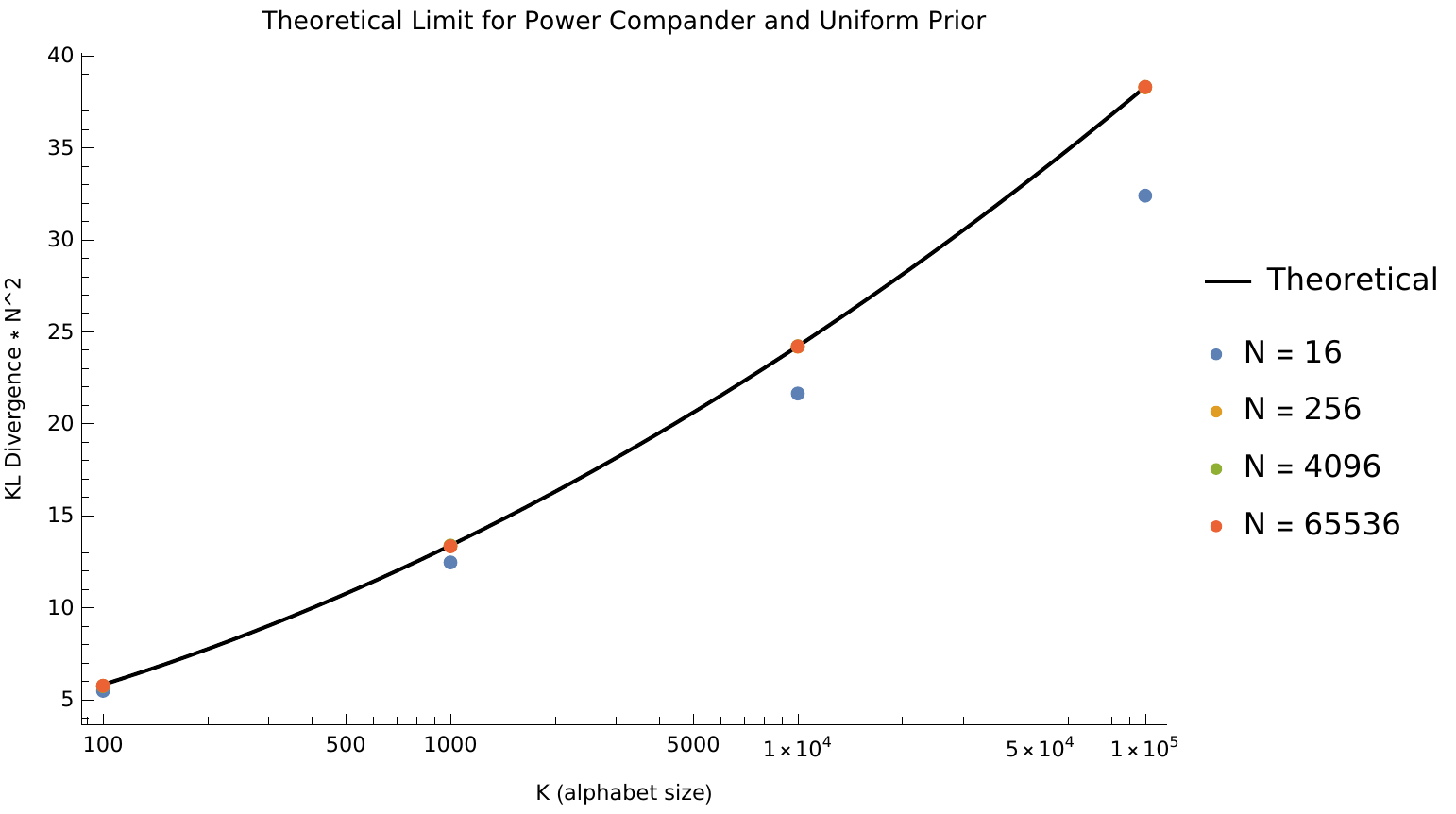}
    \includegraphics[scale = .5]{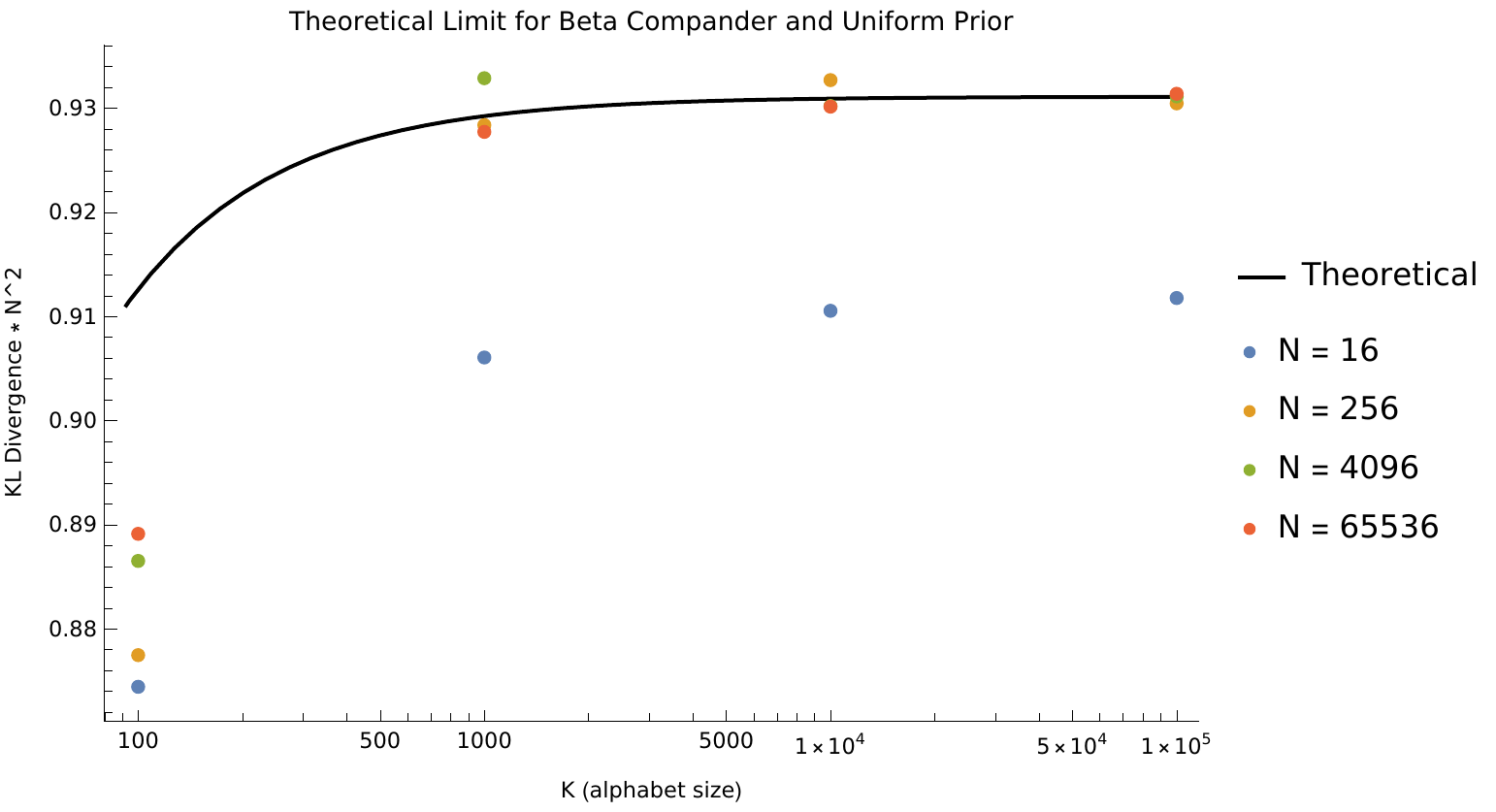}
    \caption{Comparing theoretical expression $\singleloss(p, f)$ with experimental result. The KL divergence value of the experimental results are multiplied to $N^2$ in order to be comparable to $\singleloss(p, f)$.}
    \label{fig::theoretical_compare}
\end{figure*}

\section{Minimax and Approximate Minimax Companders}

\label{sec::appendix_minimax}

In this appendix, we analyze the minimax compander and approximate minimax compander. Specifically, we analyze the constant $c_\az$, to show that it falls in $[1/4,3/4]$ (\Cref{sec::minimax_const_analysis}) and that $\lim_{\az \to \infty} c_\az = 1/2$ (\Cref{sec::lim_c_k}). We also show that when $c_\az$ is close to $1/2$, the approximate minimax compander (which is the same as the minimax compander except it replaces $c_\az$ with $1/2$) has performance close to the minimax compander against all priors $p \in \cP$ (\Cref{sec::proof_L_appx_minimax_compander}). 

\subsection{Analysis of Minimax Companding Constant}

\label{sec::minimax_const_analysis}

\begin{figure}
    \centering
    \includegraphics[scale = .4]{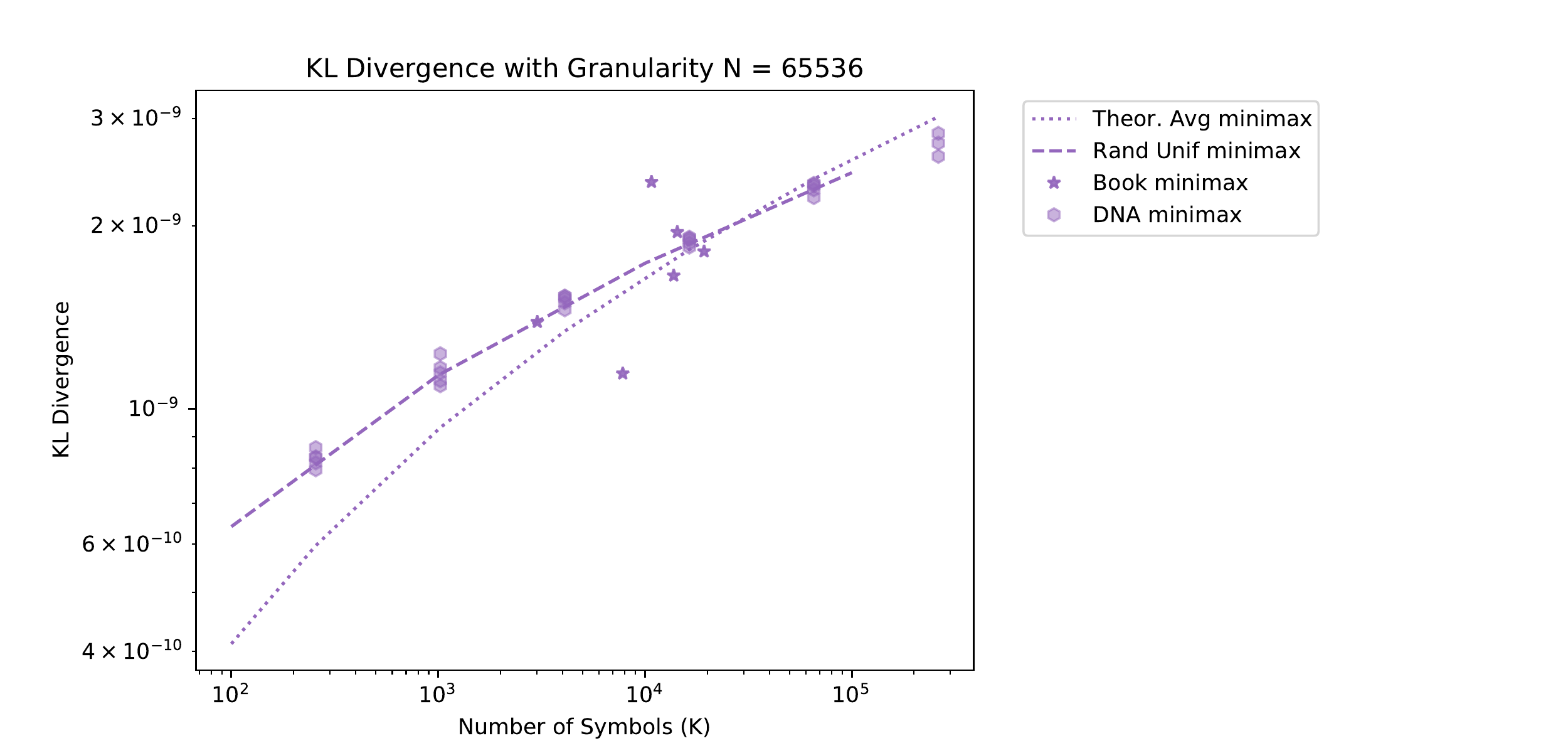}
    \caption{Comparing theoretical performance \eqref{eq::L_dagger_value_minimax} of the approximate minimax compander to experimental results.}
    \label{fig::theoretical_minimax}
\end{figure}

\subsubsection{Determining bounds on $c_\az$}

\label{sec::bounds_c_k}

If $a_\az, b_\az \geq 0$, then $p(x)$ is well-behaved (and bigger than $0$). 



We need $a_\az$ and $b_\az$ to be such that $p(x)$ is a density that integrates to $1$ and also that $p(x)$ has expected value of $1/\az$. To do this, first we compute that
\begin{align}
    \bbE_{X\sim p}[X] 
    &= \int_{0}^{1} x \left(a_\az x^{1/3} + b_\az x^{4/3}\right)^{-3/2} \, dx \\
    &= \frac{-2}{b_\az\sqrt{a_\az + b_\az}} + \frac{2 \arcsinh\left(\sqrt{\frac{b_\az}{a_\az}} \right)}{b_\az^{3/2}}\,.
\end{align}
The constraint that $\int_0^1 p(x) \, dx = 1$ requires that $a_\az \sqrt{a_\az+b_\az} = 2$. We can use this to get
\begin{align}
    \bbE_{X\sim p}[X] &= \frac{-a_\az}{b_\az} + \frac{a_\az \sqrt{\frac{a_\az}{b_\az} + 1} \arcsinh\left(\sqrt{\frac{b_\az}{a_\az}} \right)}{b_\az}\\
    & = \frac{-1}{r} + \frac{ \sqrt{\frac{1}{r} + 1} \arcsinh\left(\sqrt{r} \right)}{r}\\
    & = \frac{-1}{r} + \frac{ \sqrt{\frac{1}{r} + 1} \log\left( \sqrt{r} + \sqrt{r+1} \right)}{r}\label{eq::exact_expectation_r}
\end{align}
where we use $r = b_\az/ a_\az$.  We will find upper and lower bounds in order to approximate what $r$ should be. Using \eqref{eq::exact_expectation_r}, we can get
\begin{align}
    \bbE_{X\sim p}[X] & \leq \frac{1}{2} \frac{\log r}{r}
\end{align}
so long as $r > 3$. If we choose $r = c_1 \az \log \az$ and set $c_1 = .75$,
then
\begin{align}
    \bbE_{X\sim p}[X] & \leq \frac{1}{2} \frac{\log (c_1 \az \log \az)}{c_1 \az \log \az} \\
    &\leq \frac{1}{2 c_1 \az} + \frac{\log \log \az}{2 c_1 \az \log \az} + \frac{\log c_1}{2 c_1 \az \log \az} \leq \frac{1}{\az} 
\end{align}
so long as $\az> 4$.
Similarly, we have
\begin{align}
    \bbE_{X\sim p}[X] & \geq \frac{1}{3} \frac{\log r}{r}
\end{align}
for all $r$.  If we choose $r = c_2 \az \log \az$ and set $c_2 = .25$,
then
\begin{align}
    \bbE_{X\sim p}[X] & \geq \frac{1}{3} \frac{\log (c_2 \az \log \az)}{c_2 \az \log \az} \geq \frac{1}{\az} 
\end{align}
so long as $\az > \dpfconst$. 

Changing the value of $c$ changes the value of $\bbE_{X\sim p}[X]$ continuously. Hence, for each $\az > \dpfconst$, there exists a $c_\az$ so that if $r = c_\az \az \log \az$, then
\begin{align}
    \bbE_{X\sim p}[X]  = \frac{1}{\az}\,.
\end{align}
such that $.25 < c_\az < .75$.

This proves the result for $\az > 24$; numerical evaluation of $c_\az$ for $\az = 5, 6, \dots, 24$ then confirms that the result holds for all $\az > 4$.

\subsubsection{Limiting value of $c_\az$}

\label{sec::lim_c_k}

\begin{lemma}\label{lem::c_k_half}
In the limit, $c_\az \to 1/2$.
\end{lemma}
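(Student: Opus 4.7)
The plan is to start from the exact identity derived just above the lemma, namely
\begin{align}
\bbE_{X \sim p}[X] = \frac{-1}{r} + \frac{\sqrt{1/r+1}\,\log(\sqrt{r}+\sqrt{r+1})}{r},
\end{align}
where $r = b_\az/a_\az$, together with the defining constraint $\bbE_{X\sim p}[X] = 1/\az$ and the parametrization $r = c_\az\,\az\log\az$. We already know from \Cref{sec::bounds_c_k} that $c_\az \in [1/4,3/4]$, so in particular $r \to \infty$ as $\az \to \infty$, which lets us take a clean asymptotic expansion of the right-hand side.

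Next I would expand each factor for large $r$:
\begin{align}
\sqrt{1/r+1} &= 1 + O(1/r),\\
\log(\sqrt{r}+\sqrt{r+1}) &= \log\!\bigl(2\sqrt{r}\bigr) + O(1/r) = \tfrac{1}{2}\log r + \log 2 + O(1/r).
\end{align}
Substituting these in,
\begin{align}
\bbE_{X\sim p}[X] = \frac{\log r}{2r}\bigl(1 + o(1)\bigr).
\end{align}
Then I would plug in $r = c_\az\,\az\log\az$. Since $c_\az\in[1/4,3/4]$, we have $\log r = \log\az + \log\log\az + \log c_\az = (\log\az)(1+o(1))$, so
\begin{align}
\bbE_{X\sim p}[X] = \frac{(\log\az)(1+o(1))}{2 c_\az \,\az\log\az} = \frac{1+o(1)}{2c_\az\,\az}.
\end{align}

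Finally, enforcing $\bbE_{X\sim p}[X] = 1/\az$ gives $2c_\az = 1 + o(1)$, i.e.\ $c_\az \to 1/2$. The only real technical point is confirming that the error terms in the expansions of $\sqrt{1/r+1}$ and $\log(\sqrt{r}+\sqrt{r+1})$ are controlled uniformly enough to absorb into a single $o(1)$ once divided by $\log r$; this is immediate because both are $O(1/r)$ while the leading term is of order $\log r$. No subtle argument is needed beyond bookkeeping of the asymptotics, because the boundedness of $c_\az$ (already established in \Cref{sec::bounds_c_k}) forecloses any degenerate scaling of $r$ in $\az$.
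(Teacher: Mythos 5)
Your proposal is correct and follows essentially the same route as the paper: start from the exact identity for $\bbE_{X\sim p}[X]$ in terms of $r=b_\az/a_\az$, substitute $r=c_\az\,\az\log\az$, invoke the bound $c_\az\in[1/4,3/4]$ to guarantee $r\to\infty$ and control $\log c_\az$, expand $\log(\sqrt{r}+\sqrt{r+1})\approx\tfrac12\log r + \log 2$ and $\sqrt{1/r+1}\to 1$, and match against the constraint $\bbE[X]=1/\az$. The one small difference is procedural: the paper writes ``Let $c = \lim_{\az\to\infty}c_\az$'' and passes to the limit in the identity, which implicitly presupposes the limit exists (rigorously one would argue via boundedness that every subsequential limit equals $1/2$); your version derives the asymptotic relation $2c_\az = 1+o(1)$ directly, which sidesteps that issue and is marginally cleaner. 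The underlying computation is the same.
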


\begin{proof}
We start with $r = \frac{b_\az}{a_\az} = c_\az \az \log \az$,
and we need to meet the condition that
\begin{align}
\frac{-1}{r} + \frac{ \sqrt{\frac{1}{r} + 1} \log\left( \sqrt{r} + \sqrt{r+1} \right)}{r} = \frac{1}{\az}\,.
\end{align}
Substituting we get
\begin{align}
    \frac{1}{\az} &=  \frac{-1}{c_\az \az \log \az} + \sqrt{\frac{1}{c_\az \az \log \az} + 1}  \nonumber \\  & \quad \quad \frac{  \log\left( \sqrt{c_\az \az \log \az} + \sqrt{c_\az \az \log \az+1} \right)}{c_\az \az \log \az} \\
   \implies &c_\az   = \frac{-1}{ \log \az} + \sqrt{\frac{1}{c_\az \az \log \az} + 1} \nonumber \\ & \quad \quad \frac{  \log\left( \sqrt{c_\az \az \log \az} + \sqrt{c_\az \az \log \az+1} \right)}{  \log \az} \,.
\end{align}
Let $c = \lim_{\az \to \infty} c_\az$. Since $c_\az$ is bounded, we know that $\lim_{\az \to \infty} c_\az \az \log \az = \infty$ since $c_\az$ is bounded below by $1/4$; additionally $\log c_\az$ is bounded (above and below) since for $K > 4$ we have $c_\az \in [1/4, 3/4]$.

\begin{align}
c &= 
    \lim_{\az \to \infty}\frac{-1}{ \log \az} \eqlinebreakshort +  \sqrt{\frac{1}{c_\az \az \log \az} + 1} \eqlinebreakshort\frac{ \log\left( \sqrt{c_\az \az \log \az} + \sqrt{c_\az \az \log \az+1} \right)}{  \log \az} \\
    &= 0 + 1 \lim_{\az \to \infty} \frac{  \log\left( 2\sqrt{c_\az \az \log \az}  \right)}{  \log \az}\\
    &=  \lim_{\az \to \infty} \frac{\log 2 + \frac{1}{2} \log c_\az + \frac{1}{2} \log \az + \frac{1}{2} \log \log \az}{\log \az}\\
    & = \frac{1}{2}\,.
\end{align}

\end{proof}

\subsection{Approximate Minimax Compander vs. Minimax Compander}

\label{sec::proof_L_appx_minimax_compander}


For any $\az$, $c_\az$ can be approximated numerically. 
To simplify the quantizer, recall we can use $c_\az \approx \frac{1}{2}$ for large $\az$ to get the approximate minimax compander \eqref{eq::appx-minimax-compander}.

This is close to optimal without needing to compute $c_\az$. Here we prove \Cref{thm::approximate-minimax-compander}.


\begin{proof}
Since $\comp^*_\az, \comp^{**}_\az \in \cF^\dagger$, we know that \begin{align}
    \singleloss(p,\comp^*_\az) = L^\dagger(p,\comp^*_\az) ~\text{ and }~ \singleloss(p,\comp^{**}_\az) = L^\dagger(p,\comp^{**}_\az)\,.
\end{align}
We define the corresponding asymptotic local loss functions
\begin{align}
    \locloss^*(x) &= \frac{1}{24} (\comp^*_\az)'(x)^{-2} x^{-1} 
    \\\locloss^{**}(x) &= \frac{1}{24} (\comp^{**}_\az)'(x)^{-2} x^{-1}
\end{align}
so that our goal is to prove
\begin{align}
    \int \locloss^{**} \, dp \leq (1+\varepsilon) \int \locloss^* \, dp\,.
\end{align}

Let $\gamma^* = c_\az (\az \log \az)$ and $\gamma^{**} = \frac{1}{2} (\az \log \az)$ (the constants in $\comp^*_\az$ and $\comp^{**}_\az$ respectively) and let $\phi^*(x) = \arcsinh(\sqrt{\gamma^* x})$ and $\phi^{**}(x) = \arcsinh(\sqrt{\gamma^{**} x})$. Then
\begin{align}
    (\phi^*)'(x) &= \frac{\sqrt{\gamma^*}}{2 \sqrt{x} \sqrt{\gamma^* x+1}} 
    \\ \text{and } ~ (\phi^{**})'(x) &= \frac{\sqrt{\gamma^{**}}}{2 \sqrt{x} \sqrt{\gamma^{**} x+1}} \,.
\end{align}
Note that $\comp^*_\az(x) = \phi^*(x)/\phi^*(1)$ and $\comp^{**}_\az(x) = \phi^{**}(x)/\phi^{**}(1)$. We now split into two cases: (i) $c_\az > 1/2$ and (ii) $c_\az < 1/2$. 

In case (i) (which implies $\gamma^* > \gamma^{**}$, and note that $\gamma^*/\gamma^{**} = 2c_\az \leq 1 + \varepsilon$), we get for all $x \in [0,1]$,
\begin{align}
    \frac{(\phi^*)'(x)}{(\phi^{**})'(x)} &= \sqrt{\frac{\gamma^*}{\gamma^{**}}} \sqrt{\frac{\gamma^{**} x + 1}{\gamma^* x + 1}} 
    \\&\quad\quad \in [1, \sqrt{\gamma^*/\gamma^{**}}] 
    \ \subseteq [1, \sqrt{1+\varepsilon}]
\end{align}
since $\sqrt{\frac{\gamma^{**} x+1}{\gamma^* x+1}} \in [\sqrt{\gamma^{**} / \gamma^*}, 1]$. 
Because $\gamma^* \geq \gamma^{**}$ and $\arcsinh$ is an increasing function, we know that $\phi^*(1) \geq \phi^{**}(1)$.
Thus, for any $x \in [0,1]$,
\begin{align}
    (\comp^{**}_\az)'(x) &= \frac{(\phi^{**})'(x)}{\phi^{**}(1)} \\ &\geq \frac{\frac{1}{\sqrt{1+\varepsilon}}(\phi^{*})'(x)}{\phi^{*}(1)} \\& = \frac{1}{\sqrt{1+\varepsilon}} (\comp^*_\az)'(x)
    \\ \implies (\comp^{**}_\az)'(x)^{-2} &\leq (1 + \varepsilon) (\comp^{*}_\az)'(x)^{-2}
    \\ \implies g^{**}(x) &\leq (1 + \varepsilon) g^*(x)
    \\ \implies \int g^{**} \, dp &\leq (1 + \varepsilon) \int g^* \, dp
\end{align}
which is what we wanted to prove.

Case (ii), where $c_\az < 1/2$ (implying $\gamma^{**} > \gamma^*$) can be proved analogously:
\begin{align}
    \frac{(\phi^{**})'(x)}{(\phi^*)'(x)} & = \sqrt{\frac{\gamma^{**}}{\gamma^*}} \sqrt{\frac{\gamma^* x + 1}{\gamma^{**} x + 1}} 
    \\ & \quad\quad \in [1, \sqrt{\gamma^{**}/\gamma^*}] \subseteq [1, \sqrt{1+\varepsilon}]
\end{align}
which then gives us $(\phi^{**})'(x) \geq (\phi^*)'(x)$ and
\begin{align}
    \phi^{**}(1) &= \int_0^1 (\phi^{**})'(t) \, dt \\ &\leq \sqrt{1 + \varepsilon} \int_0^1 (\phi^{*})'(t) \, dt \\ &\leq (\sqrt{1 + \varepsilon}) \phi^{*}(1) \,.
\end{align}
Thus, for any $x \in [0,1]$,
\begin{align}
    (\comp^{**}_\az)'(x) &= \frac{(\phi^{**})'(x)}{\phi^{**}(1)} \\ &\geq \frac{(\phi^{*})'(x)}{(\sqrt{1+\varepsilon})\phi^{*}(1)} \\& = \frac{1}{\sqrt{1+\varepsilon}} (\comp^*_\az)'(x)
    \\ \implies (\comp^{**}_\az)'(x)^{-2} &\leq (1 + \varepsilon) (\comp^{*}_\az)'(x)^{-2}
    \\ \implies g^{**}(x) &\leq (1 + \varepsilon) g^*(x)
    \\ \implies \int g^{**} \, dp &\leq (1 + \varepsilon) \int g^* \, dp
\end{align}
completing the proof for both cases.
\end{proof}

We show the comparison of the theoretical (asymptotic in $\az$ result) of the approximate minimax compander with the experimental results in \Cref{fig::theoretical_minimax}.

\iflong 


\section{Worst-Case Analysis}

\label{sec::worst-case_analysis}

\newcommand{\len}{\text{len}}

In this section, we prove \Cref{thm::worstcase_power_minimax} which applies both to the minimax compander and the power compander. Since we are dealing with worst-case (i.e. not a random $\bx$) the centroid is not defined; therefore this theorem works with the \emph{midpoint decoder}. Thus, the (raw) decoded value of $x$ is $\bar{y}_{(n_N(x))}$.

Additionally, we are not using the raw reconstruction but the normalized reconstruction, and hence it does not suffice to deal with a single letter at a time. Thus, we will work with a full probability vector $\bx \in \triangle_{\az-1}$.

\begin{proof}[Proof of \Cref{thm::worstcase_power_minimax} and \eqref{eq::power_worst_case_bound} in \Cref{thm::power_compander_results}]

Let $\bx \in \triangle_{\az-1}$ be the vector we are quantizing, with $i$th element (out of $\az$, summing to $1$) $x_i$; since we are dealing with midpoint decoding, our (raw) decoded value of $x_i$ is $\bar{y}_{n_N(x_i)}$. For simplicity, let us denote it as $\bar{y}_i$, and the normalized value as $\normvar_i = \bar{y}_i / \big(\sum_j \bar{y}_j\big)$.

Let $\delta_i = \bar{y}_i - x_i$ be the difference between the raw decoded value $\bar{y}_i$ and the original value $x_i$. Then:
\begin{align}
    D_{\kl}&\left( \bx \| \bnormvar \right) = \sum_{i} x_i \log \frac{x_i}{\normvar_i} \\
    & = \sum_{i} x_i \log \frac{x_i}{\bar{y}_i} + \log\Big(\sum_i \bar{y}_i\Big) \\ 
    & = \sum_{i} (\bar{y}_i - \delta_i) \log \frac{\bar{y}_i - \delta_i}{\bar{y}_i} + \log\Big(1 + \sum_i \delta_i\Big)
    \,.
\end{align}
Next we use that $\log(1+\newvar) \leq \newvar$.
\begin{align}
     D_{\kl}\left( \bx \| \bnormvar \right)& \leq \sum_{i} (\bar{y}_i - \delta_i) \frac{-\delta_i}{\bar{y}_i} + \sum_i \delta_i \label{eq::use_log_ineq} \\ 
    &=\sum_i -\delta_i + \sum_i \frac{\delta_i^2}{\bar{y}_i} + \sum_i \delta_i\\
    & =\sum_i \frac{(\bar{y}_i - x_i)^2}{\bar{y}_i}
\end{align}
(note that in \eqref{eq::use_log_ineq} we used the inequality $\log(1+\newvar) \leq \newvar$ on \emph{both} appearances of the logarithm, as well as the fact that $\bar{y}_i - \delta_i = x_i \geq 0$).


We now consider each bin $I^{(n)}$ induced by $\comp$. For simplicity let the dividing points between the bins be denoted by 
\begin{align}
    \beta_{(n)} = f^{-1}\Big(\frac{n}{N}\Big) = \bar{y}_{(n)} + r_{(n)}/2
\end{align}
(where $r_{(n)}$ is the width of the $n$th bin) so that $I^{(n)} = (\beta_{(n-1)}, \beta_{(n)}]$. 
Since all the
companders we are discussing are strictly monotonic, there is no ambiguity. Then, the Mean Value
Theorem (which we can use since the minimax compander, the approximate minimax compander, and the
power compander are all continuous and differentiable
) says that, for each $I^{(n)}$ there is some value $\newvar_{(n)}$ such that
\begin{align}
    f'(\newvar_{(n)}) = \frac{f(\beta_{(n)}) - f(\beta_{(n-1)})}{\beta_{(n)} - \beta_{(n-1)}} = N^{-1} r_{(n)}^{-1}
\end{align}
(since $f(\beta_{(n)}) - f(\beta_{(n-1)}) = n/N - (n-1)/N = 1/N$ and $\beta_{(n)} - \beta_{(n-1)} = r_{(n)}$ by definition).


Thus, we can re-write this as follows: 
\begin{align}
    r_{(n)} = N^{-1} f'(\newvar_{(n)})^{-1} \, .
\end{align}
We will also denote the following for simplicity: $I_i = I^{(n_N(x_i))}$; $r_i = r_{(n_N(x_i))}$; and $\newvar_i = \newvar_{(n_N(x_i))}$ (the bin, bin length, and bin mean value corresponding to $x_i$).


Trivially, since $\newvar_i \in I_i$, we know that $\frac{\newvar_i}{2} \leq \bar{y}_i$. 
Thus, we can derive (since $\bar{y}_i$ is the midpoint of $I_i$ and $x_i \in I_i$, we know that $|\bar{y}_i - x_i| \leq r_i/2$) that
\begin{align}
    D_{\kl} \left( \bx \| \bnormvar \right)& \leq \sum_i \frac{(\bar{y}_i - x_i)^2}{\bar{y}_i}\\
    & \leq \frac{1}{4}\sum_i \frac{r_i^2}{\bar{y}_i}\\
    & \leq \frac{1}{4}\sum_i \frac{1}{N^2 (\newvar_i / 2) (f'(\newvar_i))^2}\\
    & = \frac{1}{2} N^{-2} \sum_i \frac{1}{ \newvar_i (f'(\newvar_i))^2}\,. \label{eq::worst-case-general-mean-value-bound}
\end{align}
Note that while we are using midpoint decoding for our quantization, for the purposes of analysis, it is more convenient to express the all the terms in the KL divergence loss using the mean value. 

We now examine the worst case performance of the three companders: the power compander, the minimax compander, and the approximate minimax compander.

\emph{Power compander:} In this case, we have
\begin{align}
   f(x) = x^{s} \text{ and } f'(x) = s x^{s-1}
\end{align}
for $s = \frac{1}{\log \az}$ (which is optimal for minimizing raw distortion against worst-case priors). This yields
\begin{align}
    D_{\kl}\left( \bx \| \bnormvar \right)& \leq \frac{1}{2} N^{-2} s^{-2} \sum_i \frac{1}{\newvar_i \newvar_i^{2s - 2}}\\
    & = \frac{1}{2} N^{-2} s^{-2}\sum_i \newvar_i^{1 - 2s}
    \,.
\end{align}

So long as $s < 1/2$ (which occurs for $\az > 7$), the function $\newvar_i^{1 - 2s}$ is concave in $\newvar_i$. Thus, replacing all $\newvar_i$ by their average will increase the value. Furthermore, $\az^s = \az^{\frac{1}{\log \az}} = e$. Thus, we can derive: 
\begin{align}
    D_{\kl}\left( \bx \| \bnormvar \right)& \leq \frac{1}{2} N^{-2} s^{-2} \az \left(\frac{\sum_i \newvar_i}{\az}\right)^{1 - 2s} \\
    & = \frac{1}{2} N^{-2} (\log^2 \az) e^2 \Big(\sum_i \newvar_i\Big)^{1 - 2s} \\
     & \leq \frac{e^2}{2} N^{-2} (\log^2 \az)  \max\Big\{1, \sum_i \newvar_i\Big\}\label{eq::worst_case_power_bound_max}
     \,.
\end{align}

Next, we need to bound $\max\left\{1, \sum_i \newvar_i\right\}$. Assume that $\sum_i \newvar_i > 1$ (otherwise our bound is just $1$). Then, we note the following: $\sum_i x_i = 1$ by definition; $s^{-1} = \log \az$; and
\begin{align}
    r_i = N^{-1} f'(\newvar_i)^{-1} = N^{-1} s^{-1} \newvar_i^{1-s} \,.
\end{align}
This allows us to make the following derivation:
\begin{align}
    \sum_i | \newvar_i - x_i | &\leq \frac{1}{2} \sum_i r_i \\
    \implies \sum_i \newvar_i &\leq \sum_i x_i + \frac{1}{2} N^{-1} s^{-1} \sum_i \newvar_i^{1-s} \\
     &\leq 1 + \frac{1}{2} N^{-1} \log(\az) \az \left(\frac{\sum_i \newvar_i}{\az}\right)^{1-s} \label{eq::concavity-saves-the-day}\\ 
     &= 1 + \frac{e}{2} N^{-1} \log(\az)  \Big(\sum_i \newvar_i\Big)^{1-s} \label{eq::zounds} \\
     &\leq 1 + \frac{e}{2} N^{-1} \log(\az) \Big(\sum_i \newvar_i\Big) \,.
\end{align}
We get \eqref{eq::concavity-saves-the-day} by the same concavity trick: because $\newvar_i^{1-s}$ is concave in $\newvar_i$, replacing each individual $\newvar_i$ with their average can only increase the sum. We get \eqref{eq::zounds} because $\az^s = \az^{\frac{1}{\log \az}} = e$.

We can combine terms with $\sum_i \newvar_i$.
\begin{align}
    \left(1 - \frac{e}{2} N^{-1} \log \az \right)  \sum_i \newvar_i  \leq 1\,.
\end{align}
This implies that if $N > \frac{e}{2} \log \az$, then
\begin{align}
    \sum_i \newvar_i  &\leq \frac{1}{1 - \frac{e}{2} N^{-1} \log \az}\\
    &= \frac{N}{N -\frac{e}{2} \log \az} = 1 + \frac{e}{2} \frac{\log \az}{N - \frac{e}{2}\log \az} \label{eq::worst_case_power_bound_sumy}\,.
\end{align}
 Furthermore, if $N \geq e \log \az$, we get that $\sum_i \newvar_i \leq 2$. Combining \eqref{eq::worst_case_power_bound_max} with \eqref{eq::worst_case_power_bound_sumy}, we have
\begin{align}
    \eqstartnonumshort D_{\kl}( \bx \| \bnormvar ) \eqbreakshort 
    &\leq \frac{e^2}{2} N^{-2} (\log^2 \az) \max\left\{1,  \left( 1 + \frac{e}{2} \frac{\log \az}{N - \frac{e}{2}\log \az}\right) \right\}\\
    & = \frac{e^2}{2} N^{-2} (\log^2 \az) \left( 1 + \frac{e}{2} \frac{\log \az}{N - \frac{e}{2}\log \az}\right)
\end{align}
for $N > \frac{e}{2} \log \az$. When $N \geq e \log \az$, this becomes the pleasing
\begin{align}
    D_{\kl}( \bx \| \bnormvar ) \leq e^2 N^{-2} \log^2 \az\,.
\end{align}

\emph{Minimax compander and approximate minimax compander:} Since they are very similar in form, it is convenient to do both at once. Let $c$ be a constant which is either $c_\az$ if we are considering the minimax compander, or $\frac{1}{2}$ if we're considering the approximate minimax compander; and let $\gamma = c \az \log \az$. Then our compander and its derivative will have the form
\begin{align}
    f(x) &= \frac{\arcsinh(\sqrt{\gamma x})}{\arcsinh(\sqrt{\gamma})} \\
    f'(x) &= \frac{1}{2 \arcsinh(\sqrt{\gamma})} \frac{\sqrt{\gamma}}{\sqrt{x}\sqrt{1 - \gamma x} } \\
    \implies f'(x)^{-1} &= 2 \arcsinh(\sqrt{\gamma}) \sqrt{\frac{x}{\gamma} + x^2}\,.
\end{align}
This then yields that
\begin{align}
    r_i &= N^{-1} f'(\newvar_i)^{-1} \\
    &= 2 N^{-1} \arcsinh(\sqrt{\gamma}) \sqrt{\frac{\newvar_i}{\gamma} + \newvar_i^2}\,.
\end{align}


Then we can derive from \eqref{eq::worst-case-general-mean-value-bound} that
\begin{align}
    &D_{\kl}( \bx \| \bnormvar) \leq \frac{1}{2} N^{-2} (2 \arcsinh(\sqrt{\gamma}))^2 \sum_i \frac{\frac{\newvar_i}{\gamma} + \newvar_i^2}{\newvar_i} \\
    &= 2 N^{-2} (\arcsinh(\sqrt{\gamma}))^2 \left(\frac{\az}{\gamma} + \sum_i \newvar_i \right)
    \\ &\leq 2 N^{-2} (\arcsinh(\sqrt{\gamma}))^2 \left(\frac{\az}{\gamma} + \max\left\{1,\sum_i  \newvar_i\right\} \right) \label{eq::worst-case-kl-bound-minimax}\,.
\end{align}
Assuming that $\sum_i \newvar_i > 1$ (otherwise the bound is just $1$),
\begin{align}
    \sum_i | \newvar_i - x_i | &\leq \sum_i \frac{r_i}{2} \\
    \implies \sum_i \newvar_i &\leq \sum_i x_i + N^{-1} \arcsinh(\sqrt{\gamma}) \sum_i  \sqrt{\frac{\newvar_i}{\gamma} + \newvar_i^2} \\
    &= 1 + N^{-1} \arcsinh(\sqrt{\gamma}) \sum_i  \sqrt{\frac{\newvar_i}{\gamma} + \newvar_i^2} \label{eq::minimax_sum_yi_bound}\,.
\end{align}
To bound the sum in \eqref{eq::minimax_sum_yi_bound}, using the fact that $\sqrt{\cdot}$ is concave (so averaging the inputs of a sum of square roots makes it bigger), we get
\begin{align}
\sum_i  \sqrt{\frac{\newvar_i}{\gamma} + \newvar_i^2}
    & \leq \sum_i  \sqrt{\frac{\newvar_i}{\gamma}} + \sqrt{\newvar_i^2} \\
    &\leq  \az \left(\frac{\sum_i \newvar_i}{\az (c \az \log \az)}\right)^{1/2} + \sum_i \newvar_i  \\
    &\leq  \left(\frac{\sum_i \newvar_i}{c  \log \az}\right)^{1/2} + \sum_i \newvar_i  \\
    &\leq  \frac{\sum_i \newvar_i}{(c \log \az)^{1/2}} + \sum_i \newvar_i  \\
    &= \Big( \sum_i \newvar_i \Big) \left(1 + \frac{1}{(c \log \az)^{1/2}}\right) \\
     &=  \eta\Big( \sum_i \newvar_i\Big) 
\end{align}
where $\eta = 1 +(c \log \az)^{-1/2}$.
Then \eqref{eq::minimax_sum_yi_bound} becomes
\begin{align}
    \sum_i \newvar_i &\leq 1 + \eta N^{-1} \arcsinh(\sqrt{\gamma}) \Big( \sum_i \newvar_i\Big)
    \,.
\end{align}
Since we have $\sum_i \newvar_i$ on both sides of the equation, we can combine these terms like before.
\begin{align}
    &(1 - \eta N^{-1} \arcsinh(\sqrt{\gamma})) \sum_i \newvar_i \leq 1 \\
    \implies &\sum_i \newvar_i \leq \frac{N}{N - \eta \arcsinh(\sqrt{\gamma})}
\end{align}
if $N > \eta \arcsinh (\sqrt{\gamma}) $.
Combining these and using the expression $\arcsinh(\sqrt{\newvar}) = \log (\sqrt{\newvar + 1} + \sqrt{\newvar}) \leq \log(2 \sqrt{\newvar} + 1)$ we get from \eqref{eq::worst-case-kl-bound-minimax} that

\begin{align}
    \eqstartnonumshort D_{\kl}( \bx\| \bnormvar) \eqbreakshort
    &\leq 2 N^{-2} (\arcsinh(\sqrt{\gamma}))^2 \eqlinebreakshort \left(\frac{\az}{\gamma} + \frac{N}{N - \eta \arcsinh(\sqrt{\gamma})}\right) \\
    & = 2 N^{-2} (\arcsinh(\sqrt{c \az \log \az}))^2 \eqlinebreakshort \left(\frac{\az}{c \az \log \az} + \frac{N}{N - \eta \arcsinh(\sqrt{c \az \log \az})}\right) \\
    & \leq 2N^{-2} (\log(2\sqrt{c \az \log \az} + 1))^2 \eqlinebreakshort\left(\frac{1}{c  \log \az} + \frac{N}{N - \eta \log(2\sqrt{c \az \log \az} + 1)}\right) \,.
\end{align}
This holds for all $N > \eta \log(2\sqrt{c \az \log \az} + 1)$; furthermore, if $N > 3 \eta\log(2\sqrt{c \az \log \az} + 1)$, the second term in the parentheses is at most $3/2$ (and if $N$ is larger, this term goes to $1$). 
Recall $c$ is between $1/4$ and $3/4$ (as it is either $c_\az$ or $1/2$) when $\az >4$. Then, we know that for all $\az >4$ that $\eta < 2.57 \dots$ and $1/(c \log \az) < 5/2$.
Thus, for 
\begin{align}
N &> 8\log(2\sqrt{c \az \log \az} + 1)
\\ &> 3 (2.6) \log(2\sqrt{c \az \log \az} + 1)
\\ &> 3 \eta\log(2\sqrt{c \az \log \az} + 1)
\end{align}
we can bound the entire parenthesis term by $4$. Then,
\begin{align}
    D_{\kl}&( \bx \| \bnormvar)
    \leq 8 N^{-2} (\log(2\sqrt{c \az \log \az} + 1))^2
    \\ &\leq 8 N^{-2} (\log(3\sqrt{c \az \log \az}))^2
    \\ &= 2 N^{-2} (\log (c \az \log \az) + 2\log 3)^2 \label{eq::numerical-bd-01}
    \\ & = 2 N^{-2} \Big(1 + O\Big(\frac{\log \log \az}{\log \az} \Big)\Big) \log^2 \az \,.
\end{align}
Note that whether $c$ is $c_\az$ or $1/2$, it is always between $1/4$ and $3/4$, and so it has no effect on the order of growth. We also note that the above (stated more crudely) is an order of growth within $O(N^{-2}\log^2 \az)$.

We can obtain a relatively clean upper bound on the error term $O\big(\frac{\log \log \az}{\log \az}\big)$ by setting $c = 3/4$ (which is larger than the whole range of possible values); in this case, numerically computing \eqref{eq::numerical-bd-01}, we get that the error term is at most $18 \frac{\log \log \az}{\log \az}$ for $\az > 4$. The quantity  $18 \frac{\log \log \az}{\log \az}$ has a maximum value of around $6.62183$.
\end{proof}
\fi 

The statement above (which is used for \Cref{thm::worstcase_power_minimax}) computes constants for our bound which work for both the minimax compander and approximate minimax compander and only requires that $\az > 4$.

If we are only concerned with large alphabet sizes,
to improve the constants for the approximate minimax compander (where $c = 1/2$), we can instead use the following: For $\az \geq 55$ and $N > 6 \log(2\sqrt{c \az \log \az} + 1)$,
\begin{align}
     D_{\kl}&( \bx \| \bnormvar) \leq  N^{-2} \Big(1 + 6\frac{\log \log \az}{\log \az} \Big) \log^2 \az \,.
\end{align}

\section{Uniform Quantization}

\label{sec::uniform}

In this section, we examine of the performance of uniform quantization under KL divergence loss. This is the same as applying the truncate compander.

First, we will prove \eqref{eq::uniform_achieve} of \Cref{rmk::loss_with_uniform}.

\begin{proof}[Proof of \eqref{eq::uniform_achieve}]

Let $p$ be the single-letter distribution which is uniform over $\left[0, {2}/{\az}\right]$ for each symbol. Specifically, the probability density function is
\begin{align}
    p(x) = \frac{\az}{2} \text{ for } x \in \left[0, \frac{2}{\az}\right]
\end{align}
and since the expected value under $p$ is $1/\az$, we have that $p \in \cP_{1/\az}$.

We want to compute the single-letter loss for $p$, but notice that we cannot use \Cref{thm::asymptotic-normalized-expdiv} to do so, since the quantity $L^\dagger(p, f)$ is not finite here (this is not surprising since we are showing a case where the dependence of $\widetilde{L}(p,f,N)$ on $N$ is larger than $\Theta(N^{-2})$). Thus we need to compute the single-letter loss starting with 
\eqref{eq::raw-ssl}.

\begin{align}
     \singleloss(p, \comp, N) &=  \bbE_{X \sim p} \big[ X \log ( X/\widetilde{y}(X)) \big]
     \\& = \sum_{n = 1}^N \int_{I^{(n)}} p(x) x \log \frac{x}{\tilde y_n} dx
     \\& = \sum_{n = 1}^N \int_{I^{(n)}} \bbI\{x < 2/\az\}\frac{\az}{2} x \log \frac{x}{\tilde y_n} dx
     \\ &\geq \frac{\az}{2}\sum_{n = 1}^{\lfloor 2N /\az \rfloor} \int_{n/N}^{(n+1)/N} x \log \frac{x}{\tilde y_n} dx 
      \\ &= \frac{\az}{2}\sum_{n = 1}^{\lfloor 2N /\az \rfloor} \int_{\tilde y_n - \frac{r}{2}}^{\tilde y_n + \frac{r}{2}} x \log \frac{x}{\tilde y_n} dx 
\end{align}
where we let $r = 1/N$.

Using the Taylor expansion for $\log(1+x)$, we can get that
\begin{align}
    \int_{\tilde y_n - \frac{r}{2}}^{\tilde y_n + \frac{r}{2}} x \log \frac{x}{\tilde y_n} dx = \frac{r^3}{24 \tilde y_n} + O\left(\frac{r^5}{\tilde y_n^3} \right)\,.
\end{align}

This gives that
\begin{align}
    \singleloss(p, \comp, N) &\geq \frac{\az}{2}\sum_{n = 1}^{\lfloor 2N /\az \rfloor} \frac{r^3}{24 \tilde y_n} - O\left(\frac{r^5}{\tilde y_n^3} \right)
    \\& = \frac{\az}{48} \frac{1}{ N^3} \sum_{n = 1}^{\lfloor 2N /\az \rfloor} \frac{1}{ \tilde y_n} - \sum_{n = 1}^{\lfloor 2N /\az \rfloor} O\left(\frac{1}{N^5 \tilde y_n^3} \right)\,.
\end{align}

Because the intervals are uniform, the centroid is the midpoint of each interval, which means that
\begin{align}
    \tilde y_n = \frac{n - 1/2}{N}\,.
\end{align}
This gives that
\begin{align}
     \sum_{n = 1}^{\lfloor 2N /\az \rfloor} \frac{1}{\tilde y_n} 
     &= \sum_{n = 1}^{\lfloor 2N /\az \rfloor} \frac{1}{ \frac{n - 1/2}{N}} \\&> N \sum_{n = 1}^{\lfloor 2N /\az \rfloor} \frac{1}{ n}
     \\ & > C_1 N \log (2N /\az )\,.
\end{align}
We also need to bound the smaller order terms to make sure they are not too big, 
\begin{align}
     \sum_{n = 1}^{\lfloor 2N /\az \rfloor} \frac{1}{\tilde y_n^3} 
     &< N^3\left(2^3 + \sum_{n = 2}^{\lfloor 2N /\az \rfloor} \frac{1}{ (n - 1)^3}\right)
     \\ & = N^3 C_3\,.
\end{align}
Combining these give
\begin{align}
     \singleloss(p, \comp, N) & \geq \frac{\az}{48 N^3} C_1 N \log (2N /\az ) -  O\left(\frac{1}{N^2} \right)
     \\& = \Omega\left(\frac{\az}{N^2} \log N \right)\,.
\end{align}
All the inequalities we used for the lower bound can easily be adjusted to make an upper bound. For instance, the floor function in the summation can be replaced with a ceiling function. The quantity $\tilde y_n$ can be rounded up or down and the inequalities approximating sums can have different multiplicative constants. This gives that for $p(x)$, we have
\begin{align}
    \singleloss(p, \comp, N) = \Theta\left(\frac{\az}{N^2} \log N \right)\,.
\end{align}

Combining this single-letter density with the proof of  \Cref{prop::bound_worstcase_prior_exist} gives a prior $P$ over the simplex so that
\begin{align}
    \rawloss_\az(P, \comp, N) = \az \singleloss(p, \comp, N) =  \Theta\left(\frac{\az^2}{N^2} \log N\right)\,.\label{eq::raw_loss_showing_uniform_kover2}
\end{align}
when $f$ is the truncate compander.

We want to relate the raw loss in \eqref{eq::raw_loss_showing_uniform_kover2} to the expected loss $\cL_\az(P, \comp, N)$. This requires us to look at the normalization constant.

\begin{align}
    \bbE_{\bX \sim P}& \left[\log \left(\sum_{k = 1}^\az \tilde y_k \right) \right] 
    \\&= \bbE_{\bX \sim P} \left[\log \left(\sum_{k = 1}^\az \tilde y_k  - \sum_{k = 1}^\az  x_k + \sum_{k = 1}^\az x_k\right) \right]
    \\ & =  \bbE_{\bX \sim P} \left[\log \left(\sum_{k = 1}^\az \delta_k + 1\right) \right]
\end{align}
where $\delta_k = \tilde y_k - x_k$. We can bound
\begin{align}
    -\frac{1}{2N}\leq \delta_k \leq  \frac{1}{2N}
    \\ -\frac{\az}{2N}\leq \sum_{k = 1}^\az \delta_k \leq  \frac{\az}{2N}\,.
\end{align}
Additionally, we know that by construction,
\begin{align}
    \bbE_{\bX \sim p} \left[\sum_{k = 1}^\az \delta_k \right] = \sum_{k = 1}^\az (\tilde{y}_k - x_k) =  0
\end{align}
since $\tilde{y}_k$ is produced by the centroid decoder. Therefore, since $\log$ is concave, we have
\begin{align}
    \bbE_{\bX \sim P} &\left[\log \left(\sum_{k = 1}^\az \delta_k + 1\right) \right] 
    \\ &\geq \frac{1}{2} \left( \log \left(1 - \frac{\az}{2N}\right) + \log \left(1 + \frac{\az}{2N}\right) \right)
    \\ &\geq \frac{1}{2} \cdot 2 \cdot \frac{-(\az/(2N))^2}{2}
    \\ &= -\frac{1}{8} \az^2 N^{-2}
\end{align}
where the second inequality follows from the Taylor series of $\log(1+w)$. But this means that
\begin{align}
    -\bbE_{\bX \sim P} \left[\log \left(\sum_{k = 1}^\az \delta_k + 1\right) \right] = O\left(\frac{\az^2}{N^2}\right)
\end{align}
and hence by the proof of \Cref{lem::im-a-barby-girl}
\begin{align}
    \cL(P,f,&N) 
    \\ &= \widetilde{\cL}(P,f,N) + \bbE_{\bX \sim P} \left[\log \left(\sum_{k = 1}^\az \delta_k + 1\right) \right]
    \\ &=  \Theta\left(\frac{\az^2}{N^2} \log N\right) + O\left(\frac{\az^2}{N^2}\right)
    \\ &= \Theta\left(\frac{\az^2}{N^2} \log N\right)
\end{align}
since the extra $\log N$ factor causes the first term to dominate the second.
\end{proof}

The density $p(x)$ which produces \eqref{eq::raw_loss_showing_uniform_kover2} is not necessarily the worst possible density function in terms of the dependence of raw loss on the granularity $N$; however, it achieves simultaneously a worse-than-$\Theta(N^{-2})$ dependence on $N$ and a very large dependence on the alphabet size $\az$ (namely $\Theta(\az^2)$) with the uniform quantizer (i.e. truncation), and is therefore an ideal example of why the uniform quantizer is vulnerable to having poor performance.

For illustration, we will also sketch an analysis of the performance of the uniform prior against prior $p(x) = (1-\alpha) x^{-\alpha}$ where $\alpha = \frac{\az-2}{\az-1}$ (as mentioned in \Cref{rmk::loss_with_uniform}); this is constructed so that $\bbE_{X \sim p}[X] = 1/\az$ and hence $p \in \cP_{1/\az}$. The analysis shows that the loss is proportional to $N^{-(2-\alpha)}$.

Let $N$ be large; for this sketch we will treat $p$ as roughly uniform over any bin $I^{(n)} := ((n-1)/N, n/N]$. Note that this does not strictly hold for small $n$ (no matter how large $N$ gets, $p$ never becomes approximately uniform over e.g. $I^{(1)}$) but this inaccuracy is most pronounced on the first interval $I^{(1)} = (0,1/n]$. Additionally, $p$ on $(0,1/n]$ is a stretched and scaled version of $p$ on $(0,1]$; for $n = 2, 3, \dots, N$, the distribution $p$ over $I^{(n)}$ is closer to being uniform, and hence the distortion over any bin under $p$ can be bounded below (and above) by a constant multiple of the distortion under a uniform distribution (the constant can depend on $\az$ but not $N$). Thus for determining the dependence of the (raw) distortion on $N$, this simplification does not affect the result.

Then, the expected distortion given that $X \in I^{(n)}$ is proportional (roughly) to $N^{-2} (n/N)^{-1} = n^{-1} N^{-1}$ (since the interval has width $\propto \, N^{-1}$ and is centered at a point $\propto \, n/N$), and the probability of falling into $I^{(n)}$ is proportional to $(n/N)^{1-\alpha} - ((n-1)/N)^{1-\alpha} \approx n^{-\alpha} N^{-(1-\alpha)}$; therefore (up to a multiplicative factor which is constant in $N$) the expected distortion is roughly
\begin{align}
    \sum_{n=1}^N n^{-1} N^{-1} n^{-\alpha} N^{-(1-\alpha)} = N^{-(2-\alpha)} \sum_{n=1}^N n^{-(1+\alpha)} \,.
\end{align}
But, noting that $\sum n^{-(1+\alpha)}$ is a convergent series, we can apply an upper bound
\begin{align}
    \sum_{n=1}^N n^{-(1+\alpha)} < \sum_{n=1}^\infty n^{-(1+\alpha)}
\end{align}
which is a (finite) constant which depends only on $\az$ (through $\alpha$) but not $N$. Hence, we obtain our $\Theta(N^{-(2-\alpha)}) = \Theta(N^{-2} \cdot N^{\alpha})$ order for the distortion. We note that as discussed this is worse than $\Theta(N^{-2} \log N)$.

\section{Connection to Information Distillation Details}


\label{sec::info_distillation_detail}

In this section, we go over the technical results connecting quantizing probabilities with KL divergence and information distillation (discussed in \Cref{sec::info_distillation_main}), in particular the proof of \Cref{prop:infodist}, which shows that information distillers and quantizers under KL divergence have a close connection.

In this section, we will use the notation $\widetilde{B}$ to denote $h(B)$. We denote by $P_A, P_B$ the marginals of $A$ and $B$ under the joint distribution $P_{A,B}$.

\subsection{Equivalent Instances of Information Distillation and Simplex Quantization}

%

We consider an information distillation instance, consisting of a joint probability distribution $P_{A,B}$ over $\cA \times \cB$ where $|\cA| = \az$ (and $\cB$ can be arbitrarily large or even uncountably infinite) and a number of labels $M$ to which we can distill; WLOG we will assume $\cA = [\az]$. The objective of information distillation is to find a distiller $h : \cB \to [M]$ which preserves as much mutual information with $A$ as possible, i.e. minimizes the loss
\begin{align}
    \idloss(P_{A,B}, h) := I(A;B) - I(A;\widetilde{B})
\end{align}
where $\widetilde{B} = h(B)$.\footnote{We do not include the parameter $M$ in the loss expression because it is already implicitly included as the range of the distiller $h$.} We denote an instance of the information distillation problem as $(P_{A,B}, M)_{\ID}$.

What is important about $b \in \cB$ for information distillation is what $B = b$ implies about $A$. We therefore denote by $\bx(b) \in \triangle_{\az-1}$ the conditional probability of $A$ given $B = b$, i.e.
\begin{align}
    x_a(b) = P_{A|B}(a|b) = \bbP[A = a \, | \, B = b] \,.
\end{align}

This then suggests a way to define the equivalent simplex quantization instance to a given information distillation instance. Recall that a simplex quantization instance (with average KL divergence loss) consists of a prior $P$ over $\triangle_{\az-1}$ and a number of quantization points $M$; the goal is to find a quantizer $\bnormvar : \triangle_{\az-1} \to \triangle_{\az-1}$ such that its range $\cZ$ has cardinality $M$ (or less) and which minimizes the expected KL divergence loss
\begin{align}
    \sqloss(P,\bnormvar) := \bbE_{\bX \sim P}[D_{\kl}(\bX \| \bz(\bX) )]\,.
\end{align}
We denote an instance of the simplex quantization problem (with average KL divergence loss) as $(P,M)_{\SQ}$.

\begin{definition}
    We call an information distillation instance $(P_{A,B},M)_{\ID}$ and a simplex quantization instance $(P,M)_{\SQ}$ \emph{equivalent} if they use the same value of $M$ and $P$ is the push-forward distribution induced by $\bx(\cdot)$ on $P_B$, i.e.
    \begin{align}
        B \sim P_B \implies \bX = \bx(B) \sim P\,.
    \end{align}
    We denote this $(P_{A,B},M)_{\ID} \equiv (P,M)_{\SQ}$.
\end{definition}

We show that any instance of one problem has at least one equivalent instance of the other.

\begin{lemma} \label{lem::equiv_instances_01}
    For any information distillation instance $(P_{A,B},M)_{\ID}$, there is some $(P,M)_{\SQ}$ such that $(P_{A,B},M)_{\ID} \equiv (P,M)_{\SQ}$ and vice versa.
\end{lemma}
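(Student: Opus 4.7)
The plan is to prove each direction of the correspondence by an explicit construction, so that the push-forward condition defining equivalence is immediate.

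For the forward direction, given an information distillation instance $(P_{A,B}, M)_{\ID}$, I would simply take $P$ to be the push-forward distribution on $\triangle_{\az-1}$ induced by the map $b \mapsto \bx(b)$ where $x_a(b) = P_{A|B}(a|b)$. Since $\bx(b) \in \triangle_{\az-1}$ for every $b$ (its entries are nonnegative and sum to $1$), this is a well-defined probability distribution on the simplex. The instance $(P, M)_{\SQ}$ then satisfies $(P_{A,B}, M)_{\ID} \equiv (P, M)_{\SQ}$ by definition. The only subtlety is that $P_{A|B}$ is only defined up to $P_B$-a.s., but since the push-forward is also only determined $P_B$-a.s., this is harmless.

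For the reverse direction, given $(P, M)_{\SQ}$, I would construct $(P_{A,B}, M)_{\ID}$ by setting $\cB = \triangle_{\az-1}$, letting $B \sim P$, and defining the conditional distribution of $A$ given $B = \bx$ to be the categorical distribution on $[\az]$ with parameter vector $\bx$ (i.e. $P_{A|B}(a|\bx) = x_a$). Then $\bx(B) = B$ (on the support of $P$), so the push-forward of $P_B = P$ under $\bx(\cdot)$ is $P$ itself. This yields $(P_{A,B}, M)_{\ID} \equiv (P, M)_{\SQ}$ as required.

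I do not anticipate a serious obstacle, since this lemma is essentially definitional; the main thing to be careful about is that both constructions respect the measure-theoretic conventions (the first uses disintegration to get $P_{A|B}$, the second builds the joint from a measurable kernel $\bx \mapsto \mathrm{Cat}(\bx)$, which is continuous in $\bx$ and hence measurable). Once those routine checks are noted, the proof is complete in a few lines.
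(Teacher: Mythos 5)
Your proof is correct and takes essentially the same approach as the paper: both directions use the same explicit constructions (push-forward of $P_B$ under $\bx(\cdot)$ in one direction, and $\cB = \triangle_{\az-1}$ with $P_{A|B}(a|\bx) = x_a$ in the other). The measure-theoretic remarks you add are sound and merely make explicit what the paper leaves implicit.
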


\begin{proof}
    In either case, given the limit on the number of labels/quantization points $M$, we use it for the equivalent instance.

    Given an information distillation instance with joint distribution $P_{A,B}$, we have a well-defined function $\bx : \cB \to \triangle_{\az-1}$ and therefore the push-forward distribution $P$ of $P_B$ under $\bx(\cdot)$ is well-defined, giving us the equivalent instance $(P,M)_{\SQ}$.

    Given a simplex quantization instance with prior $P$, we let $\cB = \triangle_{\az-1}$ and let $P_{A,B} = P_{A|B} P_B$ given by $P_B = P$ (a probability distribution over $\triangle_{\az-1}$) and $P_{A|B}(a|b) = x_a(b)$, i.e. $A$ is distributed on $\cA = [\az]$ according to $B \in \triangle_{\az-1}$. Then $\bx(\cdot)$ is just the identity function and therefore $P = P_B$ is the push-forward distribution as we need.
\end{proof}
Note that each information distillation instance $(P_{A,B},M)_{\ID}$ has a unique equivalent simplex quantization instance (since $P$ is determined by being the push-forward distrbution of $P_B$), whereas each simplex quantization instance $(P,M)_{\SQ}$ may have many different equivalent information distillation instances, as $\cB$ can be arbitrarily large and elaborate.

The goal will be to show that if we have equivalent instances $(P_{A,B},M)_{\ID} \equiv (P,M)_{\SQ}$ then a distiller $h$ for $(P_{A,B},M)_{\ID}$ will have an `equivalent' quantizer $\bnormvar$ for $(P,M)_{\SQ}$ (achieving the same loss) and vice versa. This is generally achieved by the following scheme: we arbitrarily label the $M$ elements of $\cZ$ as $\bnormvar^{(j)}$ for $j \in [M]$, so
\begin{align}
    \cZ = \{\bnormvar^{(1)}, \dots, \bnormvar^{(M)}\} \, .
\end{align}
Then we will generally have equivalence between $h$ and $\bnormvar$ if the following relation holds:
\begin{align}
    \bnormvar(\bx(b)) = \bnormvar^{(h(b))} ~~\text{ for all } b \in \cB \,.
\end{align} 
Then we will derive
\begin{align}
    \idloss(P_{A,B}, h) = \sqloss(P, \bnormvar) \,.
\end{align}

However, as mentioned, this may be true (and/or possible) only if $h$ or $\bnormvar$ avoid certain trivial inefficiencies, hence the inequalities in \Cref{prop:infodist}. These will be formally defined and discussed in the following subsections.

\subsection{Separable Information Distillers}

We consider what happens when we have $b,b'$ such that $\bx(b) = \bx(b')$, i.e. $B = b$ and $B = b'$ induce the same conditional probability for $A$ over $\cA$. In this case, in the `equivalent' simplex quantization instance, the quantizer $\bz$ will quantize $\bx = \bx(b) = \bx(b')$ to a single value $\bnormvar^{(j)} \in \cZ$, while the distiller has the option of assigning $h(b) \neq h(b')$; if so, it is not clear what value the `equivalent' quantizer $\bz$ will assign to $\bx = \bx(b) = \bx(b')$. However, we will show that we can ignore such cases. We define:

\begin{definition} \label{def:separable-distillers}
    We call a quantizer $h$ \emph{separable} if for any $b,b' \in \cB$,
\begin{align}
    \bx(b) = \bx(b') \implies h(b) = h(b')
\end{align}
i.e. if $b$ and $b'$ induce the same conditional probability vector for $A$, they are assigned the same quantization label. 
\end{definition}
We call the set of information distillers $\cH$ and the set of separable information distillers $\cH_{\sf{sep}}$. 

Since the important attribute of any $b \in \cB$ (for information distillation) is how $B = b$ affects the distribution of $A$, there is no reason why $b, b' \in \cB$ should be assigned different labels by the distiller if $\bx(b) = \bx(b')$; thus, intuitively, it is clear that considering separable distillers is sufficient for discussing bounds the the performance of optimal distillers. We show this formally:

\begin{lemma}\label{lem::opt_H_sep}
For any $h \in \cH$ (inducing $\widetilde{B} = h(B)$), there is some $h^* \in \cH_{\sf{sep}}$ (inducing $\widetilde{B}^* = h^*(B)$) such that
\begin{align}
    I(A;\widetilde{B}) \leq I(A;\widetilde{B}^*)\,.
\end{align}
This then implies:
\begin{align}
    \sup_{h \in \cH} I(A;\widetilde{B}) = \sup_{h \in \cH_{\sf{sep}}} I(A;\widetilde{B})    \,.
\end{align}

\end{lemma}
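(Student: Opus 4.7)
The plan is to show that any distiller $h \in \cH$ can be ``separabilized'' without losing mutual information, by exploiting the Markov structure $A - \bX - B$ (where $\bX = \bx(B)$) and then invoking a standard averaging argument over an auxiliary random seed. Because $P_{A|B=b} = \bx(b)$ depends on $b$ only through $\bx(b)$, we have $A \perp B \mid \bX$, and hence for the induced $\widetilde{B} = h(B)$ also $A \perp \widetilde{B} \mid \bX$. Consequently, the joint law of $(A, \widetilde{B})$ is completely pinned down by $P_\bX$ and the stochastic kernel $Q(j \mid \bx) := P[h(B) = j \mid \bX = \bx]$; a separable distiller is precisely one whose kernel $Q$ is deterministic in $\bx$, so my goal is to replace $Q$ by such a deterministic kernel without decreasing $I(A;\widetilde{B})$.

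The next step is to realize the kernel $Q$ as a deterministic map with independent noise. By the functional representation lemma (for instance via inverse-CDF sampling applied to $Q(\cdot\mid\bx)$), there exist a measurable $g : \triangle_{\az-1} \times [0,1] \to [M]$ and $U \sim \mathrm{Unif}[0,1]$ independent of $(A, \bX)$ such that $g(\bX, U)$ is jointly distributed with $A$ exactly as $\widetilde{B}$ is; in particular $I(A; g(\bX, U)) = I(A; \widetilde{B})$. The key averaging step then runs: since $U \perp A$,
\begin{align}
I(A; g(\bX, U)) \;\leq\; I(A; g(\bX, U), U) \;=\; I(A; g(\bX, U) \mid U) \;=\; \int_0^1 I(A; g(\bX, u))\, du,
\end{align}
so by the mean-value principle some $u^\star \in [0,1]$ achieves $I(A; g(\bX, u^\star)) \geq I(A; h(B))$. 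Setting $g^\star(\bx) := g(\bx, u^\star)$ and $h^\star(b) := g^\star(\bx(b))$ produces a separable distiller $h^\star \in \cH_{\sf{sep}}$ with range in $[M]$ and $I(A; h^\star(B)) \geq I(A; h(B))$, which is the first claim. The supremum identity then follows immediately because $\cH_{\sf{sep}} \subseteq \cH$ yields the reverse inequality for free.

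The only genuine subtlety is the invocation of the functional representation lemma, since $\cB$ is permitted to be an uncountable (even abstract) measurable space; however, as long as $[M]$ is finite the kernel $Q(\cdot \mid \bx)$ is a categorical distribution that can be simulated deterministically from a single uniform seed, so no measure-theoretic complication actually arises. The remaining ``averaging'' computation is elementary and the argument requires no convexity property of $I$, only its data-processing behavior, which is what makes the proof clean.
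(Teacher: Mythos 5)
Your proof is correct, and at the high level it follows the same route the paper takes: exploit the Markov structure $A - \bX - B$ to reduce the distiller $h$ to a (possibly stochastic) kernel $Q(\cdot\mid\bx)$ acting on $\bX$, and then argue that this kernel can be made deterministic in $\bx$ without decreasing $I(A;\cdot)$. The difference is in how the derandomization is justified. The paper simply cites \cite{bhatt_distilling2021} for the statement that deterministic distillers of $\bX$ are optimal among non-deterministic ones, together with a short remark that $I(A;h(B)) = I(A;h_\triangle(\bX))$ because the joint law of $(A,\widetilde B)$ factors through $\bX$. You instead give a direct, self-contained proof of the derandomization step: realize $Q(\cdot\mid\bx)$ as $g(\bX,U)$ with $U\sim\mathrm{Unif}[0,1]$ independent of $(A,\bX)$ (unproblematic since $[M]$ is finite, so inverse-CDF sampling from a categorical law suffices), then use $I(A;U)=0$, the chain rule, and $U\perp(A,\bX)$ to get $I(A;g(\bX,U)) \le \int_0^1 I(A;g(\bX,u))\,du$, and pick a good $u^\star$. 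Your version buys a fully self-contained argument (and makes transparent exactly which independence conditions are used), at the cost of a few extra lines and the need to invoke the functional-representation device; the paper's version is shorter but opaque to a reader who does not chase the citation. Both are valid, and the $\sup$ identity follows identically from $\cH_{\sf{sep}}\subseteq\cH$.
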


\begin{proof}
This follows from the fact that it is optimal to only consider deterministic distiller (or quantization) functions, as shown in \cite{bhatt_distilling2021}. We may assume WLOG that $h \not \in \cH_{\sf{sep}}$.


First, note that $P_B$ induces a push-forward distribution $P$ over $\triangle_{\az-1}$ through $\bx(b)$. If $h \in \cH_{\sf{sep}}$, this means there is a deterministic $h_{\triangle} : \triangle_{\az-1} \to [M]$ satisfying
\begin{align}
    h(b) = h_\triangle(\bx(b)) \text{ for all } b \in \cB \,.
\end{align}
Then $I(A;h(B)) = I(A;h_{\triangle}(\bx(B)))$.

If $h \not \in \cH_{\sf{sep}}$, we still have a joint distribution $P_{\bx(B) \widetilde{B}}$; then we consider the conditional probability distribution $P_{\widetilde{B}|\bx(B)}(\widetilde{b}|\bx(b))$. This can be viewed as a \emph{non-deterministic} distiller $h_\triangle : \triangle_{\az-1} \to [M]$ (it returns a random output with distribution dependent on input $b$) under prior $P$, and similarly
\begin{align}
    I(A;h(B)) = I(A;h_{\triangle}(\bx(B))) 
\end{align}
since the joint distribution $P_{A\widetilde{B}}$ is the same either way. But by \cite{bhatt_distilling2021}, for $\bX \sim P$ over $\triangle_{\az-1}$ and any non-deterministic distiller $h_\triangle : \triangle_{\az-1} \to [M]$, there is a deterministic distiller $h^*_\triangle : \triangle_{\az-1} \to [M]$ such that
\begin{align}
    I(A;h_\triangle(\bX)) \leq I(A;h^*_{\triangle}(\bX)) \,.
\end{align}
Finally, any deterministic $h^*_{\triangle} : \triangle_{\az-1} \to [M]$ has an equivalent (separable) $h^* : \cB \to [M]$ such that $h^*(b) = h^*_\triangle(\bx(b))$ for all $b \in \cB$, simply by definition. Thus, for any non-separable $h \in \cH$, there is an equivalent non-deterministic distiller $h_\triangle$ for $\bX \sim P$; for every non-deterministic distiller $h_\triangle$ for $\bX \sim P$, there is a better deterministic distiller $h^*_\triangle$; and for every deterministic distiller $h^*_\triangle$ for $\bX \sim P$, there is an equivalent $h^* \in \cH_{\sf{sep}}$, i.e.
\begin{align}
    I(A;h(B)) &= I(A;h_\triangle(\bX)) 
    \\ \leq I(A;h^*_\triangle(\bX)) &= I(A;h^*(B))\,.
\end{align}
This then implies that
\begin{align}
    \sup_{h \in \cH} I(A;\widetilde{B}) \leq \sup_{h \in \cH_{\sf{sep}}} I(A;\widetilde{B})
\end{align}
while the fact that $\cH_{\sf{sep}} \subseteq \cH$ implies
\begin{align}
    \sup_{h \in \cH} I(A;\widetilde{B}) \geq \sup_{h \in \cH_{\sf{sep}}} I(A;\widetilde{B})
\end{align}
thus producing the equality we want
\end{proof}

This of course also implies that for any $h \in \cH$, there is some $h^* \in \cH_{\sf{sep}}$ such that
\begin{align}
    \idloss(P_{A,B},h) \geq \idloss(P_{A,B},h^*).
\end{align}
and furthermore that
\begin{align}
    \inf_{h \in \cH} \idloss(P_{A,B},h) = \inf_{h \in \cH_{\sf{sep}}} \idloss(P_{A,B},h).
\end{align}


\subsection{Decoding-Optimal Simplex Quantizers}


We now consider simplex quantizers under average KL divergence loss. In particular, we note an obvious potential inefficiency: letting $\cZ = \{\bnormvar^{(1)}, \dots, \bnormvar^{(M)}\}$ be the range of quantizer $\bnormvar$, we define $\cX^{(j)} := \{\bx \in \triangle_{\az-1} : \bz(\bx) = \bz^{(j)}\}$ for all $j$; then, given $\cX^{(j)}$ there will be some optimal choice for the value of $\bnormvar^{(j)}$ which minimizes the expected KL divergence. If $\bz$ does not use the optimal value (which will turn out to be the conditional expectation e.g. centroid of $\cX^{(j)}$), for instance by using a value of $\bz^{(j)}$ which is completely unrelated to $\cX^{(j)}$, then there is an obvious and easily-fixed inefficiency.


One way to frame this is by breaking the quantization process into two steps, an \emph{encoder} $g : \triangle_{\az-1} \to [M]$ and a \emph{decoder} $\dec : [M] \to \triangle_{\az-1}$ so that the quantization of $\bX$ is $\bnormVar = \bnormvar(\bX) = \dec(g(\bX))$; we WLOG label the elements of $\cZ$ such that $\bz^{(j)} = \dec(j)$. Then the encoder $g$ partitions $\triangle_{\az-1}$ into the $M$ `bins' (analogous to the compander bins) $\cX^{(1)}, \dots \cX^{(M)}$ (the same as defined above):
\begin{align}
    \cX^{(j)} = \{\bx \in \triangle_{\az-1} : g(\bx) = j\}\,.
\end{align}

\begin{lemma}\label{lem::centroid_optimal}
    Given encoder $g$ and prior $P$, the optimal decoder function (for $g$ on $P$) is
    \begin{align}
        \dec^*_g = \underset{\dec}{\argmin} \bbE_{\bX \sim P} [D_{\kl}(\bX \| \dec(g(\bX)))]
    \end{align}
    satisfies, for all $j \in [M]$,
    \begin{align}
        \dec^*_g(j) = \bbE_{\bX \sim P}[\bX \, | \, \bX \in \cX^{(j)}]\,.
    \end{align}
    We call any quantizer consisting of an encoder $g$ and the optimal decoder function $\dec^*_g$ \emph{decoding-optimal}. This implies that for any quantizer $\bnormvar$ on prior $P$, there is a decoding-optimal $\bnormvar^*$ such that
    \begin{align}
        \sqloss(P,\bnormvar^*) \leq \sqloss(P,\bnormvar) \,.
    \end{align}
\end{lemma}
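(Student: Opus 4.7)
The plan is to decompose the expected loss bin by bin and then reduce each per-bin minimization to a classical Gibbs-inequality argument, which is the standard reason centroids are optimal under KL divergence. Writing $\cX^{(1)},\dots,\cX^{(M)}$ for the preimages of $g$, the law of total expectation gives
\begin{align}
\bbE_{\bX \sim P}[D_{\kl}(\bX \| \dec(g(\bX)))] = \sum_{j=1}^M \bbP[\bX \in \cX^{(j)}] \, \bbE[D_{\kl}(\bX \| \dec(j)) \mid \bX \in \cX^{(j)}]
\end{align}
and since $\dec(j)$ appears in only the $j$th summand, we may optimize each term separately. Bins with $\bbP[\bX \in \cX^{(j)}] = 0$ can be handled by declaring $\dec^*_g(j)$ arbitrary on such bins.

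Next I would fix a single bin $j$ with positive mass and expand
\begin{align}
D_{\kl}(\bx \| \bnormvar) = \sum_{i=1}^\az x_i \log x_i - \sum_{i=1}^\az x_i \log \normvar_i,
\end{align}
so that minimizing $\bbE[D_{\kl}(\bX \| \bnormvar^{(j)}) \mid \bX \in \cX^{(j)}]$ over $\bnormvar^{(j)} \in \triangle_{\az-1}$ is equivalent to maximizing $\sum_i \mu^{(j)}_i \log \normvar^{(j)}_i$, where I set $\boldsymbol{\mu}^{(j)} := \bbE[\bX \mid \bX \in \cX^{(j)}]$. A key observation is that $\boldsymbol{\mu}^{(j)}$ is itself a probability vector, since $\sum_i \mu^{(j)}_i = \bbE[\sum_i X_i \mid \bX \in \cX^{(j)}] = 1$.

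The remaining step is the standard Gibbs inequality: for any probability vector $\bnormvar^{(j)} \in \triangle_{\az-1}$,
\begin{align}
\sum_i \mu^{(j)}_i \log \normvar^{(j)}_i \leq \sum_i \mu^{(j)}_i \log \mu^{(j)}_i,
\end{align}
which follows from $D_{\kl}(\boldsymbol{\mu}^{(j)} \| \bnormvar^{(j)}) \geq 0$, with equality iff $\bnormvar^{(j)} = \boldsymbol{\mu}^{(j)}$. This identifies $\dec^*_g(j) = \boldsymbol{\mu}^{(j)} = \bbE[\bX \mid \bX \in \cX^{(j)}]$ as the unique minimizer on each bin with positive mass. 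The ``furthermore'' claim is then immediate: given any quantizer $\bnormvar$, extract its encoder $g$ and replace its decoder by $\dec^*_g$; the per-bin argument above shows this can only decrease the loss.

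I do not foresee any serious obstacle here -- the main thing to be careful about is ensuring that $\boldsymbol{\mu}^{(j)}$ lies in $\triangle_{\az-1}$ (so that $\dec^*_g$ really is a valid decoder into the simplex), which is automatic since a convex combination of points in $\triangle_{\az-1}$ remains in $\triangle_{\az-1}$, and handling the measure-zero bins by convention.
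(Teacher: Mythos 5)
Your argument is correct: decomposing the expected loss by bins, noting that each $\dec(j)$ appears in exactly one term, reducing the per-bin problem to maximizing $\sum_i \mu_i^{(j)} \log \normvar_i^{(j)}$ where $\boldsymbol{\mu}^{(j)}$ is the within-bin centroid, and invoking Gibbs' inequality (equivalently, nonnegativity of $D_{\kl}(\boldsymbol{\mu}^{(j)}\|\bnormvar^{(j)})$) to identify the centroid as the unique per-bin minimizer. You also correctly observe that $\boldsymbol{\mu}^{(j)}$ lies in the simplex as a conditional expectation of simplex-valued random vectors, which is the point needed to confirm $\dec^*_g$ is a valid decoder, and the handling of zero-mass bins is harmless. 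The paper, by contrast, does not write out any of this: it simply cites Corollary~4.2 of a textbook reference for the whole lemma. So you are not doing anything different mathematically from what the cited result would supply -- the standard ``conditional expectation is the optimal reconstruction under log loss / KL distortion'' fact -- but your write-up is a self-contained, elementary derivation of it, whereas the paper delegates it entirely to the citation. That is a fine and arguably more informative presentation, and there is no gap.
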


\begin{proof}
This is proved by \cite[Corollary 4.2]{ITbook}.
\end{proof}

Note that the optimal $\dec^*_g(j)$ is the centroid (conditional expectation under $P$) of the bin $\cX^{(j)}$ induced by $g$. 


\subsection{Deriving the Connection}

We now prove \Cref{prop:infodist}. We first define what it means for a distiller and a quantizer to be equivalent:
\begin{definition}
    If we have equivalent information distillation and simplex quantization problems $(P_{A,B}, M)_{\ID} \equiv (P,M)_{\SQ}$, then the distiller $h$ and quantizer $\bnormvar$ are \emph{equivalent} for these problems if:
    \begin{itemize}
        \item $h$ is separable and $\bnormvar$ is decoding-optimal;
        \item there is a labeling $\bz^{(1)}, \dots, \bz^{(M)}$ of the elements of $\cZ$ such that $\bnormvar(\bx(b)) = \bnormvar^{(h(b))}$ for all $b \in \cB$.
    \end{itemize}
    We denote this as $h \equiv \bnormvar$.
\end{definition}

We then claim that all separable distillers and decoding-optimal quantizers have equivalent counterparts:
\begin{lemma} \label{lem::equiv_instances_02}
    For any $(P_{A,B}, M)_{\ID} \equiv (P,M)_{\SQ}$, any separable $h$ for $(P_{A,B}, M)_{\ID}$ has an equivalent (decoding-optimal) $\bnormvar$, and any decoding-optimal $\bnormvar$ for $(P,M)_{\SQ}$ has an equivalent (separable) $h$.
\end{lemma}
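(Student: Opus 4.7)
\medskip

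\noindent\textbf{Proof proposal.} The plan is to use the fact that a separable distiller factors through the conditional probability map $\bx(\cdot)$, which gives a natural bijection between separable distillers on $\cB$ and encoder functions $g:\triangle_{\az-1}\to[M]$ on the simplex. The optimal decoder from \Cref{lem::centroid_optimal} then supplies the second half of the equivalent quantizer. Concretely, I would prove the two directions as follows.

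First I would handle the forward direction. Given a separable $h:\cB\to[M]$, by \Cref{def:separable-distillers} the value $h(b)$ depends on $b$ only through $\bx(b)$, so there is a well-defined $h_\triangle:\bx(\cB)\to[M]$ with $h(b)=h_\triangle(\bx(b))$; extend $h_\triangle$ arbitrarily off the support of $P$ to get an encoder $g=h_\triangle$ on $\triangle_{\az-1}$. Pair $g$ with its optimal decoder $\dec^*_g$ from \Cref{lem::centroid_optimal}, setting $\bnormvar^{(j)}=\dec^*_g(j)=\bbE_{\bX\sim P}[\bX\mid g(\bX)=j]$ and $\bnormvar(\bx)=\dec^*_g(g(\bx))$. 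The resulting $\bnormvar$ is decoding-optimal by construction, and for every $b\in\cB$
\begin{align}
\bnormvar(\bx(b)) = \dec^*_g(g(\bx(b))) = \bnormvar^{(g(\bx(b)))} = \bnormvar^{(h(b))},
\end{align}
so $h\equiv\bnormvar$.

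For the converse, given a decoding-optimal $\bnormvar$ with encoder $g$ and decoder $\dec^*_g$, define $h(b):=g(\bx(b))$. Separability is immediate: if $\bx(b)=\bx(b')$ then $h(b)=g(\bx(b))=g(\bx(b'))=h(b')$. Labeling the range of $\bnormvar$ by $\bnormvar^{(j)}=\dec^*_g(j)$, we also get
\begin{align}
\bnormvar(\bx(b)) = \dec^*_g(g(\bx(b))) = \bnormvar^{(h(b))},
\end{align}
establishing $h\equiv\bnormvar$.

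The main obstacle, such as it is, lies in being careful about where $\bx(\cdot)$ need not be surjective onto $\triangle_{\az-1}$ (so $g$ is only really determined on $\bx(\cB)$, or on the support of the push-forward prior $P$); this matters only for defining $g$ off-support and does not affect equivalence, since all expected-loss computations are under $P$ and all distiller evaluations are on $\cB$. With that technicality handled, the proof is essentially an unpacking of the definitions of separability (\Cref{def:separable-distillers}), decoding-optimality (\Cref{lem::centroid_optimal}), and equivalence of instances (\Cref{lem::equiv_instances_01}), with no further analytic content needed. The identity $\sqloss(P,\bnormvar)=\idloss(P_{A,B},h)$ for equivalent pairs, which is the next step in the paper, will then drop out by chain-rule manipulations of $I(A;B)-I(A;h(B))$ and the definition of KL divergence.
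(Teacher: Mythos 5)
Your proposal is correct and matches the paper's proof in all essentials: both directions factor the separable distiller through $\bx(\cdot)$ to obtain an encoder on the simplex, pair it with the centroid decoder from the decoding-optimality lemma, and in the converse compose the encoder with $\bx(\cdot)$ to recover a separable distiller. You are slightly more explicit than the paper about extending the encoder arbitrarily off $\bx(\cB)$ and about verifying separability of the constructed $h$, but these are minor clarifications of the same argument rather than a different route.
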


\begin{proof}
    We handle the two directions separately:

    \emph{Any $h$ has an equivalent $\bnormvar$:} Since $h$ is separable, we know that $\bx(b) = \bx(b') \implies h(b) = h(b')$. Thus, we can define $\cX^{(j)}$ as
    \begin{align}
        \cX^{(j)} := \{ \bx \in \triangle_{\az-1} : h(b) = j ~ \forall b \text{ s.t. } \bx(b) = \bx\}
    \end{align}
    for all $j \in [M]$.
    Then we define $\bnormvar$ as follows: $\bnormvar(\bx) = \bnormvar^{(j)}$ for all $\bx \in \cX^{(j)}$, where
    \begin{align}
        \bnormvar^{(j)} = \bbE_{\bX \sim P}[\bX \, | \, \bX \in \cX^{(j)}] \,.
    \end{align}
    Then by construction of $\bnormvar^{(j)}$ we have that $\bnormvar$ is decoding-optimal and for $\bx \in \cX^{(j)}$ we have $h(b) = j$ for all $b$ such that $\bx(b) = \bx$ and $\bnormvar(\bx) = \bnormvar^{(j)}$, hence $\bnormvar(\bx) = \bnormvar^{(h(b))}$, so they are equivalent.

    \emph{Any $\bnormvar$ has an equivalent $h$:} We label the elements of $\cZ$ arbitrarily as $\bz^{(1)}, \dots, \bz^{(M)}$; then we let $h(b) = j$ for all $b$ such that $\bnormvar(\bx(b)) = \bnormvar^{(j)}$, which implies $\bnormvar(\bx(b)) = \bnormvar^{(h(b))}$.
\end{proof}

Now we show that equivalent solutions have the same loss:
\begin{proposition}\label{prop::KL_and_distill_equal}
If $(P_{A,B},M)_{\ID} \equiv (P,M)_{\SQ}$ and $h \equiv \bnormvar$, then
\begin{align}
      \idloss(P_{A,B},h) = \sqloss(P,\bnormvar) \label{eq::kl_and_MI}\,.
\end{align}
\end{proposition}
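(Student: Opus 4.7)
The plan is to rewrite the mutual information gap as an expected KL divergence to the posterior of $A$ given $\widetilde{B}$, and then identify this posterior with the decoder output $\bnormvar^{(j)}$ using separability of $h$ together with decoding-optimality of $\bnormvar$. The core identity underlying everything is the standard decomposition
\begin{align}
    I(A;B) - I(A;\widetilde{B}) = \bbE_B\bigl[ D_{\kl}(P_{A|B} \| P_{A|\widetilde{B}}) \bigr] \,,
\end{align}
which follows by writing $I(A;B) - I(A;\widetilde{B}) = H(A|\widetilde{B}) - H(A|B)$, since $\widetilde{B} = h(B)$ is a function of $B$, and expanding as $\bbE[\log P_{A|B}/P_{A|\widetilde{B}}]$. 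Since $P_{A|B=b} = \bx(b)$ by definition, the right-hand side equals $\bbE_B[D_{\kl}(\bx(B) \| P_{A|\widetilde{B}=h(B)})]$.

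Next, I would identify the posterior $P_{A|\widetilde{B}=j}$ with the centroid $\bnormvar^{(j)}$. For each label $j \in [M]$, let $\cX^{(j)} = \{\bx \in \triangle_{\az-1} : \bnormvar(\bx) = \bnormvar^{(j)}\}$ (WLOG the labels $\bnormvar^{(j)}$ are distinct; repeated labels can be merged). By the equivalence $h \equiv \bnormvar$, we have $\bnormvar(\bx(b)) = \bnormvar^{(h(b))}$ for every $b$, and since $h$ is separable the map $h$ factors through $\bx$, so the event $\{h(B)=j\}$ coincides with the event $\{\bx(B) \in \cX^{(j)}\}$. The push-forward property $\bX = \bx(B) \sim P$ then yields, by the tower rule,
\begin{align}
    P_{A|\widetilde{B}=j}(a)
    &= \bbE\bigl[x_a(B) \,\big|\, h(B)=j\bigr] \\
    &= \bbE_{\bX \sim P}\bigl[X_a \,\big|\, \bX \in \cX^{(j)}\bigr]
    = \bigl(\bnormvar^{(j)}\bigr)_a \,,
\end{align}
where the last equality is exactly the decoding-optimality of $\bnormvar$ (\Cref{lem::centroid_optimal}).

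Putting the two steps together, $P_{A|\widetilde{B}=h(B)} = \bnormvar^{(h(B))} = \bnormvar(\bx(B))$, so
\begin{align}
    \idloss(P_{A,B}, h)
    &= \bbE_B\bigl[ D_{\kl}(\bx(B) \| \bnormvar(\bx(B))) \bigr] \\
    &= \bbE_{\bX \sim P}\bigl[ D_{\kl}(\bX \| \bnormvar(\bX)) \bigr]
    = \sqloss(P, \bnormvar)\,,
\end{align}
which is the desired equality.

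The main obstacle is the second step: one must be careful that the conditioning event $\{h(B)=j\}$ really translates under the push-forward to the event $\{\bX \in \cX^{(j)}\}$. This is precisely where both hypotheses are used in an essential way: separability of $h$ guarantees that $\{h(B)=j\}$ is measurable with respect to $\bx(B)$ (so the conditional expectation $\bbE[x_a(B)\mid h(B)=j]$ can be computed under $P$), and decoding-optimality of $\bnormvar$ guarantees that the resulting conditional mean is exactly $\bnormvar^{(j)}$. Without separability the event $\{h(B)=j\}$ could split a fiber of $\bx$, and without decoding-optimality the centroid identification would fail; in either case one would only obtain the one-sided inequalities stated in \Cref{prop:infodist} rather than equality.
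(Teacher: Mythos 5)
Your proof is correct and follows essentially the same route as the paper's: rewrite $I(A;B) - I(A;\widetilde{B})$ as an expected KL divergence between conditional laws of $A$, then use separability of $h$ to identify $\{h(B)=j\}$ with the push-forward event $\{\bX \in \cX^{(j)}\}$ and decoding-optimality to identify $P_{A|\widetilde{B}=j}$ with $\bnormvar^{(j)}$. The only stylistic difference is that you invoke the entropy identity $I(A;B)-I(A;\widetilde{B}) = H(A|\widetilde{B}) - H(A|B)$ rather than expanding the mutual-information integrals term by term as the paper does, but the substance is the same.
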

\begin{proof}
    Let $(A,B) \sim P_{A,B}$ and let $\bX = \bx(B)$ and $\bnormVar = \bnormvar(\bX)$. Then we know since $(P_{A,B},M)_{\ID} \equiv (P,M)_{\SQ}$ that $\bX \sim P$. Furthermore, defining 
    \begin{align}
        \cX^{(j)} = \{\bx \in \triangle_{\az-1} : h(b) = j ~\forall b \text{ s.t. } \bx(b) = \bx\}
    \end{align}
    and $\bnormvar^{(j)} = \bbE[\bX \, | \, \bX \in \cX^{(j)}]$, we know that since $h \equiv \bnormvar$ we have $\bnormVar = \bnormvar^{(h(B))}$. We now let $\normVar_i$ refer to the $i$th element of vector $\bnormVar$, and let $\widetilde{B} = h(B)$ and $\widetilde{b} = h(b)$. We then derive:
    \begin{align}
        \idloss(P_{A,B},h) &= I(A;B) - I(A;\widetilde{B})
        \\& = \int   \sum_{a} P_{A,B}(a ,b)   \log \frac{P_{A|B}(a |b)}{P_A(a)} db \eqlinebreakshort - \sum_{a ,\widetilde b} P_{A, \widetilde B}(a ,\widetilde b)   \log \frac{P_{A|\widetilde B}(a |\widetilde b) }{P_A(a)} 
        \\& = \int   \sum_{a} P_{A,B}(a ,b)   \log \frac{P_{A|B}(a |b)}{P_A(a)}  \eqlinebreakshort - P_{A,B}(a ,b)   \log \frac{P_{A|\widetilde B}(a |\widetilde b) }{P_A(a)} db  
        \\& = \int   \sum_{a} P_{A,B}(a ,b)   \log \frac{P_{A|B}(a |b)}{P_{A|\widetilde B}(a |\widetilde b) }  db 
        \\& = \int P_B(b)  \sum_{a} P_{A|B}(a|b)   \log \frac{P_{A|B}(a |b)}{P_{A|\widetilde B}(a |\widetilde b) }  db 
        \\& = \bbE_{B} \bigg[\sum_{a} P_{A|B}(a|b)   \log \frac{P_{A|B}(a |b)}{P_{A|\widetilde B}(a |\widetilde b) } \bigg]
        \\& = \bbE_{B} \big[D_{\kl}((A|B) \| (A|\widetilde{B})) \big]
        \\& = \bbE_{\bX} \big[D_{\kl}(\bX \| \bnormVar) \big] \label{eq:to-sqloss}
        \\& = \sqloss(P, \bnormvar)
    \end{align}
    where \eqref{eq:to-sqloss} holds as $B \sim P_B \implies \bX \sim P$ and 
    \begin{align}
        \bnormVar = \bnormvar(\bX) = \bnormvar^{(\widetilde{B})} = \bbE_{\bX \sim P}[\bX \, | \, \bX \in \cX^{(\widetilde{B})}]
    \end{align}
    and since $A \sim \bX = \bX(B)$, we know that $P_{A|\widetilde{B}}(a|\widetilde{b}) = \bbE_{\bX \sim P}[X_a \, | \, \bX \in \cX^{(\widetilde{b})}]$.
\end{proof}

\begin{proof}[Proof of \Cref{prop:infodist}]

We get the proof of \Cref{prop:infodist} as a corollary to \Cref{prop::KL_and_distill_equal} and \Cref{lem::opt_H_sep,lem::centroid_optimal,lem::equiv_instances_02} (which show, respectively, that non-separable distillers can be replaced by separable distillers, that non-decoding-optimal quantizers can be replaced by decoding-optimal quantizers, and that any separable distiller has an equivalent decoding-optimal quantizer and vice versa). 

    Note that \Cref{prop:infodist} ensures $(P_{A,B},M)_{\ID} \equiv (P,M)_{\SQ}$ through its definition of $\bX$.

    Then, given a distiller $h \in \cH$, by \Cref{lem::opt_H_sep} we can find a separable $h^* \in \cH_{\sf{sep}}$ such that
    \begin{align}
        \idloss(P_{A,B}, h^*) \leq \idloss(P_{A,B}, h) \,.
    \end{align}
    By \Cref{prop::KL_and_distill_equal}, there is a quantizer $\bnormvar$ such that
    \begin{align}
        \sqloss(P, \bnormvar) \leq \idloss(P_{A,B}, h^*) \leq \idloss(P_{A,B}, h)  \,.
    \end{align}
    completing the result in the first direction.

    Given a quantizer $\bnormvar$, by \Cref{lem::centroid_optimal} there exists a decoding-optimal $\bnormvar^*$ such that
    \begin{align}
        \sqloss(P, \bnormvar^*) \leq \idloss(P, \bnormvar) \,.
    \end{align}
    By \Cref{prop::KL_and_distill_equal}, there is a distiller $h$ such that
    \begin{align}
        \idloss(P_{A,B}, h) \leq \sqloss(P, \bnormvar^*) \leq \sqloss(P, \bnormvar)  \,.
    \end{align}
    completing the result in the second direction.
\end{proof}

Now that we have shown \Cref{prop:infodist}, we can use it to derive the connection between the performance of our companders and the Degrading Cost $\mathrm{DC}$:

\begin{proposition}\label{prop::infsup_for_dc}
For any $\az, M$: 
\begin{align}
\mathrm{DC}(\az,M) = \sup_{P \text{ over } \triangle_{\az-1}} \inf_{\substack{\bnormvar \\ |\cZ| = M}}  \sqloss(P,\bz) \label{eq::supP_kl_and_MI}  \,.
\end{align}
\end{proposition}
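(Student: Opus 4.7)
The plan is to reduce the identity to \Cref{prop:infodist} combined with the two-way correspondence between information-distillation instances and simplex-quantization instances established in \Cref{lem::equiv_instances_01}. First, I would fix an arbitrary joint $P_{A,B}$ with $|\cA|=\az$ and let $P$ be its equivalent simplex prior (the push-forward of $P_B$ under $b\mapsto \bx(b)$). Applying \Cref{prop:infodist} in both directions yields
\begin{align}
\inf_{h:\cB\to[M]} \idloss(P_{A,B},h) = \inf_{\bnormvar:\,|\cZ|=M} \sqloss(P,\bnormvar),
\end{align}
since any distiller $h$ is dominated (in loss) by a quantizer $\bnormvar$ via \eqref{eq::id_kl}, and any quantizer $\bnormvar$ is dominated by a distiller via \eqref{eq::id_kl2}. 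So taking the supremum of the left-hand side over all joint distributions $P_{A,B}$ with $|\cA|=\az$ equals the supremum of the right-hand side over all priors $P$ that arise as a push-forward of some such $P_{A,B}$.

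Next, I would show that this class of push-forward priors is \emph{all} of the probability measures on $\triangle_{\az-1}$. This is the second half of \Cref{lem::equiv_instances_01}: given any $P$ on $\triangle_{\az-1}$, set $\cB=\triangle_{\az-1}$, $P_B=P$, and $P_{A|B}(a|b)=x_a(b)=b_a$; then $\bx(\cdot)$ is the identity and the push-forward of $P_B$ is $P$ itself. Thus the supremum over $P_{A,B}$ on the left translates exactly into a supremum over all $P$ on $\triangle_{\az-1}$ on the right, giving
\begin{align}
\sup_{\substack{P_{A,B}:\,|\cA|=\az}}\inf_{h}\idloss(P_{A,B},h)
= \sup_{P\text{ over }\triangle_{\az-1}} \inf_{\bnormvar:\,|\cZ|=M} \sqloss(P,\bnormvar),
\end{align}
and the left-hand side is $\mathrm{DC}(\az,M)$ by definition.

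The only mildly subtle point is checking the inequalities in \eqref{eq::id_kl} and \eqref{eq::id_kl2} combine into the equality $\inf_h \idloss = \inf_{\bnormvar}\sqloss$ (rather than just two one-sided bounds). The reason is that in one direction we may restrict to separable $h$ (\Cref{lem::opt_H_sep}) and to decoding-optimal $\bnormvar$ (\Cref{lem::centroid_optimal}) without loss, and on this restricted class \Cref{lem::equiv_instances_02} gives a bijection between $h$ and $\bnormvar$ preserving the loss (\Cref{prop::KL_and_distill_equal}). So the two infima coincide, and the supremum identity follows. This step---making sure the optimal distiller/quantizer pair really is matched and not merely sandwiched---is the main point that has to be handled carefully, but all the needed pieces are already packaged in the lemmas preceding the proposition.
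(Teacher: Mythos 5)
Your proposal is correct and matches the paper's own argument in all essentials: use the two-way correspondence between instances (\Cref{lem::equiv_instances_01}) to identify the set of push-forward priors with the set of all priors on $\triangle_{\az-1}$, use \Cref{prop:infodist} to match the infima for each equivalent instance pair, and then pass to the supremum. The paper organizes the same ingredients as two one-sided inequalities for $\mathrm{DC}$; you first establish $\inf_h \idloss = \inf_{\bnormvar}\sqloss$ per instance pair and then take the sup, which is an equally valid arrangement.

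One small remark on your ``mildly subtle point'' paragraph: you do not actually need to re-open the hood and invoke \Cref{lem::opt_H_sep}, \Cref{lem::centroid_optimal}, \Cref{lem::equiv_instances_02}, and \Cref{prop::KL_and_distill_equal} to see that the infima coincide. Taking \Cref{prop:infodist} as a black box, \eqref{eq::id_kl} says for every $h$ there exists $\bnormvar$ with $\sqloss(P,\bnormvar)\le\idloss(P_{A,B},h)$, which already yields $\inf_{\bnormvar}\sqloss \le \inf_h\idloss$, and \eqref{eq::id_kl2} gives the reverse by the symmetric argument; the two combine into equality by ordinary quantifier manipulation. The separability and decoding-optimality machinery is what proves \Cref{prop:infodist} internally, not something you need again on top of it. But this is a presentational matter only, not a gap.
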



\begin{proof}

We show inequalities in both directions to get the equality.

First, note that for any joint distribution $P_{A,B}$ on $\cA \times \cB$ where $|\cA| = \az$ (WLOG we can assume $\cA = [\az]$), we know there is some prior $P$ over $\triangle_{\az-1}$ such that
\begin{align}
    (P_{A,B}, M)_{\ID} \equiv (P,M)_{\SQ}
\end{align}
for all $M$, by \Cref{lem::equiv_instances_01}, and that for any distiller $h : \cB \to M$ there is some quantizer $\bnormvar$ with cardinality-$M$ range such that
\begin{align}
    \sqloss(P, \bnormvar) \leq \idloss(P_{A,B}, h)
\end{align}
by \Cref{lem::equiv_instances_02} and \Cref{prop::KL_and_distill_equal}. Thus for any $P_{A,B}$ and $M$, for the equivalent $P$,
\begin{align}
    \inf_{h:\cB \to M} \idloss(P_{A,B}, h) \geq \inf_{\substack{\bnormvar \\ |\cZ| = M}}  \sqloss(P,\bz)
\end{align}
and hence we have
\begin{align}
    \mathrm{DC}(\az,M) &= \sup_{\substack{{P_{A,B}} \\ |\cA| = \az}} \inf_{h:\cB \to M} \idloss(P_{A,B}, h)
    \\ &\geq \sup_{P \text{ over } \triangle_{\az-1}} \inf_{\substack{\bnormvar \\ |\cZ| = M}}  \sqloss(P,\bz)\,.
\end{align}

Then, for any $P$ over $\triangle_{\az-1}$, we have the same logic: by \Cref{lem::equiv_instances_01} there is an equivalent $P_{A,B}$, so for any $P,M$ we can find $P_{A,B}$ for which
\begin{align}
     \inf_{\substack{\bnormvar \\ |\cZ| = M}}  \sqloss(P,\bz) \geq \inf_{h:\cB \to M} \idloss(P_{A,B}, h)\,.
\end{align}
Then we get that
\begin{align}
    \mathrm{DC}(\az,M) &= \sup_{\substack{{P_{A,B}} \\ |\cA| = \az}} \inf_{h:\cB \to M} \idloss(P_{A,B}, h)
    \\ &\leq \sup_{P \text{ over } \triangle_{\az-1}} \inf_{\substack{\bnormvar \\ |\cZ| = M}}  \sqloss(P,\bz)
\end{align}
and hence the equality in \eqref{eq::supP_kl_and_MI} holds.
\end{proof}

\Cref{prop::infsup_for_dc} is used to show \eqref{eq::infg_supP_KL}. 



\subsection{Comparison}
Compared to \eqref{eq::distilling_result}, our bound in \Cref{prop::compander_worst_case_for_distilling} which uses the approximate minimax compander has a worse dependence 
on $M$. Our dependence on $M$ is worse since our compander method performs scalar quantization on each entry, and the raw quantized values do not necessarily add up to $1$. Other quantization schemes can rely on the fact that the values add up to $1$ to avoid encoding one of the $\az$ values. 
Offsetting this are the improved dependence on $\az$ ($\log^2 \az$ versus $\az-1$, as stated) and constant ($\leq 19$ and decreasing to $1$ as $\az \to \infty$ versus $1268$); this yields a better bound when $M$ is not exceptionally large.
For example, when $\az = 10$, our bound is better than \eqref{eq::distilling_result} so long as the conditions on $M^{1/\az}$ in \Cref{prop::compander_worst_case_for_distilling}
are met (which requires $M > 16^{10}$) and if $M < 1.014 \times 10^{97}$. While these may both seem like very large numbers, the former corresponds with only $4$ bits to express each value in the probability vector, while the latter corresponds with more than $32$ bits per value. In general, the `crossing point' (at which both bounds give the same result) is at
\begin{align}
    M = \Bigg(1268 \bigg( \hspace{-0.2pc} 1 \hspace{-0.2pc} + \hspace{-0.2pc} 18 \frac{\log \log \az}{\log \az} \hspace{-0.1pc} \bigg)^{-1} \frac{\az-1}{\log^2 \az}\Bigg)^{\frac{\az(\az-1)}{2}}
\end{align}
or, to put it in terms of `bits per vector entry' $b$ (taking $\log_2$ of the above to get bits and dividing by $\az$),
\begin{align}
    b \approx \frac{\az-1}{2} \bigg(\log_2(\az) - 2 \log_2 \log \az + 10.3\bigg)
\end{align}
for large $\az$.
The disadvantage is that our bound does not apply to the case of $\az < 5$ or $M$ which is not large. Note that scalar quantization in general only works with very large $M$, since even $2$ different encoded values per symbol requires $M = 2^\az$ different quantization values.


\end{document}